\documentclass[acmsmall,screen]{acmart}
\settopmatter{printfolios=true,printccs=false,printacmref=false}

\AtBeginDocument{%
  \providecommand\BibTeX{{%
    \normalfont B\kern-0.5em{\scshape i\kern-0.25em b}\kern-0.8em\TeX}}}

\setcopyright{none}

\acmJournal{PACMPL}
\acmVolume{1}
\acmNumber{POPL} 
\acmArticle{1}
\acmYear{2021}
\acmMonth{1}
\acmDOI{} 
\startPage{1}

\usepackage{stmaryrd} 
\usepackage{proof}
\usepackage{amsmath}
\usepackage{mathtools}
\usepackage{amsthm}
\usepackage{mathpartir}
\usepackage{color}
\usepackage{xstring}
\usepackage{ntabbing}
\usepackage{listings}
\usepackage{varwidth}
\usepackage[shortlabels]{enumitem}
\usepackage{multicol}
\usepackage{lipsum}
\usepackage{booktabs}
\usepackage{subcaption}
\usepackage{fancyvrb}
\usepackage[cal=cm]{mathalfa}
\usepackage{turnstile}
\usepackage{cleveref}
\crefname{theorem}{Theorem}{Theorems}
\crefname{lemma}{Lemma}{Lemmas}
\crefname{corollary}{Corollary}{Corollaries}
\crefname{figure}{Figure}{Figures}
\crefname{definition}{Definition}{Definitions}
\crefname{table}{Table}{Tables}
\crefname{section}{Section}{Sections}
\crefname{example}{Example}{Examples}
\crefname{proposition}{Proposition}{Propositions}

\setlist{nosep,leftmargin=\parindent}
\AtBeginDocument{%
  \setlength\abovedisplayskip{0.5\abovedisplayskip}%
  \setlength\belowdisplayskip{0.5\belowdisplayskip}%
  \setlength\abovedisplayshortskip{0.5\abovedisplayshortskip}%
  \setlength\belowdisplayshortskip{0.5\belowdisplayshortskip}%
  \setlength\floatsep{0.5\floatsep}%
  \setlength\abovecaptionskip{0.5\abovecaptionskip}%
}


\lstset{basicstyle=\ttfamily\small, columns=fullflexible, keepspaces=true, aboveskip=2pt, belowskip=2pt, mathescape}

\newlength{\rWidth}

\newcommand{\lang}{PRast}
\newcommand{\heads}{\m{H}}
\newcommand{\tails}{\m{T}}
\newcommand{\eflip}[3]{\m{flip} \; #1 \; (\heads \Rightarrow #2 \mid \tails \Rightarrow #3)}
\newcommand{\pichoiceop}{\oplus_\m{P}}
\newcommand{\pichoice}[1]{\pichoiceop \{ #1 \}}
\newcommand{\pechoiceop}{\with_\m{P}}
\newcommand{\pechoice}[1]{\pechoiceop \{ #1 \}}
\newcommand{\esendlp}[2]{#1..#2}
\newcommand{\epcase}[3]{\m{pcase} \; #1 \; (#2 \Rightarrow #3)}


\newcommand{\m}[1]{\mathsf{#1}}
\newcommand{\mb}[1]{\mathbf{#1}}
\newcommand{\dist}{\ensuremath{\mathbb{P}}}





\newcommand{\Sg}{\Sigma}
\newcommand{\xvdash}[1]{%
  \vdash^{\mkern-8mu\scriptstyle\rule[-.9ex]{0pt}{0pt}#1}%
}

\newcommand{\potconf}[1]{\overset{#1}{\Vdash}}

\newcommand{\D}{\Delta}

\newcommand{\proc}[2]{\m{proc}(#1, #2)}

\newcommand{\step}{\mapsto}



\newcommand{\ecase}[3]{\m{case} \; #1 \; (#2 \Rightarrow #3)}

\newcommand{\erecvch}[2]{#2 \leftarrow \m{recv} \; #1}

\newcommand{\esendch}[2]{\m{send} \; #1 \; #2}

\newcommand{\ewait}[1]{\m{wait} \; #1}

\newcommand{\eclose}[1]{\m{close} \; #1}

\newcommand{\fwd}[2]{#1 \leftrightarrow #2}

\newcommand{\esendl}[2]{#1.#2}

\newcommand{\ecut}[4]{#1 \leftarrow #2 \; #3 \semi #4}

\newcommand{\ework}[1]{\m{work} \, \{#1\}}
\newcommand{\eget}[2]{\m{get} \, #1 \, \{#2\}}
\newcommand{\epay}[2]{\m{pay} \, #1 \, \{#2\}}
\newcommand{\procdef}[3]{#3 \leftarrow #1 \; #2}

\newcommand{\lolli}{\multimap}
\newcommand{\tensor}{\otimes}
\newcommand{\with}{\mathbin{\binampersand}}

\newcommand{\one}{\mathbf{1}}

\newcommand{\semi}{\; ; \;}
\newcommand{\ichoiceop}{\oplus}
\newcommand{\echoiceop}{\with}
\newcommand{\ichoice}[1]{\ichoiceop \{ #1 \}}
\newcommand{\echoice}[1]{\echoiceop \{ #1 \}}

\newcommand{\mi}[1]{\mbox{\it #1}}




\newcommand{\entailpot}[1]{\xvdash{#1}}

\newcommand{\paypot}{\triangleright}
\newcommand{\getpot}{\triangleleft}
\newcommand{\tgetpot}[2]{\getpot^{#2} #1}
\newcommand{\tpaypot}[2]{\paypot^{#2} #1}


\newcommand{\Next}{\raisebox{0.3ex}{$\scriptstyle\bigcirc$}}
\renewcommand{\next}[1]{\Next #1}
\newcommand{\tdelay}[2]{
    \IfEqCase{#2}{%
        {1}{\next{#1}}%
    }[{\Next^{#2} (#1)}]%
}%



\newcommand{\dom}[1]{\m{dom}(#1)}




\newcommand{\down}{\downarrow^{\m{S}}_{\m{L}}}
\newcommand{\up}{\uparrow^{\m{S}}_{\m{L}}}


\newtheorem{theorem}{Theorem}

\newtheorem{lemma}{Lemma}




\definecolor{verylightgray}{rgb}{.97,.97,.97}

\lstdefinelanguage{Solidity}{
	keywords=[1]{anonymous, assembly, assert, balance, break, call, callcode, case, catch, class, constant, continue, constructor, contract, debugger, default, delegatecall, delete, do, else, emit, event, experimental, export, external, false, finally, for, function, gas, if, implements, import, in, indexed, instanceof, interface, internal, is, length, library, log0, log1, log2, log3, log4, memory, modifier, new, payable, pragma, private, protected, public, pure, push, require, return, returns, revert, selfdestruct, send, solidity, storage, struct, suicide, super, switch, then, this, throw, transfer, true, try, typeof, using, value, view, while, with, addmod, ecrecover, keccak256, mulmod, ripemd160, sha256, sha3}, 
	keywordstyle=[1]\color{blue}\bfseries,
	keywords=[2]{address, bool, byte, bytes, bytes1, bytes2, bytes3, bytes4, bytes5, bytes6, bytes7, bytes8, bytes9, bytes10, bytes11, bytes12, bytes13, bytes14, bytes15, bytes16, bytes17, bytes18, bytes19, bytes20, bytes21, bytes22, bytes23, bytes24, bytes25, bytes26, bytes27, bytes28, bytes29, bytes30, bytes31, bytes32, enum, int, int8, int16, int24, int32, int40, int48, int56, int64, int72, int80, int88, int96, int104, int112, int120, int128, int136, int144, int152, int160, int168, int176, int184, int192, int200, int208, int216, int224, int232, int240, int248, int256, mapping, string, uint, uint8, uint16, uint24, uint32, uint40, uint48, uint56, uint64, uint72, uint80, uint88, uint96, uint104, uint112, uint120, uint128, uint136, uint144, uint152, uint160, uint168, uint176, uint184, uint192, uint200, uint208, uint216, uint224, uint232, uint240, uint248, uint256, var, void, ether, finney, szabo, wei, days, hours, minutes, seconds, weeks, years},	
	keywordstyle=[2]\color{teal}\bfseries,
	keywords=[3]{block, blockhash, coinbase, difficulty, gaslimit, number, timestamp, msg, data, gas, sender, sig, value, now, tx, gasprice, origin},	
	keywordstyle=[3]\color{violet}\bfseries,
	identifierstyle=\color{black},
	sensitive=false,
	comment=[l]{//},
	morecomment=[s]{/*}{*/},
	commentstyle=\color{gray}\ttfamily,
	stringstyle=\color{red}\ttfamily,
	morestring=[b]',
	morestring=[b]"
}



\newcommand{\Rule}[4][]{\ensuremath{\inferrule*[lab={\small(#2)},#1]{#3}{#4}}}

\makeatletter
\newcommand{\xMapsto}[2][]{\ext@arrow 0599{\Mapstofill@}{#1}{#2}}
\def\Mapstofill@{\arrowfill@{\Mapstochar\Relbar}\Relbar\Rightarrow}
\makeatother

\newcommand{\Dl}{\Delta}
\newcommand{\Gm}{\Gamma}

\newcommand{\calC}{\mathcal{C}}
\newcommand{\calD}{\mathcal{D}}
\newcommand{\calE}{\mathcal{E}}
\newcommand{\calI}{\mathcal{I}}
\newcommand{\calJ}{\mathcal{J}}

\newcommand{\calO}{\mathcal{O}}

\newcommand{\bbN}{\vvmathbb{N}}
\newcommand{\bbE}{\vvmathbb{E}}

\newcommand{\bind}{\mathbin{\gg\!=}}
\newcommand{\dplus}{\mathbin{{+}\!\!{+}}}
\newcommand{\defeq}{\coloneqq}

\newcommand{\many}[1]{\overline{#1}}
\newcommand{\tsum}{\smallsum\nolimits}
\newcommand{\tuple}[1]{\langle #1 \rangle}
\newcommand{\Forall}[1]{\ensuremath{\forall {#1}\!:}}
\newcommand{\Exists}[1]{\ensuremath{\exists {#1}\!:}}

\newcommand{\poised}{\;\m{poised}}
\newcommand{\live}{\;\m{live}}
\newcommand{\cpoised}[1]{\;{#1}\m{\textsf{-}poised}}
\newcommand{\cblocked}[1]{\;{#1}\m{\textsf{-}blocked}}
\newcommand{\comm}[1]{\;{#1}\m{\textsf{-}comm}}

\newcommand{\ostep}{\xmapsto{\mathrm{det}}}
\newcommand{\pstep}{\xmapsto{\mathrm{prob}}}
\newcommand{\dstep}{\Mapsto}
\newcommand{\sstep}{\mapsto}
\newcommand{\cstep}[1]{\xMapsto{#1}}


\citestyle{acmauthoryear}

\sloppy
\begin{document}

\title{Probabilistic Resource-Aware Session Types}

\author{Ankush Das}
\affiliation{
  \institution{Carnegie Mellon University}
  \country{USA}
}
\email{ankushd@cs.cmu.edu}

\author{Di Wang}
\affiliation{
  \institution{Carnegie Mellon University}
  \country{USA}
}
\email{diw3@cs.cmu.edu}

\author{Jan Hoffmann}
\affiliation{
  \institution{Carnegie Mellon University}
  \country{USA}
}
\email{jhoffmann@cmu.edu}


\renewcommand{\shortauthors}{Das et al.}

\begin{abstract}
  
Session types guarantee that message-passing
processes adhere to predefined communication protocols.
Prior work on session types has focused on deterministic languages
but many message-passing systems, such as Markov chains and randomized
distributed algorithms, are probabilistic.
To model and analyze such systems, this
article introduces probabilistic session types and explores
their application in automatic expected resource analysis.
Probabilistic session types describe probability distributions
over messages and are a conservative extension of intuitionistic
(binary) session types.
To send on a probabilistic channel, processes have to utilize internal
randomness from a probabilistic branching expression or external
randomness from receiving on a probabilistic channel.
The analysis for expected resource bounds is integrated with the type
system and is a variant of automatic amortized resource analysis.
It can automatically derive symbolic bounds for different cost metrics
by reducing type inference to linear constraint solving.
The technical contributions include the meta theory that is based on a
novel nested multiverse semantics and a
type-reconstruction algorithm that allows flexible mixing of different
sources of randomness without burdening the programmer with type
annotations.
The type system has been implemented in the language \lang{}.
Experiments demonstrate that \lang{} is applicable in different
domains such as resource analysis of randomized distributed
algorithms, verification of limiting distributions in Markov
chains, and analysis of probabilistic digital contracts.


\end{abstract}

\begin{CCSXML}
<ccs2012>
 <concept>
  <concept_id>10010520.10010553.10010562</concept_id>
  <concept_desc>Computer systems organization~Embedded systems</concept_desc>
  <concept_significance>500</concept_significance>
 </concept>
 <concept>
  <concept_id>10010520.10010575.10010755</concept_id>
  <concept_desc>Computer systems organization~Redundancy</concept_desc>
  <concept_significance>300</concept_significance>
 </concept>
 <concept>
  <concept_id>10010520.10010553.10010554</concept_id>
  <concept_desc>Computer systems organization~Robotics</concept_desc>
  <concept_significance>100</concept_significance>
 </concept>
 <concept>
  <concept_id>10003033.10003083.10003095</concept_id>
  <concept_desc>Networks~Network reliability</concept_desc>
  <concept_significance>100</concept_significance>
 </concept>
</ccs2012>
\end{CCSXML}

\ccsdesc[500]{Computer systems organization~Embedded systems}
\ccsdesc[300]{Computer systems organization~Redundancy}
\ccsdesc{Computer systems organization~Robotics}
\ccsdesc[100]{Networks~Network reliability}


\maketitle

\section{Introduction}\label{sec:intro}

Session types statically describe communication protocols for message-passing
processes and well-typedness ensures adherence to these protocols at runtime.
Session types were introduced by Honda~\cite{Honda93CONCUR,Honda98esop}
for defining binary communication between processes,
which can be interpreted classically~\cite{Wadler12icfp} or
intuitionistically~\cite{Caires10concur}.
%
In this work, we follow the intuitionistic interpretation~\cite{Caires16mscs,Pfenning15fossacs}
of session types that is in Curry-Howard correspondence with intuitionistic
linear logic~\cite{Girard87tcs} and guarantees the \emph{absence of deadlocks}
(global progress) and \emph{session fidelity} (type preservation).
%

Existing work on session types has focused on deterministic (i.e.,
non-probabilistic) languages but many message-passing systems are
naturally probabilistic.
Often, distributed algorithms internally use randomization as a tool to
overcome limitations of deterministic algorithms. Examples of such
algorithms include~\citet{Lehmann81POPL} randomized dining philosophers
protocol or~\citet{Itai90IC} distributed leader election.
In other systems, probability distributions are used to model
uncertainty of external events such as incoming jobs in a data center.
More generally, Markov chains can be viewed as systems of
probabilistic message-passing processes.

This article presents \lang{}, a concurrent probabilistic programming
language with novel \emph{probabilistic session types} that can be used to
model and analyze probabilistic message-passing systems.
The key novelty is an additional type former that assigns a
prob.\ distribution to a choice of labels in a session type.
Such a probabilistic internal choice would, for instance, prescribe
that the label $\m{H}$ has to be sent on a channel with probability
$0.25$ and the label $\m{T}$ has to be sent with probability $0.75$.
To adhere to this type, a process can utilize two sources of
randomness: A new term former for probabilistic branching provides an
internal source of randomness and receiving messages on a probabilistic
channel according to some distribution provides an external source of
randomness.
Probabilistic session types are carefully designed to be a \emph{conservative
extension} of intuitionistic (binary) session types. In particular,
we support both probabilistic and standard choice types, which is
technically challenging.

In addition to the design of the type system, a major technical
contribution is an efficient type-reconstruction algorithm that
ensures that prob.\ distributions of labels sent on a probabilistic
channel matches the specification of the session type.
The type rules provide a high degree of flexibility and the correct
prob.\ distribution of probabilistic communication can be achieved by
nesting internal and external sources of randomness.
This flexibility complicates reconstruction of type derivations as we
do not want to burden programmers with providing channel probabilities for different
branches of a probabilistic split.
We solve this challenge by automatically reconstructing probabilities
in type derivations by generating and solving linear constraints.

A distinguishing feature of \lang{} is that it can be employed
for \emph{lightweight verification} of probabilistic systems. Probabilistic
session types statically guarantee the prob.\ distribution of
messages on a certain channel. These can naturally be composed
to produce a desired output prob.\ distribution given an input
prob.\ distribution. We have used this feature to verify correctness of
dice programs~\cite{Knuth76} that model dice from coin flips,
prove that a biased coin can be converted to an unbiased coin,
and verify limiting distributions of standard Markov chains.
In some situations, \lang{} can also \emph{infer the prob. distribution}
on a given channel. For instance, if the input probabilities
for a process are given, the output probabilities can be inferred
automatically.

As an application of \lang{}, we focus on automatically
deriving \emph{expected cost bounds for concurrent message-passing systems}.
Probabilistic models are often used as the basis for a quantitative
analysis such as the expected number of messages exchanged (in
randomized distributed protocols), the expected response time (quality
of service in a data center), or expected number of state transitions
in a Markov chain.
To derive bounds on such quantities, \lang{}'s type system combines
two recently introduced techniques: session-types for work analysis of
(deterministic) concurrent systems~\cite{Das18RAST} and expected cost
analysis for functional probabilistic programs~\cite{WangKH20}.
Both of these techniques can be seen as a variant of automatic
amortized resource analysis (AARA)~\cite{Jost03,HoffmannW15} and inference
can be reduced to standard linear programming (LP).
The newly designed AARA for \lang{} has interesting and
non-trivial interactions with probabilistic channels that enable a
more compositional analysis resulting in precise bounds.
While there are several techniques for automatic expected cost
analysis of sequential probabilistic programs~\cite{PLDI:NCH18,PLDI:WFG19,CAV:CFG16},
we are only aware of manual rule-based reasoning systems~\cite{Tassarotti19POPL,JTCS:MRS16}
or probabilistic model checking techniques~\cite{Kwiatkowska11CAV} for
analyzing cost of concurrent probabilistic programs. 


The meta theory of \lang{}---another contribution of this
work---is based on a novel variant of progress and type preservation.
The preservation proof is challenging as state-of-the-art techniques
for probabilistic programming languages~\cite{ICFP:BLG16,LICS:ALG19} do not
directly apply.
From an operational point of view, the problem is that different
universes that result from a probabilistic split cannot be considered
in isolation since the prob.\ distributions of messages on the channels
diverge.
Our solution is to use a \emph{nested-multiverse} semantics that manage the divergence of
message distributions to control their impact on other processes.

We have implemented \lang{} and performed experiments with
probabilistic message-passing systems from different domains.
The implementation of \lang{} infers, for instance, bounds on the
expected cost of randomized distributed protocols, such as Lehmann
and Rabin's randomized dining philosophers algorithm~\cite{Lehmann81POPL}, Itai and
Rodeh's synchronous leader election protocol~\cite{Itai90IC}, and Chaum's
dining cryptographers protocol~\cite{Chaum88JC}.
We have also implemented and verified several standard Markov chain-models
such as Google's PageRank algorithm~\cite{Page99TR}, random walks, and
dice programs.
%
Finally, we implemented some case studies that showcase how probabilistic
session types can be used as specifications of probabilistic digital contracts, such as
lotteries and slot machines.  \lang{} can verify winning probabilities
and automatically compute bounds on the expected financial gain/loss.

In summary, this article contains the following contributions. 
\begin{itemize}
\item The design of \lang{}, a language with probabilistic session
  types and a flexible type system for probabilistic send and receive.
  (Section~\ref{sec:prob})
\item An AARA for deriving symbolic bounds on the expected cost of
  \lang{} programs. (Section~\ref{sec:prob})
\item The soundness proof of the type system with a novel
  probabilistic nested-multiverse semantics establishing global progress, session
  fidelity and probability consistency. (Section~\ref{sec:metatheory})
\item An efficient type reconstruction algorithm that derives probabilities based
    on linear constraint solving. (Section~\ref{sec:impl})
\item An implementation of \lang{} and an experimental evaluation with
  distributed algorithms, digital contracts, and Markov chains. (\cref{sec:impl,sec:examples})
\end{itemize}

We start with an informal overview of \lang{} (Section~\ref{sec:overview})
and a recapitulation of session types for work analysis, which \lang{}
extends (Section~\ref{sec:formal}).


\section{Overview of Probabilistic Session Types}\label{sec:overview}
We briefly overview probabilistic session types with a series of
illustrative examples.
We follow the approach and syntax of Rast~\cite{Das20FSCD,Das20CONCUR,Das20PPDP},
which is based on a Curry-Howard isomorphism between intuitionistic
linear logic and session types, extended by recursively defined types
and processes. In this intuitionistic approach, every channel has a
unique provider and a client. We view the
session type as describing the communication from the provider's
point of view, with the client having to perform dual actions.

As a first example, consider the session type $\m{bool}$ defined as
\begin{center}
  \begin{minipage}{0cm}
  \begin{tabbing}
  $\m{bool} \triangleq \ichoice{\mb{true} : \one, \mb{false} : \one}$
  \end{tabbing}
  \end{minipage}
\end{center}
Here, the \emph{internal choice} type constructor $\ichoiceop$
dictates that the provider must send either $\mb{true}$ or $\mb{false}$.
In either case, the continuation type (after the colon) is $\one$,
indicating the end of the communication and requiring the provider to terminate
after sending a $\m{close}$ message.

As a first example, we define a simple process $\m{TT}$ that outputs $\mb{true}$ and
terminates. 
\begin{lstlisting}
  decl TT : . |- (b : bool)
  proc b <- TT = b.true ; close b
\end{lstlisting}
The first line declares the $\m{TT}$ process showing that it uses
an empty context (dot before the turnstile) and offers the channel $b$
of type $\m{bool}$. The second line shows the process definition.
The term $b \leftarrow \m{TT}$
is the syntax for defining (or spawning) process $\m{TT}$ offering on $b$
and using no channels.
The term $\esendl{b}{\mb{true}}$ denotes sending the
label $\mb{true}$ on $b$ and $\eclose{b}$ denotes closing the
channel $b$ and terminating.  A similar $\m{FF}$ process can be defined
that outputs $\mb{false}$.

\paragraph{\textbf{Probabilistic Processes}}
Suppose we wish to define a process $\m{TF}$ that outputs $\mb{true}$
with probability $p \in [0,1]$ and $\mb{false}$ with probability $1-p$.
We introduce a new probabilistic term $\eflip{p}{P_1}{P_2}$,
operationally interpreted as flipping a coin that outputs
heads with prob. $p$ and tails with prob. $1-p$. If the
coin outputs heads ($\heads$), we execute $P_1$, otherwise
we execute $P_2$. 
We employ this term to define process $\m{TF}$
with $p = 0.6$ (we only allow constant probabilities).
\begin{lstlisting}
  decl TF : . |- (b : bool)
  proc b <- TF = flip 0.6 (H => b.true ; close b  |  T => b.false ; close b)
\end{lstlisting}
The $\m{TF}$ process first flips a coin with prob. of $\heads$ being $0.6$.
If the coin flips to $\heads$, the process sends the label $\mb{true}$
and terminates. If the coin flips to $\tails$, the
process sends $\mb{false}$ and terminates. Since the prob. of
$\heads$ is $0.6$, the process $\m{TF}$ outputs $\mb{true}$ with prob.
$0.6$, and $\mb{false}$ with prob. $1-0.6=0.4$.

\paragraph{\textbf{Negation}}
Suppose we consider a negation process $\m{neg}$ that takes
a channel $b : \m{bool}$ as input and negates it (output
$\mb{false}$ if input is $\mb{true}$ and vice-versa).
\begin{lstlisting}
  decl neg : (b : bool) |- (c : bool)
  proc c <- neg b = case b ( true => c.false ; wait b ; close c
                           | false => c.true ; wait b ; close c )
\end{lstlisting}
The declaration describes that the $\m{neg}$ process uses channel
$b : \m{bool}$ and provides $c : \m{bool}$. This is similarly denoted
in the definition as $c \leftarrow \m{neg} \; b$. The definition
branches on the label received on channel $b$: if the process
receives $\mb{true}$, it sends $\mb{false}$ on $c$ and vice-versa.
Then, in either case, the process waits for the channel $b$ to
close using the term $\ewait{b}$ and then closes channel $c$.

\paragraph{\textbf{Probabilistic Session Types}}
Although processes can exhibit probabilistic behavior, this
information is not visible in their session types. In particular, the $\m{neg}$
process is unaware of the probability of $\mb{true}$ or $\mb{false}$
along channel $b$. Therefore, in this article, we introduce novel \emph{probabilistic
session types} that assign probabilities to the labels in a
session type. We introduce a \emph{probabilistic internal choice
type operator} $\pichoice{\ell^{p_\ell} : A_\ell}$: the provider
sends label $\ell$ with prob. $p_\ell$ and continues to provide type
$A_\ell$. 
The dual type is $\pechoice{\ell^{p_\ell} : A_\ell}$ where the provider
is guaranteed to receive label $\ell$ with probability $p_\ell$.
As illustrations, we describe several prob. session types.
\begin{align*}
  \m{tbool} \triangleq \pichoice{\mb{true}^{1.0} : \one, \mb{false}^{0.0} : \one} & \hspace{5em} &
  \m{fbool} \triangleq \pichoice{\mb{true}^{0.0} : \one, \mb{false}^{1.0} : \one} \\
  \m{pbool} \triangleq \pichoice{\mb{true}^{0.6} : \one, \mb{false}^{0.4} : \one} & \hspace{5em} &
  \m{npbool} \triangleq \pichoice{\mb{true}^{0.4} : \one, \mb{false}^{0.6} : \one}
\end{align*}
The type $\m{tbool}$ always outputs $\mb{true}$,
i.e. with prob. $1$. Similarly, the type $\m{fbool}$ always outputs
$\mb{false}$. The type $\m{pbool}$ outputs $\mb{true}$ with prob. $0.6$
and $\mb{false}$ otherwise. Its negation type $\mb{npbool}$ outputs
$\mb{true}$ with prob. $0.4$ and $\mb{false}$ otherwise.  With these
types, \emph{without changing the process definitions}%
\footnote{%
  In this article, we actually distinguish between standard and
  probabilistic send and case analysis for clarity. However, it is not
  necessary to make this distinction in the surface syntax.}%
, we obtain the following types for the
aforementioned processes.
Note that the previous typings with the type $\m{bool}$ are also valid
but provide less information.
\begin{lstlisting}
  decl TT : . |- (b : tbool)        decl FF : . |- (b : fbool)
  decl TF : . |- (b : pbool)        decl neg : (b : pbool) |- (c : npbool)
\end{lstlisting}
The soundness theorem ensures that the distribution of the labels sent on a
probabilistic channel at runtime does indeed match the distribution
on the labels in the choice types.
To send on a probabilistic channel, a process
can use two sources of randomness: a $\m{flip}$ (like in $\m{TF}$) or
$\m{case}$ on labels received on another probabilistic channel according
to a known distribution (like in $\m{neg}$).
These sources of randomness can be combined and nested as long as the
resulting distributions are valid. For instance, we can define an $\m{unbias}$
process that uses a biased coin ($b : \m{pbool}$) and produces an unbiased
coin ($c : \m{ubool} \triangleq \pichoice{\mb{true}^{0.5} : \one, \mb{false}^{0.5} : \one}$).
In each branch, the process flips a fair coin to decide whether to negate
the input or not. Since the input is copied or negated with \emph{equal probability},
we are able debias the input using a combination of $\m{flip}$
and $\m{case}$.
\begin{lstlisting}
decl unbias : (b : pbool) |- (c : ubool)
proc c <- unbias b = case b (
   true => flip 0.5 (H => c.false ; wait b ; close c | T => c.true ; wait b ; close c )
| false => flip 0.5 (H => c.true ; wait b ; close c | T => c.false ; wait b ; close c ))
\end{lstlisting}
A contribution of the article is an efficient type checking
algorithm that validates that implementations produce the distributions
defined in the types.

Probabilistic session types are naturally compositional. We can
define a $\m{negneg}$ process that calls $\m{neg}$ twice to obtain
an identity process.
\begin{lstlisting}
  decl negneg : (b : pbool) |- (d : pbool)
  proc d <- negneg b = c <- neg b ; d <- neg c
\end{lstlisting}
The term $c \leftarrow \m{neg} \; b$ corresponds to calling the $\m{neg}$
process passing $b$ as an input channel, and binding $c$ to the output
channel. We call $\m{neg}$ again using $d \leftarrow \m{neg} \; c$
and binding the return channel to $d$. Since $1-(1-p) = p$, we obtain
that the input and output types for $\m{negneg}$ are equal.

\paragraph{\textbf{Inference of Probabilities}}
If the input probabilities to a process are given, \lang{} internally
employs an LP solver to infer the output probabilities automatically.
For instance, recall the $\m{TF}$ process that outputs $\mb{true}$
with prob.\ $0.6$ and $\mb{false}$ otherwise.
We allow the programmer to define a \emph{starred} boolean type
as $\m{sbool} \triangleq \pichoice{\mb{true}^* : \one, \mb{false}^* : \one}$.
where $*$ denotes unknown prob. values that need to be inferred.
The programmer can then define
\begin{lstlisting}
  decl TF : . |- (b : sbool)
  proc b <- TF = flip 0.6 (H => b.true ; close b  |  T => b.false ; close b)
\end{lstlisting}
The type checker internally replaces $*$ with prob. variables, i.e.
substitutes $\m{sbool}$ with $\pichoice{\mb{true}^{p_1} : \one, \mb{false}^{p_2} : \one}$
with the constraint $p_1+p_2=1$.
Then, the typing rules of \lang{} are applied to the program, which
intuitively compute the prob.\ of outputting each label.
In the $\heads$ branch, the prob. of outputting $\mb{true}$ is $1$, while
in the $\tails$ branch, the prob. of outputting $\mb{true}$ is $0$.
Therefore, the total prob.\ of outputting $\mb{true}$ is $0.6 * 1 + (1-0.6) * 0$.
Similarly, the total prob.\ of outputting $\mb{false}$ is $0.6 * 0 + (1-0.6) * 1$.
Noting these observations, the type checker generates the following linear constraints
\begin{center}
  \begin{minipage}{0cm}
  \begin{tabbing}
  $p_1 = 0.6 * 1 + (1-0.6) * 0$ \qquad
  $p_2 = 0.6 * 0 + (1-0.6) * 1$ \qquad
  $p_1 + p_2 = 1$
  \end{tabbing}
  \end{minipage}
\end{center}
The LP solver then solves these constraints to produce the solution
$p_1 = 0.6, p_2 = 0.4$ which is then substituted back in the type annotations
for the programmer to view and verify.
In a similar fashion, \lang{} can infer the output probabilities for
the $\m{neg}$ and the $\m{unbias}$ processes if the input probabilities
are given.

\paragraph{\textbf{Application to Markov Chains}}
Probabilistic session types are adept for implementing
and verifying Markov chains. A practical application of
Markov chains are \emph{dice programs}~\cite{Knuth76} that use
a fair coin to model a die. The Markov chain for one such program
is described in Figure~\ref{fig:dice_chain}(a). For simplicity of
exposition, we consider a 3-faced die, although we have implemented
the complete 6-faced die program (see Section~\ref{sec:impl}).

The Markov chain initiates in state $1$, and transitions to states $2$
or $3$ with prob. $0.5$ each. In state $2$, with prob. $0.5$, the chain
outputs face 1 and with prob. $0.5$, it transitions to back to state $1$.
In state $3$, with prob. $0.5$, the chain outputs face 2 and with prob. $0.5$,
it outputs face 3.

We can prove the functional correctness of this die program using
probabilistic session types. To this end, we implement
the probabilistic program corresponding this Markov chain. First, we
need to define three different probabilistic types, one corresponding to
each state. We define $ T_1 \triangleq \pichoice{\mb{one}^{p_1} : \one,
\mb{two}^{p_2} : \one, \mb{three}^{p_3} : \one}$, $T_2 \triangleq
\pichoice{\mb{one}^{p_4} : \one, \mb{two}^{p_5} : \one, \mb{three}^{p_6} : \one}$
and $T_3 \triangleq \pichoice{\mb{one}^{p_7} : \one, \mb{two}^{p_8} : \one,
\mb{three}^{p_9} : \one}$.
Each of these types output $\mb{one}$, $\mb{two}$ or $\mb{three}$ with
different probabilities and terminate.

We define processes $P_i$ corresponding to each state $i$.
Each process $P_i$ offers type $T_i$. Figure~\ref{fig:dice_chain}(b)
outlines the declaration and definition of each process.
The process $P_1$ flips a coin with prob. $0.5$, and in the $\heads$ branch,
it calls $P_2$ (corresponding to transitioning to state $2$), and in the
$\tails$ branch, it calls $P_3$. The process $P_2$ outputs $\mb{one}$ in
the $\heads$ branch, and calls $P_1$ in the $\tails$ branch. Finally, process
$P_3$ outputs $\mb{two}$ in the $\heads$ branch and $\mb{three}$ in the $\tails$
branch. Since the Markov chain is mutually recursive, so are the processes
$P_1$ and $P_2$. This program exactly corresponds
to the Markov chain in Figure~\ref{fig:dice_chain}(a).

Since this Markov chain is mutually recursive, computing the conditional
prob.\ of sending each label from each state is challenging.
As a first illustration, consider process $P_3$.
The prob.\ of sending $\mb{one}$ for $P_3$ is $0.5 * 0 + (1-0.5) * 0 = 0$.
The prob.\ of sending $\mb{two}$ is $0.5 * 1 + (1-0.5) * 0 = 0.5$.
The prob.\ of sending $\mb{three}$ is $0.5 * 0 + (1-0.5) * 1 = 0.5$.

Next, consider the prob. of outputting $\mb{one}$ for process $P_1$.
In the $\heads$ branch, it calls process $P_2$ which outputs $\mb{one}$
with prob.\ $p_4$.
In the $\tails$ branch, it calls process $P_3$ which outputs $\mb{one}$
with prob.\ $p_7$.
Therfore, the total prob.\ that $P_1$ outputs $\mb{one}$ is
$0.5 * p_4 + (1-0.5) * p_7$.
We apply a similar argument for labels $\mb{two}$ and $\mb{three}$
to obtain the constraints
\begin{eqnarray*}
  & p_1 = 0.5 * p_4 + (1-0.5) * p_7 \\
  & p_2 = 0.5 * p_5 + (1-0.5) * p_8 \\
  & p_3 = 0.5 * p_6 + (1-0.5) * p_9 \\
  & p_1 + p_2 + p_3 = 1
\end{eqnarray*}
Using a similar argument for process $P_2$, we obtain the constraints
\begin{eqnarray*}
  & p_4 = 0.5 * 1 + (1-0.5) * p_1 \\
  & p_5 = 0.5 * 0 + (1-0.5) * p_2 \\
  & p_6 = 0.5 * 0 + (1-0.5) * p_3 \\
  & p_4 + p_5 + p_6 = 1
\end{eqnarray*}
Lines 1, 2, 3 equate the prob.\ of sending labels $\mb{one}$, $\mb{three}$
and $\mb{three}$ respecitvely. We use the LP solver to solve these constraints
and produce the following type annotations
\begin{eqnarray*}
  T_1 & \triangleq & \pichoice{\mb{one}^{1/3} : \one, \mb{two}^{1/3} : \one, \mb{three}^{1/3} : \one} \\
  T_2 & \triangleq & \pichoice{\mb{one}^{2/3} : \one, \mb{two}^{1/6} : \one, \mb{three}^{1/6} : \one} \\
  T_3 & \triangleq & \pichoice{\mb{one}^{0} : \one, \mb{two}^{1/2} : \one, \mb{three}^{1/2} : \one}
\end{eqnarray*}

\lang{} can automatically infer that state $i$ offers type $T_i$.
The programmer only needs to implement the program in Figure~\ref{fig:dice_chain}(b)
with $*$ annotations for types $T_1$, $T_2$, and $T_3$.
\lang{} infers the probabilities on each type automatically.
Moreover, the successful type-checking of this program indicates that
state $1$ truly outputs each of the labels with equal probability,
thus proving its functional correctness.

\begin{figure}
  \centering
  \includegraphics[width=\linewidth]{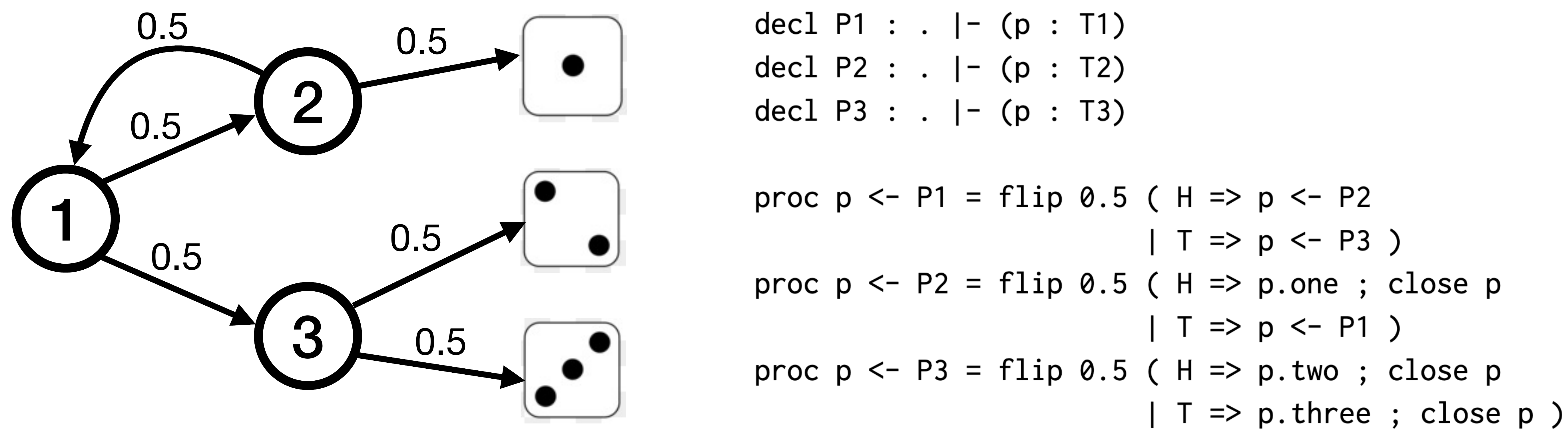}
  \vspace{-2em}
  \caption{(a) Markov chain associated with the 3-faced die program, and
  (b) corresponding program}
  \label{fig:dice_chain}
  \vspace{-1em}
  \Description{Dice program}
\end{figure}

\paragraph{\textbf{Automated Expected Cost Analysis}}

A natural application of probabilistic session types is the
type-guided analysis of the expected cost of distributed protocols and
Markov chains.
For instance, we would like automatically compute a bound on the
expected number of flips that are executed if we run the previously
defined process $\m{P_1}$, which is simulating a 3-faced die.

To perform the expected cost analysis we combine probabilistic session
types with existing techniques for automatic work analysis with
session types~\cite{Das18RAST}, which can be seen as an instantiation
of \emph{automatic amortized resource analysis}
(AARA)~\cite{Jost03,HoffmannW15}.
The idea is to statically associate a potential with each process
that is used to pay for the (expected) work that is performed
by this process.
A key feature is that this potential can also be transferred to other
processes to cover the work incurred by them, thus allowing amortization.
Importantly, the amount of potential transferred with a message or
associated with a process can be efficiently inferred by linear
constraint solving.
This technique is parametric in the cost model and can, for e.g.,
bound the expected number of messages or flips or other user-defined quantities.

We build intuition for the type system by revisiting the previously
discussed examples. Consider again the processes $\m{TF}$ and
$\m{neg}$ and assume a cost model in which the cost of sending the
label $\m{true}$ is $1$ and the cost of sending the label $\m{false}$
is $2$. Since $p = 0.6$, we can derive the typing
\begin{center}
  \begin{minipage}{0cm}
  \begin{tabbing}
  $\cdot \entailpot{1.4} \m{TF} :: (b : \m{bool})$
  \end{tabbing}
  \end{minipage}
\end{center}
where the number on the turnstile reflects the expected cost. Since
we incur cost $1$ with prob. $0.6$, and cost $2$ with prob. $0.4$,
we deduce that the expected cost of executing $\m{TF}$ is
$0.6 * 1 + 0.4 * 2 = 1.4$.
To infer the expected potential $q$ of a probabilistic expression $\eflip{p}{P_H}{P_T}$,
we use the second key idea expressed as the equation
\begin{center}
  \begin{minipage}{0cm}
  \begin{tabbing}
  $q = p \cdot q_H + (1-p) \cdot q_T$
  \end{tabbing}
  \end{minipage}
\end{center}
where $q_H$ and $q_T$ are the expected potentials for $P_H$ and $P_T$ respectively.

Reasoning about expected cost is not inextricably connected to
probabilistic session types and can be studied as an independent
problem.
However, there are interesting connections between the two concepts
that we explore.
For example, we are able to derive the following typing for the
process $\m{neg}$ (cost of $2$ for sending $\mb{false}$, and
$1$ for sending $\mb{true}$).
\begin{center}
  \begin{minipage}{0cm}
  \begin{tabbing}
  $(b : \m{pbool}) \entailpot{2 * 0.6 + (1{-}0.6)} \m{neg} :: (c : \m{npbool})$
  \end{tabbing}
  \end{minipage}
\end{center}
It states that the expected work cost of the process is bounded by
$2 * 0.6 + (1{-}0.6) = 1.6$. In the typing derivation, it is essential to have access
to the distribution of messages on the channel $b$. If this
information is not available then we have to assume the worst
case---the label $\mb{false}$ is sent on channel $c$---and derive
the bound $2$.
\begin{center}
  \begin{minipage}{0cm}
  \begin{tabbing}
  $(b : \m{bool}) \entailpot{2} \m{neg} :: (c : \m{bool})$
  \end{tabbing}
  \end{minipage}
\end{center}
Thus, probabilistic session types help us infer expected cost, instead
of worst-case cost.

For the 3-faced die program, we are interested in the expected cost
of the process $\m{P_1}$. For illustration purposes it is convenient
to consider a cost metric that counts the number of evaluated flips.
Then the expected cost of $\m{P1}$ is $\frac{8}{3}$. 
To infer this (tight) bound the type system assigns potential $q_i$ to process $P_i$.
Process $P_3$ requires only $1$ unit of
potential, since it performs \emph{only one} flip.
Taking the $p$-weighted sum of the expected cost in both branches,
we get
\begin{center}
  \begin{minipage}{0cm}
  \begin{tabbing}
  $q_1 = 1 + 0.5 \cdot q_2 + 0.5 \cdot q_3 = 1.5 + 0.5 \cdot q_2 \qquad
  q_2 = 1 + 0.5 \cdot 0 + 0.5 \cdot q_1 = 1 + 0.5 \cdot q_1$
  \end{tabbing}
  \end{minipage}
\end{center}
For each equation, the summand $1$ accounts for the flip at the start
of each process. For $q_1$, the potential in the $\heads$ branch is
$q_2$ since we call process $P_2$ and $q_3$ in the $\tails$ branch
since we call process $P_3$. For $q_2$, the potential in $\heads$
branch is $0$ since it does not involve any flips, and $q_1$ in the
$\tails$ branch since we call $P_1$. Solving these equations leads to
the solution $q_1 = \frac{8}{3}, q_2 = \frac{7}{3}, q_3 = 1$.

Our implementation (Section~\ref{sec:impl}) automatically generates
and solves these linear equations. 
In Section~\ref{sec:examples}, we show how we can automatically
derive expected cost bounds on randomized distributed protocols.
We can also derive symbolic bounds that depend on the numbers of
processes in the network by incorporating potential transfers in a
(recursive) session type as in previous work~\cite{Das18RAST} for
deterministic processes. 
For example, we infer that the expected number of messages in Lehmann
and Rabin's randomized dining philosophers algorithm~\cite{Lehmann81POPL} is
bounded by $6.25n$, where $n$ is the number of philosophers.


\section{\lang{} and Resource-Aware Session Types}\label{sec:formal}

\begin{figure}
  \centering
  \begin{tabbing}
  $\mbox{Proc} \;\; P, Q ::= \esendl{x}{k} \semi P \mid \ecase{x}{\ell}{P_\ell}_{\ell \in L}
  \mid \esendch{x}{y} \semi P \mid \erecvch{x}{y} \semi P_y \mid \eclose{x} \mid \ewait{x} \semi P$ \\
  \hspace{5em}\= $\mid \fwd{x}{y} \mid \ecut{y}{f}{\overline{x}}{P_y} \mid \eget{x}{r}
  \mid \epay{x}{r} \mid \ework{r} \semi P$ \\
  \> $\mid \color{blue}{\eflip{p}{P_H}{P_T}} \mid \color{blue}{\esendlp{x}{k} \semi P}
  \mid \color{blue}{\epcase{x}{\ell}{P_\ell}_{\ell \in L}}$ \\ \\
  $\mbox{Type} \;\; A, B ::= \ichoice{\ell : A_\ell}_{\ell \in L} \mid
  \echoice{\ell : A_\ell}_{\ell \in L}
  \mid A \tensor B \mid A \lolli B \mid \one \mid V \mid \tpaypot{A}{r} \mid \tgetpot{A}{r}$ \\
  \> $\mid \color{blue}{\pichoice{\ell^{r_\ell} : A_\ell}_{\ell \in L}} \mid
  \color{blue}{\pechoice{\ell^{r_\ell} : A_\ell}_{\ell \in L}}$
  \end{tabbing}
  \caption{Process expressions and session types of \lang{}}
  \label{fig:syntax}
  \Description{Grammar}
\end{figure}

This section describes the syntax and static semantics of \lang{}.
The types and expressions of \lang{} are defined by the grammars in Figure~\ref{fig:syntax}.
The symbol $\ell$ stands for a label (like in a sum type) and the symbols $x$ and $y$ stand
for variables, which range over channels.
The annotations $r$ and $r_\ell$ are non-negative rational numbers.
They denote potential annotations and probabilities.
The subscript $y$
on the process terms $P_y$
indicates that the variable $y$
is free in $P_y$ and bound in the respective syntactic form.
We focus the discussion on the (deterministic) fragment of the language,
which is similar to previous work on resource-aware session
types~\cite{Das18RAST}.
The novel probabilistic part of \lang{} is a
conservative extension and discussed in Section~\ref{sec:prob} (marked in
blue in Figure~\ref{fig:syntax}).

\paragraph{\textbf{Static Semantics}}

Type constructors in session types ($\oplus$,
$\with$,
$\tensor$,
$\lolli$,
$\one$)
are derived from assigning an operational interpretation to
connectives in intuitionistic linear logic. To express resource cost,
we use two type operators $\tpaypot{A}{r}$
and $\tgetpot{A}{r}$
to send and receive $r$
units of potential. Finally, we also have type identifiers $V$,
which can refer to type definitions to define recursive types.

The underlying base system of session types is derived from a Curry-Howard
interpretation~\cite{Caires10concur,Caires16mscs} of intuitionistic linear logic
\cite{Girard87tapsoft}. An intuitionistic linear sequent
$A_1, A_2, \ldots, A_n \vdash A$
is interpreted as the interface to a process expression $P$. We label each of the
antecedents with a channel name $x_i$ and the succedent with channel name $z$.
The $x_i$'s are \emph{channels used by} $P$ and $z$ is the \emph{channel provided by} $P$.
\begin{center}
  \begin{minipage}{0cm}
  \begin{tabbing}
  $x_1 : A_1, x_2 : A_2, \ldots, x_n : A_n \vdash P :: (z : C)$
  \end{tabbing}
  \end{minipage}
\end{center}
The resulting judgment states that process $P$ provides a service of
session type $C$ along channel $z$, while using the services of session types $A_1,
\ldots,A_n$ provided along channels $x_1, \ldots, x_n$, respectively. All these
channels must be distinct. We abbreviate the antecedent of the sequent by $\Delta$.

The typing judgment of \lang{} adds a non-negative rational number $q$
and a signature $\Sg$.
\begin{center}
  \begin{minipage}{0cm}
  \begin{tabbing}
  $\D \entailpot{q}_\Sg P :: (x : A)$
  \end{tabbing}
  \end{minipage}
\end{center}
The number $q$
is the potential of the process that can be used to cover (expected)
evaluation cost.
The signature $\Sg$
contains type and process definitions.  Because it is fixed, we elide
it from the presentation of the rules.


\begin{table*}[t]
\small
  \begin{tabular}{l l l l l}
  \textbf{Type} & \textbf{Cont.} & \textbf{Process Term} & \textbf{Cont.} & \multicolumn{1}{c}{\textbf{Description}} \\
  \toprule
  $c : \ichoice{\ell : A_\ell}_{\ell \in L}$ & $c : A_k$ & $\esendl{c}{k} \semi P$
  & $P$ & provider sends label $k$ along $c$ \\
  & & $\ecase{c}{\ell}{Q_\ell}_{\ell \in L}$ & $Q_k$ & client receives label $k$ along $c$ \\
  \addlinespace
  $c : \echoice{\ell : A_\ell}_{\ell \in L}$ & $c : A_k$ & $\ecase{c}{\ell}{P_\ell}_{\ell \in L}$
  & $P_k$ & provider receives label $k$ along $c$ \\
  & & $\esendl{c}{k} \semi Q$ & $Q$ & client sends label $k$ along $c$ \\
  \addlinespace
  $c : A \tensor B$ & $c : B$ & $\esendch{c}{w} \semi P$
  & $P$ & provider sends channel $w : A$ along $c$ \\
  & & $\erecvch{c}{y} \semi Q_y$ & $Q_y[w/y]$ & client receives channel $w : A$ along $c$ \\
  \addlinespace
  $c : A \lolli B$ & $c : B$ & $\erecvch{c}{y} \semi P_y$
  & $P_y[w/y]$ & provider receives channel $w : A$ along $c$ \\
  & & $\esendch{c}{w} \semi Q$ & $Q$ & client sends channel $w : A$ along $c$ \\
  \addlinespace
  $c : \one$ & --- & $\eclose{c}$
  & --- & provider sends $\mi{close}$ along $c$ \\
  & & $\ewait{c} \semi Q$ & $Q$ & client receives $\mi{close}$ along $c$ \\
  \addlinespace
  $c : \tpaypot{A}{r}$ & $c:A$ &   $\epay{c}{r}; P$
  & $P$ & provider sends potential $r$ along $c$ \\
  & & $\eget{c}{r};Q$ & $Q$ & client receives potential $r$ along $c$\\
  \addlinespace
  $c : \tgetpot{A}{r}$ & $c:A$ & $\eget{c}{r};P$ & $P$  & provider receives potential $r$ along $c$ \\
  & & $\epay{c}{r}; Q$
  & $Q$ & client sends potential $r$ along $c$\\
  \bottomrule
  \end{tabular}
  \caption{Deterministic session types with operational description}
  \label{tab:language}
  \vspace{-2em}
  \end{table*}
  

\paragraph{\textbf{Process and Type Definitions}}

Type definitions in the signature have the form $V = A$
and can be (mutually) recursive, departing from a strict Curry-Howard
interpretation of linear logic.
However, we require $A$ to be
\emph{contractive}~\cite{Gay2005} meaning $A$ should not itself be a
type name. Our type definitions are \emph{equirecursive} so we can
silently replace type names $V$ by $A$ during type checking,
and do not have explicit rules for recursive types.
Process definitions have the form $f = (\D,q,P,x,A)$,
where $f$ is the name of the process and $P$
its defining expression, with $\D$
being the channels used by $f$ and $x : A$ being the offered channel,
and $q$ its potential.
All definitions are collected in a fixed global signature $\Sg$.
For a \emph{well-formed signature}, we require that
$\D \entailpot{q}_{\Sg} P :: (x : A)$
for every process definition $f = (\D,q,P,x,A)$ in $\Sg$.
Like type definitions, process definitions are mutually
recursive.

\subsection{Basic Session Types}\label{subsec:base}

Table~\ref{tab:language} describes the deterministic session types,
their associated process expressions, their continuation (both in
types and expressions) and operational description.
Figure~\ref{fig:base-types} contains the type rules for basic session
types. The potential annotations $q$ present in these type rules are simply passed
around unchanged. They are discussed in detail in Section~\ref{subsec:ergo} .

\paragraph{\textbf{Internal and External Choice}}
The \emph{internal choice} type constructor
$\ichoice{\ell : A_{\ell}}_{\ell \in L}$ is an $n$-ary labeled
generalization of the additive disjunction $A \oplus B$. Operationally,
the provider of $x : \ichoice{\ell : A_{\ell}}_{\ell \in L}$ is required
to send a label $k \in L$ and then continue to provide $A_k$. The corresponding
process expression is $(\esendl{x}{k} \semi P)$ where $P$ is the
continuation. Dually, the client must branch based on the
label $k \in L$ received from the provider using the expression
$\ecase{x}{\ell}{Q}_{\ell \in L}$.
The corresponding typing rules are ${\oplus}R$ and ${\oplus}L$
in Figure~\ref{fig:base-types}. The process potential is unaffected and
will be equal in the premise and conclusion for all the structural rules.

The \emph{external choice} constructor
$\echoice{\ell : A_{\ell}}_{\ell \in L}$ is the dual of internal
choice requiring the provider to branch on one of the labels received
from the client (type rules ${\with}R$ and ${\with}L$). 
Dual constructors, like this one, reverse the role of the provider and client.

\begin{figure}
  \centering
\small
  \begin{mathpar}
\infer[{\with}R]
{\D \entailpot{q} \ecase{x}{\ell}{P_\ell}_{\ell \in L} ::
(x : \echoice{\ell : A_\ell}_{\ell \in L})}
{(\forall \ell \in L)
 & \D \entailpot{q} P_\ell :: (x : A_\ell)}
\and
\infer[{\with}L]
{\D, (x : \echoice{\ell : A_\ell}_{\ell \in L}) \entailpot{q}
(\esendl{x}{k} \semi Q) :: (z : C)}
{(k \in L) & \D, (x : A_k) \entailpot{q} Q :: (z : C)}
\and
  \infer[{\oplus}R]
    {\D \entailpot{q} (\esendl{x}{k} \semi P) :: (x : \ichoice{\ell : A_\ell}_{\ell \in L})}
    {(k \in L) & \D \entailpot{q} P :: (x : A_k)}
\and
  \infer[{\oplus}L]
    {\D, (x : \ichoice{\ell : A_\ell}_{\ell \in L}) \entailpot{q}
    \ecase{x}{\ell}{Q_\ell}_{\ell \in L} :: (z : C)}
    {(\forall \ell \in L) &
      \D, (x : A_\ell) \entailpot{q} Q_\ell :: (z : C)}
\and
  \infer[{\tensor}R]
    {\D, (y : A) \entailpot{q} (\esendch{x}{y} \semi P) :: (x : A \tensor B)}
    {\D \entailpot{q} P :: (x : B)}
  \and
  \infer[{\tensor}L]
    {\D, (x : A \tensor B) \entailpot{q} (\erecvch{x}{y} \semi Q) :: (z : C)}
    {\D, (y : A), (x : B) \entailpot{q} Q :: (z : C)}
\and
  \infer[{\lolli}R]
    {\D \entailpot{q} (\erecvch{x}{y} \semi P) :: (x : A \lolli B)}
    {\D, (y : A) \entailpot{q} P :: (x : B)}
  \and
  \infer[{\lolli}L]
    {\D, (x : A \lolli B), (y : A) \entailpot{q} (\esendch{x}{y} \semi Q) :: (z : C)}
    {\D, (x : B) \entailpot{q} Q :: (z : C)}
\and
  \infer[{\one}R]
    {\cdot \entailpot{q} (\eclose{x}) :: (x : \one)}
    {q = 0}
  \and
  \infer[{\one}L]
    {\D, (x : \one) \entailpot{q} (\ewait{x} \semi Q) :: (z : C)}
    {\D \entailpot{q} Q :: (z : C)}
\and
  \infer[\m{id}]
    {y : A \entailpot{q} (\fwd{x}{y}) :: (x : A)}
    {q = 0}
\and
  \inferrule*[right=$\m{spawn}$]
  {(\overline{y':B},p,P,x',A) \in \Sg \and
  r = p+q \and
  \D' = \overline{(y:B)} \and
  \D, (x : A) \entailpot{q} Q :: (z : C)}
  {\D, \D' \entailpot{r} (\ecut{x}{f}{\overline{y}}{Q}) :: (z : C)}
  \end{mathpar}
  \caption{Type Rules for Basic Session Types}
  \label{fig:base-types}
  \Description{Type Rules Basic}
\end{figure}

\paragraph{\textbf{Channel Passing}}
The \emph{tensor} operator $A \tensor B$ prescribes that the provider of
$x : A \tensor B$
sends a channel $y$ of type $A$ and continues to provide type $B$. The
corresponding process expression is $\esendch{x}{y} \semi P$ where $P$ is
the continuation.  Correspondingly, its client must receives a channel
using the expression $\erecvch{x}{y} \semi Q$, binding it to variable $y$
and continuing to execute $Q$. This if formalized by the rules ${\tensor}R$
and ${\tensor}L$.

The dual operator $A \lolli B$
(type rules ${\lolli}R$
and ${\lolli}L$)
allows the provider to receive a channel of type $A$
and continue to provide type $B$.
The client of $A \lolli B$,
on the other hand, sends the channel of type $A$
and continues to use $B$.

\paragraph{\textbf{Termination and Forwarding}}

The type $\one$, the multiplicative unit of linear logic,
indicates \emph{termination} requiring that the provider send a
\emph{close} message followed by terminating the communication.
In the corresponding type rules ${\one}R$ and ${\one}L$,
linearity enforces that the provider does not use any channels.
Since the potential is a linear quantity, we require that the
potential of a terminating process is $0$.

The rule $\m{id}$ handles forwarding of a channel.
A process $\fwd{x}{y}$ identifies the channels $x$ and $y$ so that any
further communication along either $x$ or $y$ will be along the unified
channel. Its typing rule corresponds to the logical rule of identity.
Since a forwarding process will terminate on interaction with
a message, we require its potential to be $0$ to preserve its linearity.

\paragraph{\textbf{Spawning and Tail Calls}}
A new instance of a defined process $f$ can be spawned with the expression
$\ecut{x}{f}{\overline{y}}{Q}$ (rule $\m{spawn}$) where $\overline{y}$ is a
sequence of channels matching the antecedents $\D'$.
The newly spawned process will use all variables in
$\overline{y}$ and provide $x$ to the continuation $Q$.
The declaration of $f$ is looked up in the signature $\Sg$ (first premise),
matching the types in $\D'$ and $\overline{y}$ (third premise). Similarly,
the freshly created channel $x$ has type $A$ from the signature.
The potential $r$ of the parent process must be equal to the sum of
the potential $p$ of the spawned process and $q$ of the continuation.
Sometimes a process invocation is a tail call,
written without a continuation as $\procdef{f}{\overline{y}}{x}$. This is a
short-hand for $\procdef{f}{\overline{y}}{x'} \semi \fwd{x}{x'}$ for a fresh
variable $x'$, that is, we create a fresh channel
and immediately identify it with x.


\begin{figure}
  \centering
\small
  \begin{mathpar}
  \infer[{\paypot}R]
  {\D \entailpot{q} \epay{x}{r} \semi P :: (x : \tpaypot{A}{r})}
  {q \geq r \and
  \D \entailpot{q-r} P :: (x : A)}
  \and
 \infer[{\getpot}R]
  {\D \entailpot{q} \eget{x}{r} \semi P :: (x : \tgetpot{A}{r})}
  {\D \entailpot{q+r} P :: (x : A)}
\and
  \infer[\m{work}]
  {\D \entailpot{q} \ework{r} \semi P :: (x : A)}
  {q \geq r \and
  \D \entailpot{q-r} P :: (x : A)}
\and
  \infer[\!\!{\paypot}L]
  {\D, (x : \tpaypot{A}{r}) \entailpot{q} \eget{x}{r} \semi Q :: (z : C)}
  {\D, (x : A) \entailpot{q+r} Q :: (z : C)}
\and
\hspace{-.35em}
  \infer[\!\!{\getpot}L]
  {\D, (x : \tgetpot{A}{r}) \entailpot{q} \epay{x}{r} \semi Q :: (z : C)}
  {q \geq r \and
  \D, (x : A) \entailpot{q-r} Q :: (z : C)}
\hspace{-.35em}
\and
  \infer[\!\!\m{weak}]
  {\D \entailpot{q} P :: (x : A)}
  {\D \entailpot{q+r} P :: (x : A)}
  \end{mathpar}
  \caption{Type Rules for Resource-Aware Session Types}
  \label{fig:res-types}
  \Description{Resource Type Rules}
\end{figure}

\subsection{Resource-Aware Types}\label{subsec:ergo}

To describe the resource contracts for inter-process communication,
the type system further supports amortized resource analysis
\cite{Tarjan85AARA}. The key idea is that \emph{processes store
potential} and \emph{messages carry potential}. This potential can
either be consumed to perform \emph{work} or exchanged using special
messages.
The resource-aware type rules are presented in Figure~\ref{fig:res-types}.

The type system provides the programmer with the flexibility to
specify what constitutes work. Thus, the programmer can choose to
count the resource they are interested in, and the type system
provides the corresponding upper bound. 
We use the expression $\ework{r} \semi P$ to define cost $r$.
In this article, we only consider monotone resource like time where $r \geq 0$.
The type rule $\m{work}$ requires that the potential $q$ is sufficient
to pay for the cost $r$ and the remaining potential $q-r \geq 0$.
Since the amount of potential consumed to type check this expression
is equal to the amount of work performed by it, the type safety theorem
expresses that the total work done by a system can never exceed its
initial potential.
Note that it is not necessary to expose the syntactic form $\ework{r}$
in the surface syntax of the language. It can be easily inserted
automatically to reflect a cost metric such as number of evaluation
steps.
For example,
to count the total number of messages sent, we insert $\ework{1}$
just before sending every message.

Two dual type constructors $\tpaypot{A}{r}$ and $\tgetpot{A}{r}$
are used to exchange potential. The provider of $x : \tpaypot{A}{r}$
must \emph{pay} $r$ units of potential along $x$ using process
expression $(\epay{x}{r} \semi P)$, and continue to provide $A$ by
executing $P$. These $r$ units are deducted from the potential
stored inside the sender. Dually, the client must receive the
$r$ units of potential using the expression $(\eget{x}{r} \semi Q)$
and add this to its internal stored potential. 
This is reflected in the type rules ${\paypot}R$ and ${\paypot}L$.
When sending potential, we ensure that the sender has sufficient potential to pay
($q \geq r$), which is then deducted from the internal process potential.
Dually, while gaining potential, it is added to the internal process potential.
The dual type $\tgetpot{A}{r}$ enables the provider to receive potential
that is sent by its client. This is formalized in the rules ${\getpot}R$ and ${\getpot}L$.

\paragraph{\textbf{Affine Potential}}

The previously-discussed rules treat potential as a linear
resource. As a result, the potential reflects the exact cost of
programs. However, we are most often interested in upper bounds on the
resource usage. For instance, if we treat potential linearly we cannot
type a process that has different work cost in different branches.

To treat potential in an affine way, we have to provide the ability to
throw away potential. This can be achieved by the rule $\m{weak}$
in Figure~\ref{fig:res-types}. Alternatively, we can drop the premise
$q=0$ from the rules ${\one}R$ and $\m{id}$ in Figure~\ref{fig:base-types}.



\begin{table*}[t]
\small
  \begin{tabular}{l l l l l}
  \textbf{Type} & \textbf{Cont.} & \textbf{Process Term} & \textbf{Cont.} \hspace{-1em} & \multicolumn{1}{c}{\textbf{Description}} \\
  \toprule
  $c : \pichoice{\ell^{p_\ell} : A_\ell}_{\ell \in L}$ & $c : A_k$ & $\esendlp{c}{k} \semi P$
  & $P$ & provider sends label $k$ on chan.\ $c$ ($p_k = 1$) \hspace{-.5em} \\
  & & $\epcase{c}{\ell}{Q_\ell}_{\ell \in L}$ & $Q_k$ & client receives distribution $\dist(\ell) = p_\ell$ on $c$ \\
  \addlinespace
  $c : \pechoice{\ell^{p_\ell} : A_\ell}_{\ell \in L}$ & $c : A_k$ & $\epcase{c}{\ell}{P_\ell}_{\ell \in L}$
  & $P_k$ & provider receives dist.\ $\dist(\ell) = p_\ell$ along $c$ \\
  & & $\esendlp{c}{k} \semi Q$ & $Q$ & client sends label $k$ along chan.\ $c$ ($p_k = 1$) \\
  \bottomrule
  \end{tabular}
  \caption{Probabilistic session types with operational description}
  \label{tab:prob-lang}
  \vspace{-2em}
  \end{table*}

\begin{figure}
  \centering
\small
  \begin{mathpar}
  \inferrule*[right=$\m{flip}$]
  {
    {\color{blue}
    \D = p \cdot \D_H +^{\m{L}} (1-p) \cdot \D_T} \\
    {\color{blue} A = p \cdot A_H +^{\m{R}} (1-p) \cdot A_T }\\\\
    q = p \cdot q_H + (1-p) \cdot q_T \\
    \D_H \entailpot{q_H} P_H :: (x : A_H) \\
    \D_T \entailpot{q_T} P_T :: (x : A_T)
  }
  {\D \entailpot{q} \eflip{p}{P_H}{P_T} :: (x : A)}
\and
  \inferrule*[right=${\pichoiceop}L$]
  {
    (\forall \ell \in L) \enskip \D_\ell, (x : A_\ell) \entailpot{q_\ell} Q_\ell :: (z : C_\ell) \\
    q = \textstyle \tsum_{\ell \in L} p_\ell  \cdot q_\ell \\
    \D = \textstyle \tsum^{\m{L}}_{\ell \in L} p_\ell \cdot \D_\ell \\
    C = \textstyle \tsum^{\m{R}}_{\ell \in L} p_\ell \cdot C_\ell
  }
  {\D, (x : \pichoice{\ell^{p_\ell} : A_\ell}_{\ell \in L}) \entailpot{q}
  \epcase{x}{\ell}{Q_\ell}_{\ell \in L} :: (z : C)}
\and
  \inferrule*[right=${\pichoiceop}R$]
  {
    p_k = 1 \and p_j = 0 \; (j \neq k) \\\\
    \D \entailpot{q} P :: (x : A_k)
  }
  {\D \entailpot{q} \esendlp{x}{k} \semi P :: (x : \pichoice{\ell^{p_\ell} :
  A_\ell}_{\ell \in L})}
\and
\inferrule*[right=${\pechoiceop}L$]
  {
    p_k = 1 \and p_j = 0 \; (j \neq k) \\\\
    \D,(x:A_k) \entailpot{q} P :: (z: C)
  }
  {\D,(x:\pechoice{\ell^{p_\ell}:A_\ell}_{\ell \in L}) \entailpot{q} \esendlp{x}{k} \semi P :: (z : C)}
\and
  \inferrule*[right=${\pechoiceop}R$]
  {
    (\forall \ell \in L) \enskip \D_\ell \entailpot{q_\ell} Q_\ell :: (x : A_\ell) \\
    q = \textstyle \tsum_{\ell \in L} p_\ell  \cdot q_\ell \\
    \D = \textstyle \tsum^{\m{L}}_{\ell \in L} p_\ell \cdot \D_\ell \\
  }
  {\D \entailpot{q}
  \epcase{x}{\ell}{Q_\ell}_{\ell \in L} :: (x : \pechoice{\ell^{p_\ell} : A_\ell}_{\ell \in L} )}
  \end{mathpar}
  \caption{Type Rules for Probabilistic Session Types}
  \label{fig:prob-types}
  \Description{Prob. Type Rules}
\end{figure}

\section{Probabilistic Session Types}\label{sec:prob}


In this section, we discuss the static semantics of the novel
probabilistic aspects of \lang{}. First, we introduce an expression
$\eflip{p}{P_H}{P_T}$
for probabilistic branching.
This expression together with the deterministic fragment of \lang{}
from Section~\ref{sec:formal} results in a probabilistic session-typed
language in which distributions are not reflected in the types.
We then add probabilistic choice types and syntactic forms (see
Table~\ref{tab:prob-lang}) for probabilistic send and receive, as well
as their interactions with probabilistic branching.
The additional type rules are given in Figure~\ref{fig:prob-types}.

\paragraph{\textbf{Probabilistic Flip}}
The expression $\eflip{p}{P_H}{P_T}$
operationally corresponds to flipping a coin with prob. $p$
(of outputting $\heads$, and $\tails$ otherwise) and executing $P_H$
if the coin flips to $\heads$ and executing $P_T$
otherwise. The corresponding typing rule is $\m{flip}$
in Figure~\ref{fig:prob-types}. In the fragment without probabilistic
choice types, we can ignore the probabilistic split of the types (blue
parts) of the rule and instead consider the following rule
$\m{simple{-}flip}$
that is a special case and identical to $\m{flip}$
in the deterministically-typed fragment of \lang{}.
\begin{mathpar}
  \inferrule*[right=$\m{simple{-}flip}$]
  {
    q = p \cdot q_H + (1-p) \cdot q_T \and
    \D \entailpot{q_H} P_H :: (x : A) \and
    \D \entailpot{q_T} P_T :: (x : A)
  }
  {\D \entailpot{q} \eflip{p}{P_H}{P_T} :: (x : A)}
\end{mathpar}
Both branches $P_H$
and $P_T$
of the probabilistic branching, are typed with the initial context
$\D$ and have to offer on the same channel $x$ of type $A$.
Notably, the probabilistic behavior of a process is not visible in its
type.
The interesting aspect of rule is the treatment of potential.
The initial potential $q$
is not identical to $q_H$
and $q_T$ (unlike rule ${\oplus}L$ in Figure~\ref{fig:base-types}).
Instead, $q$ is the weighted sum $p \cdot q_H + (1-p) \cdot q_T$
which corresponds to the \emph{expected potential} needed
to cover the probabilistic branch.
The rule $\m{simple{-}flip}$
can already be used in conjunction with the deterministic rules to
derive interesting and non-trivial bounds on the expected cost.

\paragraph{\textbf{Probabilistic Choices}}

Using probabilistic branching, processes can send labels according to
certain probability distribution.
%
For example, consider again the process $\m{TF}$
from Section~\ref{sec:overview}. The prob.\ distribution on $\m{true}$
and $\m{false}$
labels implemented by the process is $\dist{}$
where $\dist(\m{true}) = 0.6$ and $\dist(\m{false}) = 0.4$ (we use \verb|b..true|
to denote a probabilistic send of label \verb|true| on channel \verb|b|).
\begin{verbatim}
  proc b <- TF = flip 0.6 (H => b..true ; close b  |  T => b..false ; close b)
\end{verbatim}
To reflect this prob.\ distribution in the type of the channel $b$, we assign
$b : \pichoice{\mb{true}^{0.6} : \one, \mb{false}^{0.4} : \one}$.
In general, we introduce the type formers
\begin{center}
$\pichoice{\ell^{p_\ell} : A_\ell}_{\ell \in L}$  \hspace{3em} and \hspace{3em} $\pechoice{\ell^{p_\ell} : A_\ell}_{\ell \in L}$
\end{center}
for probabilistic internal and external choice. The types are similar
to their  deterministic versions but labels are annotated with
probabilities $p_\ell$.
In a well-formed type, we have $p_\ell \in [0,1]$
and $\sum_{\ell \in L} p_\ell = 1$.
The internal choice $\pichoice{\ell^{p_\ell} : A_\ell}_{\ell \in L}$ requires the
provider to send label $k \in L$ with probability $p_k$.

Receiving on a probabilistic channel can be seen as an external
version of a probabilistic branch. We again first consider a
simplified version of the probabilistic receive that is a special case
of the rule ${\pechoiceop}R$ in Figure~\ref{fig:prob-types}.
\begin{mathpar}
  \infer[\text{simple}{-}{\pechoiceop}R]
  {\D \entailpot{q}
  \epcase{x}{\ell}{Q_\ell}_{\ell \in L} :: (x : \pechoice{\ell^{p_\ell} : A_\ell}_{\ell \in L} )}
  {
    (\forall \ell \in L) \enskip \D \entailpot{q_\ell} Q_\ell :: (x : A_\ell) \qquad
    q = \tsum_{\ell \in L} p_\ell  \cdot q_\ell
  }
\end{mathpar}
Similar to the $\m{flip}$, the rule $\text{simple}{-}{\pechoiceop}R$
is similar to the rule ${\echoiceop}R$ but takes the weighted sum
$q = \tsum_{\ell \in L} p_\ell \cdot q_\ell$
as initial potential instead of the maximum.
The additional premises in the rule ${\echoiceop}R$ are
used to enable probabilistic sending of labels.

\paragraph{\textbf{Sending on a Probabilistic Channel}}

Validating the prob.\ distributions of processes that perform a
probabilistic send is one of the most interesting aspects of the type
system.
It would be possible to combine a probabilistic branching with the
probabilistic sending of the label in one atomic operation.
However, we are presenting a more flexible approach that \emph{decouples} the
sending of labels on probabilistic channels from
probabilistic branching.
The key idea is to alter the probabilities in the session types of the
channels in the context.

It is beneficial to first discuss the type rules ${\pichoiceop}R$
and ${\pechoiceop}L$ for probabilistic send.
The most notable feature of the rules is that we require that the
probability $p_k$ of the label $k$ that is sent along channel $x$ must be
$1$.
In general, we need to apply probabilistic branching to alter the
probabilities on the channel to arrive at such a trivial distribution.
For example, in the type derivation of the process $\m{TF}$,
the channel $b$
has type $\pichoice{\mb{true}^{1} : \one, \mb{false}^{0} : \one}$
in the $\heads$
branch of the flip and type
$\pichoice{\mb{true}^{0} : \one, \mb{false}^{1} : \one}$
in the $\tails$ branch of the flip.

In a probabilistic branching, we are using weighted sums of prob.\
distributions on labels in the similar way as we are using weighted
sums of potential annotations.
In our running example $\m{TF}$, the type derivation is sound because
of the following relation.
\begin{center}
$     \pichoice{\mb{true}^{0.6} : \one, \mb{false}^{0.4} : \one}  =
    0.6 \cdot \pichoice{\mb{true}^{1} : \one, \mb{false}^{0} : \one}
    +^{\m{R}}
    0.4 \cdot \pichoice{\mb{true}^{0} : \one, \mb{false}^{1} : \one}
 $
\end{center}
The two types on the right side of the equation are the types of the
channel $b$
in the branches of the flip. The probabilities $0.6$
and $0.4$ are the probabilities of the respective branches.
The operation $+^{\m{R}}$
combines the label probabilities $p$
point-wise. For example we have $0.6 = 0.6 \cdot 1 + 0.4 \cdot 0$.

Formally, we define two such weighted sum relations $+^{\m{R}}$ and 
$+^{\m{L}}$.
However, their intended effect is identical and they simply reflect
the duality of types in the consumed and offered channels.
They are used in the probabilistic branching rules ${\pichoiceop}L$,
${\pechoiceop}R$,
and $\m{flip}$
to allow different probability annotations in different branches.
Here, we generalize the notion of weighted sums to n-ary sums.
Note that the prob.\ in a probabilistic branching does not
uniquely determine the prob.\ distributions of the channels. For
instance, we have $A = p \cdot A +^{\m{R}} (1-p) \cdot A$ and
$A = p \cdot A +^{\m{L}} (1-p) \cdot A$ for any $p$.

\begin{figure}
  \centering
\small
  \begin{mathpar}
  \infer[\pichoiceop]
  {
    \pichoice{\ell^{p_\ell} : A_\ell}_{\ell \in L} =
    p \cdot \pichoice{\ell^{q_\ell} : A_\ell}_{\ell \in L} +^{\m{R}}
    (1-p) \cdot \pichoice{\ell^{r_\ell} : A_\ell}_{\ell \in L}}
  {
    (\forall \ell \in L) \and
    p_\ell = p \cdot q_\ell + (1-p) \cdot r_\ell
  }
  \and
  \infer[\pechoiceop]
  {
    \pechoice{\ell^{p_\ell} : A_\ell}_{\ell \in L} =
    p \cdot \pechoice{\ell^{q_\ell} : A_\ell}_{\ell \in L} +^{\m{L}}
    (1-p) \cdot \pechoice{\ell^{r_\ell} : A_\ell}_{\ell \in L}}
  {
    (\forall \ell \in L) \and
    p_\ell = p \cdot q_\ell + (1-p) \cdot r_\ell
  }
\and
  \infer[\ichoiceop]
  {
    \ichoice{\ell : A_\ell}_{\ell \in L} =
    p \cdot \ichoice{\ell : A_\ell}_{\ell \in L} +^{*}
    (1-p) \cdot \ichoice{\ell : A_\ell}_{\ell \in L}}
  {
  }
  \and
  \infer[\echoiceop]
  {
    \echoice{\ell : A_\ell}_{\ell \in L} =
    p \cdot \echoice{\ell : A_\ell}_{\ell \in L} +^{*}
    (1-p) \cdot \echoice{\ell : A_\ell}_{\ell \in L}}
  {
  }
  \and
  \infer[\tensor]
  {
    A \tensor B =
    p \cdot A \tensor B +^{*}
    (1-p) \cdot A \tensor B
  }
  {
  }
  \and
  \infer[\lolli]
  {
    A \lolli B =
    p \cdot A \lolli B +^{*}
    (1-p) \cdot A \lolli B
  }
  {
  }
  \and
  \infer[\one]
  {
    \one =
    p \cdot \one +^{*}
    (1-p) \cdot \one
  }
  {}
\\
  \infer[\m{emp}]
  {[] = p \cdot [] +^{\m{L}} (1-p) \cdot []}
  {}
\and
  \infer[\m{chan}]
  {
    \D, (x : A) =
    p \cdot \D_1, (x : B) +^{\m{L}}
    (1-p) \cdot \D_2, (x : C)
  }
  {
    A = p \cdot B +^{\m{L}} (1-p) \cdot C \and
    \D = p \cdot \D_1 +^{\m{L}} (1-p) \cdot \D_2
  }
  \end{mathpar}
  \caption{Inductive Definition of the Weighted Sum Relations}
  \label{fig:prob-split}
\end{figure}

\paragraph{\textbf{Weighted Sums of Session Types}}

The weighted sum relations $A = p \cdot A_1 +^{\m{R}} (1-p) \cdot A_2$
(resp.  $A = p \cdot A_1 +^{\m{L}} (1-p) \cdot A_2$)
for probabilistic session types $A$,
$A_1$,
and $A_2$ are inductively defined in Figure~\ref{fig:prob-split}.
The most interesting cases are the dual rules $\pichoiceop$
and $\pechoiceop$,
which are applied to channels with internal or external choices
that correspond to a probabilistic send.
In these rules we apply the weighted
$p_\ell = p \cdot q_\ell + (1-p) \cdot r_\ell$
to the outermost labels.
\emph{Note that these rules do not recursively apply weighted sums to
  continuation types $A_\ell$.}
Probability distributions at deeper level of the types (in $A_\ell$)
are not altered in different branches.
Similarly, other rules do not alter the distributions on the types.
So in all other cases $A = p \cdot A_1 +^{\m{R}} (1-p) \cdot A_2$
and $A = p \cdot A_1 +^{\m{L}} (1-p) \cdot A_2$ implies $A = A_1 = A_2$.
This is reflected by the other rules in Figure~\ref{fig:prob-split}.
The wildcard $*$ stands for both $\m{L}$ and $\m{R}$.
In the rules $\m{chan}$ and $\m{emp}$,
the relation $+^{\m{L}}$
is extended point-wise to contexts, using the same
notations. 

\paragraph{\textbf{Counterexamples for Deep Weighted Sums}}

At first, the shallow definitions of the weighted sums seem to be
overly restrictive. However, a closer examination shows that using
more general definitions of weighted sum of probabilities are in
general not compatible with the intended semantics of probabilistic
choice types. 
Consider for instance the following type. 
\begin{center}
$ B \triangleq
  \ichoice{\mb{true} : \pichoice{\mb{H}^{0.5} : \one, \mb{T}^{0.5} : \one}
, \mb{false} : \pichoice{\mb{H}^{0.5} : \one, \mb{T}^{0.5} : \one}
}$  
\end{center}
A process that offers on a channel of type $B$ should first send a
boolean and then provide a fair coin flip. However, if we would allow
a nested weighted sum of probabilities then the following undesirable
implementation would type check.
The problem is that a process that uses channel $b$ does not receive
a fair distribution after receiving the boolean label.
\begin{lstlisting}
  decl bad : . |- (b : $B$)
  proc b <- bad = flip 0.5 ( H => b.true ; b..H; close b 
                           | T => b.false ; b..T; close b )
\end{lstlisting}

Another potential use of randomness would be to not only change the
probabilities on internal choices but also on external choices on the
top level. However, this would lead to unsound behavior.
Consider for example the following process $\m{bad'}$. It offers a
channel $x : \pichoice{\mb{H}^{0.5} : \one, \mb{T}^{0.5} : \one}$ with
a fair probabilistic internal choice. However, we could justify the
following unsound typing if we allowed the type
$\pichoice{\mb{H}^{0.5} : \one, \mb{T}^{0.5} : \one}$ of $y$ to be
split as $\pichoice{\mb{H}^{1} : \one, \mb{T}^{0} : \one}$ and
$\pichoice{\mb{H}^{0} : \one, \mb{T}^{1} : \one}$ in the two branches
of the flip.
\begin{lstlisting}
  decl bad' : (y : $\pichoice{\mb{H}^{0.5} : \one, \mb{T}^{0.5} : \one}$) |- (x : $\pichoice{\mb{H}^{1} : \one, \mb{T}^{0} : \one}$)
  proc x <- bad' <- y = flip 0.5 ( H => pcase y ( H => x..H; x <-> y
                                                | T => x..T; x <-> y )
                                 | T => pcase y ( H => x..T; x <-> y
                                                | T => x..H; x <-> y ) )
\end{lstlisting}

%


\section{Meta Theory}
\label{sec:metatheory}

In this section, we formalize the meta-theory of \lang{} and proofs can be found in \cref{appendix:meta-theory}.
We first illustrate the difficulties for standard semantic constructions to integrate internal and external choices with probabilities in a concurrent system (\cref{subsec:first-attempt}).
Then we develop a novel probabilistic \emph{nested-multiverse} semantics that retains local distribution information to deal with the difficulties (\cref{subsec:nested-multiverse}).
Finally, we sketch our proof methodology for the soundness of \lang{} (\cref{subsec:soundness}).

\subsection{Difficulties for Standard Approaches}
\label{subsec:first-attempt}

Operationally, a state of a concurrent system is a \emph{configuration} of running processes in the system.
The probabilistic-flip expressions are the sole source of randomness in \lang{}.
In the literature, there are two canonical approaches to extending a non-probabilistic semantics with probabilities:
\begin{itemize}
  \item \emph{Trace-based}~\cite{ICFP:BLG16,ESOP:CP19}:
  The state-to-state transition relation is interpreted under
  a \emph{fixed} trace of random sources (e.g., a trace of coin flips),
  where the random constructs (e.g., coin flips) are resolved as deterministic readouts from the trace.
  In our setting, however, this approach lacks a mechanism for tracking the \emph{correlations} among traces,
  which have been shown to be important for expected-cost analysis~\cite{ICFP:WKH20}.
  With the presence of probabilistic internal- and external-choice types, it is also unclear how to guarantee type preservation for the transitions guided by a fixed trace.

  \item \emph{Distribution-based}~\cite{ICFP:BLG16,JCSS:Kozen81}:
  The state-to-state transition relation is lifted to a state-to-distribution transition relation, where the support of the distribution consists of successor states.
  In our setting, this approach amounts to updating operational rules to transit from configurations to distributions on configurations.
  However, this approach loses track of \emph{correlations} among configurations derived from different outcomes of probabilistic-flip expressions.
  As a consequence, to guarantee type preservation, one has to reason about the outcomes of a probabilistic-flip expression \emph{separately}, which might not be always possible.
\end{itemize}

Recently, \citet{InversoMP20} proved the soundness of a probabilistic session-type system \emph{without} standard internal and external choices, and channel passing, with respect to a distribution-based semantics.
Their idea is that to prove type preservation for a probabilistic-flip expression, one needs to first construct new type derivations for all possible configurations after the probabilistic flip, and then combine the types via a ``weighted sum'' with respect to the flip probability.
This approach is \emph{global}, in the sense that after a \emph{local} coin flip in a process, one has to re-analyze the \emph{whole} configuration with other unchanged processes.
However, this approach would fail if one attempts to add standard internal and external choices to the system.

\begin{example}[Difficulties for adding standard internal and external choices]
  Consider the program below:
\begin{lstlisting}
  decl P : . |- (y : $\pichoice{\mb{H}^{0.5} : \one, \mb{T}^{0.5} : \one}$)
  proc y <- P = flip 0.5 ( H => y..H; close y | T => y..T; close y )
  decl Q : (y : $\pichoice{\mb{H}^{0.5} : \one, \mb{T}^{0.5} : \one}$) (z : $\pechoice{\mb{H}^{0.5}:\one,\mb{T}^{0.5}:\one}$) |- (x : $\echoice{H:\one,T:\one}$)
  proc x <- Q y z =
    case x ( H => pcase y ( H => flip 0.6 ( H => z..H; ... | T => z..T; ... )
                          | T => flip 0.4 ( H => z..H; ... | T => z..T; ... ) )
           | T => pcase y ( H => flip 0.7 ( H => z..H; ... | T => z..T; ... )
                          | T => flip 0.3 ( H => z..H; ... | T => z..T; ... ) ) )
\end{lstlisting}
  
  Intuitively, there is a probabilistic choice (the case expression on $y$) \emph{inside} a standard choice (the case expression on $x$).
  If in the operational semantics, we make a step on the process $P$ first and the coin shows heads, then the type of the channel $y$ becomes $\pichoice{H^{1} : \one, T^{0} : \one}$.
  As a consequence, the types of the channel $z$ for the two probabilistic case expressions become
  $\pechoice{H^{0.6}: \one, T^{0.4}: \one}$ and $\pechoice{H^{0.7}:\one, T^{0.3}:\one}$, respectively.
  However, we cannot apply the rule (${\echoiceop}R$) to re-derive a type-judgment for this configuration, because the rule requires that both branches of the case expression on $x$ have the same type on the channel $z$.
\end{example}

\subsection{A Nested-Multiverse Semantics}
\label{subsec:nested-multiverse}

The example above suggests that \emph{we should make probabilistic flips as local as they could be}.
In other words, the influence of a coin flip should remain within the flipped process, \emph{until} the process tries to communicate with other processes.
With this intuition, we define the \emph{nested-multiverse} semantic objects to have either form
(i) $\proc{c}{w,P}$ for a process $P$ that is provided along channel $c$ and has performed $w$ units of work, or form
(ii) $\proc{c}{\{\calC_i : p_i\}_{i \in \calI}}$, for a distribution---resulted from flip expressions---of local \emph{configurations} that are provided along channel $c$, where each configuration is a set of semantic objects and $\tsum_{i \in \calI} p_i = 1$.
Intuitively, these semantic objects collect nested information from different ``universes'' (e.g., the results of coin flips) explicitly.
Formally, the syntax of semantic objects is defined as follows, with the understanding that in a configuration $\calO_1 \parallel \cdots \parallel \calO_n$, the sender is always put to the \emph{right} of the receiver:
\begin{align*}
  \calE & \Coloneqq (w,P) \mid \{ \calC_i : p_i \}_{i \in \calI} \\
  \calO & \Coloneqq \proc{c}{\calE} \\
  \calC & \Coloneqq \calO_1 \parallel \cdots \parallel \calO_n
\end{align*}

\cref{fig:pstl-semantics} presents the rules of the novel nested-multiverse semantics of \lang{}.
The rules should be understood as \emph{multiset-rewriting} rules~\cite{Cervesato09SEM}:
every rule mentions only the rewritten parts of a configuration.
We distinguish two kinds of evaluation rules:
\begin{itemize}
  \item \emph{Single-process} rules: The relation $\sstep$ has exactly \emph{one} semantic object on the left-hand-side.
  These rules do \emph{not} involve communication.
  
  \item \emph{Communication} rules: The relation $\cstep{d,\kappa}$ has exactly \emph{two} semantic objects on the left-hand-side.
  The relation is intended to describe a communication carried out on channel $d$ with \emph{sort} $\kappa \in \{ \m{det}, {\pichoiceop}, {\pechoiceop} \}$.
  The sort $\kappa$ is used to categorize the communication on channel $d$: The sorts $\pichoiceop$ and $\pechoiceop$ stand for probabilistic internal and external choices, respectively, while $\m{det}$ represents all other kinds of communication.
  Such information becomes useful when two communicating semantic objects are distribution objects.
  For example, in the rule (\textsc{C:BDist:L}), if the underlying communication is of sort $\pichoiceop$, i.e., the object $\proc{d}{\{ \calC_j' : p_j' \}_{j \in \calJ}}$ sends probabilistically on channel $d$, and $\proc{c}{\{ \calC_i : p_i \}_{i \in \calI}}$ receives on channel $d$.
  Intuitively, $\proc{d}{\{ \calC_j' : p_j' \}_{j \in \calJ}}$ has evaluated one or more probabilistic flips to obtain the local distribution, to achieve the goal of sending labels probabilistically on channel $d$.
  Thus, on the receiving side, these different senders in different ``universes'' should be considered as a whole, in order to justify the probabilities on the $\pichoiceop$-typed channel.
  Therefore, the rule (\textsc{C:BDist:L}) keeps $\proc{d}{\{ \calC_j' : p_j' \}_{j \in \calJ}}$ intact, but decomposes $\proc{c}{\{ \calC_i : p_i \}_{i \in \calI}}$.
\end{itemize}

\begin{figure}
\centering
\begin{footnotesize}
\begin{tabular}{ll}
  (\textsc{E:Def}) & $\proc{c}{w, \ecut{x}{f}{\many{d}}{Q}} \sstep \proc{c}{w,Q[b/x]} \parallel \proc{b}{0,P_f[b/x',\many{d}/\many{y'}]}$ \\
  & \quad for $\many{y' : B} \entailpot{q} f = P_f :: (x' : A) \in \Sg$ and $b$ \emph{fresh} \\
  (\textsc{E:Work}) & $\proc{c}{w,\ework{r} \semi P} \sstep \proc{c}{w+r,P}$ \\
  (\textsc{E:Flip}) & $\proc{c}{w, \eflip{p}{P_H}{P_T}} \sstep \proc{c}{ \{ \proc{c}{w,P_H} : p , \proc{c}{w,P_T} : 1-p \} }$ \\
  (\textsc{E:Dist}) & $\proc{c}{\{ \calC_i : p_i \}_{i \in \calI}} \sstep \proc{c}{\{ \calC_i : p_i \}_{i \in \calI \setminus \{i_0\}} \dplus \{ \calC_{i_0}' : p_{i_0} \}}$ \\
  & \quad for some $i_0 \in \calI$ such that $\calC_{i_0} \sstep \calC'_{i_0}$ \\
  \\
  (\textsc{C:$\ichoiceop$}) & $\proc{c}{w_c, \ecase{d}{\ell}{Q_\ell}_{\ell \in L}} \parallel \proc{d}{w_d, \esendl{d}{k} \semi P} \cstep{d,\m{det}} \proc{c}{w_c, Q_k} \parallel \proc{d}{w_d, P}$ \\
  (\textsc{C:$\echoiceop$}) & $\proc{c}{w_c, \esendl{d}{k} \semi Q } \parallel \proc{d}{w_d, \ecase{d}{\ell}{P_\ell}_{\ell \in L}} \cstep{d,\m{det}} \proc{c}{w_c, Q} \parallel \proc{d}{w_d,P_k}$ \\
  (\textsc{C:$\tensor$}) & $\proc{c}{w_c, \erecvch{d}{y} \semi Q } \parallel \proc{d}{w_d, \esendch{d}{e} \semi P} \cstep{d,\m{det}} \proc{c}{w_c, Q[e/y]} \parallel \proc{d}{w_d, P}$ \\
  (\textsc{C:$\lolli$}) & $\proc{c}{w_c, \esendch{d}{e} \semi Q} \parallel \proc{d}{w_d, \erecvch{d}{y} \semi P} \cstep{d,\m{det}} \proc{c}{w_c, Q} \parallel \proc{d}{w_d, P[e/y]}$ \\
  (\textsc{C:$\one$}) & $\proc{c}{w_c, \ewait{d} \semi Q} \parallel \proc{d}{w_d, \eclose{d}} \cstep{d,\m{det}} \proc{c}{w_c+w_d, Q}$ \\
  (\textsc{C:Id}) & $\proc{c}{w_c, Q\tuple{d}} \parallel \proc{d}{w_d, \fwd{d}{e}} \cstep{d,\kappa} \proc{c}{w_c+w_d, Q\tuple{d}[e/d]}$ \\
  & \quad for $Q\tuple{d} \cblocked{(d,\kappa)}$ \\
  (\textsc{C:$\paypot$}) & $\proc{c}{w_c, \eget{d}{r} \semi Q} \parallel \proc{d}{w_d, \epay{d}{r} \semi P} \cstep{d,\m{det}} \proc{c}{w_c, Q} \parallel \proc{d}{w_d,P}$ \\
  (\textsc{C:$\getpot$}) & $\proc{c}{w_c, \epay{d}{r} \semi Q} \parallel \proc{d}{w_d, \eget{d}{r} \semi P} \cstep{d,\m{det}} \proc{c}{w_c,Q} \parallel \proc{d}{w_d,P}$ \\
  (\textsc{C:$\pichoiceop$}) & $\proc{c}{w_c, \epcase{d}{\ell}{Q_\ell}_{\ell \in L}} \parallel \proc{d}{w_d, \esendlp{d}{k} \semi P} \cstep{d,\pichoiceop} \proc{c}{w_c, Q_k} \parallel \proc{d}{w_d, P}$ \\
  (\textsc{C:$\pechoiceop$}) & $\proc{c}{w_c, \esendlp{d}{k} \semi Q } \parallel \proc{d}{w_d, \epcase{d}{\ell}{P_\ell}_{\ell \in L}} \cstep{d,\pechoiceop} \proc{c}{w_c, Q} \parallel \proc{d}{w_d,P_k}$ \\
  \\
  (\textsc{C:Dist}) & $\proc{c}{\{ \calC_i : p_i \}_{i \in \calI}} \cstep{d,\kappa} \proc{c}{\{ \calC_i : p_i \}_{i \in \calI \setminus \{i_0\}} \dplus \{ \calC_{i_0}' : p_{i_0} \}}$ \\
  & \quad for some $i_0 \in \calI$ such that $\calC_{i_0} \cstep{d,\kappa} \calC_{i_0}'$ \\
  (\textsc{C:SDist:R}) & $\proc{c}{w,Q} \parallel \proc{d}{\{ \calC_i : p_i \}_{i \in \calI}} \cstep{d,\kappa} \calC'$ \\
  & \quad for $\proc{c}{\{ ( \proc{c}{w,Q} \parallel \calC_i ): p_i \}_{i \in \calI}} \cstep{d,\kappa} \calC'$ \\
  (\textsc{C:SDist:L}) & $\proc{c}{\{ \calC_i : p_i \}_{i \in \calI}} \parallel \proc{d}{w,P} \cstep{d,\kappa} \calC'$ \\
  & \quad for $\proc{c}{\{ (\calC_i \parallel \proc{d}{w,P}) : p_i \}_{i \in \calI}} \cstep{d,\kappa} \calC'$ \\
  (\textsc{C:BDist:D}) & $\proc{c}{\{ \calC_i : p_i \}_{i \in \calI}} \parallel \proc{d}{\{ \calC'_j : p'_j \}_{j \in \calJ}} \cstep{d,\m{det}} \calC''$ \\
  & \quad for $\proc{c}{\{ (\calC_i \parallel \calC'_j) : p_i \cdot p'_j \}_{i \in \calI,j \in \calJ}} \cstep{d,\m{det}} \calC''$ \\
  (\textsc{C:BDist:R}) & $\proc{c}{\{ \calC_i : p_i \}_{i \in \calI}} \parallel \proc{d}{\{ \calC'_j : p'_j \}_{j \in \calJ}} \cstep{d,\pechoiceop} \calC''$ \\
  & \quad for $\proc{c}{\{ (  \proc{c}{\{ \calC_i : p_i\}_{i \in \calI}} \parallel \calC_j') : p'_j  \}_{j \in \calJ}} \cstep{d,\pechoiceop} \calC''$ \\
  (\textsc{C:BDist:L}) & $\proc{c}{\{ \calC_i : p_i \}_{i \in \calI}} \parallel \proc{d}{\{ \calC'_j : p'_j \}_{j \in \calJ}} \cstep{d,\pichoiceop} \calC''$ \\
  & \quad for $\proc{c}{\{ ( \calC_i \parallel \proc{d}{\{ \calC'_j : p'_j \}_{j \in \calJ}} ) : p_i \}_{i \in \calI}} \cstep{d,\pichoiceop} \calC''$
\end{tabular}
\end{footnotesize}
\caption{Rules for the multiset-rewriting small-step nested-multiverse semantics of \lang{}.}
\label{fig:pstl-semantics}
\end{figure}

The rule (\textsc{E:Flip}) deals with probabilistic flips.
To evaluate a coin flip $\eflip{p}{P_H}{P_T}$ while keeping the randomness local in the flipped process, this rule creates a local distribution object, whose support contains two process objects: $P_H$ with probability $p$ and $P_T$ with probability $(1-p)$.
Note that this rule does not change the work counter of the flipped process.
If the flipped process ($P_H$ or $P_T$) can further evaluate probabilistic-flip expressions, the configuration will become nested naturally.

There are two non-communication deterministic rules.
In the rule (\textsc{E:Work}) for work tracking, we simply increment the work counter of the process.
The other rule (\textsc{E:Def}) describes process spawning.
Let $f$ be a defined process.
The spawning expression $\ecut{x}{f}{\many{d}}{Q}$ creates a new process providing \emph{fresh} channel $b$, which is constructed from the definition $P_f$ of $f$ with proper renaming.
The work counter of the newly created process is set to zero.

In the semantics, communication is \emph{synchronous}: processes sending a message pause their evaluation until the message is received.
In the rule (\textsc{C:$\ichoiceop$}) for internal choices, the provider $\esendl{d}{k} \semi P$ sends a message $k$ along the $d$ and continues as $P$, while the client receives the message $k$ and selects branch $k$.
The rule (\textsc{C:$\pichoiceop$}) for probabilistic internal choices is almost the same as (\textsc{S:$\ichoiceop$}), except that it marks the communication with $\pichoiceop$.
In the rule (\textsc{C:$\tensor$}) for sending channels, the provider $\esendch{d}{e} \semi P$ sends the channel $e$ and continues as $P$, while the client receives the channel $e$ and substitutes $e$ for $y$ in the continuation $Q$.
In the rule (\textsc{C:$\paypot$}) for paying potential, the provider of $d$ pays $r$ units of potential along $d$ to the client.
The dual rules (\textsc{C:$\echoiceop$}), (\textsc{C:$\pechoiceop$}), (\textsc{C:$\lolli$}), and (\textsc{S:$\getpot$}) simply reverse the role of the provider and the client in (\textsc{C:$\ichoiceop$}), (\textsc{C:$\pichoiceop$}), (\textsc{C:$\tensor$}), and (\textsc{C:$\paypot$}), respectively.
Note that these eight rules do not change the work counters of the processes.

The rule (\textsc{C:$\one$}) accounts for termination, i.e., the client simply waits the provider to terminate.
A similar rule (\textsc{S:Id}) deals with forwarding. Operationally, a process $\fwd{d}{e}$ \emph{forwards} any message that arrives on $e$ to $d$ and vice-versa.
There is an extra side condition $Q\tuple{d} \cblocked{(d,\kappa)}$, which means that the expression $Q\tuple{d}$ is communicating along channel $d$ whose sort is $\kappa$.
Note that we write $Q\tuple{d}$ to indicate that the channel $d$ must occur freely in the process $Q$.
We will formulate the definition of blocked semantic objects later in \cref{subsec:soundness}.
For both termination and forwarding, the work performed by the provider is absorbed by the client.

\subsection{Type Soundness}
\label{subsec:soundness}

In this section, we prove the soundness of \lang{} with respect to the nested-multiverse semantics.
Different from the system presented by~\citet{InversoMP20}, \lang{} is a conservative extension of the resource-aware session types~\cite{Das18RAST}.

\paragraph{Configuration typing}
The type rules for configurations are given below.
The judgment $\Dl \potconf{q} \calC :: \Gm$ means that the configuration $\calC$ uses the channels in the context $\Dl$ and provides the channels in $\Gm$, and the nonnegative number $q$ denotes the \emph{expected} value of the sum of the total potential and work done by the system.
The rule (\textsc{T:Compose}) imposes an order on linear configurations and flattens the linear object tree in a way that for any semantic object the providers of the channels used by the object are to the \emph{right} of the object in the configuration.
\begin{mathpar}\footnotesize
  \Rule{T:Proc}
  { \Dl \entailpot{q} P :: (c : A)
  }
  { \Dl \potconf{q+w} \proc{c}{w,P} :: (c : A)  }
  \and
  \Rule{T:Dist}
  { \Forall{i \in \calI} \Dl_i \potconf{q_i} \calC_i :: (c : A_i) \\\\
    \tsum^{\m{L}}_{i \in \calI} p_i \cdot \Dl_i = \Dl \\
    \tsum^{\m{R}}_{i \in \calI} p_i \cdot A_i = A \\
    \tsum_{i \in \calI} p_i \cdot q_i = q
  }
  { \Dl \potconf{q} \proc{c}{\{ \calC_i : p_i \}_{i \in \calI}} :: (c : A) }
  \and
  \Rule{T:Compose}
  { \Dl_1, \Dl' \potconf{q_1} \calO :: (c : A) \\
    \Dl_2 \potconf{q_2} \calC :: (\Dl, \Dl')
  }
  {  \Dl_1, \Dl_2 \potconf{q_1+q_2} (\calO \parallel \calC) :: (\Dl, (c : A)) }
\end{mathpar}

\paragraph{Type preservation and global progress}
We prove the preservation theorem below by induction on the nested-multiverse semantics.

\begin{theorem}\label{the:preservation}
  Suppose that $\Dl \potconf{q} \calC :: \Gm$.
  If $\calC \sstep \calC'$ or $\calC \cstep{d,\kappa} \calC'$ for some $d,\kappa$, then $\Dl \potconf{q} \calC' :: \Gm$.
\end{theorem}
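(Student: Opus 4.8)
The plan is to prove \cref{the:preservation} by induction on the structure of the derivation of the step $\calC \sstep \calC'$ (resp.\ $\calC \cstep{d,\kappa} \calC'$). This is \emph{not} an induction on the size of $\calC$: several rules — (\textsc{E:Dist}), (\textsc{C:Dist}), and the whole (\textsc{C:SDist:*})/(\textsc{C:BDist:*}) family — have a side condition that is a step on a syntactically larger, ``reshuffled'' configuration, but that step is always a strict sub-derivation, so the order is well founded. The case analysis rests on three auxiliary facts, each easy to state and prove on its own: (i) inversion for the syntax-directed configuration-typing rules (\textsc{T:Proc}), (\textsc{T:Dist}), (\textsc{T:Compose}) — in particular a ``peeling'' lemma that, given a typing of $\calO \parallel \calC$, yields a typing of $\calO$ with interface $(\Dl', (c:A))$ and a typing of $\calC$ providing $\Dl, \Dl'$; (ii) inversion for the process-typing rules of \cref{fig:base-types,fig:res-types,fig:prob-types}; and (iii) structural lemmas about the weighted-sum relations $+^{\m{L}}$ and $+^{\m{R}}$ of \cref{fig:prob-split}: that they lift pointwise through context concatenation (rules $\m{chan}$, $\m{emp}$), that on every non-choice type former a weighted sum forces all summands to be equal (as already observed in \cref{sec:prob}), and a \emph{reassociation} property — a two-level weighted sum $p\cdot(\text{a weighted sum over }\calI) +^{\m{R}} (1-p)\cdot(\text{one over }\calJ)$ collapses to a single weighted sum over the disjoint union $\calI \uplus \calJ$, and dually for $+^{\m{L}}$ and for product index sets $\calI \times \calJ$. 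Fact (iii) is the engine for the distribution rules.

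For the non-distribution single-process rules the argument is direct and the potential bookkeeping is the only thing to check. In (\textsc{E:Work}), inverting $\m{work}$ gives $q \ge r$ and a continuation typed with potential $q - r$, so re-applying (\textsc{T:Proc}) with work counter $w + r$ recovers the configuration potential $q + w$. In (\textsc{E:Def}), inverting $\m{spawn}$ together with well-formedness of $\Sg$ yields a typing of $P_f$ with potential $p$; I would type the fresh child $\proc{b}{0,\cdot}$ and the renamed parent continuation by (\textsc{T:Proc}) and recombine with (\textsc{T:Compose}), the potentials summing to $r + w$. The showcase case is (\textsc{E:Flip}): inverting $\m{flip}$ gives $\Dl = p\cdot\Dl_H +^{\m{L}}(1-p)\cdot\Dl_T$, $A = p\cdot A_H +^{\m{R}}(1-p)\cdot A_T$, $q = p\cdot q_H + (1-p)\cdot q_T$, and typings of $P_H$, $P_T$; typing $\proc{c}{w,P_H}$ and $\proc{c}{w,P_T}$ by (\textsc{T:Proc}) and combining the two-point distribution by (\textsc{T:Dist}) is then immediate, because the three side conditions of (\textsc{T:Dist}) are exactly those equations and the resulting potential $p(q_H + w) + (1-p)(q_T + w)$ equals $q + w$, matching the original typing. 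The ten deterministic and probabilistic \emph{communication} rules ((\textsc{C:$\ichoiceop$}), (\textsc{C:$\echoiceop$}), (\textsc{C:$\tensor$}), (\textsc{C:$\lolli$}), (\textsc{C:$\one$}), (\textsc{C:Id}), (\textsc{C:$\paypot$}), (\textsc{C:$\getpot$}), (\textsc{C:$\pichoiceop$}), (\textsc{C:$\pechoiceop$})) are handled by using the peeling lemma twice to expose the two adjacent communicating objects, inverting the provider's and the client's process typings, and checking that the continuations recombine under (\textsc{T:Compose}); this is where the premise $p_k = 1$ of ${\pichoiceop}R$ and ${\pechoiceop}L$ makes the sender's and receiver's per-universe types line up, where the work-absorption of (\textsc{C:$\one$}) and (\textsc{C:Id}) (with its blockedness side condition) is discharged, and the rest of the potential arithmetic is essentially that of the deterministic development~\cite{Das18RAST}.

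The hard part will be the distribution rules (\textsc{E:Dist}), (\textsc{C:Dist}), (\textsc{C:SDist:R}), (\textsc{C:SDist:L}), (\textsc{C:BDist:D}), (\textsc{C:BDist:R}), (\textsc{C:BDist:L}). For each, the recipe is: from the given typing of the original configuration, translate it — using fact (iii) together with inversion of (\textsc{T:Dist}) — into a typing of the reshuffled configuration that appears in the rule's side condition; then apply the induction hypothesis to the sub-derivation and read off a typing of the successor $\calC'$, which is what is needed. The translation is where fact (iii) does the work. For instance, in (\textsc{C:SDist:R}) — a plain receiver $\proc{c}{w,Q}$ facing a distribution provider $\proc{d}{\{\calC_i:p_i\}_{i\in\calI}}$ on channel $d$ — after peeling and inverting (\textsc{T:Dist}) on the provider I would show that the typings of $\proc{c}{w,Q}$ and of the $\calC_i$ recombine into a (\textsc{T:Dist}) typing of $\proc{c}{\{(\proc{c}{w,Q}\parallel\calC_i):p_i\}_{i\in\calI}}$ over the same $\calI$; when $\kappa = \m{det}$ this uses that all the per-universe $d$-types coincide with the type seen by $\proc{c}{w,Q}$ (non-choice case of the weighted-sum rules), whereas the $\pichoiceop$/$\pechoiceop$ variants (\textsc{C:BDist:L})/(\textsc{C:BDist:R}) deliberately keep the relevant distribution \emph{un}-decomposed, precisely so that the outermost label probabilities are still summed by the $+^{\m{R}}$/$+^{\m{L}}$ premise of (\textsc{T:Dist}). (\textsc{C:BDist:D}) additionally produces the product index set $\calI\times\calJ$, which needs the product case of fact (iii).

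I expect the genuine obstacle to be pinning down fact (iii) at the right strength: the reassociation/compatibility statement for $+^{\m{L}}$ and $+^{\m{R}}$ must be strong enough to survive being embedded into a larger configuration by (\textsc{T:Compose}) — so that it still holds when there are untouched processes to the left of the one making a coin flip or receiving a probabilistic label — yet weak enough not to falsely identify deeply-nested probabilistic distributions. The \emph{shallow} definition of the weighted sums in \cref{fig:prob-split} (they never recurse into continuation types) and the counterexamples $\m{bad}$ and $\m{bad'}$ of \cref{sec:prob} are exactly what make the sweet spot narrow, so getting this lemma right is where I would concentrate the effort.
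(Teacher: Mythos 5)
Your proposal is correct and follows essentially the same route as the paper's proof in \cref{appendix:preservation}: induction on the step derivation, reduction to the atomic rewriting rules via the standard permutation/truncation/invariance propositions (your ``peeling lemma''), inversion on the typing and process-typing judgments, and — for the (\textsc{C:SDist:*})/(\textsc{C:BDist:*}) cases — a case split on whether the communicated type is a probabilistic choice, followed by re-splitting the non-distribution side per universe and discharging the double-weighted-sum arithmetic, which is exactly your fact (iii). The only cosmetic difference is that the paper states the theorem in two parts (single-process vs.\ communication steps) and inlines the reassociation computations case by case rather than isolating them as a standalone lemma.
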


The global progress turns out to be more complex because we have to discover possible communication in a nested configuration.
To aid the proof of global progress, we define several relations to characterize the \emph{status} of a semantic object:
\begin{itemize}
  \item $\calC \poised$ means that \emph{all} process in $\calC$ are communicating along their providing channels. \cref{fig:poised} lists the rules.
  
  \item $\calC \live$ means that there exists a process in $\calC$ that can make a step \emph{without} communication. \cref{fig:live} lists the rules.
  
  \item $\calC \cpoised{(d,\kappa)}$ means that \emph{some} process in $\calC$ is communicating along its providing channel $d$ of sort $\kappa$, and channel $d$ is external to $\calC$. \cref{fig:cpoised} lists the rules. Note that we introduce a new sort $\top$ as the \emph{super}-sort of $\m{det},{\pichoiceop},{\pechoiceop}$.
  
  \item $\calC \cblocked{(d,\kappa)}$ means that \emph{some} process in $\calC$ is communicating along its consumed channel $d$ of sort $\kappa$, and channel $d$ is external to $\calC$. \cref{fig:cblocked} lists the rules.
  
  \item $\calC \comm{(d,\kappa)}$ means that there exist two processes that are going to communicate in $\calC$ along channel $d$ of sort $\kappa$. \cref{fig:comm} lists the rules. Note that in the rule (\textsc{CM:Compose:C}), we use $\kappa <: \kappa'$ to handle the case where $\kappa' = \top$.
\end{itemize}

\begin{figure}
\begin{mathpar}\footnotesize
  \Rule{P:$\one R$}
  {
  }
  { \proc{c}{w, \eclose{c}} \poised }
  \and
  \Rule{P:${\ichoiceop}R$}
  {
  }
  { \proc{c}{w, \esendl{c}{k} \semi P} \poised }
  \and
  \Rule{P:${\echoiceop}R$}
  {
  }
  { \proc{c}{w, \ecase{c}{\ell}{P_\ell}_{\ell \in L}} \poised }
  \and
  \Rule{P:${\pichoiceop}R$}
  {
  }
  { \proc{c}{w, \esendlp{c}{k} \semi P} \poised }
  \and
  \Rule{P:${\pechoiceop}R$}
  {
  }
  { \proc{c}{w, \epcase{c}{\ell}{P_\ell}_{\ell \in L}} \poised }
  \and
  \Rule{P:${\paypot}R$}
  {
  }
  { \proc{c}{w, \epay{c}{r} \semi P} \poised }
  \and
  \Rule{P:${\getpot}R$}
  {
  }
  { \proc{c}{w, \eget{c}{r} \semi P} \poised }
  \and
  \Rule{P:${\tensor}R$}
  {
  }
  { \proc{c}{w, \esendch{c}{e} \semi P } \poised }
  \and
  \Rule{P:${\lolli}R$}
  {
  }
  { \proc{c}{w, \erecvch{c}{y} \semi P} \poised }
  \and
  \Rule{P:IdR}
  {
  }
  { \proc{c}{w, \fwd{c}{e}} \poised }
  \and
  \Rule{P:Dist}
  { \Forall{i \in \calI} \calC_i \poised
  }
  { \proc{c}{\{ \calC_i : p_i \}_{i \in \calI}} \poised }
  \and
  \Rule{P:Compose}
  { \calO \poised \\
    \calC  \poised
  }
  { (\calO \parallel \calC) \poised }
\end{mathpar}
\caption{Rules for poised semantic objects.}
\label{fig:poised}
\end{figure}

\begin{figure}
\begin{mathpar}\footnotesize
  \Rule{L:Flip}
  {
  }
  { \proc{c}{w, \eflip{p}{P_H}{P_T}} \live }
  \and
  \Rule{L:Work}
  {
  }
  { \proc{c}{w, \ework{r} \semi P} \live }
  \and
  \Rule{L:Def}
  {
  }
  { \proc{c}{w, \ecut{x}{f}{\many{d}}{Q}} \live }
  \and
  \Rule{L:Dist}
  { \calC_{i_0} \live
  }
  { \proc{c}{\{ \calC_i : p_i \}_{i \in \calI}} \live }
  \and
  \Rule{L:Compose:H}
  { \calO\live
  }
  { (\calO \parallel \calC) \live }
  \and
  \Rule{L:Compose:T}
  { \calC\live
  }
  { (\calO \parallel \calC) \live }
\end{mathpar}
\caption{Rules for live semantic objects.}
\label{fig:live}
\end{figure}

\begin{figure}
\begin{mathpar}\footnotesize
  \Rule{PR:${\one}$}
  {
  }
  {  \proc{c}{w,\eclose{c}} \cpoised{(c,\m{det})} }
  \and
  \Rule{PR:${\ichoiceop}$}
  {
  }
  { \proc{c}{w, \esendl{c}{k} \semi P} \cpoised{(c,\m{det})} }
  \and
  \Rule{PR:${\echoiceop}$}
  {
  }
  { \proc{c}{w, \ecase{c}{\ell}{P_\ell}_{\ell \in L} \cpoised{(c,\m{det})}} }
  \and
  \Rule{PR:${\pichoiceop}$}
  {
  }
  { \proc{c}{w, \esendlp{c}{k} \semi P} \cpoised{(c,\pichoiceop)} }
  \and
  \Rule{PR:${\pechoiceop}$}
  {
  }
  { \proc{c}{w, \epcase{c}{\ell}{P_\ell}_{\ell \in L} \cpoised{(c,\pechoiceop)}} }
  \and
  \Rule{PR:${\paypot}$}
  {
  }
  { \proc{c}{w, \epay{c}{r} \semi P} \cpoised{(c,\m{det})} }
  \and
  \Rule{PR:${\getpot}$}
  {
  }
  { \proc{c}{w, \eget{c}{r} \semi P} \cpoised{(c,\m{det})} }
  \and
  \Rule{PR:$\tensor$}
  {
  }
  { \proc{c}{w, \esendch{c}{e} \semi P} \cpoised{(c,\m{det})} }
  \and
  \Rule{PR:$\lolli$}
  {
  }
  { \proc{c}{w, \erecvch{c}{y} \semi P} \cpoised{(c,\m{det})} }
  \and
  \Rule{PR:Id}
  {
  }
  { \proc{c}{w, \fwd{c}{e} } \cpoised{(c, \top)} }
  \and 
  \Rule{PR:Dist}
  { \calC_{i_0} \cpoised{(d,\kappa)}
  }
  { \proc{e}{\{ \calC_i : p_i \}_{i \in \calI} } \cpoised{(d,\kappa)} }
  \and
  \Rule{PR:Compose:H}
  { \calO \cpoised{(d,\kappa)}
  }
  { (\calO \parallel \calC) \cpoised{(d,\kappa)} }
  \and
  \Rule{PR:Compose:T}
  { \calC \cpoised{(d,\kappa)} \\
    d \not\in \mathrm{FV}^{\m{L}}(\calO)
  }
  { (\calO \parallel \calC) \cpoised{(d,\kappa)} }
\end{mathpar}
\caption{Rules for $(d,\kappa)$-poised semantic objects.}
\label{fig:cpoised}
\end{figure}

\begin{figure}
\begin{mathpar}\footnotesize
  \Rule{BL:${\one}$}
  {
  }
  { \proc{d}{\ewait{c} \semi Q} \cblocked{(c,\m{det})} }
  \and
  \Rule{BL:$\ichoiceop$}
  {
  }
  { \proc{d}{\ecase{c}{\ell}{Q_\ell}_{\ell \in L}} \cblocked{(c,\m{det})} }
  \and
  \Rule{BL:$\echoiceop$}
  {
  }
  { \proc{d}{\esendl{c}{k} \semi Q} \cblocked{(c,\m{det})} }
  \and
  \Rule{BL:$\pichoiceop$}
  {
  }
  { \proc{d}{\epcase{c}{\ell}{Q_\ell}_{\ell \in L}} \cblocked{(c,\pichoiceop)} }
  \and
  \Rule{BL:$\pechoiceop$}
  {
  }
  { \proc{d}{\esendlp{c}{k} \semi Q} \cblocked{(c,\pechoiceop)} }
  \and
  \Rule{BL:$\paypot$}
  {
  }
  { \proc{d}{\eget{c}{r} \semi Q} \cblocked{(c,\m{det})} }
  \and
  \Rule{BL:$\getpot$}
  {
  }
  { \proc{d}{\epay{c}{r} \semi Q} \cblocked{(c,\m{det})} }
  \and
  \Rule{BL:$\tensor$}
  {
  }
  { \proc{d}{\erecvch{c}{y} \semi Q} \cblocked{(c,\m{det})} }
  \and
  \Rule{BL:$\lolli$}
  {
  }
  { \proc{d}{\esendch{c}{e} \semi Q} \cblocked{(c,\m{det})} }
  \and
  \Rule{BL:Dist}
  { \calC_{i_0} \cblocked{(d,\kappa)}
  }
  { \proc{e}{\{ \calC_i : p_i \}_{i \in \calI}} \cblocked{(d,\kappa)} }
  \and
  \Rule{BL:Compose:H}
  { \calO \cblocked{(d,\kappa)} \\
    d \not\in \mathrm{FV}^{\m{R}}(\calC)
  }
  { (\calO \parallel \calC) \cblocked{(d,\kappa)} }
  \and
  \Rule{BL:Compose:T}
  { \calC \cblocked{(d,\kappa)}
  }
  { (\calO \parallel \calC) \cblocked{(d,\kappa)} }
\end{mathpar}
\caption{Rules for $(d,\kappa)$-blocked semantic objects.}
\label{fig:cblocked}
\end{figure}

\begin{figure}
\begin{mathpar}\footnotesize
  \Rule{CM:Dist}
  { \calC_{i_0} \comm{(d,\kappa)}
  }
  { \proc{e}{\{ \calC_i : p_i\}_{i \in \calI}} \comm{(d,\kappa)} }
  \and
  \Rule{CM:Compose:H}
  { \calO \comm{(d,\kappa)}
  }
  { (\calO \parallel \calC) \comm{(d,\kappa)} }
  \and
  \Rule{CM:Compose:T}
  { \calC \comm{(d,\kappa)}
  }
  { (\calO \parallel \calC) \comm{(d,\kappa)} }
  \and
  \Rule{CM:Compose:C}
  { \calO \cblocked{(d,\kappa)} \\
    \calC \cpoised{(d,\kappa')} \\
    \kappa <: \kappa'
  }
  { (\calO \parallel \calC) \comm{(d,\kappa)} }
\end{mathpar}
\caption{Rules for semantic objects with possible communication inside.}
\label{fig:comm}
\end{figure}

We first prove that these status characterizations introduced above are sufficient conditions for a configuration to make a step in the nested-multiverse semantics.

\begin{lemma}\label{lem:goodstatus}\
  \begin{itemize}
    \item If $\calC \live$, then there exists $\calC'$ such that $\calC \sstep \calC'$.
    \item If $(\cdot) \potconf{q} \calC :: \Gm$ and $\calC \comm{(d,\kappa)}$,
  then $C \cstep{d,\kappa} \calC'$ for some $\calC'$.
  \end{itemize}
\end{lemma}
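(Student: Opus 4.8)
The plan is to prove the two bullets separately: the first by induction on the derivation of $\calC \live$, and the second by induction on the size of $\calC$ (its total number of semantic objects). For the first bullet no typing is needed. The base cases \textsc{L:Flip}, \textsc{L:Work}, \textsc{L:Def} exhibit exactly the redexes fired by the single-process rules \textsc{E:Flip}, \textsc{E:Work}, \textsc{E:Def} (for \textsc{L:Def} I also use that $f$ is a process defined in $\Sg$, so the instantiation in \textsc{E:Def} is defined); for \textsc{L:Dist} the induction hypothesis gives $\calC_{i_0} \sstep \calC_{i_0}'$ and \textsc{E:Dist} lifts it; and for \textsc{L:Compose:H}/\textsc{T} the hypothesis gives a step of a sub-configuration, which extends to $\calC$ because the rules of \cref{fig:pstl-semantics} are multiset-rewriting rules that mention only the rewritten part.

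For the second bullet I would first generalize the claim to an arbitrary context: $\Dl \potconf{q} \calC :: \Gm$ and $\calC \comm{(d,\kappa)}$ imply $\calC \cstep{d,\kappa} \calC'$ for some $\calC'$. The generalization is forced because, when $\comm$ is derived by \textsc{CM:Compose:H}/\textsc{T}, inversion on \textsc{T:Compose} types the communicating sub-configuration in the internal interface, which need not be empty. The cases \textsc{CM:Dist} and \textsc{CM:Compose:H}/\textsc{T} are then easy: one inverts \textsc{T:Dist} or \textsc{T:Compose} to type each component, applies the induction hypothesis to a strictly smaller sub-configuration ($\calC_{i_0}$, or one of the two composed parts), and lifts the resulting step by \textsc{C:Dist} or by multiset rewriting.

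The substantive case is \textsc{CM:Compose:C}, with $\calC = \calO \parallel \calC_0$, $\calO \cblocked{(d,\kappa)}$, $\calC_0 \cpoised{(d,\kappa')}$, and $\kappa <: \kappa'$; here I would analyze the shapes of $\calO$ and $\calC_0$ by inverting their $\cblocked$- and $\cpoised$-derivations (\cref{fig:cblocked,fig:cpoised}). When both are single process objects, inversion on the configuration- and process-typing rules together with $\kappa <: \kappa'$ forces the two processes to perform dual actions on the uniquely-typed channel $d$, and exactly one of the communication rules (\textsc{C:$\one$}, \textsc{C:$\ichoiceop$}, \textsc{C:$\echoiceop$}, \textsc{C:$\tensor$}, \textsc{C:$\lolli$}, \textsc{C:$\paypot$}, \textsc{C:$\getpot$}, \textsc{C:$\pichoiceop$}, \textsc{C:$\pechoiceop$}, \textsc{C:Id}) fires directly. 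When one or both of $\calO$, $\calC_0$ is a distribution object ($\cblocked$/$\cpoised$ derivation ending in \textsc{BL:Dist}/\textsc{PR:Dist}), I would fire the matching rule among \textsc{C:SDist:R}/\textsc{L} and \textsc{C:BDist:D}/\textsc{R}/\textsc{L} (the choice among \textsc{C:BDist:$\cdot$} dictated by $\kappa$); its premise asks for a $\cstep{d,\kappa}$ on the reorganized configuration, which I discharge by selecting, via \textsc{BL:Dist}/\textsc{PR:Dist}, an index where the relevant component is still blocked/poised with a compatible sort, applying \textsc{C:Dist}, and recursing on the result — which pairs a strict subterm of one distribution object with (a subterm of, or all of) the other, hence is strictly smaller. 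When $\calO$ or $\calC_0$ is itself a $\parallel$-composition (\textsc{BL:Compose}/\textsc{PR:Compose}), I re-associate $\parallel$ and move the inert parts aside using the side conditions $d \notin \mathrm{FV}^{\m{R}}(\cdot)$ / $d \notin \mathrm{FV}^{\m{L}}(\cdot)$; since re-association does not shrink the size, this needs a subsidiary induction on the $\parallel$-structure (a lexicographic refinement of the measure by the combined height of the status derivations). I expect this last case to be the main obstacle: the rules \textsc{C:SDist:$\cdot$}/\textsc{C:BDist:$\cdot$} reorganize — and in the \textsc{det} case even enlarge, by a product — the configuration before one can descend into a strictly smaller piece, so the crux is pinning down the well-founded measure and checking in every sub-case that the component selected after the reorganization still carries a compatible status; the remaining bookkeeping is routine inversion on the typing, status, and semantic rules.
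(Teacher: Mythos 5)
Your proposal is correct and follows essentially the same route as the paper: the first bullet is \cref{Lem:GoodLive} (induction on the $\m{live}$ derivation), and the second is \cref{Lem:GoodComm}, whose \textsc{CM:Compose:C} case is factored into \cref{Lem:GoodCommHelper} --- stated with general contexts, exactly the generalization you say is forced --- and proved by nested induction on the $\cblocked{(d,\kappa)}$ and $\cpoised{(d,\kappa)}$ derivations, which plays the role of your lexicographically refined size measure. The crux you flag (that \textsc{C:SDist}/\textsc{C:BDist} reorganize the configuration before one can descend) is resolved there just as you describe: use the typing of $d$ (\cref{Prop:PoisedType,Prop:BlockedType} and the shallow weighted sums) to pick the right \textsc{C:BDist} rule, select the indices witnessed by \textsc{BL:Dist}/\textsc{PR:Dist}, and recurse on the strictly smaller paired sub-derivations.
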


Then we prove that for a well-typed configuration, we can always find a suitable status characterization for it.

\begin{lemma}\label{lem:progress}
  If $(\cdot) \potconf{q} \calC :: \Gm$, then at least one of the cases below holds:
  \begin{enumerate}[(i)]
    \item $\calC \live$,
    \item $\calC \comm{(d,\kappa)}$ for some $d,\kappa$, or
    \item $\calC \poised$.
  \end{enumerate}
\end{lemma}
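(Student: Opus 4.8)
The plan is to prove \cref{lem:progress} as the $\Dl=\cdot$ instance of a stronger claim that also covers \emph{open} configurations: \emph{if $\Dl\potconf{q}\calC::\Gm$ then (i)~$\calC\live$, or (ii)~$\calC\comm{(d,\kappa)}$ for some $d,\kappa$, or (iii)~$\calC\poised$, or (iv)~$\calC\cblocked{(d,\kappa)}$ for some $\kappa$ and some $d\in\dom(\Dl)$.} When $\Dl=\cdot$ disjunct (iv) is vacuous, which yields the lemma. This strengthening is forced by the rule (\textsc{T:Compose}): inside $\calO\parallel\calC''$ the head object $\calO$ is typed against the context it shares with $\calC''$, which need not be empty even if the whole configuration is closed, so the induction hypothesis must already speak about open configurations. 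I prove the stronger claim by induction on the derivation of $\Dl\potconf{q}\calC::\Gm$, with one case per configuration-typing rule.

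For (\textsc{T:Proc}), with $\calC=\proc{c}{w,P}$ and $\Dl\entailpot{q}P::(c:A)$, I invert on the typing of $P$: if $P$ is $\eflip{p}{\cdot}{\cdot}$, $\ework{r}\semi\cdot$, or $\ecut{\cdot}{\cdot}{\cdot}{\cdot}$ then $\calC\live$ (\cref{fig:live}); if $P$ acts on its offered channel $c$ (including $\fwd{c}{e}$) then $\calC\poised$ by the matching rule of \cref{fig:poised}; otherwise $P$ acts on a channel $d\in\dom(\Dl)$ it uses and the matching rule of \cref{fig:cblocked} gives $\calC\cblocked{(d,\kappa)}$. Inverting on the \emph{typing} rather than the syntax matters here, since e.g.\ $\esendl{x}{k}\semi P$ is poised or blocked according to whether $x$ is the succedent or an antecedent of the sequent. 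For (\textsc{T:Dist}), $\calC=\proc{c}{\{\calC_i:p_i\}_{i\in\calI}}$ with $\Dl_i\potconf{q_i}\calC_i::(c:A_i)$ and $\dom(\Dl_i)=\dom(\Dl)$ (the weighted-sum relations of \cref{fig:prob-split} preserve domains); applying the IH to each $\calC_i$ and taking the first applicable disjunct in the order live, then comm, then cblocked, then poised, I conclude by (\textsc{L:Dist}), (\textsc{CM:Dist}), (\textsc{BL:Dist}), or (\textsc{P:Dist}) — the last case needs \emph{all} $\calC_i$ poised, which is exactly the situation when none is live, comm, or cblocked. For (\textsc{T:Compose}), $\calC=\calO\parallel\calC''$ with $\calO::(c:A)$ over $\Dl_1,\Dl'$ and $\calC''::(\Dl'',\Dl')$ over $\Dl_2$, so $\Dl=\Dl_1,\Dl_2$. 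Applying the IH to $\calC''$: if $\calC''$ is $\live$, $\comm$, or $\cblocked{(d,\kappa)}$ then (\textsc{L:Compose:T}), (\textsc{CM:Compose:T}), or (\textsc{BL:Compose:T}) finish — in the last subcase $d\in\dom(\Dl_2)$ is not offered by $\calO$, so it stays external to $\calO\parallel\calC''$. If $\calC''\poised$ I analyze $\calO$: when $\calO=\proc{c}{w,P}$ I invert on $P$ as in (\textsc{T:Proc}), and when $\calO=\proc{c}{\{\calC_j:p_j\}}$ I apply the IH to each universe $\calC_j$ and lift the result to $\calO$ through (\textsc{L:Dist})/(\textsc{CM:Dist})/(\textsc{BL:Dist})/(\textsc{P:Dist}). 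Whenever $\calO$ turns out $\cblocked{(d,\kappa)}$, I split on whether $d\in\dom(\Dl')$ — then $\calC''\poised$ together with ``$\calC''$ offers $d$'' produces a $\cpoised$ witness, and (\textsc{CM:Compose:C}) gives $\calC\comm{(d,\kappa)}$ — or $d\in\dom(\Dl_1)$, where $d$ is external and (\textsc{BL:Compose:H}) gives disjunct (iv); if instead $\calO$ and $\calC''$ are both $\poised$, (\textsc{P:Compose}) gives $\calC\poised$.

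The $\cpoised$ witness and the side condition $\kappa\wsubt\kappa'$ of (\textsc{CM:Compose:C}) come from two auxiliary lemmas, each proved by the same induction on configuration typing. Write $\m{sort}(A)$ for $\pichoiceop$, $\pechoiceop$, or $\m{det}$ according to the outermost type former of $A$. (a) If $\Dl\potconf{q}\calC::\Gm$, $\calC\poised$, and $(d:B)\in\Gm$, then $\calC\cpoised{(d,\kappa')}$ with $\kappa'=\top$ when the provider of $d$ inside $\calC$ is a forwarder, and $\kappa'=\m{sort}(B)$ otherwise. (b) If $\Dl\potconf{q}\calC::\Gm$ and $\calC\cblocked{(d,\kappa)}$ with $d\in\dom(\Dl)$, then $\kappa=\m{sort}(\Dl(d))$. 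In both, the base case reads the sort off the \textsc{PR}- and \textsc{BL}-rules via inversion on the process typing, and the inductive cases propagate it through (\textsc{T:Dist}) and (\textsc{T:Compose}); the one algebraic fact used is that every clause of \cref{fig:prob-split} leaves the outermost type former unchanged, so neither the weighted sum relating a channel's type across universes nor the identity of that type at a client/provider boundary in (\textsc{T:Compose}) alters its sort. Consequently, in the critical subcase the blocked client's sort and the poised provider's sort are both $\m{sort}$ of the shared channel's type, except when the provider forwards and contributes $\top$, so $\kappa\wsubt\kappa'$ holds in every case.

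The step I expect to be the main obstacle is the (\textsc{T:Compose}) subcase where the head object $\calO$ is a \emph{distribution} object: it is precisely this configuration that forces the open-configuration strengthening (the universes $\calC_j$ inherit the possibly-nonempty context $\Dl_1,\Dl'$), and it is where one must verify that a universe blocked on an \emph{internal} channel $d\in\dom(\Dl')$ genuinely meets a compatibly-typed poised provider inside $\calC''$ — which is what the two sort lemmas and the $+^{\m{L}}$/$+^{\m{R}}$ bookkeeping are for. A secondary delicacy is auxiliary lemma (a): its forwarding case introduces the super-sort $\top$, and its distribution case relies on $+^{\m{R}}$ preserving the head connective, so that extracting any single universe exposes a $\cpoised$ witness of the right sort. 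With these pieces in hand, \cref{lem:progress} is the $\Dl=\cdot$ instance of the stronger claim.
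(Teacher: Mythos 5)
Your proposal is correct and follows essentially the same route as the paper: the paper's \cref{lem:progress} is likewise obtained as the closed-context instance of a strengthened statement for open configurations with a fourth disjunct $\calC \cblocked{(d,|\Dl(d)|)}$ (the paper's Lemma~\ref{Lem:ProgressHelper}), proved by induction on the configuration typing, with the critical \textsc{T:Compose} subcase (head blocked on a channel in the shared context $\Dl'$, tail poised) discharged exactly as you describe via a ``poised implies $(d,\kappa)$-poised on every offered channel'' lemma (the paper's Proposition~\ref{Prop:PoisedExistence}) and the sort-agreement facts (Propositions~\ref{Prop:PoisedType} and~\ref{Prop:BlockedType}, your auxiliary lemmas (a) and (b)). Nothing further is needed.
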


Finally, we can prove global progress of our type system.

\begin{theorem}
  If $ (\cdot) \potconf{q} \calC :: \Gm$, then either
  \begin{enumerate}[(i)]
    \item $\calC \sstep \calC'$ for some $\calC'$, or $\calC \cstep{d,\kappa} \calC'$ for some $\calC',d,\kappa$, or
    \item $\calC \poised$.
  \end{enumerate}
\end{theorem}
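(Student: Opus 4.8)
The plan is to obtain the theorem as an immediate corollary of \cref{lem:progress} and \cref{lem:goodstatus} by a three-way case split on the status of $\calC$. Since $(\cdot) \potconf{q} \calC :: \Gm$, \cref{lem:progress} gives that one of the following holds: (a) $\calC \live$, (b) $\calC \comm{(d,\kappa)}$ for some $d,\kappa$, or (c) $\calC \poised$. In case (c) we land directly in alternative (ii). In case (a), the first bullet of \cref{lem:goodstatus} produces a $\calC'$ with $\calC \sstep \calC'$, which is alternative (i). In case (b), we invoke the second bullet of \cref{lem:goodstatus}; its hypothesis is precisely that $\calC$ is well-typed in the \emph{empty} context, which is our assumption, so we get $\calC \cstep{d,\kappa} \calC'$ for some $\calC'$, again alternative (i). This exhausts all cases, so the theorem follows.

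All the real content therefore sits in the two lemmas, and \cref{lem:progress} is where I expect the \textbf{main obstacle} to lie. I would prove it by induction on the configuration typing derivation. The base case (\textsc{T:Proc}) is a case analysis on the process term $P$: the constructs $\eflip{p}{P_H}{P_T}$, $\ework{r}\semi P$, and $\ecut{x}{f}{\many{d}}{Q}$ make $\calC$ live via (\textsc{L:Flip}), (\textsc{L:Work}), (\textsc{L:Def}); every other construct is a communication action on the \emph{provided} channel $c$, so $\proc{c}{w,P}\poised$ by the corresponding rule of \cref{fig:poised}. For (\textsc{T:Dist}) I would apply the induction hypothesis to each component $\calC_i$: if some $\calC_{i_0}$ is live or exhibits internal communication, the distribution object inherits that status via (\textsc{L:Dist}) or (\textsc{CM:Dist}); otherwise every $\calC_i$ is poised and (\textsc{P:Dist}) makes the whole object poised.

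The heart of the argument is the (\textsc{T:Compose}) case, $\Dl_1,\Dl_2 \potconf{q_1+q_2} (\calO \parallel \calC) :: (\Dl,(c:A))$, which mirrors the classical progress proof for intuitionistic session types but must now reach through nested distribution objects. Applying the induction hypothesis to $\calO$ and to $\calC$: if either is live or has internal communication, propagate via the (\textsc{L:Compose:*}) / (\textsc{CM:Compose:H}), (\textsc{CM:Compose:T}) rules. The delicate remaining subcase is that both $\calO$ and $\calC$ are poised. Since $\calO$ provides $c$ and is poised, its head is communicating along $c$; by inversion one gets a $(c,\kappa)$-blocked-style description of $\calO$, and by tracing $c$ into $\calC$ one locates the provider of $c$ inside $\calC$ — possibly buried under $\{\calC_i:p_i\}$ layers, which is exactly why the $\cpoised$/$\cblocked$ judgments must recurse through distributions via (\textsc{PR:Dist}) and (\textsc{BL:Dist}) — obtaining $\calC \cpoised{(c,\kappa')}$ with $\kappa <: \kappa'$, so (\textsc{CM:Compose:C}) yields $(\calO\parallel\calC)\comm{(c,\kappa)}$. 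If instead $c \notin \Dl'$, then $\calO$'s communication is on a genuinely external channel and, together with $\calC \poised$, the composition is poised by (\textsc{P:Compose}); here the free-variable side conditions (e.g.\ $d\notin\mathrm{FV}^{\m{L}}(\calO)$ in (\textsc{PR:Compose:T}), $d\notin\mathrm{FV}^{\m{R}}(\calC)$ in (\textsc{BL:Compose:H})) are precisely what makes the bookkeeping coherent. The fiddly part I would budget the most time for is tracking that the communication sort $\kappa$ is preserved as one descends through distribution layers, and checking that the $\kappa <: \kappa'$ slack is needed only to absorb forwarding (sort $\top$). \cref{lem:goodstatus} itself is then routine: the first bullet reads off the matching single-process rule from \cref{fig:pstl-semantics} for a live head; the second is an induction exploiting the structural (\textsc{C:*Dist*}) rules, which were designed so that whenever $\calC \comm{(d,\kappa)}$ the two participants can always be brought syntactically adjacent and stepped.
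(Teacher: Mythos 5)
Your proof of the theorem itself is exactly the paper's: a three-way case split via \cref{lem:progress}, discharging the live and communicating cases with \cref{lem:goodstatus}. Your sketch of the supporting lemmas also matches the appendix (the paper proves a strengthened four-case version of \cref{lem:progress} that adds a $(d,\kappa)$-blocked alternative for configurations with non-empty context, which is the ``blocked-style description'' your \textsc{T:Compose} case relies on), modulo a small slip in wording — the internal communication in that case arises when $\calO$ is \emph{blocked} on a channel $d\in\dom{\Dl'}$ that $\calC$ provides and $\calC$ is poised, not from $\calO$ being poised on its own offered channel $c$.
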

\begin{proof}
  Appeal to \cref{lem:goodstatus,lem:progress}.
\end{proof}

\paragraph{Expected work analysis}
To reason about the \emph{expected} amount of work done by a probabilistic system, we harness Markov-chain-based reasoning~\cite{ESOP:KKM16,LICS:OKK16} to construct a stochastic process of system states.
To construct the Markov chain, we have to ``flatten'' a nested-multiverse configuration to a \emph{distribution} on non-nested configurations, i.e., sequences of processes $\calD \Coloneqq \proc{c_1}{w_1,P_1} \parallel \cdots \parallel \proc{c_n}{w_n,P_n}$.
We formalize the ``flattening'' procedure via a \emph{simulation} relation showed below.
\begin{mathpar}\footnotesize
  \Rule{FL:Proc}
  { 
  }
  {  \proc{c}{w,P} \approx \{ \proc{c}{w,P} : 1 \} }
  \and
  \Rule{FL:Dist}
  { \Forall{i \in \calI} \calC_i \approx \mu_i
  }
  { \proc{c}{\{ \calC_i : p_i \}_{i \in \calI}} \approx \tsum_{i \in \calI} p_i \cdot \mu_i  }
  \and
  \Rule{FL:Compose}
  { \calO \approx \mu_1 \\
    \calC \approx \mu_2
  }
  { (\calO \parallel \calC) \approx \{ (\calD_1 \parallel \calD_2 : \mu_1(\calD_1) \cdot \mu_2(\calD_2) \}_{\calD_1 \in \dom{\mu_1}, \calD_2 \in \dom{\mu_2}} }
\end{mathpar}

We then define the expected total work with respect to the Markov-chain semantics and prove that our system derives a sound \emph{upper} bound on the expected work.
We denote the total work done by a non-nested configuration $\calD = \many{\proc{c_i}{w_i,P_i}}$ by $\m{work}(\calD) \defeq \tsum_i w_i$.

\begin{theorem}
  Suppose that $(\cdot) \potconf{q} \calC :: \Gm$.
  Then we can construct a Markov chain $\{\calD_n\}_{n \in \bbN}$ such that
  \begin{enumerate}[(i)]
    \item there exists a sequence $\{\calC_n\}_{n \in \bbN}$ such that $\calC_0 = \calC$, and for each $n \in \bbN$, it holds that $(\cdot) \potconf{q} \calC_n :: \Gm$, $\calC_n \approx \calD_n$, and
    \item for each $n \in \bbN$, it holds that $\bbE[\m{work}(\calD_n)] \le q$, where the expectation is computed with respect to the Markov chain.
  \end{enumerate}
  
  The expected total work for the executions starting from $\calC$ can then be defined as $\m{etw}(\calC) \defeq \lim_{n \to \infty} \bbE[\m{work}(\calD_n)]$.
  As a corollary of the Monotone Convergence Theorem, it holds that $\m{etw}(\calC) \le q$.
\end{theorem}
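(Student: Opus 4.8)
The plan is to turn the nondeterministic nested-multiverse semantics into a genuine Markov chain by fixing a \emph{scheduler}, then to obtain clause~(i) from type preservation, clause~(ii) from a potential-based invariant, and the final inequality from monotonicity of work. First I would fix a scheduler $\rho$ that is a function of a \emph{non-nested} configuration $\calD = \proc{c_1}{w_1,P_1} \parallel \cdots \parallel \proc{c_m}{w_m,P_m}$: on a non-poised $\calD$ it selects a redex --- a single process that can take a $\sstep$-step, or an adjacent pair of processes that can communicate --- which always exists because each $\calD$ in the support of the flattening of a well-typed configuration is itself well typed, so \cref{lem:progress} together with \cref{lem:goodstatus} applies; poised configurations are absorbing. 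The associated Markov kernel $K(\calD,\cdot)$ fires $\rho(\calD)$ and is deterministic unless $\rho(\calD) = \proc{c}{w,\eflip{p}{P_H}{P_T}}$, in which case it moves to the configuration with that process replaced by $\proc{c}{w,P_H}$ with probability $p$ and by $\proc{c}{w,P_T}$ with probability $1-p$. Because non-nested configurations form a countable set, $K$ determines a Markov chain $\{\calD_n\}_{n \in \bbN}$ (Ionescu--Tulcea construction) with $\calD_0$ distributed as the flattening of $\calC$. In tandem I would define the nested sequence $\{\calC_n\}_{n \in \bbN}$ with $\calC_0 = \calC$ by letting $\calC_{n+1}$ be obtained from $\calC_n$ by the finite batch of nested-multiverse steps that, for every branch $\calD$ of $\calC_n$, fires $\rho(\calD)$ at each of its occurrences; each such firing is assembled from the rules that descend into distribution objects ($(\textsc{E:Dist})$, $(\textsc{C:Dist})$, $(\textsc{C:SDist:R})$, $(\textsc{C:SDist:L})$, $(\textsc{C:BDist:D})$, $(\textsc{C:BDist:R})$, $(\textsc{C:BDist:L})$) followed by the appropriate base rule at the leaf. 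By \cref{the:preservation} applied along these batches, $(\cdot) \potconf{q} \calC_n :: \Gm$ for all $n$; and because the flattening relation $\approx$ is syntax-directed ($(\textsc{FL:Proc})$, $(\textsc{FL:Dist})$, $(\textsc{FL:Compose})$), each $\calC_n$ has a unique flattening, which I claim is exactly the law of $\calD_n$, so $\calC_n \approx \calD_n$. This establishes clause~(i).

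The claim just made is the heart of the argument, and I expect it to be the main obstacle. I would isolate it as a \emph{simulation lemma}: if $\calC \approx \mu$, then the batch that fires $\rho$ at every occurrence in $\calC$ reduces $\calC$ to some $\calC'$ with $\calC' \approx \mu K$, where $(\mu K)(\calD') = \sum_{\calD} \mu(\calD)\,K(\calD,\calD')$. The proof goes by induction on the flattening derivation of $\calC$: the case $(\textsc{FL:Proc})$ reduces to a single firing, where $(\textsc{E:Flip})$ is precisely the rule that introduces the $p{:}(1-p)$ split while every other base rule is deterministic; the case $(\textsc{FL:Dist})$ recurses through $(\textsc{E:Dist})$/$(\textsc{C:Dist})$ and matches the mixture $\mu = \sum_i p_i \mu_i$; and the case $(\textsc{FL:Compose})$ recurses through the $(\textsc{C:SDist:*})$ and $(\textsc{C:BDist:*})$ rules and matches the product measure induced by $(\textsc{FL:Compose})$. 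The delicate point is that the scheduler must depend only on the flattened configuration, so that when one non-nested branch $\calD$ occurs at several leaves of $\calC_n$ (arising from different coin outcomes) the \emph{same} redex is fired at all of them; only then is the conditional law of $\calD_{n+1}$ given $\calD_n$ a function of $\calD_n$, i.e. $\{\calD_n\}$ genuinely Markov. The rules $(\textsc{C:BDist:L})$ and $(\textsc{C:BDist:R})$ --- which keep the probabilistic sender intact while decomposing the other side --- are exactly what make this orchestration consistent, and tracking the sort $\kappa$ and the side on which each distribution object communicates is where the bookkeeping is most demanding.

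For clause~(ii) I would prove, by induction on the configuration-typing derivation, that $\Dl \potconf{q'} \calC :: \Gm$ and $\calC \approx \mu$ imply $\bbE_{\calD \sim \mu}[\m{work}(\calD)] \le q'$. In the base case $(\textsc{T:Proc})$ the flattening of $\proc{c}{w,P}$ is the point mass at $\proc{c}{w,P}$, whose work is $w$, and $q' = q'' + w \ge w$ since the process potential $q''$ is nonnegative. In the case $(\textsc{T:Dist})$ we have $q' = \sum_{i} p_i q_i$ and, by $(\textsc{FL:Dist})$, $\mu = \sum_i p_i \mu_i$ with $\calC_i \approx \mu_i$, so linearity of expectation and the induction hypothesis give $\bbE_\mu[\m{work}] = \sum_i p_i\,\bbE_{\mu_i}[\m{work}] \le \sum_i p_i q_i = q'$. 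In the case $(\textsc{T:Compose})$ we have $q' = q_1 + q_2$ and, by $(\textsc{FL:Compose})$, $\mu$ is the product of the two component distributions, so additivity of $\m{work}$ over $\parallel$ together with linearity of expectation yields $\bbE_\mu[\m{work}] \le q_1 + q_2 = q'$. Instantiating with $\calC = \calC_n$, $q' = q$ and $\mu$ the law of $\calD_n$ gives $\bbE[\m{work}(\calD_n)] \le q$.

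Finally, along every step of the nested-multiverse semantics the total work $\sum_i w_i$ is nondecreasing: $(\textsc{E:Work})$ increments it by $r \ge 0$, while $(\textsc{E:Flip})$, $(\textsc{E:Def})$ and all communication rules leave it unchanged (the termination and forwarding rules merely transfer the provider's accumulated work to the client). Hence $\m{work}(\calD_n)$ is pathwise nondecreasing in $n$, so $\bbE[\m{work}(\calD_n)]$ is a nondecreasing sequence of reals bounded above by $q$; its limit $\m{etw}(\calC)$ therefore exists and satisfies $\m{etw}(\calC) \le q$, and the Monotone Convergence Theorem furthermore identifies this limit with $\bbE[\lim_n \m{work}(\calD_n)]$, the expected total work of the limiting execution.
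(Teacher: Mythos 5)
Your proposal is correct and follows the same overall architecture as the paper's proof: fix a scheduler, treat the flattened step relation as a Markov kernel, bound $\bbE_{\calD \sim \mu}[\m{work}(\calD)]$ by the potential via induction on the configuration-typing derivation (your clause-(ii) argument matches the paper's simulation-invariance lemma case for case, including the $q' = q'' + w \ge w$ base case, the convex combination for distribution objects, and additivity for composition), and finish with the Monotone Convergence Theorem. The one genuine structural difference is the direction of your simulation lemma. The paper proves that if $\calC \approx \mu$, $\calC' \approx \mu'$, and $\calC \sstep \calC'$ or $\calC \cstep{d,\kappa} \calC'$, then $\mu \dstep \mu'$ --- every nested step is matched by a distribution step --- and then asserts the existence of the hidden sequence $\{\calC_n\}$ rather tersely. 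You instead start from a scheduler defined on non-nested configurations and show that a batch of nested steps realizes one application of the kernel, $\calC' \approx \mu K$. Your direction is the one actually needed to construct $\{\calC_n\}$ in lockstep with $\{\calD_n\}$, and your observation that the scheduler must be a function of the flattened configuration (so that the same redex fires at every leaf occurrence of a repeated branch, making the conditional law of $\calD_{n+1}$ a function of $\calD_n$ alone) makes explicit a point the paper leaves implicit; the price is that you must verify the batch is always realizable through the \textsc{Dist}/\textsc{SDist}/\textsc{BDist} rules, which is essentially the paper's case analysis read in reverse. Your endgame also differs cosmetically: the paper introduces the termination time $T$ and the stopped process $Q^T_n$ before invoking MCT, whereas you argue directly that $\m{work}(\calD_n)$ is pathwise nondecreasing and bounded in expectation by $q$; both routes yield $\m{etw}(\calC) \le q$.
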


\section{Applications of Probabilistic Session Types}\label{sec:examples}

Probabilistic session types can be employed for a variety of diverse applications.
In this work, we demonstrate examples from 3 categories: \emph{(i)} implementing
and inferring expected complexity of randomized distributed algorithms,
\emph{(ii)} verifying limiting distributions of Markov chains, and
\emph{(iii)} proving correctness of digital contracts.

\subsection{Randomized Distributed Algorithms}
Since session types support concurrent programming, we can implement and
analyze randomized distributed algorithms in our language. In this paper,
we present the randomized dining philosophers, the randomized synchronous
leader election protocols and the dining cryptographers protocol.

\paragraph{\textbf{Randomized Dining Philosophers}}
Dining philosophers~\cite{Dijkstra71ACTA,Hoare78CACM} is a standard
illustration of synchronization and deadlock problems in concurrent
systems. The problem is formulated as five philosophers seated at
a circular table with food in front of them. Forks are placed between
each pair of adjacent philosophers. Each philosophers alternates between
thinking and eating with the constraint that they need both left and
right forks to eat. The fork is a shared resource used by both adjacent
philosophers, and represented using the shared session type
\begin{mathpar}
  \m{sfork} \triangleq \up \pichoice{\mb{available}^{0.4} : \down \m{sfork},
  \mb{unavailable}^{0.6} : \down \m{sfork}}
\end{mathpar}
The $\up$ type operator defines that $\m{sfork}$ is a shared session
type~\cite{Balzer17ICFP}
that can be acquired by either philosopher. Once acquired, it sends a label
describing its status. If the fork is available, it sends the label $\mb{available}$,
otherwise it sends the label $\mb{unavailable}$. We describe this behavior
with a probabilistic choice $\pichoiceop$, and the actual probabilities of
each label depends on the ratio expected amount of time spent thinking and
eating. Here, we arbitrarily decide that the forks are available 40\% of the
time.
Although we have only conducted a formal soundness theorem of probabilistic
session types in the linear fragment~\cite{Das18RAST}, we believe our
results will extend to the shared fragment. Sharing in session types
is largely orthogonal to probabilistic behavior since the two new type
formers we introduce ($\pichoiceop$ and $\pechoiceop$) only exist in the
linear fragment which has no interaction with the shared fragment.

In 1981, \citet{Lehmann81POPL} proved that there is no fully distributed
and symmetric \emph{deterministic} algorithm for the dining philosophers
problem that is \emph{deadlock-free}. They also proposed a randomized
deadlock-free algorithm for the same. The key idea is that each philosopher
tosses a coin to decide whether they acquire the left or the right fork
first. Since the coin tosses for each philosopher are independent random
events, eventually some philosopher would obtain both forks. We implemented
the randomized algorithm in our language using two processes.

\begin{verbatim}
  decl thinking : (l : sfork) (r : sfork) |{*}- (phil : 1)
  decl eating : (l : lfork) (r : lfork) |{*}- (phil : 1)
\end{verbatim}
The type $\m{lfork} = \pichoice{\mb{available}^{0.4} : \down \m{sfork},
\mb{unavailable}^{0.6} : \down \m{sfork}}$ represents an acquired fork.
The thinking process uses two channels as arguments: the fork on the left
and right and flips a coin. If the coin outputs $\heads$, they first acquire
left fork, then the right fork and only if both are available, they transition
to eating calling the same process. If the coin outputs $\tails$, they follow
the same procedure except for acquiring the right fork first. If either
of the forks is unavailable, they recurse back to thinking again.

Our language automatically infers the expected amount of time it takes
for a philosopher to start eating. The $\{*\}$ on the turnstile for the
$\m{thinking}$ and $\m{eating}$ processes signals the inference engine
to compute the expected potential required to typecheck the process.
In this example, we use the \textbf{flip} cost model that counts the
expected number of flips. For the above probabilities, the inference
engine returns the following potentials.
\begin{verbatim}
  decl thinking : (l : sfork) (r : sfork) |{6.25}- (phil : 1)
  decl eating : (l : lfork) (r : lfork) |{0}- (phil : 1)
\end{verbatim}


Figure~\ref{fig:plots-rnd}(a) describes the expected cost of the
$\m{thinking}$ process plotted against the probability of
availability of forks. Thus, if the types of the shared fork
and thinking process are
\begin{tabbing}
  \quad$\m{sfork} \triangleq \up \pichoice{\mb{available}^{p} : \down \m{sfork},
  \mb{unavailable}^{1-p} : \down \m{sfork}}$
\end{tabbing}
\begin{verbatim}
  decl thinking : (l : sfork) (r : sfork) |{q}- (phil : 1)
\end{verbatim}
then Figure~\ref{fig:plots-rnd}(a) plots $p$ and $q$ on the x and
y axes respectively. It is clear from Figure~\ref{fig:plots-rnd}(a)
that the expected cost is inversely proportional to the availability
of forks, which matches the expected behavior of the algorithm.

\paragraph{\textbf{Synchronous Leader Election}}
A leader election protocol operates on a network of processes that
communicate with each other to designate a \emph{unique} process
among them as the leader. Such protocols have diverse applications
since the leader can organize task distribution, monitor process
co-ordination, and gather and broadcast messages.

In 1990, \citet{Itai90IC} proved that if the processes are indistinguishable,
then there exists no \emph{deterministic} protocol to elect a leader.
They also proposed a randomized protocol for leader election in a
\emph{ring network} which proceeds
in rounds. In each round, each process (independently) chooses a
random number in the range $\{1, \ldots, K\}$ for some parameter $K$
as an \emph{id}. The processes then pass their ids around the ring.
If there is a \emph{unique maximum id}, then that process is elected
as the leader. Otherwise, the processes initiate a new round. This
protocol terminates with probability 1 because eventually the
random numbers chosen by the processes will have a unique maximum.


\begin{figure}
  \centering
  \includegraphics[width=0.9\linewidth]{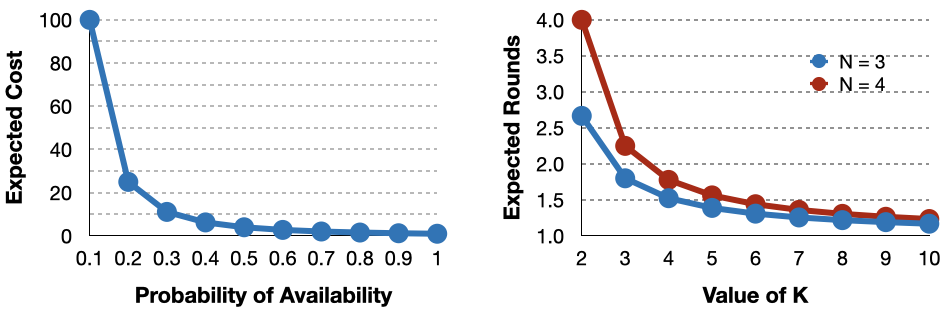}
  \caption{(a) Expected cost of dining philosophers (y-axis) vs probability
  of availability of forks (x-axis), and (b) Expected number of rounds in
  leader election (y-axis) vs K (x-axis)}
  \label{fig:plots-rnd}
  \Description{Plots for Randomized Algorithms}
\end{figure}

We have implemented this protocol in our language which automatically
infers its expected cost. In this protocol, we have used a special
cost model that counts the number of rounds. Since we can only
represent constant probabilities, the protocol has been implemented
with a fixed number of processes and $K$. Figure~\ref{fig:plots-rnd}(b)
plots the expected number of rounds vs $K$. We have implemented this
protocol for 2 ring networks with 3 and 4 processes, respectively
as showed in Figure~\ref{fig:plots-rnd}(b). Note that the probability
of obtaining a unique maximum is directly proportional to the number of
processes and inversely proportional to $K$. Thus, the expected
rounds decreases as $K$ increases for a fixed network. Also, for a
fixed $K$, increasing the number of processes in the ring increases
the expected number of rounds. Both these observations are confirmed
by Figure~\ref{fig:plots-rnd}.

\paragraph{\textbf{Dining Cryptographers}}
This is a standard protocol~\cite{Chaum88JC} that demonstrates transfer of messages
that are unconditionally and cryptographically secure with sender
and recipient untraceability. It is often applied to performing
secure multi-party computation. The problem scenario contains three
cryptographers seated at a circular table for dinner. The waiter
informs them that the dinner is paid for by either one of the
cryptographers or the NSA (National Security Agency). The
cryptographers want to respect their right to make anonymous
payments, but still want to know whether the bill was paid by the
NSA or not.

The protocol operates in two stages. In the first stage, each
cryptographer flips an \emph{unbiased coin} and informs the
cryptographer on the right of the outcome. Each cryptographer
then compares the outcome of their own coin toss with the informed
coin toss. In the second stage, each cryptographer publicly announces a
value which is either \emph{i)} `agree' if the two coin tosses matched,
or \emph{ii)} `disagree' if the two coin tosses do not match. Except
if a cryptographer paid the bill, they publicly announce the complement
value, i.e. `agree' if the coin tosses did not match, and `disagree'
otherwise. Finally, an even number of `agree's indicates that NSA paid
the bill, and an odd number indicates one of the cryptographer paid.
Most importantly, the protocol does not reveal which cryptographer
actually paid the bill (if one of them did).

Crucial to the correctness of the protocol is the use of an \emph{unbiased
coin}. We can ensure this with a probabilistic session type for the
shared coin used in this protocol defined as
\begin{eqnarray*}
  \m{scoin} & \triangleq & \up \pichoice{\heads^{0.5} : \, \down \m{scoin},
  \tails^{0.5} : \, \down \m{scoin}} \\
  \m{outcome} & \triangleq & \pichoice{\mb{agree}^{0.5} : \one,
  \mb{disagree}^{0.5} : \one}
\end{eqnarray*}
The coin is represented using a shared type so that it can be shared by
adjacent cryptographers. The $\up$ denotes a shared type, and the coin
guarantees to return $\heads$ and $\tails$ with equal probability. Once
it sends the toss value, it transitions back to the shared $\m{scoin}$
type using a $\down$ operator. The type $\m{outcome}$ denotes the value
that is publicly announced by the cryptographer, it can be $\mb{agree}$
or $\mb{disagree}$ with equal probability. The cryptographer process
is then declared as
\begin{verbatim}
  decl cryptographer : (left : scoin), (my : scoin) |{*}- (c : outcome)
\end{verbatim}
The type of the $\m{cryptographer}$ process shows that it uses two shared
channels: $\m{left}$ denotes the shared coin of the cryptographer
on the left, while $\m{my}$ denotes the cryptographer's own coin. Finally,
the process offers $c : \m{outcome}$ which means it will send one of the
two outcomes and then terminate.

Our inference engine automatically computes the cost for the $\m{cryptographer}$
process. Since our analysis is parametric in the cost model, the programmer
is free to choose their own. Under the \textbf{flip} cost model, its expected
potential is $1$, while under the \textbf{send} cost model, its expected
potential is $2$. This corroborates our intution, since the process flips
one coin, and sends 2 messages (outcome `agree'/`disagree' and $\m{close}$).

\subsection{Markov Chains}
We have already seen in Section~\ref{sec:overview} that our language
is adept at representing and analyzing Markov chains. One of the most common
problems studied about Markov chains is their \emph{asymptotic behavior}.
This involves studying the fraction of time spent in each state of a Markov
chain. The \emph{limiting distribution} of a Markov chain $M$ is a vector
$\mb{\overline{\pi}}$ where $\pi_i$ denotes the fraction of time spent in
state $i$. A chain is often described using a transition probability matrix
$\mb{P}$ such that $\mb{P}_{ij}$ denotes the probability of transitioning from
state $i$ to state $j$. Then, the limiting distribution satisfies the
following equations
\begin{mathpar}
  \mb{\overline{\pi}} \cdot \mb{P} = \mb{\overline{\pi}} \qquad \text{and}
  \qquad \sum_{i} \pi_i = 1
\end{mathpar}
These equations can be verified by our type system, as we will
demonstrate through several examples. Thus, we can verify whether a certain
distribution $\mb{\overline{\pi}}$ is a limiting distribution for a Markov chain $M$.

\paragraph{\textbf{Google's PageRank algorithm}}
The PageRank algorithm~\cite{Page99TR} was designed at Google to rank web
pages in their search engine results. The algorithm represents each webpage
as a state in a giant Markov chain. And we draw an edge from state $i$ to state
$j$ if there is a link on page $i$ leading to page $j$. Next, we need to assign
probabilities on each edge. For simplicity, if a state has $k$ outgoing links,
each is marked with probability $1/k$. The limiting distribution of such a chain
then denotes the fraction of time a user spends on each page. Hence, the pages
are ranked based on their limiting probabilities in decreasing order.

\begin{figure}
  \centering
  \includegraphics[width=0.5\linewidth]{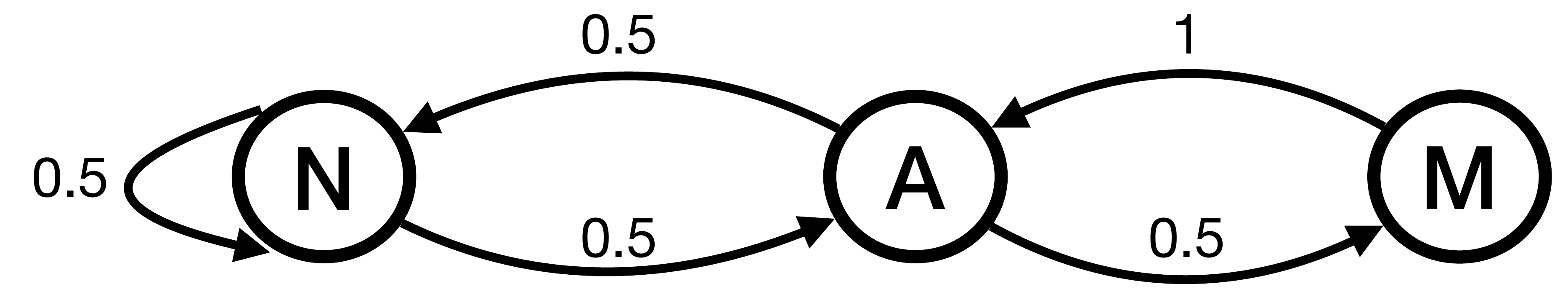}
  \caption{Markov chain with 3 webpages A, M and N}
  \label{fig:pagerank}
  \Description{Chain of PageRank}
\end{figure}

As an illustration, Figure~\ref{fig:pagerank} presents a simple network borrowed
from prior work~\cite{HarcholBalter13Book} containing 3 webpages named A, M and N
with the corresponding edge probabilities. The limiting distribution for this
chain is known to be $\pi_A = 0.4, \pi_M = 0.2, \pi_N = 0.4$.

We describe how to implement and verify this distribution in our language. First,
the limiting distribution is represented with a probabilistic session type named $\m{limit}$.
\begin{mathpar}
  \m{limit} \triangleq \pichoice{A^{0.4} : \one, M^{0.2} : \one, N^{0.4} : \one}
\end{mathpar}
The probability of each label exactly matches the limiting probability of the
corresponding state. The transition probability matrix is represented using a process 
\begin{verbatim}
  decl transition : (in : limit) |- (out : limit)
  proc out <- transition in =
    case in ( A => flip {0.5} ( H => out.N ; wait in ; close out
                              | T => out.M ; wait in ; close out )
            | M => out.A ; wait in ; close out
            | N => flip {0.5} ( H => out.A ; wait in ; close out
                              | T => out.N ; wait in ; close out ) )
\end{verbatim}
This $\m{transition}$ process exactly implements the equation
$\mb{\overline{\pi}} \cdot \mb{P} = \mb{\overline{\pi}}$. We case analyze
on the input channel $\m{in}$, where each branch denotes the state the chain
is currently in. In branch $A$, with prob. $0.5$, we transition to
state $N$ represented by $\esendl{\m{out}}{N}$ and with prob. $0.5$, we transition
to state $M$ represented by $\esendl{\m{out}}{M}$. In branch $M$, since
we always transition to state $A$, we send $\esendl{\m{out}}{A}$. Finally,
state $N$ transitions to $A$ with prob. $0.5$ and back to $N$ with prob. $0.5$,
as represented in the $\heads$ and $\tails$ branches by $\esendl{\m{out}}{A}$
and $\esendl{\m{out}}{N}$ respectively. Finally, we terminate each branch.

The type system automatically computes the probability of outputting each
label by taking the weighted sum of each branch. Because the input and
output types are equal, we conclude that the $\m{limit}$ type
indeed represents the limiting distribution. Furthermore, the validity of
type $\m{limit}$ ensures $\sum_{i} \pi_i = 1$.
We have implemented several standard Markov chains taken
from prior work~\cite{HarcholBalter13Book} such as probabilistic weather
prediction, random walks, natural numbers, machine repair, etc. and
verified their limiting distribution.
Finally, note that our LP solver cannot infer the probabilities for the
type $\m{limit}$. Since $\m{limit}$ is both the input and output type,
the constraints generated by our typing rules are non-linear.
In the future, we plan to use more advanced solvers to solve non-linear
constraints. 

\subsection{Digital Contracts}

Binary session types have also been employed in representing digital
contracts~\cite{Das19Nomos} such as auctions, lotteries, bank accounts, etc.
We can use probabilistic session types to inject probabilistic
behavior into such contracts. We describe here one such example to
model slot machines.

\paragraph{\textbf{Slot Machines}}
The core idea behind slots is that any player can use a \emph{ticket}
to play \emph{once} on a slot. And the player wins with a certain
probability $p$ and loses otherwise. If the player wins, they win
the entire money stored in the machine. Remarkably, we can use
potential to represent money! Thus, a player pays one unit of potential
to play the slots, and all the potential stored inside the slot machine is
paid to the player when they win the slots.
\begin{mathpar}
  \m{slot} \triangleq \up \getpot^1
  \pichoice{\mb{won}^{0.2} : \paypot^* \down \m{slot},
  \mb{lost}^{0.8} : \,\down \m{slot}}
\end{mathpar}

The type $\m{slot}$ represents the interface to a slot machine. To allow
multiple players, the type is shared: the $\up$ represents that the channel
must be acquired to play. Once acquired, the player must deposit the ticket
by paying $1$ unit of potential, as represented with $\getpot^1$. Then,
the type transitions to a probabilistic internal choice, denoting that
the player can win with prob. $0.2$. Thus, the player is \emph{guaranteed
a 20\% chance of winning}. If the player wins, the type sends the $\mb{won}$
label followed by an unknown ($*$) amount of potential. The $*$ indicates that we
would like the type system to infer how much potential the player wins.
On the other hand, if
the player loses, the type only sends the $\mb{lost}$ label. Then, in either
case, the type detaches with the $\down$ operator issuing a release. The
slot machine is represented using the $\m{machine}$ process as
\begin{verbatim}
  decl machine : . |{*}- (sl : slot)
  proc sl <- machine =
    ll <- accept sl ;
    get ll {1} ;
    flip {0.2} ( H => ll.won ;
                      pay ll {*} ;
                      sl <- detach ll ;
                      sl <- machine
               | T => ll.lost ;
                      sl <- detach ll ;
                      sl <- machine )
\end{verbatim}
The process offers channel $sl : \m{slot}$ and does not use any channels.
The process initiates with accepting an acquire request creating a fresh
linear channel $ll$ where messages are exchanged. It then receives 1 unit
of potential and flips with prob. $0.2$. In the $\heads$ branch, it sends
the label $\mb{won}$ followed by sending $*$ units of potential. The process then
detaches from the player consuming the linear channel $ll$ and recovering
the shared channel $sl$ and recurses. In the $\tails$ branch, the process
sends the $\mb{lost}$ label, detaches from the client and recurses.

The inference engine automatically computes the potential to be paid to a
player who wins. It produces the following annotation
\begin{mathpar}
  \m{slot} \triangleq \up \getpot^1
  \pichoice{\mb{won}^{0.2} : \paypot^5 \down \m{slot},
  \mb{lost}^{0.8} : \,\down \m{slot}}
\end{mathpar}
Thus, the player wins $5$ times the price of a ticket. This matches our
expectation as the winning probability is $0.2 = 1/5$. In general, the
expected winnings of a slot machine with winning probability $p$ can
be calculated to be $1/p$. This is confirmed in our language for any
fixed constant $p$. In conclusion, probabilistic session types statically
guarantee the winning probability and also determine the expected winnings
for digital contracts.

\section{Implementation and Evaluation}\label{sec:impl}
We have implemented an open-source prototype for \lang{} in OCaml
(7622 lines of code). The lexer and parser (601 lines of code) for \lang{} are
implemented using Menhir~\cite{Menhir19}, an LR(1) parser generator for
OCaml. A program in \lang{} is a sequence of mutually recursive type
and process definitions.

\paragraph{\textbf{Type Reconstruction}}
We implemented a \emph{bi-directional type checker}~\cite{Pierce00TOPLAS}
for \lang{} (4024 lines of code) specifically focusing on the quality of
error messages. The programmer provides the initial type for each process
in the declaration, and the intermediate types are reconstructed while
type checking the corresponding definition. This aids in localizing the
source of the error as the program location where type reconstruction fails.

An important aspect of the type checking algorithm is type reconstruction
resulting from the \emph{non-determinism}
in the rules for $\m{flip}$ and $\m{pcase}$. To typecheck a probabilistic branch,
we need to \emph{guess} the types of each channel in each branch (see
Figure~\ref{fig:prob-types}). This problem is exacerbated when such branches
are nested. Consider the $\m{unbias}$ process again that involves such a nesting
from Section~\ref{sec:overview}.
\begin{mathpar}
  \footnotesize{
    \infer[]
    {(b : \m{pbool}) \vdash \m{pcase} \; b \; (\m{true} \Rightarrow \eflip{0.5}{\ldots}{\ldots} \mid \m{false} \Rightarrow \eflip{0.5}{\ldots}{\ldots}) :: (c : \m{ubool})}
    {\infer[]
    {
      (b : 1) \vdash \eflip{0.5}{\esendlp{c}{\m{false}} \semi \ldots}{\esendlp{c}{\m{true}} \semi \ldots} :: (c : \mb{A})
    }
    {
      (b : 1) \vdash \esendlp{c}{\m{false}} \semi \ldots :: (c : \mb{A_1}) \and
      (b : 1) \vdash \esendlp{c}{\m{true}} \semi \ldots :: (c : \mb{A_2}) \and
      \mb{A} = 0.5 \cdot \mb{A_1} +^{\m{R}} 0.5 \cdot \mb{A_2}
    }
    }
  }
\end{mathpar}
We show a part of the derivation tree of the $\m{unbias}$ process ($\m{false}$ branch not shown).
Although we know the initial type $c : \m{ubool}$ (conclusion at the bottom),
we need to guess its intermediate type in the $\m{true}$ and $\m{false}$ branches.
Suppose we guess type $\mb{A}$ for $c$ in the $\m{true}$ branch and similarly
type $\mb{B}$ in the $\m{false}$ branch (not shown). We would obtain the constraint
$\m{ubool} = 0.6 \cdot \mb{A} +^{\m{R}} 0.4 \cdot \mb{B}$ taking the prob. of each
branch into account. Then, we again need to typecheck
a $\m{flip}$ expression. So, we again guess types $\mb{A_1}$ and $\mb{A_2}$ in
the $\heads$ and $\tails$ branches (resp. $\mb{B_1}$ and $\mb{B_2}$ for $\mb{B}$) respectively.
Then, we obtain the constraint $\mb{A} = 0.5 \cdot \mb{A_1} +^{\m{R}} 0.5 \cdot \mb{A_2}$
as shown in the derivation. The outcome of the type checking of this process would
depend on the satisfication of these constraints.

Our type checker solves this issue using a 2-phase reconstruction algorithm.
We initialize the program by replacing all $*$ annotations in all the type and process
definitions by variables.
Then, in the first phase of type checking, all prob.\ values are ignored
and the remaining typing constraints are checked.
In this phase, we also use the prob.\ values from the type of each probabilistic
channel to fill in the prob.\ of each branch in a $\m{pcase}$ expression.

In the second phase, we use a \emph{bottom-up} approach to guess the type of each
probabilistic channel in each branch.
For instance, in the derivation above, consider the left premise on the top.
Since we send label $\mb{false}$ on $c$, the type $A_1$ must be
$\pichoice{\mb{true}^0 : \one, \mb{false}^1 : \one}$.
Similarly, from the second premise, we infer that $\mb{A_2}$ must be
$\pichoice{\mb{true}^1 : \one, \mb{false}^0 : \one}$.
And the third premise shows how to compute type $\mb{A}$ from types $\mb{A_1}$
and $\mb{A_2}$.
In general, applying the typing rules generate linear constraints on the
probability variables which are collected by the type checker.
%
These constraints are then shipped to an off-the-shelf LP solver called Coin-Or. The LP solver
either solves these equations, or returns that they are infeasible.
In the former case, we obtain the solution and substitute them back in the
typing derivation. In the latter case, we
report a type checking failure.

\paragraph{\textbf{Potential Inference}}
We have implemented an inference engine to automatically compute the
expected cost of programs in \lang{} (821 lines of code).
Since the exact potential annotations depend on the cost assigned to
each operation and are difficult to predict statically, we found
inference to be extremely useful to make \lang{} practically applicable.

Using ideas from existing techniques for type inference for
AARA~\cite{Jost03,HoffmannW15}, we reduce the reconstruction of 
potential annotations to linear optimization.
To this end, the inference engine again relies on the Coin-Or LP solver. 
The programmer can indicate unknown potential using $*$ in the program code.
Thus, resource-aware session types can be marked with $\paypot^*$ and $\getpot^*$,
and process definitions can be marked with $|\{*\}-$ on the turnstile.

The inference engine first iterates over the program and substitutes
the $*$ annotations with potential variables. Then, the bidirectional typing rules
are applied, approximately checking the program (modulo potential annotations)
while also generating linear constraints for potential (see the rules for
$\m{flip}$, $\pichoiceop L$ and $\pechoiceop R$). Finally, these constraints are
shipped to the LP solver, which minimizes the value of the
potential annotations to achieve tight bounds. The LP solver either returns
that the constraints are infeasible, or returns a satisfying assignment,
which is then substituted into the program. The pretty printer (527 lines of
code) then prints the final program for the programmer to verify the potential
annotations.

\begin{table}[t]
\small
\centering
\begin{tabular}{@{}l r r r r r r r r}
\textbf{Program} & \textbf{LOC} & \textbf{Defs} & \textbf{Procs} & \textbf{T (ms)} &
\textbf{Vars} & \textbf{Cons} &
\textbf{I (ms)} \\
\midrule
3 die & 37 & 3 & 3 & 0.301 & 24 & 72 & 2.953 \\
6 die & 109 & 7 & 7 & 0.339 & 54 & 160 & 4.925 \\
exp. trials & 10 & 1 & 1 & 0.319 & 8 & 24 & 2.830 \\
fair coin & 17 & 1 & 1 & 0.313 & 20 & 54 & 2.964 \\
pagerank & 26 & 1 & 1 & 0.324 & 34 & 105 & 3.054 \\
repair & 22 & 2 & 1 & 0.343 & 39 & 118 & 2.924 \\
rnd walk & 27 & 1 & 2 & 0.043 & 32 & 96 & 3.466 \\
din. phil. & 94 & 1 & 5 & 0.361 & 135 & 412 & 3.612 \\
leader & 161 & 0 & 18 & 0.341 & 126 & 360 & 3.337 \\
din. crypto. & 30 & 1 & 1 & 0.345 & 49 & 151 & 2.871 \\
lossy chan. & 18 & 2 & 1 & 0.329 & 19 & 59 & 3.165 \\
slots & 17 & 1 & 1 & 0.451 & 21 & 58 & 3.351 \\
lottery & 16 & 1 & 1 & 0.342 & 19 & 52 & 2.816 \\
nats & 42 & 6 & 3 & 0.356 & 48 & 137 & 3.134 \\
\midrule
\end{tabular}
\caption{Evaluation of \lang{}.
   LOC = lines of code;
   Defs = \#type definitions;
   Procs = \#process definitions;
   T (ms) = type checking time in ms;
   Vars = \#potential variables generated during type checking; 
   Cons = \#constraints generated during type checking;
   I (ms) = potential inference time in ms.}
\label{tab:eval}
\vspace{-2em}
\end{table}

\paragraph{\textbf{Evaluation}}

All the examples presented so far have been implemented and type checked
in the \lang{} prototype. In addition, we have implemented several
other benchmarks that we briefly describe below.
Table~\ref{tab:eval} contains a compilation of our experiments with
the these programs and the prototype implementation. The experiments
were run on an Intel Core i5 2.7 GHz processor with 16 GB 1867 MHz DDR3
memory. It presents the program name, its lines of code (LOC), the
number of type (Defs) and process definitions (Procs), the type checking time (T (ms)),
number of potential variables introduced (Vars),
number of potential constraints that were generated while
type checking (Cons) and the time the LP solver took to infer their
values (I (ms)).

The program \textbf{3 die} is 3-faced die implementation presented in
Section~\ref{sec:overview} and \textbf{6 die} is the standard 6-faced
die implementation~\cite{Knuth76}.
Similarly, \textbf{fair coin} contains a process that recursively
flips a biased coin to produce a fair coin. The probabilistic
annotations on the process type guarantees that the process uses a
\emph{biased coin} and produces a \emph{fair coin}. The inference
engine computes the expected number of flips needed.
The \textbf{lossy chan.} example implements a channel from distributed
systems that drops messages with a certain probability. We can
automatically infer the expected number of messages a sender needs to
send so that the recipient receives it.
The lottery contract \textbf{lottery} guarantees a certain winning
probability in its session type.  The program \textbf{exp. trials:}
recursively flips a coin until it outputs $\heads$.

The \textbf{repair} program implements a Markov chain representing a
faulty machine.  The limiting distribution verified by \lang{}
estimates the fraction of time the machine spends in repair.
Similarly, \textbf{rnd walk} and \textbf{weather} are Markov chains
describing a random walk along a 2D line and probabilistic weather
patterns, respectively. We can in infer the expected time for the
random walker to reach home and verify the limiting distribution
that estimates the expected fraction of rainy and sunny days.
The program \textbf{nats} implements Probabilistic natural numbers
that send \emph{successor} and \emph{zero} with a fixed
probability. We implemented standard functions like add, double, etc.
to study their expected behavior.
The programs \textbf{din. phil} for randomized dining
  philosophers, \textbf{leader} for synchronous leader election,
  \textbf{din. crypto.} for dining cryptographers, 
  \textbf{slots} for the slot machine contract,
  and \textbf{pagerank} for the PageRank algorithm have been described 
in detail in Section~\ref{sec:examples}.


\section{Related Work}\label{sec:related}
 We classify the related work into five categories.


\paragraph{\textbf{Probabilistic Session Types}}
\citet{Aman19Arxiv} proposed a typing system extending
multiparty session types~\cite{Honda08POPL} with probabilistic internal
choice and non-deterministic external choice. Their session typing
discipline contains probabilistic intervals as opposed to \lang{},
where probabalities are exact.
\citet{InversoMP20} developed a system with probabilistic
binary session types.
Their system does not support non-probabilistic internal/external
choices and channel passing, whereas \lang{} is a conservatory
extension of intuitionistic session types~\cite{Caires10concur}.
A distinguishing feature of \lang{}
from all the aforementioned works is that we automatically infer
expected cost.

\paragraph{\textbf{Probabilistic Process Algebras}}
Probabalities were introduced in process algebras~\cite{Bergstra84IC}
by a probabilistic internal choice operator and extended with parallel
composition~\cite{Andova99ARTPS}. \citet{Herescu00FOSSACS} later
proposed an extension of the asynchronous $\pi$-calculus with a notion
of random choice distinguishing between probabilistic choice internal
to a process and non-deterministic external choice made by an adversarial
scheduler. They further use these techniques to prove probabilistic
correctness of the leader election protocol under any possible scheduler.
Other extensions to model time~\cite{Hansson94Book} and
performance~\cite{Hillston96Book} have also been proposed working on the same
principle of a probabilistic choice operator. In contrast to \lang{}, none
of these works enhance the type system with probabalities.

\paragraph{\textbf{Semantics of Probabilistic Languages}}
Several denotational models combining probabilistic and non-deterministic choices
have been developed~\cite{phd:Jones89,ENTCS:TKP09,LICS:Varacca02,JMSCS:VW06,ENTCS:MOW04,ENTCS:WHR19}, and some of them are focused on probabilistic
concurrency~\cite{CONCUR:Mislove00,phd:Varacca03,Varacca07QAPL}.
In the development of \lang{}, we use an operational semantic model.
Trace-based operational semantics have been used to analyze probabilistic concurrent programs~\cite{Tassarotti19POPL,TOPLAS:HSP83}.
There, the semantics maps a concurrent program to a distribution of execution traces.
Our approach is different from those techniques, in the sense that they investigate each
execution trace separately and connect them in the final phase of the analysis, whereas
our nested-multiverse semantics accounts for correlations among executions in different ``universes.''

\paragraph{\textbf{Reasoning About Probabilistic Programs}}

There exist several works on automatic expected cost analysis of
sequential (imperative) probabilistic
programs~\cite{PLDI:NCH18,PLDI:WFG19,CAV:CFG16,TACAS:KUH19}.  They can
derive symbolic polynomial bounds and can be seen as an automation of
Kozen's weakest pre-expectation
caclulus~\cite{JCSS:Kozen81,ESOP:KKM16}.
An automated type-based variant of this idea has been introduced
recently~\cite{WangKH20} in the context of automatic amortized resource
analysis~\cite{Jost03,HoffmannW15}.
The novelty of this work, is an automatic analysis for a probabilistic
language with concurrency. It builds on previous work on work analysis
for (deterministic) session types~\cite{Das18RAST}.

We are not aware of any other automated rule-based system for
deriving resource bounds for concurrent probabilistic programs.
However, there are multiple systems that can be used for manually
deriving expected cost
bounds\cite{Hansson94FAC,Tassarotti19POPL,Tassarotti018}.
More broadly, model and reasoning about probabilistic programs has been
extensively studied since the 80s~\cite{JCSS:Kozen81,SharirPH84,Kozen85,book:MM05}.
Recent work on modeling and analyzing
probabilistic networks~\cite{smolka2019scalable,ProbNetKAT,bayonet} can be viewed as reasoning systems
for probabilistic message passing systems.
However, the work on networks focuses on finite state systems and global
properties.
In contrast, the contribution of this article is the integration of
(local) probability distributions in session types and the automatic
expected resource analysis for message-passing processes.

\paragraph{\textbf{Probabilistic Model Checking}}
Most closely related to \lang{} are other works on verification
and complexity analysis of randomized algorithms. The probabilistic
model checker PRISM~\cite{Kwiatkowska11CAV} supports analyzing
discrete- and continuous-time Markov chains~\cite{Kwiatkowska07SFM},
Markov decision processes~\cite{Forejt11SFM} and probabilistic
timed automata~\cite{Kwiatkowska07IC}, as well as analyzing randomized
distributed algorithms~\cite{Norman04VOSS}. Instead of using type systems,
it provides a state-based language and a
specification language (that subsumes standard temporal logics)
to specify the model and property to be checked, respectively.
It employs state-of-the-art symbolic data structures such as Binary
Decision Diagrams (BDDs) and Multi-Terminal BDDs, a discrete-even
simulation engine, and analysis techniques such as quantitative
abstraction refinement and symmetry reduction. It has also been applied
for model checking $\pi$-calculus~\cite{Norman07QEST} including
\citet{Chaum88JC}'s dining cryptographers protocol. \citet{Bertrand19CONCUR}
extend threshold automata to model randomized algorithms parameterized
by the number of processes and failures under round-rigid schedules where
no process can initiate round $r+1$ until all processes complete round $r$.
In contrast to these works where the specification needs to be defined
separately, \lang{} unifies implementation and verification using an
enhanced type system. Another distinguishing feature of \lang{} is
that the linear fragment guarantees \emph{deadlock freedom} and a
\emph{confluence} property, i.e., the final configuration would be the
same under any adversarial scheduler.  Finally, types enable a
compositional analysis of different components instead of a whole
program analysis and bounds can be symbolic.


\section{Conclusion}\label{sec:conclusion}
In this article, we presented \lang{}, a probabilistic concurrent language
relying on novel \emph{probabilistic session types}. We also introduced
a novel nested-multiverse semantics to prove session fidelity, correctness
of expected bounds, and probability consistency. We employed them to
implement and verify correctness of Markov chains, infer expected cost of
randomized distributed algorithms and analyze expected behavior of
digital contracts.

One promising future direction is representing symbolic probabilities
on the choice operators. Currently, we only allow constant prob.\
annotations on the labels, but many distributed algorithms rely on choosing a
random number in the interval $\{1,..K\}$ for a variable $K$. Our
type system is also currently limited to producing linear expected
bounds. We plan to extend the type system to handle higher-degree
polynomial bounds. Randomized programs also often have logarithmic
expected bounds, so it would be interesting to infer such bounds as well.


\bibliographystyle{ACM-Reference-Format}
\bibliography{db,refs}

\clearpage
\appendix
\allowdisplaybreaks

\section{Technical Details of the Meta Theory}
\label{appendix:meta-theory}

\subsection{Proof of Preservation}
\label{appendix:preservation}

We start by assuming several standard results for session types (\cref{Prop:Subst,Prop:Perm,Prop:Truncate,Prop:Invariant,Prop:Existence}).

\begin{proposition}\label{Prop:Subst}
  The substitutions below are type-preserving and thus admissible:
  \begin{enumerate}[(i)]
    \item If $\Dl \entailpot{q} P :: (x : A)$, then for any fresh $y$, it holds that $\Dl \entailpot{q} P[y/x] :: (y : A)$.
    \item If $\Dl, (x:A) \entailpot{q} P :: (z : C)$, then for any fresh $y$, it holds that $\Dl, (y:A) \entailpot{q} P[y/x] :: (z : C)$.
  \end{enumerate}
\end{proposition}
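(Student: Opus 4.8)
The plan is to prove (i) and (ii) simultaneously by structural induction on the derivation of the hypothesised typing judgment, with a case analysis on the last rule applied. Throughout, $P[y/x]$ denotes capture-avoiding substitution; since $y$ is fresh it renames no bound channel and causes no capture, so on subject channels it acts exactly as a renaming and elsewhere as the identity. As a standing convention I would silently alpha-rename the channels bound in any subterm (the $y'$ in $\erecvch{x}{y'}\semi P_{y'}$, the spawned channel in $\ecut{x'}{f}{\many{d}}{Q}$, and similar) so that they differ from $y$; this is essentially the only bookkeeping in the argument.

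For the bulk of the rules — the structural rules ${\oplus}R/L$, ${\with}R/L$, ${\tensor}R/L$, ${\lolli}R/L$, ${\one}L$; the resource rules ${\paypot}R/L$, ${\getpot}R/L$, $\m{work}$, $\m{weak}$; and the probabilistic rules $\m{flip}$, ${\pichoiceop}R/L$, ${\pechoiceop}R/L$ — the renamed channel $x$ is either absent from the principal position or appears merely as the subject of a communication, and the potential annotation $q$ (together with the weighted-sum side conditions and branch potentials in the probabilistic rules) is threaded through the premises unchanged. In each such case I would note that $(\cdot)[y/x]$ commutes with the term former, apply the induction hypothesis to each premise that still mentions $x$ — part (i) when $x$ is the offered channel there, part (ii) when it is a context channel — and reassemble the same rule instance; where a premise introduces a fresh bound channel, one alpha-renames it away from $y$ first. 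In $\m{spawn}$ the side conditions only consult the signature $\Sg$ and are therefore untouched.

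The cases needing a moment's care are those in which the renamed channel occupies a distinguished syntactic slot. In ${\one}R$ the conclusion $\cdot \entailpot{q} \eclose{x} :: (x:\one)$ becomes, under (i), $\cdot \entailpot{q} \eclose{y} :: (y:\one)$, again an instance of ${\one}R$ with the premise $q = 0$ intact. In $\m{id}$, with conclusion $y' : A \entailpot{q} (\fwd{x}{y'}) :: (x:A)$, part (i) yields $y':A \entailpot{q} (\fwd{y}{y'}) :: (y:A)$ and part (ii) yields (after relabelling the context channel) $x:A \entailpot{q} (\fwd{x}{y}) :: (x:A)$; both are instances of $\m{id}$, and again $q = 0$ is preserved. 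None of these rules inspects channel names in a way that the renaming could disturb.

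The main obstacle is not conceptual but a matter of discipline: maintaining the alpha-renaming convention so that the capture-avoiding substitution behaves as the intended renaming on subjects and as the identity everywhere else, and systematically selecting part (i) versus part (ii) of the induction hypothesis in the premises of the left rules. Since no rule alters $q$ as a function of channel names, preservation of the potential annotation is immediate in every case, which is why this result is standard and can be assumed.
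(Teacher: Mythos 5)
Your argument is correct, but there is nothing in the paper to compare it against: the paper does not prove this proposition at all. It is listed in \cref{appendix:preservation} among several ``standard results for session types'' that are \emph{assumed} (\cref{Prop:Subst,Prop:Perm,Prop:Truncate,Prop:Invariant,Prop:Existence}), so your structural induction supplies a proof the authors deliberately omit. The induction you sketch is the standard one and goes through: since $y$ is fresh, the substitution is a mere renaming, no typing rule inspects channel identities beyond their positions in the judgment, the weighted-sum relations of \cref{fig:prob-split} are defined on types and extended pointwise to contexts (so they are invariant under a consistent renaming of a channel), and the potential arithmetic in the resource rules depends only on the constants $r$ in the term, not on channel names. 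Your handling of the delicate cases ($\m{id}$, ${\one}R$, $\m{spawn}$, and the bound channels in $\erecvch{x}{y}\semi P_y$ and $\ecut{x}{f}{\overline{y}}{Q}$) is right; the only cosmetic slip is the early claim that $q$ is ``threaded through the premises unchanged'' in the resource rules, where the premise potential is actually $q-r$ or $q+r$ --- but your closing observation that no rule alters $q$ as a function of channel names is the correct statement and repairs this. One further small remark: in this intuitionistic discipline the offered channel of a premise is always the offered channel of the conclusion (and likewise for context channels), so the simultaneous induction on (i) and (ii) is convenient but not strictly forced; each part could be proved by induction on its own.
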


\begin{proposition}\label{Prop:Perm}
  If $\Dl \potconf{q} (\calC_1 \parallel \proc{c}{\calE_c\tuple{d}} \parallel \calC_2 \parallel \proc{d}{\calE_d} \parallel \calC_3) :: \Gm$,
  then $\Dl \potconf{q} (\calC_1 \parallel \proc{c}{\calE_c\tuple{d}} \parallel \proc{d}{\calE_d} \parallel \calC_2 \parallel \calC_3) :: \Gm$.
\end{proposition}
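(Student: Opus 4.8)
The plan is to isolate an \emph{adjacent-exchange} principle for (\textsc{T:Compose}) and then iterate it to drag $\proc{d}{\calE_d}$ leftward past the block $\calC_2$ until it sits immediately after $\proc{c}{\calE_c\tuple{d}}$. First I would establish the following auxiliary fact: if $\calO$ and $\calO'$ are semantic objects such that neither uses a channel provided by the other, then for every $\calC''$,
\[
  \Dl \potconf{q} (\calO \parallel \calO' \parallel \calC'') :: \Gm
  \quad\iff\quad
  \Dl \potconf{q} (\calO' \parallel \calO \parallel \calC'') :: \Gm .
\]
This is proved by inverting (\textsc{T:Compose}) twice on the left-associated judgment $\calO \parallel (\calO' \parallel \calC'')$ and reassembling in the other order. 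The crucial observation is that, since $\calO$ does not use the channel offered by $\calO'$ (and vice versa), the ``linking'' context that (\textsc{T:Compose}) threads between a head object and the rest contains neither offered channel; hence the channels $\calO$ draws from $\calO' \parallel \calC''$ are in fact all provided by $\calC''$, and symmetrically for $\calO'$. With that, the two context splittings recombine to type $\calO \parallel \calC''$ and then $\calO' \parallel (\calO \parallel \calC'')$; the potential annotations are simply additive and the used/offered contexts are reproduced exactly (contexts being taken up to permutation throughout), so the judgment is preserved.

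Next I would extract two consequences of well-typedness of the original configuration. (1) No object occurring in $\calC_2$ uses the channel $d$: since $d$ occurs freely as a used channel in $\calE_c$ (this is what $\calE_c\tuple{d}$ records) and $\proc{d}{\calE_d}$ is the unique provider of $d$, linearity---i.e.\ the way (\textsc{T:Compose}) distributes used channels among the objects lying to the left of $\proc{d}{\calE_d}$---forces $d$ to be consumed exactly once, namely by $\proc{c}{\calE_c\tuple{d}}$; in particular $d$ is internal (not in $\Dl$ or $\Gm$) and unused in $\calC_2$. (2) $\proc{d}{\calE_d}$ uses no channel provided by an object of $\calC_2$: the ordering invariant maintained by (\textsc{T:Compose}) places the providers of all channels used by $\proc{d}{\calE_d}$ to its right, whereas every object of $\calC_2$ lies to its left. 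Together (1) and (2) say that $\proc{d}{\calE_d}$ is independent (in the sense of the auxiliary fact) of every single object of $\calC_2$, and these facts are stable under the exchanges performed below since exchanges do not alter which object provides or uses which channel.

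Finally I would induct on the number of objects in $\calC_2$. If $\calC_2$ is empty there is nothing to prove. Otherwise write $\calC_2 = \calC_2' \parallel \calO_2$ with $\calO_2$ the last object, so the hypothesis reads $\Dl \potconf{q} (\calC_1 \parallel \proc{c}{\calE_c\tuple{d}} \parallel \calC_2' \parallel \calO_2 \parallel \proc{d}{\calE_d} \parallel \calC_3) :: \Gm$. Apply the auxiliary fact to the adjacent pair $\calO_2 \parallel \proc{d}{\calE_d}$ (independent by (1)--(2)), with the surrounding prefix $\calC_1 \parallel \proc{c}{\calE_c\tuple{d}} \parallel \calC_2'$ carried along unchanged (the enclosing (\textsc{T:Compose}) applications touch only that prefix), obtaining a typing of $\calC_1 \parallel \proc{c}{\calE_c\tuple{d}} \parallel \calC_2' \parallel \proc{d}{\calE_d} \parallel \calO_2 \parallel \calC_3$. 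Now invoke the induction hypothesis on this configuration, with the shorter middle block $\calC_2'$ and with $\calO_2 \parallel \calC_3$ in the role of the tail; this moves $\proc{d}{\calE_d}$ past $\calC_2'$ as well, yielding $\Dl \potconf{q} (\calC_1 \parallel \proc{c}{\calE_c\tuple{d}} \parallel \proc{d}{\calE_d} \parallel \calC_2' \parallel \calO_2 \parallel \calC_3) :: \Gm$, which is the desired judgment since $\calC_2' \parallel \calO_2 = \calC_2$.

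I expect the main obstacle to be the context bookkeeping in the adjacent-exchange fact: one must verify that the channels offered by $\calO$ and $\calO'$ do not appear in the linking contexts of the two nested (\textsc{T:Compose}) inferences, and then confirm that after re-parenthesizing the global used context $\Dl$, the offered context $\Gm$, and the potential $q$ all come out identical rather than merely rearranged. This is routine but fiddly; everything else is straightforward induction and reassembly of (\textsc{T:Compose}) derivations.
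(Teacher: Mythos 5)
Your proof is correct. Note that the paper itself gives no argument for this proposition: it is listed among the ``standard results'' that are assumed outright (\cref{Prop:Subst,Prop:Perm,Prop:Truncate,Prop:Invariant,Prop:Existence}), so there is nothing to compare against, and your iterated adjacent-exchange argument is the natural way to discharge it. The two independence facts you isolate are exactly what is needed: fact~(1) follows from linearity of the split contexts in (\textsc{T:Compose}) together with the assumption that $d$ occurs freely (as a used channel) in $\calE_c\tuple{d}$, so no object of $\calC_2$ can also consume $d$; fact~(2) is precisely the ordering invariant that (\textsc{T:Compose}) maintains, namely that the providers of the channels used by $\proc{d}{\calE_d}$ lie to its right while all of $\calC_2$ lies to its left. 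One small simplification you could make explicit: since configurations are right-associated lists under (\textsc{T:Compose}), the adjacent pair being swapped always heads a suffix of the configuration, which is itself a sub-derivation with a fixed interface; hence carrying the prefix ``along unchanged'' needs no separate truncate-and-replace machinery (the paper's \cref{Prop:Truncate,Prop:Invariant}) --- rebuilding the outer (\textsc{T:Compose}) instances verbatim suffices because the swapped suffix retypes with identical used context, offered context, and potential. The remaining bookkeeping in your auxiliary exchange lemma (checking that the offered channel of the inner object lands in the pass-through part of the linking context rather than in the consumed part) goes through exactly as you describe.
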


\begin{proposition}\label{Prop:Truncate}
  If $\Dl \potconf{q} (\calC_1 \parallel \calC_2 \parallel \calC_3) :: \Gm$,
  then there exist $\Dl',q',\Gm'$ such that $\Dl' \potconf{q'} \calC_2 :: \Gm'$.
\end{proposition}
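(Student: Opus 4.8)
The plan is to argue by induction on the typing derivation, exploiting the rigid shape that configuration typing forces. Since neither (\textsc{T:Proc}) nor (\textsc{T:Dist}) can conclude a parallel composition of two or more semantic objects, any derivation of $\Dl \potconf{q} (\calO_1 \parallel \cdots \parallel \calO_n) :: \Gm$ must be a right-nested chain of (\textsc{T:Compose}) applications that peels off $\calO_1$, then $\calO_2$, and so on, bottoming out at the single object $\calO_n$ typed by (\textsc{T:Proc}) or (\textsc{T:Dist}). Hence the derivation contains, for every $i$, a \emph{local} sub-derivation $\Dl_i \potconf{q_i} \calO_i :: (c_i : A_i)$ typing $\calO_i$ in isolation. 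Moreover, every (\textsc{T:Compose}) step only splits the output context of its right premise into a ``used'' part $\Dl'$ and a ``pass-through'' part $\Dl$, carrying the channel types over unchanged; consequently the channel typings across all these local derivations are globally consistent, i.e.\ each offered channel $c_i$ is recorded at type $A_i$ wherever it appears (in particular in the input context of its unique client $\calO_j$, which by the flattening invariant satisfies $j < i$).

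Given this, write $\calC_2 = \calO_a \parallel \cdots \parallel \calO_b$ as the contiguous block of objects it is, and simply discard the local derivations belonging to $\calC_1$ and $\calC_3$. I would then reassemble the local derivations of $\calO_a, \dots, \calO_b$ into a single typing of $\calC_2$ by $b - a$ applications of (\textsc{T:Compose}), proceeding right-to-left: starting from $\Dl_b \potconf{q_b} \calO_b :: (c_b : A_b)$, I prepend $\calO_i$ for $i = b-1, b-2, \dots, a$ by partitioning its input context $\Dl_i$ into the channels that already occur in the output context built so far (these form the premise's $\Dl'$) and the remaining channels (which stay in the external context). Global consistency makes the types agree on the shared channels, so (\textsc{T:Compose}) applies, and the result is a typing $\Dl' \potconf{q'} \calC_2 :: \Gm'$ with $q' = q_a + \cdots + q_b$, where $\Gm'$ collects the channels offered by $\calO_a,\dots,\calO_b$ but not consumed among them and $\Dl'$ collects the channels they consume but do not offer among them. (If $\calC_2$ is empty the statement degenerates into typing the empty configuration, which does not arise in the intended uses of the proposition, where $\calC_2$ is the rewritten fragment of a step.)

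The step I expect to be the main obstacle is the bookkeeping in this reassembly: one must verify that the channels $\calO_i$ consumes that are offered inside the already-assembled suffix $\calO_{i+1}\parallel\cdots\parallel\calO_b$ are exactly those surviving in its output context (true by linearity: such a channel has no second client, so it is not consumed internally), that the channel $c_i$ offered by $\calO_i$ does not clash with a channel already in that output context (impossible by the flattening invariant, since a client of $c_i$ must sit to the right of $\calO_i$), and that the potential adds up as $q_1 + q_2$ at each (\textsc{T:Compose}). None of this is deep, but threading the exact context partitions and type-matching side conditions through is where the care lies; \cref{Prop:Perm} can be used to normalize the object order beforehand if one finds that more convenient, though it is not strictly needed.
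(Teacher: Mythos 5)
The paper does not actually prove \cref{Prop:Truncate}: it is listed among several ``standard results for session types'' that are \emph{assumed} without proof, so there is no in-paper argument to compare yours against. Your inversion-and-reassembly proof is the standard way to discharge it and is essentially correct. Since (\textsc{T:Proc}) and (\textsc{T:Dist}) conclude only single semantic objects with singleton succedents, a derivation for $\calO_1 \parallel \cdots \parallel \calO_n$ is indeed forced into a right-nested chain of (\textsc{T:Compose}) steps whose left premises supply the local typings $\Dl_i \potconf{q_i} \calO_i :: (c_i : A_i)$, and the right-to-left reassembly of the block $\calO_a \parallel \cdots \parallel \calO_b$ goes through because providers are unique (\cref{Prop:Existence}) and each channel has a unique client by linearity, so a channel provided inside $\calC_2$ and consumed by some object of $\calC_2$ to its left really does survive into the output context of the suffix you have already built, with its type unchanged.

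Two minor remarks, neither of which affects soundness. First, your justification of the no-clash condition states the ordering invariant backwards: (\textsc{T:Compose}) places the \emph{provider} of a used channel to the right of its client, so a client of $c_i$ sits to the \emph{left} of $\calO_i$; in any case the non-occurrence of $c_i$ in the suffix's output context follows directly from uniqueness of providers rather than from where clients sit. Second, the empty-$\calC_2$ caveat is moot, since the grammar $\calC \Coloneqq \calO_1 \parallel \cdots \parallel \calO_n$ admits no empty configuration and the proposition is only ever invoked on the nonempty rewritten fragment of a step.
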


\begin{proposition}\label{Prop:Invariant}
  If $\Dl \potconf{q} (\calC_1 \parallel \calC_2 \parallel \calC_3) :: \Gm$ and $\Dl' \potconf{q'} \calC_2 :: \Gm'$, then for any $\calC_2'$ such that $\Dl' \potconf{q'} \calC_2' :: \Gm'$, it holds that $\Dl \potconf{q} (\calC_1 \parallel \calC_2' \parallel \calC_3) :: \Gm$.
\end{proposition}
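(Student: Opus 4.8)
The plan is to prove \cref{Prop:Invariant} by induction on the derivation of $\Dl \potconf{q} (\calC_1 \parallel \calC_2 \parallel \calC_3) :: \Gm$. Configuration typing is built only from (\textsc{T:Proc}), (\textsc{T:Dist}), and (\textsc{T:Compose}), and the first two apply only to one-object configurations, so the whole argument is an analysis of how (\textsc{T:Compose}) peels one semantic object off the left. The governing observation is that a semantic object $\proc{c}{\calE}$ carries its offered channel name $c$ syntactically; hence the split of $\calC_2$'s channels into those it provides ($\dom{\Gm'}$) and those it uses ($\dom{\Dl'}$) is fixed by $\calC_2$ alone and is shared by $\calC_2'$. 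Consequently, no matter where $\calC_2$ sits inside the composed configuration, the surrounding objects interact with it only through its interface $(\Dl', q', \Gm')$, which is exactly what licenses replacing it by any $\calC_2'$ with the same interface.

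First I would dispose of the case in which the leftmost peeled object lies in $\calC_1$, say $\calC_1 = \calO \parallel \calC_1''$. Then (\textsc{T:Compose}) has a premise $\Dl_1, \Theta \potconf{q_1} \calO :: (c:A)$ that mentions nothing of $\calC_2$, together with a premise $\Dl_2 \potconf{q_2} (\calC_1'' \parallel \calC_2 \parallel \calC_3) :: (\Lambda, \Theta)$, to which the induction hypothesis applies unchanged (both $\Dl' \potconf{q'} \calC_2 :: \Gm'$ and $\Dl' \potconf{q'} \calC_2' :: \Gm'$ are unaffected); reattaching the first premise with (\textsc{T:Compose}) rebuilds the goal. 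Iterating reduces to $\calC_1 = (\cdot)$. In that situation I would, by peeling objects off the left until only $\calC_3$ remains and collecting the per-object premises along the way (this is essentially the content of \cref{Prop:Truncate}), exhibit the derivation as a derivation of some $\Dl_3 \potconf{q_3} \calC_3 :: \Gm_3$ plus a derivation of $\calC_2$ against the linking context; the latter has the same channel-name partition as $(\Dl', q', \Gm')$ and is forced to agree with it on the types carried—those are pinned down by the composition-matching against $\calC_3$—and on the potential $q'$. Splicing in the supplied derivation of $\Dl' \potconf{q'} \calC_2' :: \Gm'$ at that slot and recomposing—first with $\calC_3$, then re-running the $\calC_1$ reduction in reverse—yields $\Dl \potconf{q} (\calC_1 \parallel \calC_2' \parallel \calC_3) :: \Gm$. \cref{Prop:Perm} lets me keep adjacent objects in the contiguous layout the rules expect, and \cref{Prop:Subst} covers any freshness renaming.

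The crux, and the step I expect to be delicate, is this "factor through $\calC_2$ and splice" argument: making precise that the rest of the big derivation consults $\calC_2$ only through its interface, and that the given typing $\Dl' \potconf{q'} \calC_2 :: \Gm'$ is exactly the interface expected at that slot. I would organize it by generalizing the claim to a configuration context with one hole, $\calK \Coloneqq [\cdot] \mid \calO \parallel \calK \mid \calK \parallel \calO$ (threaded with the linear linking contexts that (\textsc{T:Compose}) imposes on the hole), and proving: if $\calK[\calC_2]$ is typable and the hole's position forces interface $(\Dl', q', \Gm')$, then $\calK[\calC_2']$ is typable with the same conclusion for every $\calC_2'$ of that interface. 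This goes through by a routine induction on $\calK$, and \cref{Prop:Invariant} is the instance $\calK = (\calC_1 \parallel [\cdot] \parallel \calC_3)$. I do not anticipate trouble at the (\textsc{T:Dist}) nodes: the weighted-sum side conditions there constrain only the offered type, context, and potential of the distribution object itself, and are insensitive to which configuration realizes a prescribed interface in each branch.
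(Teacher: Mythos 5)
First, note that the paper does not actually prove \cref{Prop:Invariant}: it is listed among the results that are \emph{assumed} as standard (\cref{Prop:Subst,Prop:Perm,Prop:Truncate,Prop:Invariant,Prop:Existence}), so there is no in-paper argument to compare against. Your overall strategy --- induction on the use of (\textsc{T:Compose}), reduction to a one-hole configuration context, and splicing a new derivation into the hole --- is the natural and essentially correct way to discharge such a statement, and your treatment of the case where the peeled object lies in $\calC_1$, and of the (\textsc{T:Dist}) nodes, is fine.

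The genuine gap is the step you yourself flag as delicate and then assert rather than prove: that the interface at $\calC_2$'s slot in the \emph{given} derivation ``is forced to agree'' with the \emph{hypothesized} typing $\Dl' \potconf{q'} \calC_2 :: \Gm'$. Configuration typings are not unique in this system: the weighted-sum relations are relations, not functions (e.g.\ $A = p \cdot A +^{\m{R}} (1-p) \cdot A$ for any $p$, and a ${\pichoiceop}R$ provider such as $\proc{c}{0,\esendlp{c}{H} \semi \eclose{c}}$ types against any label set containing $H$ with the remaining probabilities zero), the $\m{weak}$ rule decouples the potential annotation from the process, and \cref{Prop:Truncate} only guarantees that \emph{some} interface $(\Dl'',q'',\Gm'')$ for $\calC_2$ exists inside the big derivation --- not that it coincides with the $(\Dl',q',\Gm')$ you were handed. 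As literally quantified, the proposition therefore needs either (i) an auxiliary agreement/uniqueness argument showing the slot's interface equals the given one (which your ``pinned down by the composition-matching against $\calC_3$'' does not establish for the offered types, nor for $q'$), or (ii) a restatement in which $(\Dl',q',\Gm')$ is by definition the interface extracted from the given derivation of $\Dl \potconf{q} (\calC_1 \parallel \calC_2 \parallel \calC_3) :: \Gm$ --- which is exactly how the paper uses it, always in tandem with \cref{Prop:Truncate}. With reading (ii) your context-splicing induction goes through; without it, the ``splice'' step does not type-check, because $\calC_2'$ is only known to fit the hypothesized interface, not the one the surrounding derivation actually demands.
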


\begin{proposition}\label{Prop:Existence}
  If $\Dl \potconf{q} \calC :: \Gm$, then for all $c \in \dom{\Gm}$, there exists exactly one $\proc{c}{\calE}$ for some $\calE$, in $\calC$.
\end{proposition}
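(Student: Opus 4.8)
The plan is a structural induction on the derivation of $\Dl \potconf{q} \calC :: \Gm$, with a strengthened hypothesis. First, a reading of the statement: ``a $\proc{c}{\calE}$ in $\calC$'' refers to the \emph{top-level} semantic objects of $\calC$ --- if $\calC = \calO_1 \parallel \cdots \parallel \calO_n$, the $\calO_j$ themselves --- and not to objects nested inside a distribution object $\{ \calC_i : p_i \}_{i \in \calI}$ (which, after an \textsc{E:Flip}, may well carry the same provider name $c$). Write $\m{prov}(\calC)$ for the set of channels occurring as the providing channel of such a top-level object. The strengthened claim I would prove is: if $\Dl \potconf{q} \calC :: \Gm$, then (a) the top-level objects of $\calC$ carry pairwise distinct provider channels, and (b) $\dom{\Gm} \subseteq \m{prov}(\calC)$. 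Given (a) and (b), the proposition follows immediately: (b) yields existence of a $\proc{c}{\calE}$ for each $c \in \dom{\Gm}$, and (a) yields its uniqueness.

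The base cases are (\textsc{T:Proc}) and (\textsc{T:Dist}): here $\calC$ is the single object $\proc{c}{w,P}$, resp.\ $\proc{c}{\{ \calC_i : p_i \}_{i \in \calI}}$, so $\m{prov}(\calC) = \{c\}$, and in both rules $\dom{\Gm} = \{c\}$; both (a) and (b) are immediate by inspection.

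In the inductive case (\textsc{T:Compose}) we have $\calC = \calO \parallel \calC_0$ with premises $\Dl_1, \Dl' \potconf{q_1} \calO :: (c : A)$ and $\Dl_2 \potconf{q_2} \calC_0 :: (\Dl, \Dl')$, and conclusion context $\Gm = (\Dl, (c : A))$. By the grammar $\calO$ is a single object, so its derivation is an instance of (\textsc{T:Proc}) or (\textsc{T:Dist}); inversion gives $\calO = \proc{c}{\calE}$, hence $\m{prov}(\calO) = \{c\}$, and the top-level objects of $\calC$ are $\calO$ together with those of $\calC_0$, so $\m{prov}(\calC) = \{c\} \cup \m{prov}(\calC_0)$. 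The induction hypothesis applied to $\calC_0$ gives (a) for $\calC_0$ and, via (b), $\dom{\Dl} \cup \dom{\Dl'} \subseteq \m{prov}(\calC_0)$ (the offered context of $\calC_0$ is $(\Dl, \Dl')$). Part (b) of the conclusion is then easy: $\dom{\Gm} = \dom{\Dl} \cup \{c\} \subseteq \m{prov}(\calC_0) \cup \{c\} = \m{prov}(\calC)$. For part (a) it remains to show $c \notin \m{prov}(\calC_0)$. The channels in $\dom{\Dl} \cup \dom{\Dl'}$ are $\ne c$ because the contexts $(\Dl, (c : A))$ and $(\Dl_1, \Dl', (c : A))$ are well-formed --- the comma denotes disjoint union, and all channels in a sequent are distinct. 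Any remaining channel in $\m{prov}(\calC_0)$ is an ``internal'' channel, created by an earlier \textsc{E:Def} step, hence fresh and again $\ne c$. To make this step self-contained I would carry a third conjunct in the hypothesis, recording that $\m{prov}(\calC)$ equals $\dom{\Gm}$ together with a set of internally-spawned names kept globally fresh, so that the head channel chosen by (\textsc{T:Compose}) provably cannot clash with any channel of the tail.

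The main obstacle is precisely this last bookkeeping: ruling out a name clash between the provider channel of the head object of a composition and a channel provided --- possibly deep inside a nested composition --- by the tail $\calC_0$. Once the invariant is phrased to track internal channels (or, as is standard in this line of work, once one fixes the convention that all channel names in a configuration are distinct), the rest is a routine three-case structural induction, using only the configuration typing rules from the excerpt and no further auxiliary results.
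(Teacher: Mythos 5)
The paper does not actually prove this proposition: it is listed among the results that are \emph{assumed} as standard (alongside \cref{Prop:Subst,Prop:Perm,Prop:Truncate,Prop:Invariant}), so there is no official proof to compare against. Your induction with the strengthened invariant --- pairwise-distinct top-level providers plus $\dom{\Gm} \subseteq \m{prov}(\calC)$ --- is the natural way to discharge it, and your two points of care are exactly the right ones: reading ``$\proc{c}{\calE}$ in $\calC$'' as a \emph{top-level} object is forced (after \textsc{E:Flip} the nested objects inside a distribution deliberately reuse the provider name $c$, so the unrestricted reading would be false), and the only genuine subtlety is ruling out a clash between the head's provider and an internal channel of the tail, which the typing rules alone do not exclude and which you correctly resolve by invoking the global-freshness convention for spawned channels (or, equivalently, by carrying the extra conjunct you describe). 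The proof is correct as proposed.
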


The lemma below extends the rule (\textsc{T:Compose}) to prepend multiple semantic objects to a configuration.

\begin{lemma}\label{Lem:ConfExt}
  If $\Dl_1, \Dl' \potconf{q_1} \calC_1 :: \Gm$ and $\Dl_2 \potconf{q_2} \calC_2 :: (\Dl,\Dl')$,
  then $\Dl_1,\Dl_2 \potconf{q_1+q_2} (\calC_1 \parallel \calC_2) :: (\Dl, \Gm)$.
\end{lemma}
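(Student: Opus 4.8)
The plan is to prove \cref{Lem:ConfExt} by induction on the typing derivation of $\Dl_2 \potconf{q_2} \calC_2 :: (\Dl,\Dl')$, i.e., on the structure of the configuration $\calC_2$. The base case is when $\calC_2$ is a single semantic object $\calO$, so that the derivation ends in an application of (\textsc{T:Proc}) or (\textsc{T:Dist}) — actually, since the context provided is $(\Dl,\Dl')$ which may contain several channels, the relevant base case for the induction is when $\calC_2$ is a single composition node, but it is cleaner to observe that the only rule that can introduce multiple providing channels is (\textsc{T:Compose}). So I would split on the last rule used to derive $\Dl_2 \potconf{q_2} \calC_2 :: (\Dl,\Dl')$: if it is a rule producing a single providing channel, then $\Dl = \cdot$, $\Dl'$ is a single binding, and $(\calC_1 \parallel \calC_2)$ is exactly an instance of (\textsc{T:Compose}) with the roles lined up; if it is (\textsc{T:Compose}), we are in the inductive case.

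In the inductive case, $\calC_2 = (\calO' \parallel \calC_2')$ and the derivation has premises $\Dl_2^a, \Dl_2'' \potconf{q_2^a} \calO' :: (c' : A')$ and $\Dl_2^b \potconf{q_2^b} \calC_2' :: (\Dl_2''', \Dl_2'')$ with $\Dl_2 = \Dl_2^a, \Dl_2^b$, $q_2 = q_2^a + q_2^b$, and $(\Dl,\Dl') = (\Dl_2''', (c' : A'))$. The plan is to apply the induction hypothesis to $\calC_1$ together with $\calC_2'$: that is, use $\Dl_1, \Dl' \potconf{q_1} \calC_1 :: \Gm$ (rewriting $\Dl'$ appropriately so that $\calC_2'$ provides exactly the channels that $\calC_1$ together with the already-absorbed context needs), obtaining $\Dl_1, \Dl_2^b \potconf{q_1 + q_2^b} (\calC_1 \parallel \calC_2') :: (\text{something}, \Gm)$. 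Then one more application of (\textsc{T:Compose}), prepending $\calO'$, yields $\Dl_1, \Dl_2^a, \Dl_2^b \potconf{q_1 + q_2^a + q_2^b} (\calO' \parallel \calC_1 \parallel \calC_2') :: (\Dl_2''', (c':A'), \Gm)$. The remaining work is to check that the context bookkeeping matches: the channels provided by $\calC_2$ that $\calC_1$ uses are precisely $\Dl'$, and the contexts split the way (\textsc{T:Compose}) demands. Reassociating the parallel composition $(\calO' \parallel \calC_1 \parallel \calC_2')$ to $(\calC_1 \parallel \calO' \parallel \calC_2') = (\calC_1 \parallel \calC_2)$ uses \cref{Prop:Perm} (or the fact that $\parallel$ is treated up to the ordering constraint, with $\calO'$ being a provider for channels used in $\calC_2'$, hence legitimately to the right of $\calC_1$'s users).

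I expect the main obstacle to be the context-splitting bookkeeping: in the inductive step I must carefully track which part of $\Dl'$ (the channels that $\calC_1$ imports) is provided by $\calO'$ versus by $\calC_2'$, and ensure that after prepending $\calO'$ the combined configuration still satisfies the right-to-left ordering invariant built into (\textsc{T:Compose}) — namely that the providers of a channel sit to the right of its users. A subtle point is that $\calO'$ may provide a channel $c'$ that is used \emph{inside} $\calC_2'$ (the standard case) or, after absorbing $\calC_1$, used inside $\calC_1$; in either case $c'$ ends up in the offered context $(\Dl_2''', (c':A'), \Gm)$ only if nothing to the left consumes it, which is exactly what the premise shapes guarantee. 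I would also need to invoke \cref{Prop:Perm} to move $\calO'$ past $\calC_1$, which is sound because $\calC_1$ does not use $c'$ (that channel is in the offered interface, not imported). Once these adjacency and splitting conditions are lined up, the induction closes routinely.
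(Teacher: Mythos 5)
There is a genuine gap: you are inducting on the wrong judgment. The rule (\textsc{T:Compose}) builds a configuration typing by peeling a \emph{single object off the left end}, so the derivation of $\Dl_2 \potconf{q_2} \calC_2 :: (\Dl,\Dl')$ tells you nothing about how to type something of the form $(\calC_1 \parallel \cdots)$ when $\calC_1$ is an arbitrary configuration. This already sinks your base case: when $\calC_2$ is a single object, $(\calC_1 \parallel \calC_2)$ is \emph{not} an instance of (\textsc{T:Compose}), because the leftmost component is $\calC_1$, not a single semantic object; typing it requires decomposing $\calC_1$, i.e., the full strength of the lemma. Your inductive case has a second, independent problem. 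Writing $\calC_2 = (\calO' \parallel \calC_2')$ with $\calO'$ providing $c' : A'$, the interesting situation is $c' \in \dom{\Dl'}$, i.e., $\calC_1$ is a client of $c'$. Then (i) the induction hypothesis cannot be applied to the pair $(\calC_1, \calC_2')$, since $\calC_2'$ does not provide all of $\Dl'$; and (ii) the configuration $(\calO' \parallel \calC_1 \parallel \calC_2')$ you propose to build first is not typable at all: (\textsc{T:Compose}) routes the head's offered channel into the \emph{output} interface, so a provider sitting to the left of its client violates the ordering invariant, and there is consequently nothing for \cref{Prop:Perm} to permute. (\cref{Prop:Perm} also only moves a provider leftward toward its client inside an already well-typed configuration; it does not repair an ill-ordered one.)

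The fix is to induct on the derivation of $\Dl_1, \Dl' \potconf{q_1} \calC_1 :: \Gm$ instead, which is what the paper does. In the base case $\calC_1$ is a single object, $\Gm = (c:A)$, and the goal is literally (\textsc{T:Compose}). In the step, $\calC_1 = (\calO \parallel \calC_1')$; split $\Dl'$ according to whether a channel is consumed by $\calO$ or by $\calC_1'$, apply the induction hypothesis to $(\calC_1', \calC_2)$, and then one application of (\textsc{T:Compose}) prepends $\calO$, yielding $(\calO \parallel \calC_1' \parallel \calC_2) = (\calC_1 \parallel \calC_2)$ with no permutation needed and all provider/client orderings preserved.
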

\begin{proof}
  By induction on the derivation of $\Dl_1,\Dl' \potconf{q_1} \calC_1 :: \Gm$.
\end{proof}

Now we are able to prove preservation for single-process operational rules.

\begin{theorem}[Preservation, part I of \cref{the:preservation}]\label{The:Tree:PreservationSingle}
  If $\Dl \potconf{q} \calC :: \Gm$ and $\calC \sstep \calC'$, then $\Dl \potconf{q} \calC' :: \Gm$.
\end{theorem}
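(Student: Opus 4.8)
The plan is to prove Theorem~\ref{The:Tree:PreservationSingle} by induction on the derivation of $\calC \sstep \calC'$, i.e.\ by case analysis on the single-process rule applied, using the typing derivation of $\calC$ to justify the typing of $\calC'$. The single-process rules are (\textsc{E:Def}), (\textsc{E:Work}), (\textsc{E:Flip}), and (\textsc{E:Dist}). In each case, because the semantics is a multiset-rewriting relation that touches only part of a configuration, I would first appeal to \cref{Prop:Perm} to bring the rewritten objects to a contiguous position, then use \cref{Prop:Truncate} to isolate a sub-configuration $\calC_2$ that contains exactly the rewritten part with its own typing $\Dl' \potconf{q'} \calC_2 :: \Gm'$, prove that the rewritten sub-configuration $\calC_2'$ admits the \emph{same} typing $\Dl' \potconf{q'} \calC_2' :: \Gm'$, and finally invoke \cref{Prop:Invariant} to glue this back into the full configuration. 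This reduces each case to a purely local argument about a handful of semantic objects.

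For the local arguments: In the (\textsc{E:Work}) case, the object $\proc{c}{w, \ework{r} \semi P}$ is typed by (\textsc{T:Proc}) from $\Dl \entailpot{q_P} P :: (c:A)$ with a use of rule $\m{work}$ somewhere giving $q_P \ge r$ and $\Dl \entailpot{q_P - r} (\ework{r}\semi P)$; the result $\proc{c}{w+r, P}$ gets potential-plus-work $q_P - r + (w + r) = q_P + w$, matching the original. In the (\textsc{E:Def}) case, I would use well-formedness of the signature (which gives $\overline{y':B} \entailpot{q_f} P_f :: (x':A)$ for the callee) together with \cref{Prop:Subst} for the renaming to fresh channels, and the $\m{spawn}$ rule's potential equation $r = q_f + q$ to split the potential between the continuation object $\proc{c}{w, Q[b/x]}$ and the fresh object $\proc{b}{0, P_f[\cdots]}$; then \cref{Lem:ConfExt} (or just (\textsc{T:Compose})) reassembles them with the original total potential. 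The (\textsc{E:Dist}) case is handled by a straightforward inner induction: the outer object $\proc{c}{\{\calC_i:p_i\}_{i\in\calI}}$ is typed by (\textsc{T:Dist}) from typings of each $\calC_i$; applying the induction hypothesis to the step $\calC_{i_0} \sstep \calC'_{i_0}$ gives a typing of $\calC'_{i_0}$ with the same $\Dl_{i_0}, A_{i_0}, q_{i_0}$, so the weighted-sum side conditions of (\textsc{T:Dist}) are preserved verbatim.

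The genuinely interesting case is (\textsc{E:Flip}): $\proc{c}{w, \eflip{p}{P_H}{P_T}} \sstep \proc{c}{\{\proc{c}{w,P_H}:p, \proc{c}{w,P_T}:1-p\}}$. Here I start from the typing of the LHS, which by (\textsc{T:Proc}) comes from $\Dl \entailpot{q_P} (\eflip{p}{P_H}{P_T}) :: (c:A)$, which by the $\m{flip}$ rule decomposes as $\Dl = p\cdot\Dl_H +^{\m L} (1-p)\cdot\Dl_T$, $A = p\cdot A_H +^{\m R} (1-p)\cdot A_T$, $q_P = p\cdot q_H + (1-p)\cdot q_T$, and $\Dl_H \entailpot{q_H} P_H :: (c:A_H)$, $\Dl_T \entailpot{q_T} P_T :: (c:A_T)$. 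I then build the typing of the RHS by applying (\textsc{T:Proc}) to each branch to get $\Dl_H \potconf{q_H + w} \proc{c}{w,P_H} :: (c:A_H)$ and similarly for $T$, and then (\textsc{T:Dist}) with weights $p, 1-p$: its three side conditions $\sum p_i\cdot\Dl_i = \Dl$, $\sum p_i\cdot A_i = A$, $\sum p_i\cdot q_i = q_P + w$ are exactly the three equations from the $\m{flip}$ rule (the last one after distributing $w$ through the convex combination). So the total potential-plus-work is preserved. The main obstacle I anticipate is bookkeeping around the binary-versus-$n$-ary form of the weighted-sum relations $+^{\m L}, +^{\m R}$ and confirming that the two-point distribution produced by (\textsc{E:Flip}) satisfies the $n$-ary (\textsc{T:Dist}) premises; and more broadly, carefully checking that the local-to-global gluing via \cref{Prop:Perm,Prop:Truncate,Prop:Invariant} is applicable when the rewritten object is itself deeply nested inside distribution objects (the (\textsc{E:Dist}) recursion), but this is exactly what the inner induction handles.
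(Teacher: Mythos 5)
Your proposal follows essentially the same route as the paper's proof: induction on the derivation of $\calC \sstep \calC'$, localization of the rewritten objects via \cref{Prop:Truncate,Prop:Invariant}, inversion on the configuration typing, and reassembly with (\textsc{T:Proc}), (\textsc{T:Dist}), and (\textsc{T:Compose}), with the (\textsc{E:Flip}) case resolved by exactly the weighted-sum arithmetic $p\cdot(q_H+w)+(1-p)\cdot(q_T+w)=q$ that the paper uses. The only blemish is a notational inversion in your (\textsc{E:Work}) sketch (the $\m{work}$ rule gives the continuation $P$ potential $q_P-r$ when $\ework{r}\semi P$ has $q_P$, not the other way around), but your final potential-plus-work accounting is the correct one, so this does not affect the argument.
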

\begin{proof}
  By induction on the derivation of $\calC \sstep \calC'$.
  By \cref{Prop:Truncate,Prop:Invariant}, it suffices to consider the atomic rewriting rules.
  Then we proceed by inversion on $\Dl \potconf{q} \calC :: \Gm$.
  
  \begin{description}[labelindent=\parindent]
  \item[Case:]
  \[\small\Rule{E:Flip}
  {
  }
  { \proc{c}{w,\eflip{p}{P_H}{P_T}} \sstep \proc{c}{\{ \proc{c}{w,P_H} : p, \proc{c}{w,P_T} : 1-p \}} }
  \]
  
  By inversion on the typing judgment, we have
  \[\small
  \inferrule
  { \Dl_H \entailpot{q_H} P_H :: (c : A_H) \\
    \Dl_T \entailpot{q_T} P_T :: (c : A_T) \\
    \Dl = p \cdot \Dl_H +^{\m{L}} (1-p) \cdot \Dl_T \\
    A = p \cdot A_H +^{\m{R}} (1-p) \cdot A_T \\
    q-w = p \cdot q_H + (1-p) \cdot q_T
  }
  { \Dl \potconf{q} \proc{c}{w,\eflip{p}{P_H}{P_T}} :: (c : A) }
  \]
    
  Thus, by \textsc{(T:Proc)}, we have $\Dl_H \potconf{q_H+w} \proc{c}{w,P_H} :: (c : A_H)$, $\Dl_T \potconf{q_T+w} \proc{c}{w,P_T} :: (c : A_T)$.
    
  Then we conclude by \textsc{(T:Dist)}, and the fact that $p \cdot (q_H+w) +(1-p) \cdot (q_T+w) = (p \cdot q_H + (1-p) \cdot q_T) + w = (q-w)+w = q$.
  
  \item[Case:]
  \[\small\Rule{E:Work}
  {
  }
  { \proc{c}{w,\ework{r} \semi P} \sstep \proc{c}{w+r,P}  }
  \]
  
  By inversion on the typing judgment, we have
  \[\small
  \inferrule
  { \Dl \entailpot{q-w-r} P :: (c : A)
  }
  { \Dl \potconf{q} \proc{c}{w,\ework{r} \semi P} :: (c : A) }
  \]
    
  Thus, by \textsc{(T:Proc)}, we have $\Dl \potconf{q} \proc{c}{w+r,P} :: (c : A)$.
  
  \item[Case:]
  \[\small
  \Rule{E:Dist}
  { \calC_{i_0} \sstep \calC_{i_0}'
  }
  { \proc{c}{\{ \calC_i : p_i \}_{i \in \calI}} \sstep \proc{c}{\{ \calC_i : p_i \}_{i \in \calI \setminus \{i_0\}} \dplus \{ \calC'_{i_0} : p_{i_0} \}} }
  \]
  
  By inversion on the typing judgment, we have
  \[\small
  \inferrule
  { \Forall{ i \in \calI} \Dl_i \potconf{q_i} \calC_i :: (c : A_i) \\
    \tsum^{\m{L}}_{i \in \calI} p_i \cdot \Dl_i = \Dl \\
    \tsum^{\m{R}}_{i \in \calI} p_i \cdot A_i = A \\
    \tsum_{i \in \calI} p_i \cdot q_i = q
  }
  { \Dl \potconf{q} \proc{c}{\{ \calC_i : p_i \}_{i \in \calI} } :: (c : A) }
  \]
    
  By induction hypothesis, we have $\Dl_{i_0} \potconf{q_{i_0}} \calC'_{i_0} :: (c : A_{i_0})$.
    
  Then we conclude by \textsc{(T:Dist)}.
  
  \item[Case:]
  \[\small\Rule{E:Def}
  { \many{y' : B} \entailpot{r} f = P_f :: (x' : C) \in \Sg \\
    b\; \text{fresh}
  }
  { \proc{c}{w, \ecut{x}{f}{\many{d}}{Q}} \sstep \proc{c}{w,Q[b/x]} \parallel \proc{b}{0,P_f[b/x',\many{d}/\many{y'}]}   }
  \]

  By inversion on the typing judgment, we have
  \[\small
  \inferrule
  { \many{y' : B} \entailpot{r} f = P_f :: (x' : C) \in \Sg \\
    \Dl', (x:C) \entailpot{q-w-r} Q :: (c : A)
  }
  { \Dl',\many{d:B} \potconf{q} \proc{c}{w,\ecut{x}{f}{\many{d}}{Q}} :: (c : A) }
  \]

  Thus, by \cref{Prop:Subst} and \textsc{(T:Proc}), we have $\Dl',(b:C) \potconf{q-r} \proc{c}{w,Q[b/x]} :: (c : A)$, as well as $\many{d:B} \potconf{r} \proc{b}{0,P_f[b/x',\many{d}/\many{y'}]} :: (b : C)$.
  
  Then we conclude by \textsc{(T:Compose)}.
  \end{description}
\end{proof}

\begin{theorem}[Preservation, part II of \cref{the:preservation}]\label{The:Tree:PreservationComm}
  If $\Dl \potconf{q} \calC :: \Gm$ and $\calC \cstep{d,\kappa} \calC'$, then $\Dl \potconf{q} \calC' :: \Gm$.
\end{theorem}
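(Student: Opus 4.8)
The plan is to prove the theorem by structural induction on the derivation of $\calC \cstep{d,\kappa} \calC'$, in the same style as the proof of \cref{The:Tree:PreservationSingle}. As there, the first step is to use \cref{Prop:Truncate,Prop:Invariant} to reduce to the case in which $\calC$ is exactly the redex of a single rewriting rule, the ambient configuration being irrelevant; for the two-object rules I additionally invoke \cref{Prop:Perm} to place the sender immediately to the right of the receiver, as the operational rules require. After the localization the argument is a case analysis on which rule was applied at the root.

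For the ten ``atomic'' two-object communication rules --- (\textsc{C:$\ichoiceop$}), (\textsc{C:$\echoiceop$}), (\textsc{C:$\tensor$}), (\textsc{C:$\lolli$}), (\textsc{C:$\one$}), (\textsc{C:Id}), (\textsc{C:$\paypot$}), (\textsc{C:$\getpot$}), (\textsc{C:$\pichoiceop$}), (\textsc{C:$\pechoiceop$}) --- I would invert (\textsc{T:Compose}) to split the two objects, invert (\textsc{T:Proc}) on each, invert the relevant right/left process rule from \cref{fig:base-types,fig:res-types,fig:prob-types}, and then reassemble the continuations with (\textsc{T:Proc}) and (\textsc{T:Compose}). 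The only delicate point is the potential/work accounting: in (\textsc{C:$\one$}) and (\textsc{C:Id}) the provider's potential is forced to $0$ by (${\one}R$)/($\m{id}$) while its work counter is absorbed into the client, so the total is preserved; in (\textsc{C:$\paypot$})/(\textsc{C:$\getpot$}) the $r$ units of potential merely move from sender to receiver; and in (\textsc{C:Id}) one also appeals to \cref{Prop:Subst} to re-type the renamed continuation. The probabilistic cases (\textsc{C:$\pichoiceop$})/(\textsc{C:$\pechoiceop$}) are the interesting ones: because the provider type-checks a probabilistic send by (${\pichoiceop}R$)/(${\pechoiceop}L$), the offered type of the channel necessarily has the degenerate distribution $p_k = 1$, $p_j = 0$ for $j \neq k$; hence the weighted sums $\sum^{\m{L}}_{\ell} p_\ell \Dl_\ell$, $\sum^{\m{R}}_{\ell} p_\ell C_\ell$, $\sum_{\ell} p_\ell q_\ell$ appearing in the client's rule (${\pichoiceop}L$)/(${\pechoiceop}R$) collapse to $\Dl_k, C_k, q_k$, so the surviving branch $Q_k$ is already typed with exactly the right context, offered type, and potential. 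This is precisely where the \emph{shallowness} of the weighted-sum relations of \cref{fig:prob-split} --- they never descend into continuation types --- is used.

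For the distribution-propagation rules (\textsc{C:Dist}), (\textsc{C:SDist:R}), (\textsc{C:SDist:L}), (\textsc{C:BDist:D}), (\textsc{C:BDist:R}), and (\textsc{C:BDist:L}) the conclusion $\calC'$ is literally the configuration appearing in the premise, so it suffices to show that the typing of the rule's left-hand side at potential $q$ entails the typing at the same $q$ of the configuration in its premise, after which the induction hypothesis applies to the sub-derivation. I would package this as a ``rebracketing'' lemma: a semantic object composed outside a distribution object may be pushed into each of its branches, and two adjacent distribution objects may be merged into a single one indexed by $\calI \times \calJ$ with probabilities $p_i \cdot p'_j$ --- all without changing the configuration type or the potential. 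The verification rests on bilinearity of the relations $+^{\m{L}}, +^{\m{R}}$ of \cref{fig:prob-split}, on $\sum_i p_i = \sum_j p'_j = 1$, and on the matching linearity of the potential equations $q = \sum_i p_i q_i$. I expect the main obstacle to be this rebracketing lemma for (\textsc{C:BDist:R}) and (\textsc{C:BDist:L}), where one distribution object is kept intact and pushed \emph{as a whole} inside the branches of the other: checking that the nested (\textsc{T:Dist}) derivation can be reorganized without disturbing the weighted-sum witnesses on the offered type and context --- while the shallow sums on the $\pichoiceop$/$\pechoiceop$ channel still line up --- is the delicate combinatorial step; the work-counter bookkeeping is comparatively routine once the channel-type accounting is settled.
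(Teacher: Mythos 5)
Your overall skeleton (localize with \cref{Prop:Perm,Prop:Truncate,Prop:Invariant}, case-analyze the atomic rules, handle the atomic communication cases by inversion and recomposition, and observe that $p_k=1$ collapses the weighted sums in the client's rule for (\textsc{C:$\pichoiceop$})/(\textsc{C:$\pechoiceop$})) matches the paper. But there is a genuine gap in how you treat the distribution-propagation rules, and it sits exactly at the heart of the theorem. Your proposed rebracketing lemma --- ``a semantic object composed outside a distribution object may be pushed into each of its branches \ldots\ \emph{without changing the configuration type or the potential}'' --- is false in the cases that matter, namely (\textsc{C:SDist:R}) and (\textsc{C:SDist:L}) when the channel $d$ carries a probabilistic choice type. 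There, inversion of (\textsc{T:Dist}) gives $B = \tsum^{\m{R}}_{i} p_i \cdot B_i$ with $B = \pichoice{\ell^{p'_\ell}:B_\ell}$ and $B_i = \pichoice{\ell^{p_{\ell,i}}:B_\ell}$ where $p'_\ell = \tsum_i p_i \cdot p_{\ell,i}$; the lone object $\proc{c}{w,Q}$ is typed against the \emph{averaged} type $B$, but inside branch $i$ the channel has type $B_i \neq B$, so its original typing cannot be reused. The paper's proof instead inverts the $({\pichoiceop}L)$ (resp.\ $({\pechoiceop}R)$) derivation of $Q$ down to its per-label premises and \emph{re-weights} them with the branch-specific probabilities $p_{\ell,i}$, producing for each $i$ a fresh typing with context $\Dl'_{1,i} = \tsum^{\m{L}}_{\ell} p_{\ell,i}\cdot\Dl_{1,\ell}$, offered type $A'_i = \tsum^{\m{R}}_{\ell} p_{\ell,i}\cdot A_\ell$, and potential $q'_{1,i} = \tsum_{\ell} p_{\ell,i}\cdot q_{1,\ell} + w$, and only then checks that these recombine (via $p'_\ell = \tsum_i p_i\, p_{\ell,i}$) to the original $\Dl_1$, $A$, $q_1$. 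Without this re-derivation step the induction does not go through; it is the one place where a process's typing genuinely changes across universes, and it is why the semantics must keep the receiving side decomposed per branch.

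Relatedly, you locate the main obstacle in (\textsc{C:BDist:R})/(\textsc{C:BDist:L}), but in the paper these are the \emph{easy} cases: because the sort is $\pechoiceop$ (resp.\ $\pichoiceop$), the type $B$ of $d$ is an external (resp.\ internal) probabilistic choice, and the shallowness of the weighted sums forces the sum on the side being decomposed to be degenerate ($B_{2,j}=B$ for all $j$, resp.\ $B_{1,i}=B$ for all $i$), so the intact distribution object composes with each branch unchanged. Similarly (\textsc{C:BDist:D}) only arises at sort $\m{det}$, where both sums are degenerate and your bilinearity computation suffices. So the ``routine'' half and the ``delicate'' half of your plan are swapped: the product-indexed merging is the bookkeeping, and the single-object push-in against a probabilistic channel is where the real argument lives.
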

\begin{proof}
  By induction on the derivation of $C \cstep{d,\kappa} \calC'$.
  By \cref{Prop:Perm,Prop:Truncate,Prop:Invariant,Prop:Existence}, it suffices to consider the atomic rewriting rules.
  Then we proceed by inversion of $\Dl \potconf{q} \calC :: \Gm$.
  
  \begin{description}[labelindent=\parindent]    
    \item[Case:]
    \[\small\Rule{C:SDist:R}
    { \proc{c}{\{ ( \proc{c}{w,Q} \parallel \calC_i) : p_i \}_{i \in \calI}} \cstep{d,\kappa} \calC'
    }
    { \proc{c}{w,Q} \parallel \proc{d}{\{ \calC_i : p_i\}_{i \in \calI}} \cstep{d,\kappa} \calC'  }
    \]
    
    By inversion on the typing judgment, we have
    \[\small
    \inferrule
    { \Dl_1,(d:B) \potconf{q_1} \proc{c}{w,Q} :: (c : A)
      \\
      \inferrule
      { \Forall{i \in \calI} \Dl_{2,i} \potconf{q_{2,i}} \calC_i :: (d : B_i) \\\\
        \Dl_2 = \tsum^{\m{L}}_{i \in \calI} p_i \cdot \Dl_{2,i} \\\\
        B = \tsum^{\m{R}}_{i \in \calI} p_i \cdot B_i \\\\
        q_2 = \tsum_{i \in \calI} p_i \cdot q_{2,i}
      }
      { \Dl_2 \potconf{q_2} \proc{d}{\{ \calC_i : p_i \}_{i \in \calI}} :: (d : B) }
    }
    { \Dl_1,\Dl_2 \potconf{q_1+q_2} (\proc{c}{w,Q} \parallel \proc{d}{\{ \calC_i : p_i \}_{i \in \calI}})  :: (c : A) }
    \]
    
    We proceed by inversion on $B = \tsum^{\m{R}}_{i \in \calI} p_i \cdot B_i$.
    
    \begin{itemize}
      \item If $B \neq \pichoice{\cdots}$, then $B_i = B$ and $\Dl_1, \Dl_{2,i} \potconf{q_1+q_{2,i}} (\proc{c}{w,Q} \parallel \calC_i) :: (c : A)$ for all $i \in \calI$.
      
      Then we derive $\Dl_1,\Dl_2 \potconf{q_1+q_2} \proc{c}{\{ (\proc{c}{w,Q} \parallel \calC_i ) : p_i \}_{i \in \calI}} :: (c : A)$ by \textsc{(T:Dist)}.
      
      \item Otherwise, suppose that $B = \pichoice{\ell^{p_\ell'} : B_\ell}_{\ell \in L}$, and $B_i = \pichoice{\ell^{p_{\ell,i}} : B_\ell}_{\ell \in L}$ for $i \in \calI$, such that for each $\ell \in L$, it holds that $p_\ell' = \tsum_{i \in \calI} p_i \cdot p_{\ell,i}$.
      
      In this case, $Q$ must be a probabilistic case expression $\epcase{d}{\ell}{Q_\ell}_{\ell \in L}$.
      
      By inversion on the typing judgment, we have
      \[\small
      \inferrule
      { \Forall{\ell \in L} \Dl_{1,\ell}, (d:B_\ell) \entailpot{q_{1,\ell}} Q_\ell :: (c : A_\ell) \\
        \tsum^{\m{L}}_{\ell \in L} p_\ell' \cdot \Dl_{1,\ell} = \Dl_1 \\
        \tsum^{\m{R}}_{\ell \in L} p_\ell' \cdot A_\ell = A \\
        \tsum_{\ell \in L} p_\ell' \cdot q_{1,\ell} = q_1-w
      }
      { \Dl_1, (d:\pichoice{\ell^{p_\ell'} : B_\ell}_{\ell \in L}) \potconf{q_1} \proc{c}{w, \epcase{d}{\ell}{Q_\ell}_{\ell \in L}} :: (c : A) }
      \]
      
      Thus, for all $i \in \calI$, define $\Dl'_{1,i} \defeq \tsum^{\m{L}}_{\ell \in L} p_{\ell,i} \cdot \Dl_{1,\ell}$, $A'_i \defeq \tsum^{\m{R}}_{\ell \in L} p_{\ell,i} \cdot A_\ell$, $q_{1,i}' \defeq \tsum_{\ell \in L} p_{\ell,i} \cdot q_{1,\ell} + w$.
      
      By \textsc{(${\pichoiceop}L$)} and \textsc{(T:Proc)}, we have $\Dl'_{1,i}, (d : B_i) \potconf{q'_{1,i}} \proc{c}{w,Q} :: (c : A'_i)$.
      
      Then by \textsc{(T:Compose)}, we have $\Dl'_{1,i},\Dl_{2,i} \potconf{q'_{1,i}+q_{2,i}} ( \proc{c}{w,Q} \parallel \calC_i) :: (c : A'_i)$.
            
      Then we derive $\Dl_1,\Dl_2 \potconf{q_1+q_2} \proc{c}{\{ ( \proc{c}{w,Q} \parallel \calC_i  ):p_i \}_{i \in \calI}} :: (c : A)$ by \textsc{(T:Dist)}, and the fact that
      \begin{small}
      \begin{align*}
        \tsum^{\m{L}}_{i \in \calI} p_i \cdot (\Dl'_{1,i},\Dl_{2,i}) & = \tsum^{\m{L}}_{i \in \calI} p_i \cdot \tsum^{\m{L}}_{\ell \in L} p_{\ell,i} \cdot \Dl_{1,\ell}, \Dl_2 \\
        & = \tsum^{\m{L}}_{\ell \in L} (\tsum_{i \in \calI} p_i \cdot p_{\ell,i}) \cdot \Dl_{1,\ell} , \Dl_2 \\
        & = \tsum^{\m{L}}_{\ell \in L} p'_{\ell} \cdot \Dl_{1,\ell}, \Dl_2 \\
        & = \Dl_1,\Dl_2, \\
        \tsum^{\m{R}}_{i \in \calI} p_i \cdot A'_i & = \tsum^{\m{R}}_{i \in \calI} p_i \cdot \tsum^{\m{R}}_{\ell \in L} p_{\ell,i} \cdot A_\ell \\
        & = \tsum^{\m{R}}_{\ell \in L} (\tsum_{i \in \calI} p_i \cdot p_{\ell,i}) \cdot A_\ell \\
        & = \tsum^{\m{R}}_{\ell \in L} p'_{\ell} \cdot A_\ell \\
        & = A, \\
        \tsum_{i \in \calI} p_i \cdot (q'_{1,i} + q_{2,i}) & = \tsum_{i \in \calI} p_i \cdot (\tsum_{\ell \in L} p_{\ell,i} \cdot q_{1,\ell} + w) + q_2 \\
        & = \tsum_{\ell \in L} (\tsum_{i \in \calI} p_i \cdot p_{\ell,i}) \cdot q_{1,\ell} + w + q_2 \\
        & = \tsum_{\ell \in L} p_\ell' \cdot q_{1,\ell} + w + q_2 \\
        & = q_1-w+w+q_2 \\
        & = q_1 + q_2.
      \end{align*}
      \end{small}
    \end{itemize}
        
    Then, by induction hypothesis, we conclude that $\Dl_1,\Dl_2 \potconf{q_1+q_2} \calC' :: (c : A)$.
  
    \item[Case:]
    \[\small
    \Rule{C:SDist:L}
    { \proc{c}{\{ (\calC_i \parallel \proc{d}{w,P}) : p_i \}_{i \in \calI}} \cstep{d,\kappa} \calC'
    }
    { \proc{c}{\{ \calC_i : p_i \}_{i \in \calI}} \parallel \proc{d}{w,P} \cstep{d,\kappa} \calC' }
    \]
    
    By inversion on the typing judgment, we have
    \[\small
    \inferrule
    { \inferrule
      { \Forall{i \in \calI} \Dl_{1,i}, (d:B_i) \potconf{q_{1,i}} \calC_i :: (c : A_i) \\\\
        \Dl_{1} = \tsum^{\m{L}}_{i \in \calI} p_i \cdot \Dl_{1,i} \\
        B = \tsum^{\m{L}}_{i \in \calI} p_i \cdot B_i \\\\
        A = \tsum^{\m{R}}_{i \in \calI} p_i \cdot A_i \\\\
        q_1 = \tsum_{i \in \calI} p_i \cdot q_{1,i}
      }
      { \Dl_1,(d:B) \potconf{q_1} \proc{c}{\{ \calC_i : p_i \}_{i \in \calI}} :: (c : A) }
      \\
      \Dl_2 \potconf{q_2} \proc{d}{w,P} :: (d : B)
    }
    { \Dl_1,\Dl_2 \potconf{q_1+q_2} \proc{c}{\{\calC_i:p_i\}_{i \in \calI}} \parallel \proc{d}{w,P} :: (c : A) }
    \]
    
    We proceed by inversion on $B = \tsum^{\m{L}}_{i \in \calI} p_i \cdot B_i$.
    
    \begin{itemize}
      \item If $B \neq \pechoice{\cdots}$, then $B_i = B$ and $\Dl_{1,i},\Dl_{2} \potconf{q_{1,i}+q_{2}} (\calC_i \parallel \proc{d}{w,P})  :: (c : A_i)$ for all $i \in \calI$.
      
      Then we derive $\Dl_1,\Dl_2 \potconf{q_1+q_2} \proc{c}{\{ (\calC_i \parallel \proc{d}{w,P}) : p_i \}_{i \in \calI}} :: (c : A)$ by \textsc{(T:Dist)}.
      
      \item Otherwise, suppose that $B = \pechoice{\ell^{p_\ell'} : B_\ell}_{\ell \in L}$, and $B_i = \pechoice{\ell^{p_{\ell,i}} : B_\ell}_{\ell \in L}$ for $i \in \calI$, such that for each $\ell \in L$, it holds that $p_\ell' = \tsum_{i \in \calI} p_i \cdot p_{\ell,i}$.
      
      In this case, $P$ must be a probabilistic case expression $\epcase{d}{\ell}{P_\ell}_{\ell \in L}$, or a forwarding expression $\fwd{d}{e}$.
      
      \begin{itemize}
      \item
      By inversion on the typing judgment, we have
      \[\small
      \inferrule
      { \Forall{\ell \in L} \Dl_{2,\ell} \entailpot{q_{2,\ell}} P_\ell :: (c : B_\ell) \\
        \tsum^{\m{L}}_{\ell \in L} p_\ell' \cdot \Dl_{2,\ell} = \Dl_2 \\
        \tsum_{\ell \in L} p_\ell' \cdot q_{2,\ell} = q_2-w
      }
      { \Dl_2 \potconf{q_2} \proc{d}{w,\epcase{d}{\ell}{P_\ell}_{\ell \in L}} :: (d : \pechoice{\ell^{p'_\ell} : B_\ell}_{\ell \in L} ) }
      \]
      
      Thus, for all $i \in \calI$, define $\Dl'_{2,i} \defeq \tsum^{\m{L}}_{\ell \in L} p_{\ell,i} \cdot \Dl_{2,\ell}$, $q_{2,i}' \defeq \tsum_{\ell \in L} p_{\ell,i} \cdot q_{2,\ell} + w$.
      
      By \textsc{(${\pechoiceop}R$)} and \textsc{(T:Proc)}, we have $\Dl'_{2,i} \potconf{q'_{2,i}} \proc{d}{w,P} :: (d : B_i)$.
      
      Then by \cref{Lem:ConfExt}, we have $\Dl_{1,i},\Dl'_{2,i} \potconf{q_{1,i}+q'_{2,i}}  (\calC_i \parallel \proc{d}{w,P} ) :: (c : A_i)$.
      
      Then we derive $\Dl_1,\Dl_2 \potconf{q_1+q_2} \proc{c}{\{(\calC_i \parallel \proc{d}{w,P}) : p_i \}_{i \in \calI}} :: (c : A)$ by \textsc{(T:Dist)}, and the fact that
      \begin{small}
      \begin{align*}
        \tsum^{\m{L}}_{i \in \calI} p_i \cdot (\Dl_{1,i},\Dl'_{2,i}) & = \Dl_1, \tsum^{\m{L}}_{i \in \calI} p_i \cdot \tsum^{\m{L}}_{\ell \in L} p_{\ell,i} \cdot \Dl_{2,\ell} \\
        & = \Dl_1, \tsum^{\m{L}}_{\ell \in L} (\tsum_{i \in \calI} p_i \cdot p_{\ell,i}) \cdot \Dl_{2,\ell} \\
        & = \Dl_1, \tsum^{\m{L}}_{\ell \in L} p'_{\ell} \cdot \Dl_{2,\ell} \\
        & = \Dl_1,\Dl_2, \\
        \tsum_{i \in \calI} p_i \cdot (q_{1,i} + q'_{2,i}) & = q_1 + \tsum_{i \in \calI} p_i \cdot (\tsum_{\ell \in L} p_{\ell,i} \cdot q_{2,\ell} + w) \\
        & = q_1 + \tsum_{\ell \in L} (\tsum_{i \in \calI} p_i \cdot p_{\ell,i}) \cdot q_{2,\ell} + w \\
        & = q_1 + \tsum_{\ell \in L} p_\ell' \cdot q_{2,\ell} + w \\
        & = q_1 + q_2-w+w \\
        & = q_1 + q_2.
      \end{align*}
      \end{small}
      
      \item
      By inversion on the typing judgment, we have
      \[\small
      \inferrule
      { (e : \pechoice{\ell^{p_\ell'} : B_\ell}_{\ell \in L} ) \entailpot{0} \fwd{d}{e} :: (d : \pechoice{\ell^{p_\ell'} : B_\ell}_{\ell \in L})
      }
      { (e : \pechoice{\ell^{p_\ell'} : B_\ell}_{\ell \in L} ) \potconf{w} \proc{d}{w, \fwd{d}{e}} :: (d : \pechoice{\ell^{p_\ell'} : B_\ell}_{\ell \in L}) }
      \]
      
      Thus, for all $i \in \calI$, define $\Dl_{2,i}' \defeq (e : B_i)$, $q'_{2,i} \defeq w$.
      
      By \textsc{($\m{id}$)} and \textsc{(T:Proc)}, we have $\Dl'_{2,i} \potconf{q'_{2,i}} \proc{d}{w,P} :: (d : B_i)$.
      
      Then by \cref{Lem:ConfExt}, we have $\Dl_{1,i},\Dl'_{2,i} \potconf{q_{1,i}+q'_{2,i}} (\calC_i \parallel \proc{d}{w,P}) :: (c : A_i)$.
      
      Then we derive $\Dl_1,(e:B) \potconf{q_1+w} \proc{c}{\{(\calC_i \parallel \proc{d}{w,P}) : p_i \}_{i \in \calI}}  :: (c : A)$ by \textsc{(T:Dist)}, and the fact that
      \begin{small}
      \begin{align*}
        \tsum^{\m{L}}_{i \in \calI} p_i \cdot (\Dl_{1,i}, \Dl'_{2,i}) & = \Dl_1, (e : \tsum^{\m{L}}_{i \in \calI} p_i \cdot B_i) \\
        & = \Dl_1, (e : B), \\
        \tsum_{i \in \calI} p_i \cdot (q_{1,i} + q_{2,i}') & = q_1 + \tsum_{i \in \calI} p_i \cdot w \\
        & = q_1 + w.
      \end{align*}
      \end{small}
      \end{itemize}
    \end{itemize}
    
    Then, by induction hypothesis, we conclude that $\Dl_1,\Dl_2 \potconf{q_1+q_2} \calC' :: (c : A)$.
    
    \item[Case:]
    \[\small\Rule{C:BDist:D}
    { \proc{c}{\{ (\calC_i \parallel \calC_j') : p_i \cdot p_j' \}_{i \in \calI,j\in\calJ}} \cstep{d,\m{det}} \calC''
    }
    { \proc{c}{\{ \calC_i : p_i\}_{i \in \calI}} \parallel \proc{d}{\{ \calC_j' : p_j'\}_{j \in \calJ}} \cstep{d,\m{det}} \calC''  }
    \]
    
    By inversion on the typing judgement, we have
    \[\small
      \inferrule
      { \inferrule
        { \Forall{i \in \calI} \Dl_{1,i},(d:B_{1,i}) \potconf{q_{1,i}} \calC_i :: (c : A_{i}) \\\\
          \Dl_1 = \tsum^{\m{L}}_{i \in \calI} p_i \cdot \Dl_{1,i} \\
          B = \tsum^{\m{L}}_{i \in \calI} p_i \cdot B_{1,i} \\\\
          A = \tsum^{\m{R}}_{i \in \calI} p_i  \cdot A_{i} \\\\
          q_1 = \tsum_{i \in \calI} p_i \cdot q_{1,i}
        }
        { \Dl_1,(d:B) \potconf{q_1} \proc{c}{\{\calC_i : p_i \}_{i \in \calI}} :: (c : A) }
        \\
        \inferrule
        { \Forall{j \in \calJ} \Dl_{2,j} \potconf{q_{2,j}} \calC_j' :: (d : B_{2,j}) \\\\
          \Dl_2 = \tsum^{\m{L}}_{j \in \calJ} p_j' \cdot \Dl_{2,j} \\\\
          B = \tsum^{\m{R}}_{j \in \calJ} p_j' \cdot B_{2,j} \\\\
          q_2 = \tsum_{j \in \calI} p_j' \cdot q_{2,j}
        }
        { \Dl_2 \potconf{q_2} \proc{d}{\{ \calC_j' : p_j' \}_{j \in \calJ}} :: (d : B) }
      }
      { \Dl_1, \Dl_2 \potconf{q_1+q_2} \proc{c}{ \{\calC_i:p_i\}_{i \in \calI} } \parallel \proc{d}{\{\calC'_j:p_j'\}_{j \in \calJ}} :: (c : A) }
    \]
    
    Because $B \neq \pichoice{\cdots}$ and $B \neq \pechoice{\cdots}$, we have $B_{1,i} = B$ for all $i \in \calI$, $B_{2,j} = B$ for all $j \in \calJ$, by shallow weighted sums.
      
    Thus for all $i \in \calI$, $j \in \calJ$, we have $\Dl_{1,i}, \Dl_{2,j} \potconf{q_{1,i}+q_{2,j}} (\calC_i \parallel \calC_j') :: (c : A_i)$ by \cref{Lem:ConfExt}.
            
    Then we derive $\Dl_1,\Dl_2 \potconf{q_1+q_2} \proc{c}{\{ (\calC_i \parallel \calC'_j) : p_i \cdot p_j' \}_{i \in \calI, j \in \calJ}} :: (c : A)$ by \textsc{(T:Dist)}, and the fact that
    \begin{small}
    \begin{align*}
      \tsum^{\m{L}}_{i \in \calI, j \in \calJ} (p_i \cdot p_j') \cdot (\Dl_{1,i},\Dl_{2,j}) & = \tsum^{\m{L}}_{i \in \calI,j \in \calJ} (p_i \cdot p_j') \cdot \Dl_{1,i},  \tsum^{\m{L}}_{i \in \calI,j \in \calJ} (p_i \cdot p_j') \cdot \Dl_{2,j} \\
      & = \tsum^{\m{L}}_{j \in \calJ} p_j' \cdot \tsum^{\m{L}}_{i \in \calI} p_i \cdot \Dl_{1,i}, 
          \tsum^{\m{L}}_{i \in \calI} p_i \cdot \tsum^{\m{L}}_{j \in \calJ} p_j' \cdot \Dl_{2,j} \\
      & = \tsum^{\m{L}}_{j \in \calJ} p_j' \cdot \Dl_1, \tsum^{\m{L}}_{i \in \calI} p_i \cdot \Dl_2 \\
      & = \Dl_1,\Dl_2, \\
      \tsum^{\m{R}}_{i \in \calI, j \in \calJ} (p_i \cdot p_j') \cdot A_i & = \tsum^{\m{R}}_{j \in \calJ} p_j' \cdot \tsum^{\m{R}}_{i \in \calI} p_i \cdot A_i \\
      & = \tsum^{\m{R}}_{j \in \calJ} p_j' \cdot A \\
      & = A, \\
      \tsum_{i \in \calI, j \in \calJ} (p_i \cdot p_j') \cdot (q_{1,i}+q_{2,j}) & = \tsum_{i \in \calI, j \in \calJ} (p_i \cdot p_j') \cdot q_{1,i } + \tsum_{i \in \calI, j \in \calJ} (p_i \cdot p_j') \cdot q_{2,j} \\
      & = \tsum_{j \in \calJ} p_j' \cdot \tsum_{i \in \calI} p_i \cdot q_{1,i} + \tsum_{i \in \calI} p_i \cdot \tsum_{j \in \calJ} p_j' \cdot q_{2,j} \\
      & = \tsum_{j \in \calJ} p_j' \cdot q_1 + \tsum_{i \in \calI} p_i \cdot q_2 \\
      & = q_1 + q_2.
    \end{align*}
    \end{small}
    
    Then, by induction hypothesis, we conclude that $\Dl_1,\Dl_2 \potconf{q_1+q_2} \calC'' :: (c : A)$.
     
    \item[Case:]
    \[\small
    \Rule{C:BDist:R}
    {  \proc{c}{\{ (\proc{c}{\{ \calC_i : p_i\}_{i \in \calI}} \parallel \calC_j') : p'_j  \}_{j \in \calJ}} \cstep{d,\pechoiceop} \calC''
    }
    { \proc{c}{\{ \calC_i : p_i \}_{i \in \calI}} \parallel \proc{d}{\{ \calC'_j : p'_j \}_{j \in \calJ}} \cstep{d,\pechoiceop} \calC'' }
    \]
    
    By inversion on the typing judgement, we have
    \[\small
      \inferrule
      { { \Dl_1,(d:B) \potconf{q_1} \proc{c}{\{ \calC_i : p_i \}_{i \in \calI}} :: (c : A) }
        \\
        \inferrule
        { \Forall{j \in \calJ} \Dl_{2,j} \potconf{q_{2,j}} \calC_j' :: (d : B_{2,j}) \\\\
          \Dl_2 = \tsum^{\m{L}}_{j \in \calJ} p_j' \cdot \Dl_{2,j} \\\\
          B = \tsum^{\m{R}}_{j \in \calJ} p_j'  \cdot B_{2,j} \\\\
          q_2 = \tsum_{j \in \calJ} p_j' \cdot q_{2,j}
        }
        { \Dl_2 \potconf{q_2} \proc{d}{\{\calC_j' : p_j' \}_{j \in \calJ}} :: (d : B) }
      }
      { \Dl_1, \Dl_2 \potconf{q_1+q_2} \proc{C}{\{\calC_i:p_i\}_{i \in \calI}} \parallel \proc{d}{\{\calC'_j:p_j'\}_{j \in \calJ}} :: (c : A) }
    \]
    
    Because $B = \pechoice{\ell^{p_\ell'}: B_\ell}_{\ell \in L}$,
    we know that $B_{2,j} = B$ for all $j \in \calJ$.
    
    Thus, for each $j \in \calJ$, by \textsc{(T:Compose)}, we have $\Dl_1,\Dl_{2,j} \potconf{q_{1}+q_{2,j}} (\proc{c}{\{ \calC_i : p_i \}_{i \in \calI}} \parallel \calC_j' ) :: (c : A)$.
    
    Then we derive $\Dl_1,\Dl_2 \potconf{q_1+q_2} \proc{c}{\{ (\proc{c}{\{\calC_i : p_i \}_{i \in \calI}} \parallel \calC_j' ) : p_j' \}_{j \in \calJ}} :: (c : A) $ by \textsc{(T:Dist)}.
    
    Thus, by induction hypothesis, we conclude that $\Dl_1,\Dl_2 \potconf{q_1+q_2} \calC'' :: (c : A)$.
    
    \item[Case:]
    \[\small
    \Rule{C:BDist:L}
    { \proc{c}{\{ ( \calC_i \parallel \proc{d}{\{ \calC'_j : p'_j \}_{j \in \calJ}} ) : p_i \}_{i \in \calI}} \cstep{d,\pichoiceop} \calC''
    }
    { \proc{c}{\{ \calC_i : p_i \}_{i \in \calI}} \parallel \proc{d}{\{ \calC'_j : p'_j \}_{j \in \calJ}} \cstep{d,\pichoiceop} \calC''  }
    \]
    
    By inversion on the typing judgement, we have
    \[\small
      \inferrule
      { 
        \inferrule
        { \Forall{i \in \calI} \Dl_{1,i},(d:B_{1,i}) \potconf{q_{1,i}} \calC_i :: (c : A_i) \\\\
          \Dl_1 = \tsum^{\m{L}}_{i \in \calI} p_i \cdot \Dl_{1,i} \\
          B = \tsum^{\m{L}}_{i \in \calI} p_i \cdot B_{1,i} \\\\
          A = \tsum^{\m{R}}_{i \in \calI} p_i \cdot A_i \\\\
          q_1 = \tsum_{i \in \calI} p_i \cdot q_{1,i}
        }
        { \Dl_1,(d:B) \potconf{q_1} \proc{c}{\{ \calC_i : p_i \}_{i \in \calI}} :: (c : A) }
        \\
        { \Dl_2 \potconf{q_2} \proc{d}{\{\calC_j' : p_j' \}_{j \in \calJ}} :: (d : B) }
      }
      { \Dl_1, \Dl_2 \potconf{q_1+q_2} \proc{c}{\{\calC_i:p_i\}_{i \in \calI} } \parallel \proc{d}{\{\calC'_j:p_j'\}_{j \in \calJ}} :: (c : A) }
    \]
    
    Because $B = \pichoice{\ell^{p_\ell'}: B_\ell}_{\ell \in L}$,
    we know that $B_{1,i} = B$ for all $i \in \calI$.
    
    Thus, for each $i \in \calI$, by \cref{Lem:ConfExt}, we have $\Dl_{1,i},\Dl_2 \potconf{q_{1,i}+q_2} (\calC_i \parallel \proc{d}{\{ \calC_j' : p_j' \}_{j \in \calJ}}  ) :: (c : A_i)$.
    
    Then we derive $\Dl_1,\Dl_2 \potconf{q_1+q_2} \proc{c}{\{ ( \calC_i \parallel \proc{d}{\{ \calC_j' : p_j' \}_{j \in \calJ}}  ) : p_i \}_{i \in \calI}} :: (c : A)$ by \textsc{(T:Dist)}.
    
    Thus, by induction hypothesis, we conclude that $\Dl_1,\Dl_2 \potconf{q_1+q_2} \calC'' :: (c : A)$.
    
    \item[Case:]
    \[\small\Rule{C:Dist}
    { \calC_{i_0} \cstep{d,\kappa} \calC_{i_0}'
    }
    { \proc{c}{\{ \calC_i : p_i \}_{i \in \calI}} \cstep{d,\kappa} \proc{c}{\{ \calC_i : p_i \}_{i \in \calI \setminus \{i_0\}} \dplus \{ \calC_{i_0}' : p_{i_0} \}} }
    \]
    
    Appeal to induction hypothesis and \textsc{(T:Dist)}.
    
    \item[Case:]
    \[\small\Rule{C:$\pichoiceop$}
    {
    }
    { \proc{c}{w_c, \epcase{d}{\ell}{Q_\ell}_{\ell \in L} } \parallel \proc{d}{w_d, \esendlp{d}{k} \semi P} \cstep{d,\pichoiceop} \proc{c}{w_c,Q_k} \parallel \proc{d}{w_d,P} }
    \]
    
    By inversion on the typing judgment, we have
    \[\small
    \inferrule
    { \inferrule
      { \Dl_1, (d : B_k) \entailpot{q_1-w_c} Q_k :: (c : A)
      }
      { \Dl_1, (d : \pichoice{\ell^{p_\ell} : B_\ell}_{\ell \in L}) \potconf{q_1} \proc{c}{w_c, \epcase{d}{\ell}{Q_\ell}_{\ell \in L}} :: (c : A) }
      \\
      \inferrule
      { p_k = 1 \\
        p_j = 0\; (j \neq k) \\
        \Dl_2 \entailpot{q_2-w_d} P :: (d : B_k)
      }
      { \Dl_2 \potconf{q_2} \proc{d}{w_d, \esendlp{d}{k} \semi P} :: (d : \pichoice{\ell^{p_\ell} : B_\ell}_{\ell \in L}) }
    }
    { \Dl_1,\Dl_2 \potconf{q_1+q_2} (\proc{c}{w_c, \epcase{d}{\ell}{Q_\ell}_{\ell \in L} } \parallel \proc{d}{w_d, \esendlp{d}{k} \semi P}) :: (c : A) }
    \]
    
    By \textsc{(T:Proc)}, we have $\Dl_1, (d:B_k) \potconf{q_1} \proc{c}{w_c,Q_k} :: (c : A)$ and $\Dl_2 \potconf{q_2} \proc{d}{w_d,P} :: (d : B_k)$.
    
    Then we conclude by \textsc{(T:Compose)}.
    
    \item[Case:]
    \[\small\Rule{C:$\pechoiceop$}
    {
    }
    { \proc{c}{w_c, \esendlp{d}{k} \semi Q} \parallel \proc{d}{w_d, \epcase{d}{\ell}{P_\ell}_{\ell \in L} } \cstep{d,\pechoiceop} \proc{c}{w_c,Q} \parallel \proc{d}{w_d,P_k} }
    \]
    
    By inversion on the typing judgment, we have
    \[\small
    \inferrule
    { \inferrule
      { p_k = 1 \\
        p_j = 0\; (j \neq k) \\
        \Dl_1, (c : B_k) \entailpot{q_1-w_c} Q :: (d : A)
      }
      { \Dl_1, (d : \pechoice{\ell^{p_\ell} : B_\ell}_{\ell \in L}) \potconf{q_1} \proc{c}{w_c, \esendlp{d}{k} \semi Q } :: (c : A) }
      \\
      \inferrule
      { \Dl_2 \entailpot{q_2-w_d} P_k :: (d : B_k)
      }
      { \Dl_2 \potconf{q_2} \proc{d}{w_d, \epcase{d}{\ell}{P_\ell}_{\ell \in L}} :: (d : \pechoice{\ell^{p_\ell} : B_\ell}_{\ell \in L}) }
    }
    { \Dl_1,\Dl_2 \potconf{q_1+q_2} (\proc{c}{w_c, \esendlp{d}{k} \semi Q  } \parallel \proc{d}{w_d, \epcase{d}{\ell}{P_\ell}_{\ell \in L} } ):: (c : A) }
    \]
    
    By \textsc{(T:Proc)}, we have $\Dl_1, (d:B_k) \potconf{q_1} \proc{c}{w_c,Q} :: (c : A)$ and $\Dl_2 \potconf{q_2} \proc{d}{w_d,P_k} :: (d : B_k)$.
    
    Then we conclude by \textsc{(T:Compose)}.
    
    \item[Case:]
    \[\small\Rule{C:$\ichoiceop$}
    {
    }
    { \proc{c}{w_c, \ecase{d}{\ell}{Q_\ell}_{\ell \in L} } \parallel \proc{d}{w_d, \esendl{d}{k} \semi P} \cstep{d,\m{det}} \proc{c}{w_c,Q_k} \parallel \proc{d}{w_d,P} }
    \]
    
    By inversion on the typing judgment, we have
    \[\small
    \inferrule
    { \inferrule
      { \Dl_1, (d : B_k) \entailpot{q_1-w_c} Q_k :: (c : A)
      }
      { \Dl_1, (d : \ichoice{\ell : B_\ell}_{\ell \in L}) \potconf{q_1} \proc{c}{w_c, \ecase{d}{\ell}{Q_\ell}_{\ell \in L}} :: (c : A) }
      \\
      \inferrule
      { \Dl_2 \entailpot{q_2-w_d} P :: (d : B_k)
      }
      { \Dl_2 \potconf{q_2} \proc{d}{w_d, \esendl{d}{k} \semi P} :: (d : \ichoice{\ell : B_\ell}_{\ell \in L}) }
    }
    { \Dl_1,\Dl_2 \potconf{q_1+q_2} (\proc{c}{w_c, \ecase{d}{\ell}{Q_\ell}_{\ell \in L} } \parallel \proc{d}{w_d, \esendl{d}{k} \semi P}) :: (c : A) }
    \]
    
    By \textsc{(T:Proc)}, we have $\Dl_1, (d:B_k) \potconf{q_1} \proc{c}{w_c,Q_k} :: (c : A)$ and $\Dl_2 \potconf{q_2} \proc{d}{w_d,P} :: (d : B_k)$.
    
    Then we conclude by \textsc{(T:Compose)}.
    
    \item[Case:]
    \[\small\Rule{C:$\echoiceop$}
    {
    }
    { \proc{c}{w_c, \esendl{d}{k} \semi Q} \parallel \proc{d}{w_d, \ecase{d}{\ell}{P_\ell}_{\ell \in L} } \cstep{d,\m{det}} \proc{c}{w_c,Q} \parallel \proc{d}{w_d,P_k} }
    \]
    
    By inversion on the typing judgment, we have
    \[\small
    \inferrule
    { \inferrule
      { \Dl_1, (c : B_k) \entailpot{q_1-w_c} Q :: (d : A)
      }
      { \Dl_1, (d : \echoice{\ell : B_\ell}_{\ell \in L}) \potconf{q_1} \proc{c}{w_c, \esendl{d}{k} \semi Q } :: (c : A) }
      \\
      \inferrule
      { \Dl_2 \entailpot{q_2-w_d} P_k :: (d : B_k)
      }
      { \Dl_2 \potconf{q_2} \proc{d}{w_d, \ecase{d}{\ell}{P_\ell}_{\ell \in L}} :: (d : \echoice{\ell : B_\ell}_{\ell \in L}) }
    }
    { \Dl_1,\Dl_2 \potconf{q_1+q_2} (\proc{c}{w_c, \esendl{d}{k} \semi Q  } \parallel \proc{d}{w_d, \ecase{d}{\ell}{P_\ell}_{\ell \in L} } ):: (c : A) }
    \]
    
    By \textsc{(T:Proc)}, we have $\Dl_1, (d:B_k) \potconf{q_1} \proc{c}{w_c,Q} :: (c : A)$ and $\Dl_2 \potconf{q_2} \proc{d}{w_d,P_k} :: (d : B_k)$.
    
    Then we conclude by \textsc{(T:Compose)}.
    
    \item[Case:]
    \[\small
    \Rule{C:$\one$}
    {
    }
    { \proc{c}{w_c, \ewait{d} \semi Q} \parallel \proc{d}{w_d, \eclose{d}} \cstep{d,\m{det}} \proc{c}{w_c+w_d, Q} }
    \]
    
    By inversion on the typing judgement, we have
    \[\small
    \inferrule
    { \inferrule
      { \Dl \entailpot{q_1-w_c} Q :: (c : A)
      }
      { \Dl, (d:\one) \potconf{q_1} \proc{c}{w_c,\ewait{d} \semi Q} :: (c : A) }
      \\
      \inferrule
      { (\cdot) \entailpot{0} \eclose{d} :: (d : \one)
      }
      { (\cdot) \potconf{w_d} \proc{d}{w_d,\eclose{d}} :: (d : \one) }
    }
    { \Dl \potconf{q_1+w_d} (\proc{c}{w_c,\ewait{d} \semi Q} \parallel \proc{d}{w_d,\eclose{d}}) :: (c : A) }
    \]
    
    By \textsc{(T:Proc)}, we have $\Dl \potconf{q_1+w_d} \proc{c,w_c+w_d}{Q} :: (c : A)$.
    
    \item[Case:]
    \[\small
    \Rule{C:Id}
    { \proc{c}{w_c, Q\tuple{d}} \cblocked{(d,\kappa)}
    }
    { \proc{c}{w_c, Q\tuple{d}} \parallel \proc{d}{w_d, \fwd{d}{e}} \cstep{d,\kappa} \proc{c}{w_c+w_d, Q\tuple{d}[e/d]} }
    \]
    
    By inversion on the typing judgment, we have
    \[\small
    \inferrule
    { \inferrule
      { \Dl_1, (d:B) \entailpot{q_1-w_c} Q\tuple{d} :: (c : A)
      }
      { \Dl_1, (d : B) \potconf{q_1} \proc{c}{w_c, Q\tuple{d}} :: (c : A) }
      \\
      \inferrule
      { (e:B) \entailpot{0} \fwd{d}{e} :: (d : B)
      }
      { (e : B) \potconf{w_d} \proc{d}{w_d, \fwd{d}{e}} :: (d : B) }
    }
    { \Dl_1, (e : B) \potconf{q_1+w_d} (\proc{c}{w_c,Q\tuple{d})} \parallel \proc{d}{w_d, \fwd{d}{e}}) :: (c : A) }
    \]
    
    By \cref{Prop:Subst} and \textsc{(T:Proc)}, we have $\Dl_1,(e:B) \potconf{q_1+w_d} \proc{c}{w_c+w_d, Q\tuple{d}[e/d]} :: (c : A)$.
    
    \item[Case:]
    \[\small
    \Rule{C:$\paypot$}
    {
    }
    { \proc{c}{w_c, \eget{d}{r} \semi Q} \parallel \proc{d}{w_d, \epay{d}{r} \semi P} \cstep{d,\m{det}} \proc{c}{w_c, Q} \parallel \proc{d}{w_d,P} }
    \]
    
    By inversion on the typing judgment, we have
    \[\small
    \inferrule
    { \inferrule
      { \Dl_1, (d : B) \entailpot{q_1-w_c+r} Q :: (c : A)
      }
      { \Dl_1, (d : \tpaypot{B}{r}) \potconf{q_1} \proc{c}{w_c, \eget{d}{r} \semi Q} :: (c : A) }
      \\
      \inferrule
      { \Dl_2 \entailpot{q_2-w_d-r} P :: (d : B)
      }
      { \Dl_2 \potconf{q_2} \proc{d}{w_d, \epay{d}{r} \semi P} :: (d : \tpaypot{B}{r}) }
    }
    { \Dl_1,\Dl_2 \potconf{q_1+q_2} (\proc{c}{w_c, \eget{d}{r} \semi Q} \parallel \proc{d}{w_d, \epay{d}{r} \semi P}) :: (c : A)  }
    \]
    
    By \textsc{(T:Proc)}, we have $\Dl_1, (d:B) \potconf{q_1+r} \proc{c}{w_c,Q} :: (c : A)$ and $\Dl_2 \potconf{q_2-r} \proc{d}{w_d,P} :: (d : B)$.
    
    Then we conclude by \textsc{(T:Compose)}.
    
    \item[Case:]
    \[\small
    \Rule{C:$\getpot$}
    {
    }
    { \proc{c}{w_c, \epay{d}{r} \semi Q} \parallel \proc{d}{w_d, \eget{d}{r} \semi P} \cstep{d,\m{det}} \proc{c}{w_c,Q} \parallel \proc{d}{w_d,P} }
    \]
    
    By inversion on the typing judgment, we have
    \[\small
    \inferrule
    { \inferrule
      { \Dl_1, (d : B) \entailpot{q_1-w_c-r} Q :: (c : A)
      }
      { \Dl_1, (d : \tgetpot{B}{r}) \potconf{q_1} \proc{c}{w_c, \epay{d}{r} \semi Q} :: (c : A) }
      \\
      \inferrule
      { \Dl_2 \entailpot{q_2-w_d+r} P :: (d : B)
      }
      { \Dl_2 \potconf{q_2} \proc{d}{w_d, \eget{d}{r} \semi P} :: (d : \tgetpot{B}{r}) }
    }
    { \Dl_1,\Dl_2 \potconf{q_1+q_2} (\proc{c}{w_c, \epay{d}{r} \semi Q} \parallel \proc{d}{w_d, \eget{d}{r} \semi P}) :: (c : A)  }
    \]
    
    By \textsc{(T:Proc)}, we have $\Dl_1, (d : B) \potconf{q_1-r} \proc{c}{w_c,Q} :: (c : A)$ and $\Dl_2 \potconf{q_2+r} \proc{d}{w_d,P} :: (d : B)$.
    
    Then we conclude by \textsc{(T:Compose)}.
  
    \item[Case:]
    \[\small
    \Rule{C:$\tensor$}
    {
    }
    { \proc{c}{w_c, \erecvch{d}{y} \semi Q } \parallel \proc{d}{w_d, \esendch{d}{e} \semi P} \cstep{d,\m{det}} \proc{c}{w_c, Q[e/y]} \parallel \proc{d}{w_d, P} }
    \]
    
    By inversion on the typing judgmenet, we have
    \[\small
    \inferrule
    { \inferrule
      { \Dl_1, (y : B), (d : C) \entailpot{q_1-w_c} Q :: (c : A)
      }
      { \Dl_1, (d : B \tensor C) \potconf{q_1} \proc{c}{w_c, \erecvch{d}{y} \semi Q} :: (c : A) }
      \\
      \inferrule
      { \Dl_2 \entailpot{q_2-w_d} P :: (d : C)
      }
      { \Dl_2, (e : B) \potconf{q_2} \proc{d}{w_d, \esendch{d}{e} \semi P} :: (d : B \tensor C) }
    }
    { \Dl_1, \Dl_2, (e:B) \potconf{q_1+q_2} (\proc{c}{w_c, \erecvch{d}{y} \semi Q } \parallel \proc{d}{w_d, \esendch{d}{e} \semi P}) :: (c : A) }
    \]
    
    By \textsc{(T:Proc)} and \cref{Prop:Subst}, we have $\Dl_1, (e : B), (d : C) \potconf{q_1} \proc{c}{w_c, Q[e/y]} :: (c : A)$ and $\Dl_2 \potconf{q_2} \proc{d}{w_d,P} :: (d : C)$.
    
    Then we conclude by \textsc{(T:Compose)}.
    
    \item[Case:]
    \[\small
    \Rule{C:$\lolli$}
    {
    }
    { \proc{c}{w_c, \esendch{d}{e} \semi Q} \parallel \proc{d}{w_d, \erecvch{d}{y} \semi P} \cstep{d,\m{det}} \proc{c}{w_c, Q} \parallel \proc{d}{w_d, P[e/y]} }
    \]
    
    By inversion on the typing judgmenet, we have
    \[\small
    \inferrule
    { \inferrule
      { \Dl_1, (d : C) \entailpot{q_1-w_c} Q :: (c : A)
      }
      { \Dl_1, (d : B \lolli C), (e : B) \potconf{q_1} \proc{c}{w_c, \esendch{d}{e} \semi Q} :: (c : A) }
      \\
      \inferrule
      { \Dl_2, (y : B) \entailpot{q_2-w_d} P :: (d : C)
      }
      { \Dl_2 \potconf{q_2} \proc{d}{w_d, \erecvch{d}{y} \semi P} :: (d : B \lolli C) }
    }
    { \Dl_1, \Dl_2, (e:B) \potconf{q_1+q_2} (\proc{c}{w_c, \esendch{d}{e} \semi Q } \parallel \proc{d}{w_d, \erecvch{d}{y} \semi P}) :: (c : A) }
    \]
    
    By \textsc{(T:Proc)} and \cref{Prop:Subst}, we have $\Dl_1, (d:C) \potconf{q_1} \proc{c}{w_c,Q} :: (c : A)$ and $\Dl_2,(e:B) \potconf{q_2} \proc{d}{w_d, P[e/y]} :: (d : C)$.
    
    Then we conclude by \textsc{(T:Compose)}.
  \end{description}
\end{proof}

\subsection{Proof of Global Progress}
\label{appendix:global-progress}

The $\mathrm{FV}(\cdot)$ function collects \emph{free} session variables and it is defined as follows:
\begin{small}
\begin{align*}
  \mathrm{FV}^{\m{R}}(\proc{c}{w,P}) & \defeq \{ c \} \\
  \mathrm{FV}^{\m{R}}(\proc{c}{\{ \calC_i : p_i \}_{i \in \calI}}) & \defeq \bigcup_{i \in \calI} \mathrm{FV}^{\m{R}}(\calC_i) \\
  \mathrm{FV}^{\m{R}}(\calO \parallel \calC) & \defeq (\mathrm{FV}^{\m{R}}(\calO) \cup \mathrm{FV}^{\m{R}}(\calC)) \setminus \mathrm{FV}^{\m{L}}(\calO) \\
  \mathrm{FV}^{\m{L}}(\proc{c}{w,P}) & \defeq \mathrm{FV}(P) \setminus \{c \} \\
  \mathrm{FV}^{\m{L}}(\proc{c}{\{ \calC_i : p_i \}_{i \in \calI}}) & \defeq \bigcup_{i \in \calI} \mathrm{FV}^{\m{L}}(\calC_i) \\
  \mathrm{FV}^{\m{L}}(\calO \parallel \calC) & \defeq (\mathrm{FV}^{\m{L}}(\calO) \cup \mathrm{FV}^{\m{L}}(\calC)) \setminus \mathrm{FV}^{\m{R}}(\calC)
\end{align*}
\end{small}

First we prove a basic property of the $\mathrm{FV}(\cdot)$ function.

\begin{proposition}\label{Prop:FreeVariables}
  If $\Dl \potconf{q} \calC :: \Gm$, then $\mathrm{FV}^{\m{L}}(\calC) = \dom{\Dl}$ and $\mathrm{FV}^{\m{R}}(\calC) = \dom{\Gm}$.
\end{proposition}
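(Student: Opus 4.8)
The plan is to prove the statement by structural induction on the derivation of $\Dl \potconf{q} \calC :: \Gm$, with one case per configuration-typing rule: \textsc{(T:Proc)}, \textsc{(T:Dist)}, and \textsc{(T:Compose)}. Two auxiliary observations are needed up front. First, a standard invariant of the term-level judgment: if $\Dl \entailpot{q} P :: (c : A)$ then $\mathrm{FV}(P) = \dom{\Dl} \cup \{ c \}$; the inclusion $\mathrm{FV}(P) \subseteq \dom{\Dl} \cup \{c\}$ is the usual ``free variables are declared'' property, and the reverse inclusion is linearity (every antecedent is consumed), both by a routine induction on the typing derivation of $P$. Second, the weighted-sum relations on contexts preserve domains: whenever $\Dl = \textstyle\sum^{\m{L}}_{i \in \calI} p_i \cdot \Dl_i$ we have $\dom{\Dl} = \dom{\Dl_i}$ for every $i \in \calI$, which is immediate from the rules $(\m{emp})$ and $(\m{chan})$ of \cref{fig:prob-split}. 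Throughout I rely on the standing well-formedness conventions on judgments, namely that all channel names occurring in a judgment---antecedents, the hidden context of a composition, and the succedent---are pairwise distinct.

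The case \textsc{(T:Proc)} is immediate: if $\calC = \proc{c}{w,P}$ with $\Dl \entailpot{q-w} P :: (c : A)$ and $\Gm = (c : A)$, then $\mathrm{FV}^{\m{R}}(\proc{c}{w,P}) = \{ c \} = \dom{\Gm}$ by definition, while $\mathrm{FV}^{\m{L}}(\proc{c}{w,P}) = \mathrm{FV}(P) \setminus \{ c \} = (\dom{\Dl} \cup \{ c \}) \setminus \{ c \} = \dom{\Dl}$ using the first observation and $c \notin \dom{\Dl}$. For \textsc{(T:Dist)}, with $\calC = \proc{c}{\{ \calC_i : p_i \}_{i \in \calI}}$, premises $\Dl_i \potconf{q_i} \calC_i :: (c : A_i)$, side conditions $\Dl = \textstyle\sum^{\m{L}}_{i \in \calI} p_i \cdot \Dl_i$ and $A = \textstyle\sum^{\m{R}}_{i \in \calI} p_i \cdot A_i$, and $\Gm = (c : A)$, the induction hypotheses give $\mathrm{FV}^{\m{R}}(\calC_i) = \{ c \}$ and $\mathrm{FV}^{\m{L}}(\calC_i) = \dom{\Dl_i}$, whence $\mathrm{FV}^{\m{R}}(\calC) = \bigcup_{i \in \calI} \{ c \} = \{ c \} = \dom{\Gm}$ and $\mathrm{FV}^{\m{L}}(\calC) = \bigcup_{i \in \calI} \dom{\Dl_i} = \dom{\Dl}$, the last equality by the second observation.

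The interesting case is \textsc{(T:Compose)}, where $\calC = \calO \parallel \calC_0$ with premises $\Dl_1, \Dl' \potconf{q_1} \calO :: (c : A)$ and $\Dl_2 \potconf{q_2} \calC_0 :: (\Dl_0, \Dl')$, conclusion context $\Dl_1, \Dl_2$, and conclusion succedent $\Dl_0, (c : A)$ (I rename the rule's inner $\Dl$ to $\Dl_0$ to avoid clashing with the proposition's metavariables). The induction hypotheses give $\mathrm{FV}^{\m{L}}(\calO) = \dom{\Dl_1} \cup \dom{\Dl'}$, $\mathrm{FV}^{\m{R}}(\calO) = \{ c \}$, $\mathrm{FV}^{\m{L}}(\calC_0) = \dom{\Dl_2}$, and $\mathrm{FV}^{\m{R}}(\calC_0) = \dom{\Dl_0} \cup \dom{\Dl'}$. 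Substituting these into the definitions of $\mathrm{FV}^{\m{R}}$ and $\mathrm{FV}^{\m{L}}$ on a parallel composition, one checks that the subtraction by $\mathrm{FV}^{\m{L}}(\calO)$ (resp.\ by $\mathrm{FV}^{\m{R}}(\calC_0)$) removes exactly the hidden channels $\dom{\Dl'}$ together with whatever else is shared, leaving $\dom{\Dl_0} \cup \{ c \} = \dom{\Dl_0, (c:A)}$ on the provided side and $\dom{\Dl_1} \cup \dom{\Dl_2} = \dom{\Dl_1, \Dl_2}$ on the used side. This reduces to the claim that $\{c\}$, $\dom{\Dl_0}$, $\dom{\Dl_1}$, $\dom{\Dl_2}$, $\dom{\Dl'}$ are pairwise disjoint, which follows by applying the pairwise-distinctness conventions to the two premise judgments and the conclusion. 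The main---and really the only---obstacle is carrying out this set-algebra bookkeeping cleanly while ensuring that each disjointness fact it invokes is genuinely guaranteed by the standing conventions; once those are pinned down, the computation is mechanical, and the result follows by induction.
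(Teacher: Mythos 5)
Your proof is correct and follows exactly the route the paper takes: the paper's own proof is simply ``by induction on the derivation of $\Dl \potconf{q} \calC :: \Gm$,'' and your three cases (\textsc{T:Proc}, \textsc{T:Dist}, \textsc{T:Compose}) together with the two auxiliary observations (linearity of the process-level judgment, and domain preservation under the weighted-sum relation on contexts) are precisely the details that induction needs. The reliance on pairwise distinctness of channel names in the \textsc{T:Compose} case is the standard well-formedness convention the paper also assumes implicitly, so there is no gap.
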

\begin{proof}
  By induction on the derivation of $\Dl \potconf{q} \calC :: \Gm$.
\end{proof}

We can now prove that a live configuration can make a single-process execution step.

\begin{lemma}\label{Lem:GoodLive}
  If $\calC \live$, then there exists $\calC'$ such that $\calC \sstep \calC'$.
\end{lemma}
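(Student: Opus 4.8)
If $\calC \live$, then there exists $\calC'$ with $\calC \sstep \calC'$.

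The plan is to proceed by rule induction on the derivation of $\calC \live$, using the rules in \cref{fig:live}. There are six cases, one per rule, and in each case I exhibit the required $\calC'$ together with the instance of $\sstep$ that justifies the step.

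First I would handle the three \textbf{axiom} cases. If the derivation ends in \textsc{(L:Flip)}, then $\calC = \proc{c}{w, \eflip{p}{P_H}{P_T}}$, and the rule \textsc{(E:Flip)} applies directly, stepping to $\proc{c}{\{ \proc{c}{w,P_H} : p, \proc{c}{w,P_T} : 1-p\}}$. If the derivation ends in \textsc{(L:Work)}, then $\calC = \proc{c}{w, \ework{r} \semi P}$ and \textsc{(E:Work)} gives $\calC \sstep \proc{c}{w+r, P}$. If the derivation ends in \textsc{(L:Def)}, then $\calC = \proc{c}{w, \ecut{x}{f}{\many{d}}{Q}}$; here I invoke the fact that the process definition $f$ is in the fixed well-formed signature $\Sg$ (so a defining expression $P_f$ exists) and pick a fresh channel $b$, after which \textsc{(E:Def)} supplies the step. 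None of these three cases requires anything beyond unfolding the definitions.

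Next I would handle the three \textbf{inductive} cases. If the derivation ends in \textsc{(L:Dist)}, then $\calC = \proc{c}{\{ \calC_i : p_i \}_{i \in \calI}}$ and the premise gives $\calC_{i_0} \live$ for some $i_0 \in \calI$; by the induction hypothesis there is $\calC'_{i_0}$ with $\calC_{i_0} \sstep \calC'_{i_0}$, and then \textsc{(E:Dist)} lets $\calC$ step to $\proc{c}{\{ \calC_i : p_i \}_{i \in \calI \setminus \{i_0\}} \dplus \{ \calC'_{i_0} : p_{i_0} \}}$. If the derivation ends in \textsc{(L:Compose:H)}, then $\calC = (\calO \parallel \calD)$ with $\calO \live$; since $\calO$ is a single semantic object it is in fact of one of the three forms $\proc{c}{w, \eflip{\cdot}{\cdot}{\cdot}}$, $\proc{c}{w, \ework{r}\semi P}$, $\proc{c}{w, \ecut{x}{f}{\many{d}}{Q}}$, or $\proc{c}{\{\calC_i:p_i\}_{i\in\calI}}$ with $\calC_{i_0}\live$, so by the induction hypothesis $\calO \sstep \calO'$ for some $\calO'$, and reading $\sstep$ as a multiset-rewriting relation (every rule mentions only the rewritten part of the configuration) this step lifts to $(\calO \parallel \calD) \sstep (\calO' \parallel \calD)$. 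The case \textsc{(L:Compose:T)} is symmetric, rewriting inside $\calD$ instead. This completes the induction.

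The only subtlety---and the one point I would be careful to spell out---is the \emph{multiset-rewriting} reading of $\sstep$: all of the single-process rules in \cref{fig:pstl-semantics} that I use (\textsc{E:Flip}, \textsc{E:Work}, \textsc{E:Def}, \textsc{E:Dist}) are context-insensitive, so a step available on a sub-configuration $\calO$ (or a summand $\calC_{i_0}$) is available on any larger configuration obtained by composing $\calO$ in parallel with other objects or by placing $\calC_{i_0}$ inside a distribution object. Once this closure-under-context property is stated, the \textsc{Compose} and \textsc{Dist} cases are immediate from the induction hypothesis, and there is no real obstacle; the lemma is essentially a direct translation of the liveness rules into the corresponding operational rules.
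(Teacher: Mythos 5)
Your proof is correct and follows essentially the same route as the paper's: rule induction on the derivation of $\calC \live$, discharging the three axiom cases directly via \textsc{(E:Flip)}, \textsc{(E:Work)}, \textsc{(E:Def)}, and the three inductive cases via the induction hypothesis together with \textsc{(E:Dist)} and the closure of $\sstep$ under parallel context afforded by the multiset-rewriting reading. The paper merely states the axiom cases are straightforward where you spell them out; no substantive difference.
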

\begin{proof}
  By induction on the derivation of $\calC \live$.
  The analysis for \textsc{(L:Flip)}, \textsc{(L:Work)}, \textsc{(L:Def)} is straightforward.
  
  \begin{description}[labelindent=\parindent]
    \item[Case:]
    \[\small
    \Rule{L:Dist}
    { \calC_{i_0} \live
    }
    { \proc{c}{\{ \calC_i : p_i \}_{i \in \calI}} \live }
    \]
    
    By induction hypothesis, we know that $\calC_{i_0} \sstep \calC_{i_0}'$ for some $\calC_{i_0}'$.
    
    Then we conclude by \textsc{(E:Dist)} that $\calC \sstep \proc{c}{\{ \calC_i : p_i \}_{i \in \calI \setminus \{i_0\}} \dplus \{ \calC_{i_0}' : p_{i_0} \}}$.
    
    \item[Case:]
    \[\small
    \Rule{L:Compose:H}
    { \calO \live
    }
    { (\calO \parallel \calC') \live } 
    \]
    
    By induction hypothesis, we know that $\calO \sstep \calD$ for some $\calD$.
    
    Then $\calC \sstep \calD \parallel \calC'$ by multiset rewriting.
    
    \item[Case:]
    \[\small
    \Rule{L:Compose:T}
    { \calC' \live
    }
    { (\calO \parallel \calC') \live }
    \]
    
    By induction hypothesis, we know that $\calC' \sstep \calC''$ for some $\calC''$.
    
    Then $\calC \sstep \calO \parallel \calC''$ by multiset rewriting.
  \end{description}
\end{proof}

We prove two propositions \cref{Prop:PoisedType,Prop:BlockedType} to construct a communication execution step between a $(d,\kappa)$-poised configuration and a $(d,\kappa)$-blocked configuration (\cref{Lem:GoodCommHelper}).

\begin{proposition}\label{Prop:PoisedType}
  If $\calC \cpoised{(d,\kappa)}$ and $\Dl \potconf{q} \calC :: \Gm$,
  then $d \in \dom{\Gm}$ and $\kappa :> |\Gm(d)|$.
\end{proposition}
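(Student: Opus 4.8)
The plan is to induct on the derivation of $\calC \cpoised{(d,\kappa)}$ --- using the rules of \cref{fig:cpoised} --- and to invert the typing judgment $\Dl \potconf{q} \calC :: \Gm$ in each case. Throughout we identify a session type with its unfolding, so that the ``sort'' $|\cdot|$ of a type is read off its outermost constructor: $\pichoiceop$ for $\pichoice{\ldots}$, $\pechoiceop$ for $\pechoice{\ldots}$, and $\m{det}$ for everything else.

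The base cases are the single-process rules of \cref{fig:cpoised} (i.e.\ all but \textsc{(PR:Dist)}, \textsc{(PR:Compose:H)}, and \textsc{(PR:Compose:T)}); there $\calC = \proc{c}{w,P}$ is poised on $d = c$. I would invert \textsc{(T:Proc)} to get $\Dl \entailpot{q-w} P :: (c : A)$ with $\Gm = (c : A)$, and then invert the right rule governing $P$; this pins the outermost constructor of $A$ to the one that selected $\kappa$ in \cref{fig:cpoised} (e.g.\ $A = \pichoice{\ldots}$ in case \textsc{(PR:$\pichoiceop$)}, $A = \tpaypot{B}{r}$ in case \textsc{(PR:$\paypot$)}). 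Hence $d \in \dom\Gm$ and $\kappa = |\Gm(d)|$, so $\kappa :> |\Gm(d)|$. Rule \textsc{(PR:Id)} is the one place where $:>$ is genuinely needed: there $\kappa = \top$, inverting $\m{id}$ leaves the offered type arbitrary, and $\top$ is by construction a super-sort of every base sort.

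For the inductive step there are three rules. In \textsc{(PR:Dist)}, $\calC = \proc{e}{\{\calC_i : p_i\}_{i \in \calI}}$ with $\calC_{i_0} \cpoised{(d,\kappa)}$; inverting \textsc{(T:Dist)} gives $\Dl_i \potconf{q_i} \calC_i :: (e : A_i)$ for every $i$ together with $A = \tsum^{\m{R}}_{i \in \calI} p_i \cdot A_i$, and the induction hypothesis on $\calC_{i_0}$ yields $d = e$ and $\kappa :> |A_{i_0}|$. The step I expect to be the crux is passing from $\kappa :> |A_{i_0}|$ to $\kappa :> |A|$: this uses the fact that the weighted-sum relations of \cref{fig:prob-split} never change the outermost type constructor --- the $\pichoiceop$ and $\pechoiceop$ clauses only redistribute the outermost label probabilities while keeping the head and all continuations fixed, and every other clause forces the summands to be syntactically equal --- so $|A| = |A_i|$ for all $i$, whence $|A| = |A_{i_0}|$. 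In the compose cases \textsc{(PR:Compose:H)} and \textsc{(PR:Compose:T)}, $\calC$ has the form $\calO \parallel \calC_0$, and inverting \textsc{(T:Compose)} gives $\Dl_1, \Dl' \potconf{q_1} \calO :: (c : A)$, $\Dl_2 \potconf{q_2} \calC_0 :: (\Dl, \Dl')$, and $\Gm = (\Dl, (c : A))$. For \textsc{(PR:Compose:H)}, the induction hypothesis on $\calO$ gives $d = c$ and $\kappa :> |A|$, and since channel names in a well-typed configuration are pairwise distinct we get $d \in \dom\Gm$ with $\Gm(d) = A$. For \textsc{(PR:Compose:T)}, the induction hypothesis on $\calC_0$ gives $d \in \dom{\Dl, \Dl'}$ and $\kappa :> |(\Dl,\Dl')(d)|$; the rule's side condition $d \notin \mathrm{FV}^{\m{L}}(\calO)$, combined with \cref{Prop:FreeVariables} (which identifies $\mathrm{FV}^{\m{L}}(\calO)$ with $\dom{\Dl_1,\Dl'}$), rules out $d \in \dom{\Dl'}$, so $d \in \dom\Dl \subseteq \dom\Gm$ and $\Gm(d) = \Dl(d) = (\Dl,\Dl')(d)$, finishing the case.

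To summarize: the base cases are pure inversion, the compose cases are routine once the side condition is translated via \cref{Prop:FreeVariables}, and the one load-bearing ingredient is that weighted sums preserve the outermost connective of a type --- which is exactly why the relations in \cref{fig:prob-split} are ``shallow''. I do not anticipate further obstacles.
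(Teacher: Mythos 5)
Your proposal is correct and follows essentially the same route as the paper: induction on the derivation of $\calC \cpoised{(d,\kappa)}$ with inversion on the typing judgment, using \cref{Prop:FreeVariables} together with the side condition $d \notin \mathrm{FV}^{\m{L}}(\calO)$ in the \textsc{(PR:Compose:T)} case, and the shallowness of the weighted-sum relations to transfer the sort across \textsc{(T:Dist)}. If anything, you spell out two points the paper leaves implicit — the base cases and the justification that $|A| = |A_{i_0}|$ under weighted sums, plus the role of $\top$ in \textsc{(PR:Id)} — so no gaps remain.
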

\begin{proof}
  By induction on the derivation of $\calC \cpoised{(d,\kappa)}$, followed by inversion on $\Dl \potconf{q} \calC :: \Gm$.
  
  \begin{description}[labelindent=\parindent]
    \item[Case:]
    \begin{mathpar}\small
      \Rule{PR:Dist}
      { \calC_{i_0} \cpoised{(d,\kappa)}
      }
      { \proc{e}{\{ \calC_i : p_i \}_{i \in \calI}} \cpoised{(d,\kappa)} }  
      \and
      \Rule{T:Dist}
      { \Forall{i \in \calI} \Dl_i \potconf{q_i} \calC_i :: (e : C_i) \\\\
        \tsum^{\m{L}}_{i \in \calI} p_i \cdot \Dl_i = \Dl \\
        \tsum^{\m{R}}_{i \in \calI} p_i \cdot C_i = C \\
        \tsum_{i \in \calI} p_i \cdot q_i = q
      }
      { \Dl \potconf{q} \proc{e}{\{ \calC_i : p_i \}_{i \in \calI}} :: (e : C) }
    \end{mathpar}
    
    In this case, $\calC = \proc{e}{\{ \calC_i : p_i \}_{i \in \calI}}$, $\Gm = (e : C)$.
    
    By induction hypothesis with $\calC_{i_0}$, we know that $d = e$ and $\kappa = |C_{i_0}| = |C|$.
    
    \item[Case:]
    \begin{mathpar}\small  
      \Rule{PR:Compose:H}
      { \calO \cpoised{(d,\kappa)}
      }
      { (\calO \parallel \calC_2) \cpoised{(d,\kappa)} }
      \and
      \Rule{T:Compose}
      { \Dl_1, \Omega \potconf{q_1} \calO :: (c : A) \\
        \Dl_2 \potconf{q_2} \calC_2 :: (\Gm',\Omega)
      }
      { \Dl_1,\Dl_2 \potconf{q_1+q_2} (\calO \parallel \calC_2) :: (\Gm',(c:A)) }
    \end{mathpar}
    
    In this case, $\calC = (\calO \parallel \calC_2)$, $\Dl = (\Dl_1,\Dl_2)$, $\Gm = (\Gm',(c:A))$, $q=q_1+q_2$.
    
    By induction hypothesis with $\calO$, we know that $d=c$ and $\kappa = |A|$.
    
    \item[Case:]
    \begin{mathpar}\small
      \Rule{PR:Compose:T}
      { \calC_2 \cpoised{(d,\kappa)} \\
        d \not\in \mathrm{FV}^{\m{L}}(\calO)
      }
      { (\calO \parallel \calC_2) \cpoised{(d,\kappa)} }  
      \and
      \Rule{T:Compose}
      { \Dl_1, \Omega \potconf{q_1} \calO :: (c : A) \\
        \Dl_2 \potconf{q_2} \calC_2 :: (\Gm',\Omega)
      }
      { \Dl_1,\Dl_2 \potconf{q_1+q_2} (\calO \parallel \calC_2) :: (\Gm',(c:A)) }
    \end{mathpar}
    
    In this case, $\calC = (\calO \parallel \calC_2)$, $\Dl = (\Dl_1,\Dl_2)$, $\Gm = (\Gm',(c:A))$, $q=q_1+q_2$.
    
    By induction hypothesis with $\calC_2$, we know that $d \in \dom{\Gm',\Omega}$ and $\kappa = |(\Gm',\Omega)(d)|$.
    
    Because $d \not\in \mathrm{FV}^{\m{L}}(\calO)$, we have $d \not\in \dom{\Omega}$ by \cref{Prop:FreeVariables}. Thus $d \in \dom{\Gm'}$.
  \end{description}
\end{proof}

\begin{proposition}\label{Prop:BlockedType}
  If $\calC \cblocked{(d,\kappa)}$ and $\Dl \potconf{q} \calC :: \Gm$, then $d \in \dom{\Dl}$ and $\kappa = |\Dl(d)|$.
\end{proposition}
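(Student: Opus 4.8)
The plan is to follow the proof of \cref{Prop:PoisedType} almost verbatim in dualized form: induct on the derivation of $\calC \cblocked{(d,\kappa)}$ and, in each case, perform inversion on the typing judgment $\Dl \potconf{q} \calC :: \Gm$.

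For the nine base cases (\textsc{BL:$\one$}, \textsc{BL:$\ichoiceop$}, \textsc{BL:$\echoiceop$}, \textsc{BL:$\pichoiceop$}, \textsc{BL:$\pechoiceop$}, \textsc{BL:$\paypot$}, \textsc{BL:$\getpot$}, \textsc{BL:$\tensor$}, \textsc{BL:$\lolli$}), the configuration is a single process object whose body is blocked on a consumed channel $d$. Inverting \textsc{T:Proc} and then the matching left typing rule (from \cref{fig:base-types,fig:prob-types}) shows that $d$ occurs in $\Dl$ carrying exactly the type former recorded in the status rule: in \textsc{BL:$\one$} the type is $\one$ and $|\one| = \m{det}$; in \textsc{BL:$\pichoiceop$} a $\pichoiceop$-type with sort $\pichoiceop$; in \textsc{BL:$\pechoiceop$} a $\pechoiceop$-type with sort $\pechoiceop$; and in every remaining case a type whose sort is $\m{det}$. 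Each is a one-line check. Note that, unlike \cref{Prop:PoisedType}, there is no forwarding rule among the blocked-status rules, so here one obtains an exact equality $\kappa = |\Dl(d)|$ rather than a super-sort relation.

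For the inductive cases: in \textsc{BL:Dist}, inversion yields \textsc{T:Dist}; the induction hypothesis applies to the witnessing branch $\calC_{i_0}$, and I combine it with the observation that the left-weighted sum $\Dl = \sum^{\m{L}}_{i} p_i \cdot \Dl_i$ preserves both the domain and the outermost type former at $d$ (inspecting the inductive definition of $+^{\m{L}}$, the only rule that is not the identity on the head constructor is the $\pechoiceop$ rule, which keeps the type a $\pechoiceop$-type), so $d \in \dom{\Dl}$ and $|\Dl(d)| = |\Dl_{i_0}(d)| = \kappa$. In \textsc{BL:Compose:T}, inversion yields \textsc{T:Compose}; the induction hypothesis places $d$ in the part of the context consumed by the tail $\calC$, which is disjoint from the part consumed by the head $\calO$, so $d$ survives into $\Dl$ with its type unchanged. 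In \textsc{BL:Compose:H}, the induction hypothesis places $d$ among the channels consumed by the head $\calO$, i.e.\ in $\Dl_1,\Omega$; I then invoke the side condition $d \notin \mathrm{FV}^{\m{R}}(\calC)$ together with \cref{Prop:FreeVariables} --- which identifies $\mathrm{FV}^{\m{R}}(\calC)$ with the domain of the channels that $\calC$ provides, namely $\Gm',\Omega$ --- to conclude $d \notin \dom{\Omega}$; hence $d$ is an external consumed channel, $d \in \dom{\Dl_1} \subseteq \dom{\Dl}$, and $|\Dl(d)| = |\Dl_1(d)| = \kappa$.

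Since the argument is the exact dual of \cref{Prop:PoisedType}, no genuinely new difficulty arises. The only step that needs care is \textsc{BL:Compose:H}: one must rule out the possibility that the blocked channel $d$ is one of the internal cut channels $\Omega$ shared between $\calO$ and $\calC$, and this is delivered precisely by the freeness side condition combined with \cref{Prop:FreeVariables}; after that, the bookkeeping of the contexts is routine.
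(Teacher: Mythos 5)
Your proposal is correct and takes exactly the route the paper intends: the paper's proof of \cref{Prop:BlockedType} is just the one-line ``by induction on the derivation of $\calC \cblocked{(d,\kappa)}$, followed by inversion on $\Dl \potconf{q} \calC :: \Gm$,'' i.e.\ the dual of the spelled-out argument for \cref{Prop:PoisedType}, and your case analysis (including the use of \cref{Prop:FreeVariables} in \textsc{BL:Compose:H} and the observation that shallow weighted sums preserve the head constructor in \textsc{BL:Dist}) fills in those details correctly. Your remark that the absence of a forwarding rule among the blocked-status rules is what yields exact equality $\kappa = |\Dl(d)|$ rather than the super-sort relation of \cref{Prop:PoisedType} is also accurate.
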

\begin{proof}
  By induction on the derivation of $\calC \cblocked{(d,\kappa)}$, followed by inversion on $\Dl \potconf{q} \calC :: \Gm$.
\end{proof}

\begin{lemma}\label{Lem:GoodCommHelper}
  If $\calC = (\calC_1 \parallel \calC_2)$ such that $\calC_1 \cblocked{(d,\kappa)}$, $\calC_2 \cpoised{(d,\kappa)}$ (or $\calC_2 \cpoised{(d,\top)}$), $\Dl_1,\Dl',(d:B) \potconf{q_1} \calC_1 :: \Gm_1$, $\Dl_2 \potconf{q_2} \calC_2 :: (\Gm_2,\Dl',(d:B))$, and $\kappa = |B|$,
  then $\calC \cstep{d,\kappa} \calC'$ for some $\calC'$.
\end{lemma}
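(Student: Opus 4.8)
The plan is to prove \cref{Lem:GoodCommHelper} by a nested induction on the derivations of $\calC_1 \cblocked{(d,\kappa)}$ and $\calC_2 \cpoised{(d,\kappa)}$ (well-founded on the sum of the two derivation sizes), exhibiting in each case an explicit communication rule of \cref{fig:pstl-semantics} that fires. It is convenient to first weaken the hypothesis that both sides carry the \emph{same} type $B$ on $d$: I would prove the statement for $\calC_1$ typed with a context entry $(d : B_1)$ and $\calC_2$ providing $(d : B_2)$, assuming only that $B_1$ and $B_2$ have the same outermost type constructor — and, when that constructor is a standard or probabilistic choice, the same set of labels. This invariant is preserved by inversion on (\textsc{T:Dist}) and (\textsc{T:Compose}): the weighted-sum relations $+^{\m{L}},+^{\m{R}}$ of \cref{fig:prob-split} only ever rescale the probabilities on the outermost labels of a $\pichoiceop$/$\pechoiceop$ type and leave every other type unchanged, so $|\cdot|$ and the top-level label set are invariant across the branches of a distribution object, which is exactly what is needed when the recursive calls descend into those branches.

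The first phase peels off all distribution and parallel structure. If the last rule for $\calC_1 \cblocked{(d,\kappa)}$ is (\textsc{BL:Compose:H}) or (\textsc{BL:Compose:T}), I split $\calC_1 = \calO \parallel \calC_1'$; the consumer of $d$ lies in one component, and since the semantics is read as multiset rewriting, the step produced by the induction hypothesis on the relevant sub-configuration ($\calO \parallel \calC_2$ or $\calC_1' \parallel \calC_2$) extends to all of $\calC$; the free-variable side condition of (\textsc{BL:Compose:H}) together with \cref{Prop:FreeVariables} guarantees the two objects never compete for $d$, and \cref{Prop:Perm} lets me relocate them if I want them adjacent in a written form. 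The (\textsc{PR:Compose:H/T}) cases for $\calC_2$ are symmetric. If $\calC_1 = \proc{e}{\{\calC^1_i : p_i\}_{i\in\calI}}$ is a distribution object (rule (\textsc{BL:Dist})) and $\calC_2$ is a process object, I apply (\textsc{C:SDist:L}) followed by (\textsc{C:Dist}) to reduce to a branch $\calC^1_{i_0} \parallel \calC_2$, and dually with (\textsc{C:SDist:R}) when $\calC_1$ is a process object and $\calC_2 = \proc{d}{\{\calC^2_j : p_j\}_{j\in\calJ}}$ (rule (\textsc{PR:Dist})). When \emph{both} sides are distribution objects, the rule is dictated by $\kappa$: (\textsc{C:BDist:D}) when $\kappa = \m{det}$, (\textsc{C:BDist:L}) when $\kappa = \pichoiceop$ (keeping the sender's distribution intact so the $\pichoiceop$-probabilities on $d$ remain justified), and (\textsc{C:BDist:R}) when $\kappa = \pechoiceop$; a further (\textsc{C:Dist}) then exposes a pair of branches on which the induction hypothesis applies. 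The typing premises for every recursive call are obtained by inverting (\textsc{T:Dist})/(\textsc{T:Compose}) on the hypotheses, and the probability mismatch on $d$ across branches is precisely what the weakened statement absorbs.

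In the base case $\calC_1 = \proc{c}{w_c,Q}$ and $\calC_2 = \proc{d}{w_d,P}$ are single process objects. Since $\calC_2$ is poised on its own channel $d$, $P$ is one of the provider forms of \cref{fig:cpoised}; inverting (\textsc{T:Proc}) on $\calC_2$ then fixes the outermost constructor of $B_2$, hence of $B_1$. Inverting (\textsc{T:Proc}) on $\calC_1$ with a context entry $(d : B_1)$ of that shape forces $Q$ (which is blocked on $d$, so one of the client forms of \cref{fig:cblocked}) into the unique dual client action, which is exactly the left-hand side of the matching atomic rule — (\textsc{C:$\one$}), (\textsc{C:$\ichoiceop$}), (\textsc{C:$\echoiceop$}), (\textsc{C:$\tensor$}), (\textsc{C:$\lolli$}), (\textsc{C:$\paypot$}), (\textsc{C:$\getpot$}), (\textsc{C:$\pichoiceop$}), or (\textsc{C:$\pechoiceop$}) — so that rule fires. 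The remaining subcase is $P = \fwd{d}{e}$, i.e.\ $\calC_2 \cpoised{(d,\top)}$ (the disjunct in the hypothesis): here (\textsc{C:Id}) applies, and its side condition $Q\tuple{d}\cblocked{(d,\kappa)}$ is just our hypothesis $\calC_1 \cblocked{(d,\kappa)}$ at a single process object.

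The main obstacle I anticipate is the bookkeeping in the double-distribution cases: one must select the correct one of (\textsc{C:BDist:D/L/R}) from $\kappa$, argue that the rewritten nested configuration still exposes a communicable pair via (\textsc{C:Dist}), and check that the induction hypothesis applies there — which is the one place where the probability-annotated types on $d$ genuinely differ between branches, so the generalized statement is essential. The compose cases are conceptually routine but need care with the free-variable side conditions and the ordering invariant baked into (\textsc{T:Compose}), handled through \cref{Prop:FreeVariables}, \cref{Prop:Perm}, and the multiset reading of the rules.
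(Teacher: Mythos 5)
Your proof follows essentially the same route as the paper's: a nested induction on the two status derivations, peeling off compose/distribution structure via the (\textsc{C:SDist}), (\textsc{C:BDist}), and (\textsc{C:Dist}) rules with the same $\kappa$-directed choice among (\textsc{C:BDist:D/L/R}), the same appeals to \cref{Prop:FreeVariables} and the multiset reading for the compose cases, and a base case that matches dual process forms by inverting (\textsc{T:Proc}). The one substantive difference is your generalization of the statement to allow distinct types $B_1, B_2$ on $d$ agreeing only in outermost constructor and label set, which you call essential for the double-distribution cases. It is sound (the operational rules never inspect types, so the weakened invariant still forces dual syntactic forms in the base case), but it is not actually needed: the asymmetry of the weighted sums is designed precisely so that no mismatch ever arises. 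For $\kappa = \pichoiceop$ the client-side sum $+^{\m{L}}$ is the identity on $\pichoice{\cdots}$ types, so decomposing the client's distribution leaves the type on $d$ exactly $B$, while (\textsc{C:BDist:L}) keeps the provider's distribution intact where $+^{\m{R}}$ would vary the probabilities; dually for $\pechoiceop$ with (\textsc{C:BDist:R}); and for $\m{det}$ both sums are identities. Hence every recursive call in the paper's proof already carries the same $B$ on both sides, and the lemma can be proved as stated.
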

\begin{proof}
  By nested induction on $\calC_1 \cblocked{(d,\kappa)}$ and $\calC_2 \cpoised{(d,\kappa)}$, followed by inversion on the typing judgments.
  
  \begin{description}[labelindent=\parindent]
    \item[Case:]
    \begin{mathpar}\small
    \Rule{BL:Dist}
    { \calC_{1,i_0} \cblocked{(d,\kappa)}
    }
    { \proc{c}{ \{ \calC_{1, i} : p_i \}_{i \in \calI}} \cblocked{(d,\kappa)} }
    \and
    \Rule{T:Dist}
    { \Forall{i \in \calI} \Dl_{1,i}, (d : B_i) \potconf{q_{1,i}} \calC_{1,i} :: (c : A_i) \\\\
      \tsum^{\m{L}}_{i \in \calI} p_i \cdot \Dl_{1,i} = \Dl_1 \\
      \tsum^{\m{L}}_{i \in \calI} p_i \cdot B_i = B \\\\
      \tsum^{\m{R}}_{i \in \calI} p_i \cdot A_i = A \\
      \tsum_{i \in \calI} p_i \cdot q_{1,i} = q_1
    }
    { \Dl_1, (d:B) \potconf{q_1} \proc{c}{\{ \calC_{1,i} : p_i \}_{i \in \calI}} :: (c : A) }
    \\
    \Rule{PR:Dist}
    { \calC_{2,j_0}' \cpoised{(d,\kappa)}
    }
    { \proc{d}{ \{  \calC_{2,j}' : p_j' \}_{j \in \calJ}} \cpoised{(d,\kappa)} }    
    \and
    \Rule{T:Dist}
    { \Forall{j \in \calJ} \Dl_{2,j} \potconf{q_{2,j}} \calC'_{2,j} :: (d : B_j') \\\\
      \tsum^{\m{L}}_{j \in \calJ} p_j' \cdot \Dl_{2,j} = \Dl_2 \\\\
      \tsum^{\m{R}}_{j \in \calJ} p_j' \cdot B_j' = B \\
      \tsum_{j \in \calJ} p_j' \cdot q_{2,j} = q_2
    }
    { \Dl_2 \potconf{q_2} \proc{d}{\{ \calC'_{2,j} : p'_j \}_{j \in \calJ}} :: (d : B) }
    \end{mathpar}
    
    In this case, $\calC_1 = \proc{c}{\{\calC_{1,i}\}_{i \in \calI}}$, $\calC_2 = \proc{d}{\{ \calC'_{2,j} : p'_j \}_{j \in \calJ}}$, $\Dl' = (\cdot)$, $\Gm_1 = (c : A)$, $\Gm_2 = (\cdot)$.
    
    By case analysis on $\kappa$.
    
    \begin{itemize}
      \item If $\kappa = \m{det}$, then we know that $B \neq \pichoice{\cdots}$ and $B \neq \pechoice{\cdots}$.
      
      Thus, $B_i = B$ for $i \in \calI$ and $B_j' = B$ for $j \in \calJ$.
      
      Then by induction hypothesis with $\calC_{1,i_0},\calC_{2,j_0}'$, we know that $(\calC_{1,i_0} \parallel \calC'_{2,j_0}) \cstep{d,\m{det}} \calD$ for some $\calD$.
      
      By \textsc{(C:Dist)} and \textsc{(C:BDist:D)}, we conclude that $\calC \cstep{d,\m{det}} \proc{c}{ \{ ( \calC_{1,i} \parallel \calC'_{2,j} ) : p_i \cdot p_j' \}_{i \in \calI, j \in \calJ : i \neq i_0 \vee j \neq j_0} \dplus \{ \calD : p_{i_0} \cdot p'_{j_0} \}}$.
    
      \item If $\kappa = \pichoiceop$, then we know that $B = \pichoice{\cdots}$.
      
      Thus, $B_i = B$ for all $i \in \calI$.
      
      Then by induction hypothesis with $\calC_{1,i_0},\calC_2'$, we know that $(\calC_{1,i_0} \parallel \calC_2) \cstep{d,\pichoiceop} \calD$ for some $\calD$.
      
      By \textsc{(C:Dist)} and \textsc{(C:BDist:L)}, we conclude that $\calC \cstep{d,\pichoiceop} \proc{c}{\{ (\calC_{1,i} \parallel \calC_2) : p_i \}_{i \in \calI \setminus \{i_0\}} \dplus \{ \calD : p_{i_0} \}}$.
            
      \item If $\kappa = \pechoiceop$, then we know that $B = \pechoice{\cdots}$.
      
      Thus, $B_j' = B$ for all $j \in \calJ$.
      
      Then by induction hypothesis with $\calC_1,\calC'_{2,j_0}$, we know that $(\calC_1 \parallel \calC_{2,j_0}') \cstep{d,\pechoiceop} \calD$ for some $\calD$.
      
      By \textsc{(C:Dist)} and \textsc{(C:BDist:R)}, we conclude that $\calC \cstep{d,\pechoiceop} \proc{c}{\{ ( \calC_1 \parallel \calC'_{2,j} )  :p'_j\}_{j \in \calJ \setminus \{j_0\}} \dplus \{ \calD : p'_{j_0} \}}$.
    \end{itemize}
    
    \item[Case:]
    \begin{mathpar}\small
      \Rule{BL:Compose:H}
      { \calO_1 \cblocked{(d,\kappa)} \\
        d \not\in \mathrm{FV}^{\m{R}}(\calC_1')
      }
      { (\calO_1 \parallel \calC_1') \cblocked{(d,\kappa)} }
      \and
      \Rule{T:Compose}
      { \Dl_{11}, \Dl_1', \Omega, (d:B) \potconf{q_{11}} \calO_1 :: (c : A) \\
        \Dl_{12}, \Dl_2' \potconf{q_{12}} \calC_1' :: (\Gm'_1,\Omega)
      }
      { \Dl_{11},\Dl_{12},\Dl'_1,\Dl'_2, (d:B) \potconf{q_{11}+q_{12}} (\calO_1 \parallel \calC_1') :: (\Gm'_1,(c:A)) }
    \end{mathpar}
    
    In this case, $\calC_1 = (\calO_1 \parallel \calC_1')$, $\Dl_1 = (\Dl_{11},\Dl_{12})$, $\Dl' = (\Dl'_1,\Dl'_2)$, $\Gm_1 = (\Gm_1', (c:A))$, $q_1 = q_{11}+q_{12}$.
    
    By \cref{Prop:BlockedType} with $\calO_1$, \cref{Prop:FreeVariables}, and $d \not\in \mathrm{FV}^{\m{R}}(\calC_1')$, $d$ can only be placed in the way showed above.
    
    By induction hypothesis with $\calO_1,\calC_2$, we know that $(\calO_1 \parallel \calC_2) \cstep{d,\kappa} \calD$ for some $\calD$.
    
    Then $\calC \cstep{d,\kappa} \calC_1' \parallel \calD$ by multiset rewriting.
    
    \item[Case:]
    \begin{mathpar}\small
      \Rule{BL:Compose:T}
      { \calC_1' \cblocked{(d,\kappa)}
      }
      { (\calO_1 \parallel \calC_1') \cblocked{(d,\kappa)} }
      \and
      \Rule{T:Compose}
      { \Dl_{11},\Dl_1', \Omega \potconf{q_{11}} \calO_1 :: (c : A) \\
        \Dl_{12}, \Dl_2', (d:B) \potconf{q_{12}} \calC_1' :: (\Gm_1',\Omega)
      }
      { \Dl_{11},\Dl_{12},\Dl_1',\Dl_2',(d:B) \potconf{q_{11}+q_{12}} (\calO_1 \parallel \calC_1') :: (\Gm_1',(c:A)) }
    \end{mathpar}
    
    In this case, $\calC_1 = (\calO_1 \parallel \calC_1')$, $\Dl_1 = (\Dl_{11},\Dl_{12})$, $\Dl' = (\Dl_1',\Dl_2')$, $\Gm_1 = (\Gm_1',(c:A))$, $q_1 = q_{11}+q_{12}$.
    
    By \cref{Prop:BlockedType} with $\calC_1'$, $d$ can only be placed in the way showed above.
    
    By induction hypothesis with $\calC_1',\calC_2$, we know that $(\calC_1' \parallel \calC_2) \cstep{d,\kappa} \calD$ for some $\calD$.
    
    Then $\calC \cstep{d,\kappa} \calO_1 \parallel \calD$ by multiset rewriting.
    
    \item[Case:]
    \begin{mathpar}\small
      \Rule{BL:$\cdots$}
      { \cdots
      }
      { \proc{c}{w,Q\tuple{d}} \cblocked{(d,\kappa)} }
      \and
      \Rule{T:$\cdots$}
      { \cdots
      }
      { \Dl_1, \Dl',(d:B) \potconf{q_1} \proc{c}{w,Q\tuple{d}} :: (c : A) }
      \\
      \Rule{PR:Compose:H}
      { \calO_2 \cpoised{(d,\kappa)}
      }
      { (\calO_2 \parallel \calC_2') \cpoised{(d,\kappa)} }
      \and
      \Rule{T:Compose}
      { \Dl_{21}, \Omega \potconf{q_{21}} \calO_2 :: (d : B) \\
        \Dl_{22} \potconf{q_{22}} \calC_2' :: (\Gm_2,\Dl',\Omega)
      }
      { \Dl_{21},\Dl_{22} \potconf{q_{21}+q_{22}} (\calO_2 \parallel \calC_2') :: (\Gm_2, \Dl', (d : B)) }
    \end{mathpar}
    
    In this case, $\calC_1 = \proc{c}{w,Q\tuple{d}}$, $\calC_2 = (\calO_2 \parallel \calC_2')$, $\Gm_1 = (c : A)$, $\Dl_2 = (\Dl_{21},\Dl_{22})$, $q_2 = q_{21}+q_{22}$.
    
    By \cref{Prop:PoisedType} with $\calO_2$, $d$ can only be placed in the way showed above.
    
    By induction hypothesis with $\calC_1, \calO_2$, we know that $(\calC_1 \parallel \calO_2) \cstep{d,\kappa} \calD$ for some $\calD$.
    
    Then $\calC \cstep{d,\kappa} \calD \parallel \calC_2'$ by multiset rewriting.
    
    \item[Case:]
    
    We consider the case where $\calC_1$ is a leaf and $\calC_2$ is a sequence of objects.
    
    \begin{itemize}
      \item
      \begin{mathpar}\small
      \Rule{BL:$\cdots$}
      { \cdots
      }
      { \proc{c}{w,Q\tuple{d}} \cblocked{(d,\kappa)} }
      \and
      \Rule{T:$\cdots$}
      { \cdots
      }
      { \Dl'', (d : B), (e : C) \potconf{q_1} \proc{c}{w,Q\tuple{d}} :: (c : A) }
      \\
      \Rule{PR:Compose:T}
      { \calC_2' \cpoised{(d,\kappa)} \\
        d \not\in \mathrm{FV}^{\m{L}}(\calO_2)
      }
      { (\calO_2 \parallel \calC_2') \cpoised{(d,\kappa)} }
      \and
      \Rule{T:Compose}
      { \Dl_{21}, \Omega \potconf{q_{21}} \calO_2 :: (e : C) \\
        \Dl_{22} \potconf{q_{22}} \calC_2' :: (\Gm_2,\Dl'',(d:B),\Omega)
      }
      { \Dl_{21},\Dl_{22} \potconf{q_{21}+q_{22}} (\calO_2 \parallel \calC_2') :: (\Gm_2, \Dl'', (d : B), (e:C)) }
      \end{mathpar}
    
      In this case, $\calC_1 = \proc{c}{w,Q\tuple{d}}$, $\calC_2 = (\calO_2 \parallel \calC_2')$, $\Dl' = (\Dl'', (e:C))$, $\Gm_1 = (c:A)$, $\Dl_2 = (\Dl_{21},\Dl_{22})$, $q_2 = q_{21}+q_{22}$.
      
      By \cref{Prop:PoisedType} with $\calC_2'$, \cref{Prop:FreeVariables}, and $d \not\in \mathrm{FV}^{\m{L}}(\calO_2)$, $d$ can only be placed in the way showed above.
      
      By induction hypothesis with $\calC_1,\calC_2'$, we know that $(\calC_1 \parallel \calC_2') \cstep{d,\kappa} \calD$ for some $\calD$.
      
      Then $\calC \cstep{d,\kappa} \calD \parallel \calO_2$ by multiset rewriting.
      
      \item
      \begin{mathpar}\small
      \Rule{BL:$\cdots$}
      { \cdots
      }
      { \proc{c}{w,Q\tuple{d}} \cblocked{(d,\kappa)} }
      \and
      \Rule{T:$\cdots$}
      { \cdots
      }
      { \Dl', (d : B) \potconf{q_1} \proc{c}{w,Q\tuple{d}} :: (c : A) }
      \\
      \Rule{PR:Compose:T}
      { \calC_2' \cpoised{(d,\kappa)} \\
        d \not\in \mathrm{FV}^{\m{L}}(\calO_2)
      }
      { (\calO_2 \parallel \calC_2') \cpoised{(d,\kappa)} }
      \and
      \Rule{T:Compose}
      { \Dl_{21}, \Omega \potconf{q_{21}} \calO_2 :: (e : C) \\
        \Dl_{22} \potconf{q_{22}} \calC_2' :: (\Gm_2',\Dl'',(d:B),\Omega)
      }
      { \Dl_{21},\Dl_{22} \potconf{q_{21}+q_{22}} (\calO_2 \parallel \calC_2') :: (\Gm_2', \Dl', (d : B), (e:C)) }
      \end{mathpar}
    
      In this case, $\calC_1 = \proc{c}{w,Q\tuple{d}}$, $\calC_2 = (\calO_2 \parallel \calC_2')$, $\Gm_1 = (c:A)$, $\Dl_2 = (\Dl_{21},\Dl_{22})$, $\Gm_2 = (\Gm_2',(e:C))$, $q_2 = q_{21}+q_{22}$.
      
      By \cref{Prop:PoisedType} with $\calC_2'$, \cref{Prop:FreeVariables}, and $d \not\in \mathrm{FV}^{\m{L}}(\calO_2)$, $d$ can only be placed in the way showed above.
      
      By induction hypothesis with $\calC_1, \calC_2'$, we know that $(\calC_1 \parallel \calC_2') \cstep{d,\kappa} \calD$ for some $\calD$.
      
      Then $\calC \cstep{d,\kappa} \calD \parallel \calO_2$ by multiset rewriting.
    
    \end{itemize}

  \end{description}
\end{proof}

As a direct corollary of \cref{Lem:GoodCommHelper}, the lemma bellow states the result with the $(d,\kappa)$-comm relation.

\begin{lemma}\label{Lem:GoodComm}
  If $(\cdot) \potconf{q} \calC :: \Gm$ and $\calC \comm{(d,\kappa)}$,
  then $C \cstep{d,\kappa} \calC'$ for some $\calC'$.
\end{lemma}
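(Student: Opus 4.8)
This is a direct corollary of \cref{Lem:GoodCommHelper}, which already absorbs the ``navigation'' through nested distribution objects and compositions; so the plan is to peel off the three congruence rules of the $\comm{(\cdot)}$ judgment and hand the work to that lemma at the single base case. Concretely, I would prove by induction on the derivation of $\calC \comm{(d,\kappa)}$ the slightly more general claim in which the typing context is arbitrary, $\Dl \potconf{q} \calC :: \Gm$ for some $\Dl$ --- the lemma as stated is the instance $\Dl = (\cdot)$. This strengthening is free, since \cref{Lem:GoodCommHelper} places no emptiness restriction on contexts, and it is genuinely needed: in a composition $\calO \parallel \mathcal{R}$ the left component $\calO$ typically consumes channels provided by $\mathcal{R}$, so inverting \textsc{(T:Compose)} hands us a typing of $\calO$ with a non-empty left context.

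The three congruence cases of \cref{fig:comm} should be routine. For \textsc{(CM:Dist)}, with $\calC = \proc{c}{\{\calC_i : p_i\}_{i \in \calI}}$ and $\calC_{i_0} \comm{(d,\kappa)}$, I would invert \textsc{(T:Dist)} to get a typing of $\calC_{i_0}$, apply the induction hypothesis to obtain $\calC_{i_0} \cstep{d,\kappa} \calC_{i_0}'$, and lift the step to $\calC$ by \textsc{(C:Dist)}. For \textsc{(CM:Compose:H)} and \textsc{(CM:Compose:T)}, with $\calC = (\calO \parallel \mathcal{R})$, I would invert \textsc{(T:Compose)} to supply a typing of whichever component carries the $\comm{(d,\kappa)}$ witness, apply the induction hypothesis there, and carry the untouched component along by multiset rewriting.

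The real work is the base rule \textsc{(CM:Compose:C)}: here $\calC = (\calO \parallel \calC_1)$ with $\calO \cblocked{(d,\kappa)}$, $\calC_1 \cpoised{(d,\kappa')}$, and $\kappa <: \kappa'$. I would invert \textsc{(T:Compose)} to get $\Dl_1, \Dl' \potconf{q_1} \calO :: (c:A)$ and $\Dl_2 \potconf{q_2} \calC_1 :: (\Gm'', \Dl')$, where $\Dl'$ is the shared interface. Then \cref{Prop:BlockedType} applied to $\calO \cblocked{(d,\kappa)}$ gives $d \in \dom{\Dl_1, \Dl'}$ and $\kappa = \typetrans{(\Dl_1,\Dl')(d)}$, while \cref{Prop:PoisedType} applied to $\calC_1 \cpoised{(d,\kappa')}$ gives $d \in \dom{\Gm'', \Dl'}$; since the three pieces $\Dl_1$, $\Gm''$, $\Dl'$ are pairwise disjoint by the \textsc{(T:Compose)} decomposition, $d$ must lie in $\Dl'$, so write $\Dl' = \Dl'', (d:B)$ with $\kappa = \typetrans{B}$. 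This is exactly the shape \cref{Lem:GoodCommHelper} expects, with $\calO$ in the blocked slot and $\calC_1$ in the poised slot; its remaining side condition is satisfied because $\kappa <: \kappa'$ forces $\kappa' \in \{\kappa, \top\}$, and \cref{Lem:GoodCommHelper} admits both a $(d,\kappa)$-poised and a $(d,\top)$-poised right component. Invoking it produces the required $\calC'$ with $\calC \cstep{d,\kappa} \calC'$.

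The hard part will be precisely this base case: checking, via \cref{Prop:BlockedType,Prop:PoisedType} together with the side condition of \textsc{(CM:Compose:C)}, that the channel $d$ the blocked object wants to act on is the \emph{unique} linear channel shared between the two components, that it carries sort $\kappa$, and that the poised side's advertised sort $\kappa'$ --- equal to $\kappa$ in the ordinary case and $\top$ only when the provider of $d$ is a forwarding process via \textsc{(PR:Id)} --- is exactly the slack that \cref{Lem:GoodCommHelper} was designed to tolerate. Everything else is bookkeeping over the multiset-rewriting rules.
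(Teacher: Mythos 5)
Your proof is correct and follows essentially the same route as the paper's: induction on the derivation of $\calC \comm{(d,\kappa)}$, inversion on the configuration typing, and a hand-off to \cref{Lem:GoodCommHelper} at the \textsc{(CM:Compose:C)} base case after locating $d$ in the shared interface via \cref{Prop:BlockedType} (and \cref{Prop:PoisedType}). Your one refinement --- strengthening the statement to an arbitrary context $\Dl$ --- is a sensible piece of extra care: the paper states the lemma with an empty context and only displays the \textsc{(CM:Dist)} and \textsc{(CM:Compose:C)} cases, whereas in \textsc{(CM:Compose:H)} the inverted typing of $\calO$ has the non-empty interface $\Dl'$ on the left, so either your generalization or an appeal to the already-general \cref{Lem:GoodCommHelper} through a slightly different case split is needed there; your disjointness argument pinning $d$ into $\Dl'$ in the base case is exactly what the generalized induction requires.
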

\begin{proof}
  By induction on the derivation of $\calC \comm{(d,\kappa)}$, followed by inversion on $(\cdot) \potconf{q} \calC :: \Gm$.
  
  \begin{description}[labelindent=\parindent]
    \item[Case:]
    \begin{mathpar}\small
    \Rule{CM:Dist}
    { \calC_{i_0} \comm{(d,\kappa)}
    }
    { \proc{e}{\{ \calC_i : p_i \}_{i \in \calI}} \comm{(d,\kappa)} }
    \and
    \Rule{T:Dist}
    { \Forall{i \in \calI} (\cdot) \potconf{q_i} \calC_i :: (e : A_i) \\
      \tsum^{\m{R}}_{i \in \calI} p_i \cdot A_i = A \\
      \tsum_{i \in \calI} p_i \cdot q_i = q
    }
    { (\cdot) \potconf{q} \proc{e}{\{ \calC_i : p_i \}_{i \in \calI}} :: (e : A) }
    \end{mathpar}
    
    By induction hypothesis, we know that $\calC_{i_0} \cstep{d,\kappa} \calC'_{i_0}$ for some $\calC'_{i_0}$.
    
    Then we conclude by \textsc{(C:Dist)} that $\calC \cstep{d,\kappa} \proc{e}{\{ \calC_i : p_i \}_{i \in \calI \setminus \{i_0\}} \dplus \{ \calC_{i_0}' : p_{i_0} \} }$.
        
    \item[Case:]
    \begin{mathpar}\small
    \Rule{CM:Compose:C}
    { \calO \cblocked{(d,\kappa)} \\
      \calC_2 \cpoised{(d,\kappa)}
    } 
    { (\calO \parallel \calC_2) \comm{(d,\kappa)} }
    \and
    \Rule{T:Compose}
    { \Dl' \potconf{q_1} \calO :: (c : A) \\
      (\cdot) \potconf{q_2} \calC_2 :: (\Gm, \Dl')
    }
    { (\cdot) \potconf{q_1+q_2} (\calO \parallel \calC_2) :: (\Gm,(c: A)) }
    \end{mathpar}
  
    By \cref{Prop:BlockedType} with $\calO$, we know that $d \in \dom{\Dl'}$ and $\kappa = |\Dl'(d)|$.
    Then appeal to \cref{Lem:GoodCommHelper}.
  \end{description}
\end{proof}

Now, we prove \cref{Prop:PoisedExistence,Lem:ProgressHelper} to bridge the gap between well-typed configurations and the status-characterizing relations.

\begin{proposition}\label{Prop:PoisedExistence}
  If $\Dl \potconf{q} \calC :: \Gm$ and $\calC \poised$, then for all $d$ in $\dom{\Gm}$, it holds that $\calC \cpoised{(d,|\Gm(d)|)}$ (or $\calC \cpoised{(d,\top)}$).
\end{proposition}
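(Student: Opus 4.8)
The plan is to prove the proposition by induction on the derivation of $\Dl \potconf{q} \calC :: \Gm$, performing inversion on $\calC\poised$ in each case and then, for every $d\in\dom{\Gm}$, constructing a derivation of $\calC\cpoised{(d,|\Gm(d)|)}$ and falling back to $\calC\cpoised{(d,\top)}$ exactly when a forwarding process on $d$ is reached. Here $|A|$ is the sort of $A$ ($\m{det}$ for $\one,\ichoiceop,\echoiceop,\tensor,\lolli,\paypot,\getpot$; $\pichoiceop$ for probabilistic internal choice; $\pechoiceop$ for probabilistic external choice), and I would silently unfold type names as elsewhere. In the base case \textsc{(T:Proc)} we have $\calC = \proc{c}{w,P}$ and $\Gm = (c:A)$; inversion on $\proc{c}{w,P}\poised$ shows $P$ is one of the ten provider-side communicating forms of \cref{fig:poised}, each matched by a rule of \cref{fig:cpoised} yielding $\proc{c}{w,P}\cpoised{(c,\kappa)}$ with $\kappa$ fixed by the head of $P$. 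If $P = \fwd{c}{e}$ then $\kappa = \top$ and we take the disjunct; otherwise inversion on $\Dl\entailpot{q}P::(c:A)$ forces the top constructor of $A$ to agree with the head of $P$ (e.g.\ $P = \esendlp{c}{k}\semi P'$ forces $A = \pichoice{\ell^{p_\ell}:A_\ell}_{\ell\in L}$), so $\kappa = |A|$.

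For \textsc{(T:Dist)}, where $\calC = \proc{c}{\{\calC_i:p_i\}_{i\in\calI}}$ with $\Dl_i\potconf{q_i}\calC_i::(c:A_i)$ and $\sum^{\m{R}}_{i\in\calI}p_i\cdot A_i = A$, inversion via \textsc{(P:Dist)} gives $\calC_i\poised$ for all $i$. I would pick an arbitrary $i_0\in\calI$; the induction hypothesis gives $\calC_{i_0}\cpoised{(c,|A_{i_0}|)}$ or $\calC_{i_0}\cpoised{(c,\top)}$. Because the weighted-sum relation $+^{\m{R}}$ is \emph{shallow} — it only rescales the label probabilities of a top-level $\pichoiceop$ and otherwise forces $A = A_1 = \cdots$, as recorded in \cref{fig:prob-split} — the types $A$ and $A_{i_0}$ share a top constructor, hence $|A_{i_0}| = |A|$; applying \textsc{(PR:Dist)} then yields $\calC\cpoised{(c,|A|)}$ (or $\calC\cpoised{(c,\top)}$).

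For \textsc{(T:Compose)}, where $\calC = \calO\parallel\calC''$ with $\Dl_1,\Omega\potconf{q_1}\calO::(c:A)$, $\Dl_2\potconf{q_2}\calC''::(\Gm',\Omega)$, and $\Gm = (\Gm',(c:A))$, inversion via \textsc{(P:Compose)} gives $\calO\poised$ and $\calC''\poised$. For $d = c$ the induction hypothesis on $\calO$ gives $\calO\cpoised{(c,|A|)}$ (or $\top$), which \textsc{(PR:Compose:H)} lifts to $\calC$. For $d\in\dom{\Gm'}$ the induction hypothesis on $\calC''$ gives $\calC''\cpoised{(d,|\Gm'(d)|)}$ (or $\top$), and to lift this by \textsc{(PR:Compose:T)} I must check $d\notin\mathrm{FV}^{\m{L}}(\calO)$. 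By \cref{Prop:FreeVariables}, $\mathrm{FV}^{\m{L}}(\calO) = \dom{\Dl_1}\cup\dom{\Omega}$; and $d\in\dom{\Gm'}$ belongs to neither, since the interface contexts $\dom{\Dl}$ and $\dom{\Gm}$ of a well-typed configuration are disjoint and $\Gm'$ and $\Omega$ are distinct fragments of the single context $(\Gm',\Omega)$ offered by $\calC''$.

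I expect the only delicate points to be these two: the shallowness observation in \textsc{(T:Dist)}, which is exactly what makes sort-preservation under the weighted sum immediate, and the free-variable bookkeeping for \textsc{(PR:Compose:T)} in \textsc{(T:Compose)} — the main obstacle — where one must be careful to obtain the required disjointness from the configuration-typing invariants (distinct channel names across a well-typed interface) via \cref{Prop:FreeVariables}, rather than assume them.
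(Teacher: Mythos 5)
Your proof is correct and follows exactly the route the paper takes (its proof is the one-line sketch ``by induction on the derivation of $\Dl \potconf{q} \calC :: \Gm$, followed by inversion on $\calC \poised$''); you have simply filled in the details, and the two points you flag as delicate --- shallowness of $+^{\m{R}}$ preserving the sort in the \textsc{(T:Dist)} case, and the disjointness needed for the side condition of \textsc{(PR:Compose:T)} via \cref{Prop:FreeVariables} --- are indeed the only places where anything needs checking.
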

\begin{proof}
  By induction on the derivation of $\Dl \potconf{q} \calC :: \Gm$, followed by inversion on $\calC \poised$.
\end{proof}

\begin{lemma}\label{Lem:ProgressHelper}
  If $\Dl \potconf{q} \calC :: \Gm$, then at least one of the cases below holds:
  \begin{enumerate}[(i)]
    \item $\calC \live$,
    \item $\calC \comm{(d,\kappa)}$ for some $d,\kappa$, or
    \item $\calC \cblocked{(d,|\Dl(d)|)}$ for some $d \in \dom{\Dl}$, or
    \item $\calC \poised$.
  \end{enumerate}  
\end{lemma}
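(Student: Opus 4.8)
The plan is to prove \cref{Lem:ProgressHelper} by induction on the derivation of $\Dl \potconf{q} \calC :: \Gm$, with one case for each configuration-typing rule: \textsc{(T:Proc)}, \textsc{(T:Dist)}, \textsc{(T:Compose)}. In the \textsc{(T:Proc)} case, $\calC = \proc{c}{w,P}$ with $\Dl \entailpot{q'} P :: (c : A)$, and I would do an inner case analysis on the last rule of the process-typing derivation (equivalently, on the head constructor of $P$). If the head is $\eflip{p}{P_H}{P_T}$, a work step $\ework{r}$, or a spawn $\ecut{x}{f}{\many{d}}{Q}$, then $\calC \live$ by \textsc{(L:Flip)}, \textsc{(L:Work)}, or \textsc{(L:Def)}. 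If $P$ is a provider-side action on its offered channel $c$ (a plain or probabilistic send/receive of a label or channel on $c$, an $\eclose{c}$, a pay/get on $c$, or a forward $\fwd{c}{e}$), then $\calC \poised$ by the matching rule in \cref{fig:poised}. If $P$ is a client-side action on a used channel $d \in \dom{\Dl}$, then $\calC \cblocked{(d,\kappa)}$ by the matching rule in \cref{fig:cblocked}; the only point requiring care is $\kappa = |\Dl(d)|$, which holds because each client-side head pins down the top type constructor of $\Dl(d)$ (e.g.\ $\ewait{d} \semi Q$ forces $\Dl(d) = \one$ and $\kappa = \m{det}$, while $\epcase{d}{\ell}{Q_\ell}_{\ell \in L}$ forces $\Dl(d) = \pichoice{\cdots}$ and $\kappa = \pichoiceop$), mirroring the \textsc{BL} rules exactly.

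In the \textsc{(T:Dist)} case, $\calC = \proc{c}{\{\calC_i : p_i\}_{i \in \calI}}$ with $\Dl_i \potconf{q_i} \calC_i :: (c : A_i)$ for each $i$ and $\Dl$ the $+^{\m{L}}$-weighted sum of the $\Dl_i$. I would apply the induction hypothesis to every $\calC_i$ and combine by priority: if some $\calC_{i_0} \live$ then $\calC \live$ by \textsc{(L:Dist)}; else if some $\calC_{i_0} \comm{(d,\kappa)}$ then $\calC \comm{(d,\kappa)}$ by \textsc{(CM:Dist)}; else if some $\calC_{i_0} \cblocked{(d, |\Dl_{i_0}(d)|)}$ for some $d \in \dom{\Dl_{i_0}}$, then since $+^{\m{L}}$ on contexts preserves both domains and top constructors we have $d \in \dom{\Dl}$ and $|\Dl_{i_0}(d)| = |\Dl(d)|$, so $\calC \cblocked{(d, |\Dl(d)|)}$ by \textsc{(BL:Dist)}; else every $\calC_i \poised$ and $\calC \poised$ by \textsc{(P:Dist)}.

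The \textsc{(T:Compose)} case is the heart of the argument. Here $\calC = \calO \parallel \calC'$ with $\Dl_1, \Dl' \potconf{q_1} \calO :: (c : A)$ and $\Dl_2 \potconf{q_2} \calC' :: (\Dl_0, \Dl')$, so $\Dl = \Dl_1, \Dl_2$, $\Gm = (\Dl_0, (c : A))$, and the channels of $\Dl'$ are provided by $\calC'$ but used by $\calO$, hence internal to $\calC$. I would first apply the IH to $\calO$. If $\calO \live$ or $\calO \comm{(d,\kappa)}$, lift to $\calC$ by \textsc{(L:Compose:H)} or \textsc{(CM:Compose:H)}. If $\calO \poised$, apply the IH to $\calC'$: a live, comm, or blocked $\calC'$ lifts by \textsc{(L:Compose:T)}, \textsc{(CM:Compose:T)}, or \textsc{(BL:Compose:T)} (none of which carry side conditions; a blocked $\calC'$ is blocked on some $d \in \dom{\Dl_2} \subseteq \dom{\Dl}$), and a poised $\calC'$ gives $\calC \poised$ by \textsc{(P:Compose)}. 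If $\calO \cblocked{(d,\kappa)}$ for some $d \in \dom{\Dl_1,\Dl'}$, split on whether $d \in \dom{\Dl_1}$ or $d \in \dom{\Dl'}$. When $d \in \dom{\Dl_1}$, we have $d \notin \mathrm{FV}^{\m{R}}(\calC') = \dom{(\Dl_0,\Dl')}$ by \cref{Prop:FreeVariables} and the disjointness of channel names in a well-typed configuration, so \textsc{(BL:Compose:H)} gives $\calC \cblocked{(d,\kappa)}$ with $\kappa = |\Dl_1(d)| = |\Dl(d)|$. When $d \in \dom{\Dl'}$, apply the IH to $\calC'$ once more: a live, comm, or blocked $\calC'$ lifts as above; and if $\calC' \poised$, then \cref{Prop:PoisedExistence} applied at $d$ yields $\calC' \cpoised{(d,\kappa')}$ with $\kappa' = |\Dl'(d)| = \kappa$ or $\kappa' = \top$, so $\calC \comm{(d,\kappa)}$ by \textsc{(CM:Compose:C)}, using $\kappa <: \kappa'$.

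The main obstacle is exactly this bookkeeping in \textsc{(T:Compose)}: for a process of $\calO$ blocked on a channel $d$ one must decide whether $d$ is exported by the whole configuration or provided internally by $\calC'$, and in the internal case recurse into $\calC'$ and match $\calO$'s blocked sort against $\calC'$'s poised sort in order to fire \textsc{(CM:Compose:C)}. This is where the free-variable invariants (\cref{Prop:FreeVariables}), the side conditions on the composition rules, and the $\top$ super-sort of forwarders are all needed; the remaining cases are routine once the correspondence between process forms and channel sorts is set up in \textsc{(T:Proc)} and propagated through weighted sums in \textsc{(T:Dist)}. Finally, \cref{lem:progress} drops out as the instance $\Dl = \cdot$, in which clause~(iii) becomes vacuous.
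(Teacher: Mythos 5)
Your proposal is correct and matches the paper's own proof essentially step for step: induction on the configuration-typing derivation, a head-constructor case analysis under \textsc{(T:Proc)}, priority-combination of the inductive hypotheses under \textsc{(T:Dist)}, and under \textsc{(T:Compose)} the same split on whether the blocked channel lies in $\Dl_1$ or $\Dl'$, discharged via \cref{Prop:FreeVariables} for \textsc{(BL:Compose:H)} and \cref{Prop:PoisedExistence} plus $\kappa <: \kappa'$ for \textsc{(CM:Compose:C)}. No gaps.
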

\begin{proof}
  By induction on the derivation of $\Dl \potconf{q} \calC :: \Gm$.
  
  \begin{description}[labelindent=\parindent]
    \item[Case:]
    \[\small
    \Rule{T:Proc}
    { \Dl \entailpot{q-w} P :: (c : A)
    }
    { \Dl \potconf{q} \proc{c}{w,P} :: (c : A) }
    \]
    
    By a case analysis on $P$, we can conclude that either $\proc{c}{w,P} \live$, $\proc{c}{w,P} \cblocked{(d,|\Dl(d)|)}$ for some $d \in \dom{\Dl}$, or $\proc{c}{w,P} \poised$.
    
    \item[Case:]
    \[\small
    \Rule{T:Dist}
    { \Forall{i \in \calI} \Dl_i \potconf{q_i} \calC_i :: (c : A_i) \\
      \tsum^{\m{L}}_{i \in \calI} p_i \cdot \Dl_i = \Dl \\
      \tsum^{\m{R}}_{i \in \calI} p_i \cdot A_i = A \\
      \tsum_{i \in \calI} p_i \cdot q_i = q
    }
    { \Dl \potconf{q} \proc{c}{\{ \calC_i : p_i \}_{i \in \calI}} :: (c : A) }
    \]
    
    We can apply induction hypothesis on all $\calC_i$'s.
    If all $\calC_i$'s are poised, then $\calC$ itself is poised by \textsc{(P:Dist)}.
    
    Otherwise, there exists $i_0$ such that $\calC_{i_0}$ is not poised.
    \begin{itemize}
      \item If $\calC_{i_0}\live$, then $\calC \live$ by \textsc{(L:Dist)}.
      \item If $\calC_{i_0} \comm{(d,\kappa)}$ for some $d,\kappa$, then $\calC \comm{(d,\kappa)}$ by \textsc{(CM:Dist)}.
      \item If $\calC_{i_0} \cblocked{(d,|\Dl_{i_0}(d)|)}$ for some $d \in \dom{\Dl_{i_0}}$, then $\calC \cblocked{(d, |\Dl(d)|)}$ by \textsc{(BL:Dist)}.
    \end{itemize}
    
    \item[Case:]
    \[\small
    \Rule{T:Compose}
    { \Dl_1, \Dl' \potconf{q_1} \calO :: (c : A)
      \\
      \Dl_2 \potconf{q_2} \calC_2 :: (\Gm,\Dl')
    }
    { \Dl_1,\Dl_2 \potconf{q_1+q_2} (\calO \parallel \calC_2) :: (\Gm,(c:A)) }
    \]
    
    We can apply induction hypothesis on $\calO$ and $\calC_2$.
    
    \begin{itemize}
      \item If $\calO \live$ or $\calC_2 \live$, then $\calC \live$ by \textsc{(L:Compose:H)} or \textsc{(L:Compose:T)}.
      \item If $\calO \comm{(d,\kappa)}$ or $\calC_2 \comm{(d,\kappa)}$ for some $d,\kappa$, then $\calC \comm{(d,\kappa)}$ by \textsc{(CM:Compose:H)} or \textsc{(CM:Compose:T)}.
      \item If $\calC_2 \cblocked{(d,|\Dl_2(d)|)}$ for some $d \in \dom{\Dl_2}$, then $\calC \cblocked{(d,|\Dl_2(d)|)}$ by \textsc{(BL:Compose:T)}.
      \item If $\calO \cblocked{(d,|\Dl_1(d)|)}$ for some $d \in \dom{\Dl_1}$, then $\calC \cblocked{(d,|\Dl_1(d)|)}$ by \textsc{(BL:Compose:H)} and $d \not\in \mathrm{FV}^{\m{R}}(\calC_2)$ (because $d \not\in \dom{\Gm,\Dl'}$ and \cref{Prop:FreeVariables}).
      \item If $\calO \cblocked{(d,|\Dl'(d)|)}$ for some $d \in \dom{\Dl'}$ and $\calC_2 \poised$, then by \cref{Prop:PoisedExistence}, we have $\calC_2 \cpoised{(d,|\Dl'(d)|)}$, thus we conclude by \textsc{(CM:Compose:C)} that $\calC \comm{(d,|\Dl'(d)|)}$.
      \item If both $\calO$ and $\calC_2$ are poised, then $\calC$ itself is poised by \textsc{(P:Compose)}.
    \end{itemize}    
  \end{description}
\end{proof}

Finally, we can formulate and prove \emph{global progress} of this type system.

\begin{lemma}\label{Lem:Progress}
  If $(\cdot) \potconf{q} \calC :: \Gm$, then at least one of the cases below holds:
  \begin{enumerate}[(i)]
    \item $\calC \live$,
    \item $\calC \comm{(d,\kappa)}$ for some $d,\kappa$, or
    \item $\calC \poised$.
  \end{enumerate}
\end{lemma}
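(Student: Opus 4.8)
The plan is to derive Lemma~\ref{Lem:Progress} immediately from the auxiliary statement Lemma~\ref{Lem:ProgressHelper}, which is the real workhorse and which I would establish first. Lemma~\ref{Lem:ProgressHelper} has a fourth disjunct beyond the three we want: that $\calC$ is $(d,|\Dl(d)|)$-blocked for some $d \in \dom{\Dl}$. Specializing the antecedent context to $\Dl = (\cdot)$ makes this disjunct vacuous, since $\dom{(\cdot)} = \emptyset$ contains no such $d$; the surviving disjuncts are exactly $\calC \live$, $\calC \comm{(d,\kappa)}$, and $\calC \poised$. So the proof of the final statement is one line: appeal to Lemma~\ref{Lem:ProgressHelper} with an empty context and discard the impossible blocked case. (Composing the resulting live/comm status with Lemma~\ref{lem:goodstatus}, which turns that status into an actual $\sstep$ or $\cstep{d,\kappa}$ step, then yields the global progress theorem already stated.)

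The substance therefore lives in Lemma~\ref{Lem:ProgressHelper}, which I would prove by induction on the derivation of $\Dl \potconf{q} \calC :: \Gm$. In the \textsc{(T:Proc)} case a case analysis on the process term $P$ classifies $\proc{c}{w,P}$ as live (for $\m{flip}$, $\m{work}$, and spawn), as blocked along some used channel (for the client-side term forms $\ecase{}{}{}$, $\ewait{}$, sends on consumed channels, etc.), or as poised (for the term forms acting on the offered channel $c$), matching the rules of Figures~\ref{fig:live}, \ref{fig:cblocked}, and~\ref{fig:poised}. The \textsc{(T:Dist)} case applies the induction hypothesis to each $\calC_i$: if all are poised, conclude by \textsc{(P:Dist)}; otherwise pick a non-poised $\calC_{i_0}$ and lift its status with \textsc{(L:Dist)}, \textsc{(CM:Dist)}, or \textsc{(BL:Dist)}. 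The interesting case is \textsc{(T:Compose)}, where $\calO$ offers $(c:A)$ together with the channels $\Dl'$ consumed by the tail $\calC_2$. Applying the induction hypothesis to $\calO$ and to $\calC_2$, most sub-cases simply propagate live/comm/blocked status through the composition rules; the delicate one is when $\calO$ is blocked along an internal channel $d \in \dom{\Dl'}$ while $\calC_2$ is poised. Here I would invoke Proposition~\ref{Prop:PoisedExistence} to upgrade ``$\calC_2 \poised$'' to ``$\calC_2 \cpoised{(d,|\Dl'(d)|)}$'' for that specific $d$, then close the composite by \textsc{(CM:Compose:C)}, obtaining $\calC \comm{(d,|\Dl'(d)|)}$. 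Tracking which free variables are left/right via Proposition~\ref{Prop:FreeVariables} is what licenses side conditions such as $d \notin \mathrm{FV}^{\m{R}}(\calC_2)$ in the \textsc{(BL:Compose:H)} sub-case and $d \notin \mathrm{FV}^{\m{L}}(\calO)$ in the \textsc{(BL:Compose:T)} sub-case.

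I expect the main obstacle to be bookkeeping rather than a conceptual leap: correctly matching the sort $\kappa = |\Dl'(d)|$ of the internal channel on the blocked side with the sort expected on the poised side, including the super-sort $\top$ produced by forwarding and absorbed through $\kappa <: \kappa'$ in \textsc{(CM:Compose:C)}, and arranging the status relations so that the disjunction in Lemma~\ref{Lem:ProgressHelper} is exhaustive at every node of the linear object tree. Once Lemma~\ref{Lem:ProgressHelper} is in place, Lemma~\ref{Lem:Progress} follows as described above.
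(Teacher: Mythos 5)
Your proposal is correct and follows exactly the paper's route: the paper proves this lemma by a one-line appeal to Lemma~\ref{Lem:ProgressHelper} (the blocked disjunct being vacuous for an empty context), and its proof of that helper lemma proceeds by the same induction on the typing derivation, with the same use of Proposition~\ref{Prop:PoisedExistence} and \textsc{(CM:Compose:C)} in the blocked-meets-poised sub-case of \textsc{(T:Compose)} and of Proposition~\ref{Prop:FreeVariables} for the free-variable side conditions. No discrepancies to report.
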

\begin{proof}
  Appeal to \cref{Lem:ProgressHelper}.
\end{proof}

\begin{theorem}[Global progress]\label{The:Progress}
  If $ (\cdot) \potconf{q} \calC :: \Gm$, then either
  \begin{enumerate}[(i)]
    \item $\calC \sstep \calC'$ for some $\calC'$, or $\calC \cstep{d,\kappa} \calC'$ for some $\calC',d,\kappa$, or
    \item $\calC \poised$.
  \end{enumerate}
\end{theorem}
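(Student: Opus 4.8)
The plan is to obtain global progress as an immediate consequence of the status-classification lemma \cref{Lem:Progress} together with the two step-existence lemmas \cref{Lem:GoodLive} and \cref{Lem:GoodComm}. Given $(\cdot) \potconf{q} \calC :: \Gm$, I first apply \cref{Lem:Progress}, which tells us that $\calC$ falls into one of three cases: $\calC \live$, $\calC \comm{(d,\kappa)}$ for some $d,\kappa$, or $\calC \poised$. In the poised case the conclusion holds directly. If $\calC \live$, then \cref{Lem:GoodLive} produces a $\calC'$ with $\calC \sstep \calC'$. If $\calC \comm{(d,\kappa)}$, then—because $\calC$ is closed, i.e.\ has empty left interface—\cref{Lem:GoodComm} applies and yields a $\calC'$ with $\calC \cstep{d,\kappa} \calC'$. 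These cases are exhaustive, so either $\calC$ takes a step or it is poised, which is exactly the statement.

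The real content thus lies in the auxiliary lemmas, which is where I would focus. \cref{Lem:Progress} is obtained from \cref{Lem:ProgressHelper} by induction on the configuration typing derivation $\Dl \potconf{q} \calC :: \Gm$. The \textsc{(T:Proc)} base case is a case analysis on the process term: a flip, work, or spawn makes the object live; a close/send/receive/forward along the offered channel makes it poised; any communication attempt along a used channel makes it $(d,\kappa)$-blocked. The \textsc{(T:Dist)} case is routine—if every branch is poised the distribution is poised via \textsc{(P:Dist)}, otherwise a non-poised branch propagates its status up through \textsc{(L:Dist)}, \textsc{(CM:Dist)}, or \textsc{(BL:Dist)}. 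The interesting case is \textsc{(T:Compose)} of a head object $\calO$ with a tail $\calC_2$, where the phenomenon new relative to deterministic session types is that $\calO$ may be blocked on a channel $d$ in the shared interface $\Dl'$ while $\calC_2$ is poised; here \cref{Prop:PoisedExistence} upgrades $\calC_2 \poised$ to $\calC_2 \cpoised{(d,|\Dl'(d)|)}$ and \textsc{(CM:Compose:C)} gives $\calC \comm{(d,|\Dl'(d)|)}$. The remaining sub-cases are bookkeeping with multiset rewriting and the free-variable invariant \cref{Prop:FreeVariables}.

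For the step-existence direction, \cref{Lem:GoodLive} is a straightforward induction matching the \textsc{(L:$\cdots$)} rules to the single-process rules, including \textsc{(E:Dist)} for stepping inside a distribution object. \cref{Lem:GoodComm} reduces, via \textsc{(CM:Compose:C)} and \cref{Prop:BlockedType}, to the helper \cref{Lem:GoodCommHelper}, which is a nested induction on the $(d,\kappa)$-blocked and $(d,\kappa)$-poised derivations. The delicate part—and the one I expect to be the main obstacle—is the leaf case in which both communicating sides are distribution objects: it requires inverting the shallow weighted sums $B = \tsum^{\m{L}}_{i} p_i \cdot B_i$ and $B = \tsum^{\m{R}}_{j} p'_j \cdot B'_j$ to split along $\kappa \in \{\m{det},{\pichoiceop},{\pechoiceop}\}$, and then, for each sort, pairing the inductively obtained sub-step with the appropriate bundling rule—\textsc{(C:BDist:D)}, \textsc{(C:BDist:L)}, or \textsc{(C:BDist:R)} (together with \textsc{(C:Dist)})—so that exactly one distribution object is kept intact while the other is decomposed. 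Everything above that leaf case is again multiset-rewriting plumbing guided by the typing context and \cref{Prop:FreeVariables}.
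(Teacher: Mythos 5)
Your proof is correct and follows exactly the paper's route: the theorem is discharged by combining \cref{Lem:Progress} (itself via \cref{Lem:ProgressHelper}) with \cref{Lem:GoodLive} and \cref{Lem:GoodComm}, and your sketches of the auxiliary lemmas---including the \textsc{(CM:Compose:C)} case via \cref{Prop:PoisedExistence} and the distribution-versus-distribution leaf case of \cref{Lem:GoodCommHelper} split on $\kappa$---match the paper's development. No gaps.
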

\begin{proof}
  Appeal to \cref{Lem:Progress,Lem:GoodLive,Lem:GoodComm}.
\end{proof}

\subsection{Expected Work Analysis}
\label{appendix:expected-work}

Non-nested configurations are defined by
\[
\calD \Coloneqq \proc{c_1}{w_1,P_1} \parallel \cdots \parallel \proc{c_n}{w_n,P_n}
\]

We develop a \emph{distribution}-based~\cite{ICFP:BLG16,JCSS:Kozen81} small-step operational semantics on non-nested configurations, based on a \emph{synchronous} semantics for resource-aware session types~\cite{Das18RAST,Balzer17ICFP}.
The semantics should also be seen as a collection of \emph{multiset-rewriting rules}~\cite{Cervesato09SEM}.
The table below lists the rules for this semantics.

\begin{center}
\begin{footnotesize}
  \begin{tabular}{ll}
  (\textsc{S:$\pichoiceop$}) & $\proc{c}{w_c, \esendlp{c}{k} \semi P}, \proc{d}{w_d,\epcase{c}{\ell}{Q_\ell}_{\ell \in L}} \ostep \proc{c}{w_c, P}, \proc{d}{w_d,Q_k}$ \\
  (\textsc{S:$\pechoiceop$}) & $\proc{c}{w_c,\epcase{c}{\ell}{P_\ell}_{\ell \in L}}, \proc{d}{w_d,\esendlp{c}{k} \semi Q} \ostep \proc{c}{w_c,P_k}, \proc{d}{w_d,Q}$ \\
  (\textsc{S:$\one$}) & $\proc{c}{w_c,\eclose{c}}, \proc{d}{w_d,\ewait{c} \semi Q} \ostep \proc{d}{w_c+w_d,Q}$ \\
  (\textsc{S:Id}) & $\proc{c}{w_c, \fwd{c}{e}}, \proc{d}{w_d,Q\tuple{c}} \ostep \proc{d}{w_c+w_d, Q\tuple{c}[e/c]}$ \\
  (\textsc{S:Def}) & $\proc{c}{w, \ecut{x}{f}{\many{d}}{Q}} \ostep \proc{b}{0, P_f[b/x',\many{d}/\many{y'}]}, \proc{c}{w,Q[b/x]}$ \\
  & \quad for $\many{y' : B} \entailpot{q} f = P_f :: (x' : A) \in \Sg$ and $b$ \emph{fresh} \\
  (\textsc{S:$\tensor$}) & $\proc{c}{w_c, \esendch{c}{e} \semi P}, \proc{d}{w_d, \erecvch{c}{y} \semi Q} \step \proc{c}{w_c,P}, \proc{d}{w_d, Q[e/y]}$ \\
  (\textsc{S:$\lolli$}) & $\proc{c}{w_c, \erecvch{c}{y} \semi P}, \proc{d}{w_d,\esendch{c}{e} \semi Q} \step \proc{c}{w_c,P[e/y]}, \proc{d}{w_d,Q}$ \\
  (\textsc{S:$\ichoiceop$}) & $\proc{c}{w_c,\esendl{c}{k} \semi P}, \proc{d}{w_d, \ecase{c}{\ell}{Q_\ell}_{\ell \in L}} \step \proc{c}{w_c,P}, \proc{d}{w_d,Q_k}$ \\
  (\textsc{S:$\echoiceop$}) & $\proc{c}{w_c, \ecase{c}{\ell}{P_\ell}_{\ell \in L}},\proc{d}{w_d, \esendl{c}{k} \semi Q} \step \proc{c}{w_c,P_k}, \proc{d}{w_d,Q} $ \\
  (\textsc{S:$\paypot$}) & $\proc{c}{w_c,\epay{c}{r} \semi P}, \proc{d}{w_d,\eget{c}{r} \semi Q} \ostep \proc{c}{w_c,P}, \proc{d}{w_d,Q}$ \\
  (\textsc{S:$\getpot$}) & $\proc{c}{w_c, \eget{c}{r} \semi P}, \proc{d}{w_d,\epay{c}{r} \semi Q} \ostep \proc{c}{w_c,P},\proc{d}{w_d,Q}$ \\
  (\textsc{S:Work}) & $\proc{c}{w,\ework{r} \semi P} \ostep \proc{c}{w+r,P}$ \\
  \\
  (\textsc{SP:Det}) & $\calD \pstep \delta(\calD')$ for $\calD \ostep \calD'$ and $\delta(\cdot)$ is the \emph{Dirac} distribution \\
  (\textsc{SP:Flip}) & $\proc{c}{w_c,\eflip{p}{Q_H}{Q_T}} \pstep p \cdot \delta(\proc{c}{w_c,Q_H})+(1-p)\cdot \delta(\proc{c}{w_c,Q_T})$ \\
  \end{tabular}
\end{footnotesize}
\end{center}

We now lift the configuration-to-distribution relation $\pstep$ to a distribution-to-distribution relation $\dstep$ as follows:
\begin{mathpar}\small
  \inferrule
  { \mu = \{ \calD_i : p_i \}_{i \in \calI} \\
    \Exists{i_0 \in \calI} \calD_{i_0} \pstep \mu_{i_0}'
  }
  { \mu \dstep \{ \calD_i : p_i \}_{i \in \calI \setminus \{i_0\}} + p_{i_0} \cdot \mu_{i_0}' }
  \and
  \inferrule
  { \mu = \{ \calD_i : p_i \}_{i \in \calI} \\
    \Forall{i \in \calI} \calD_i \poised
  }
  { \mu \dstep \mu }  
\end{mathpar}

We first prove the potential is still an upper bound on the expected work after we ``flatten'' a multiverse semantic object.

\begin{lemma}\label{Lem:SimulationInvariant}
  If $\calC \approx \mu$, $\Dl \potconf{q} \calC :: \Gm$, then $\bbE_{\calD \sim \mu}[\m{work}(\calD)] \le q$, where $\m{work}(\many{\proc{c_i}{w_i,P_i}}) \defeq \tsum_i w_i$.
\end{lemma}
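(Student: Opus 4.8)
\textbf{Proof plan for \cref{Lem:SimulationInvariant}.}

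The plan is to proceed by induction on the derivation of the simulation relation $\calC \approx \mu$, using the configuration-typing judgment $\Dl \potconf{q} \calC :: \Gm$ as an auxiliary invariant. There are exactly three rules for $\approx$, namely \textsc{(FL:Proc)}, \textsc{(FL:Dist)}, and \textsc{(FL:Compose)}, so the case analysis is short; the work in each case is to extract the right typing sub-derivations by inversion on the configuration-typing rules \textsc{(T:Proc)}, \textsc{(T:Dist)}, and \textsc{(T:Compose)} and then combine the inductive bounds using linearity of expectation.

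First I would handle the base case \textsc{(FL:Proc)}, where $\calC = \proc{c}{w,P}$ and $\mu = \{\proc{c}{w,P} : 1\}$. Inversion on \textsc{(T:Proc)} gives $\Dl \entailpot{q-w} P :: (c:A)$, hence $q - w \geq 0$, so $\bbE_{\calD \sim \mu}[\m{work}(\calD)] = w \leq q$. For the case \textsc{(FL:Dist)}, where $\calC = \proc{c}{\{\calC_i : p_i\}_{i \in \calI}}$, $\calC_i \approx \mu_i$ for each $i$, and $\mu = \tsum_{i \in \calI} p_i \cdot \mu_i$, inversion on \textsc{(T:Dist)} yields typing derivations $\Dl_i \potconf{q_i} \calC_i :: (c:A_i)$ with $\tsum_{i} p_i \cdot q_i = q$. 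The induction hypothesis gives $\bbE_{\calD \sim \mu_i}[\m{work}(\calD)] \le q_i$ for each $i$, and since expectation is linear in the distribution, $\bbE_{\calD \sim \mu}[\m{work}(\calD)] = \tsum_i p_i \cdot \bbE_{\calD \sim \mu_i}[\m{work}(\calD)] \le \tsum_i p_i \cdot q_i = q$. For the case \textsc{(FL:Compose)}, where $\calC = \calO \parallel \calC'$ with $\calO \approx \mu_1$ and $\calC' \approx \mu_2$ and $\mu = \{(\calD_1 \parallel \calD_2) : \mu_1(\calD_1)\cdot\mu_2(\calD_2)\}$, inversion on \textsc{(T:Compose)} gives $\Dl_1,\Dl' \potconf{q_1} \calO :: (c:A)$ and $\Dl_2 \potconf{q_2} \calC' :: (\Dl,\Dl')$ with $q = q_1 + q_2$. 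Since $\m{work}(\calD_1 \parallel \calD_2) = \m{work}(\calD_1) + \m{work}(\calD_2)$ and the product distribution makes $\calD_1$ and $\calD_2$ independent, linearity gives $\bbE_{\calD \sim \mu}[\m{work}(\calD)] = \bbE_{\calD_1 \sim \mu_1}[\m{work}(\calD_1)] + \bbE_{\calD_2 \sim \mu_2}[\m{work}(\calD_2)] \le q_1 + q_2 = q$ by the two induction hypotheses.

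The one subtlety that requires care—the main obstacle, such as it is—is that the simulation relation $\approx$ is defined purely syntactically on semantic objects, with no reference to typing, so I must first argue that whenever $\calC \approx \mu$ and $\calC$ is well-typed, the structural decomposition of $\calC$ in the $\approx$-derivation is compatible with the structural decomposition forced by the configuration-typing derivation. Concretely, \textsc{(FL:Compose)} decomposes $\calC$ as $\calO \parallel \calC'$, and \textsc{(T:Compose)} is the only typing rule that applies to a nonempty composition, so the decompositions must agree; likewise \textsc{(FL:Dist)} and \textsc{(T:Dist)} both single out a distribution object at the head. This is a routine syntax-directed argument (the shape of $\calC$ determines the last rule in both derivations), but it is the step where one must be explicit rather than hand-wavy. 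Once this alignment is observed, the rest is just bookkeeping with weighted sums and expectations, exactly paralleling the potential-balance computations already carried out in the preservation proof (\cref{The:Tree:PreservationSingle,The:Tree:PreservationComm}).
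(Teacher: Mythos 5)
Your proposal is correct and follows essentially the same route as the paper's proof: induction on the derivation of $\calC \approx \mu$, inversion on the configuration-typing judgment in each of the three cases, and linearity of expectation to combine the inductive bounds (with $q - w \ge 0$ from \textsc{(T:Proc)} in the base case). The alignment between the $\approx$-derivation and the typing derivation that you flag as a subtlety is handled implicitly in the paper by the syntax-directedness of both rule sets, exactly as you observe.
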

\begin{proof}
  By induction on the derivation of $C \approx \mu$, followed by inversion on $\Dl \potconf{q} \calC :: \Gm$.
  \begin{description}[labelindent=\parindent]
    \item[Case:]
    \begin{mathpar}\small
    \Rule{FL:Proc}
    {
    }
    { \proc{c}{w,P} \approx \delta(\proc{c}{w,P}) }
    \and
    \Rule{T:Proc}
    { \Dl \entailpot{q-w} P :: (c : A)
    }
    { \Dl \potconf{q} \proc{c}{w,P} :: (c : A) }
    \end{mathpar}
    
    By $q - w \ge 0$, we conclude that $q \ge w = \bbE_{\calD \sim \delta(\proc{c}{w,P})}[\m{work}(\calD)]$.
    
    \item[Case:]
    \begin{mathpar}\small
    \Rule{FL:Dist}
    { \Forall{i \in \calI} \calC_i \approx \mu_i
    }
    { \proc{c}{\{\calC_i : p_i\}_{ i \in \calI}} \approx \tsum_{i \in \calI} p_i \cdot \mu_i }
    \and
    \Rule{T:Dist}
    { \Forall{i \in \calI} \Dl_i \potconf{q_i} \calC_i :: (c : A_i) \\
      \tsum^{\m{L}}_{i \in \calI} p_i \cdot \Dl_i = \Dl \\\\
      \tsum^{\m{R}}_{i \in \calI} p_i \cdot A_i = A \\
      \tsum_{i \in \calI} p_i \cdot q_i = q
    }
    { \Dl \potconf{q} \proc{c}{\{\calC_i : p_i\}_{i \in \calI}} :: (c : A) }
    \end{mathpar}
    
    By induction hypothesis, for each $i \in \calI$, we have $\bbE_{\calD \sim \mu_i}[\m{work}(\calD)] \le q_i$.
    
    Thus
    \begin{small}
    \begin{align*}
      \bbE_{\calD \sim \tsum_{i \in \calI} p_i \cdot \mu_i}[\m{work}(\calD)] & = \tsum_{i \in \calI} p_i \cdot \bbE_{\calD \sim \mu_i}[\m{work}(\calD)] \\
      & \le \tsum_{i \in \calI} p_i \cdot q_i \\
      & = q.
    \end{align*} 
    \end{small}
    
    \item[Case:]
    \begin{mathpar}\small
    \Rule{FL:Compose}
    { \calO \approx \mu_1 \\
      \calC \approx \mu_2
    }
    { (\calO \parallel \calC) \approx (\mu_1 \otimes \mu_2) \bind \lambda(\calD_1,\calD_2). \delta(\calD_1 \parallel \calD_2) }  
    \and
    \Rule{T:Compose}
    { \Dl_1,\Dl' \potconf{q_1} \calO :: (c : A) \\
      \Dl_2 \potconf{q_2} \calC :: (\Gm, \Dl')
    }
    { \Dl_1,\Dl_2 \potconf{q_1+q_2} (\calO \parallel \calC) :: (\Gm,(c:A)) }
    \end{mathpar}
    
    By induction hypothesis, we have $\bbE_{\calD \sim \mu_1}[\m{work}(\calD)] \le q_1$ and $\bbE_{\calD \sim \mu_2}[\m{work}(\calD)] \le q_2$.

    Let $\mu \defeq (\mu_1 \otimes \mu_2) \bind \lambda(\calD_1,\calD_2). \delta(\calD_1 \parallel \calD_2)$. Thus
    \begin{small}
    \begin{align*}
      \bbE_{\calD \sim \mu}[\m{work}(\calD)] & = \bbE_{\calD_1 \sim \mu_1, \calD_2 \sim \mu_2}[\m{work}(\calD_1) + \m{work}(\calD_2)] \\
      & = \bbE_{\calD_1 \sim \mu_1}[\m{work}(\calD_1)] + \bbE_{\calD_2 \sim \mu_2}[\m{work}(\calD_2)] \\
      & \le q_1 + q_2 \\
      & = q.
    \end{align*}
    \end{small}
  \end{description}
\end{proof}

Then we prove that the simulation relation preserves the evaluation relation.

\begin{lemma}[Simulation]\label{Lem:SimulationStep}
  If $\calC \approx \mu$, $\calC' \approx \mu'$, and $\calC \sstep \calC'$ or $\calC \cstep{d,\kappa} \calC'$,
  then $\mu \dstep \mu'$.
\end{lemma}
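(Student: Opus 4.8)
The plan is to induct on the derivation of the nested step ($\calC \sstep \calC'$ or $\calC \cstep{d,\kappa} \calC'$), exploiting the fact that the simulation relation $\approx$ is essentially \emph{functional}: reading the rules \textsc{(FL:Proc)}, \textsc{(FL:Dist)}, \textsc{(FL:Compose)} bottom-up, every nested configuration determines a unique flattened distribution, which I will keep in the \emph{formal} (indexed) shape $\{\calD_i : p_i\}_{i \in \calI}$ that mirrors the universe-tree structure of $\calC$ rather than collapsing equal support configurations. This matters because $\dstep$ likewise operates on indexed families, and a single $\dstep$ step is allowed to ``pick'' any index, even a duplicated one. Under this reading $\mu$ and $\mu'$ are the flattenings of $\calC$ and $\calC'$, and it suffices to produce one derivation $\mu \dstep \mu'$.

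For the \textbf{atomic rules} -- \textsc{(E:Work)}, \textsc{(E:Def)}, and each two-object communication rule \textsc{(C:$\ichoiceop$)}, \textsc{(C:$\echoiceop$)}, \textsc{(C:$\tensor$)}, \textsc{(C:$\lolli$)}, \textsc{(C:$\one$)}, \textsc{(C:Id)}, \textsc{(C:$\paypot$)}, \textsc{(C:$\getpot$)}, \textsc{(C:$\pichoiceop$)}, \textsc{(C:$\pechoiceop$)} -- all semantic objects involved are process objects, so by \textsc{(FL:Proc)} and \textsc{(FL:Compose)} both $\mu$ and $\mu'$ are Dirac distributions $\delta(\calD)$, $\delta(\calD')$ on the corresponding non-nested configurations. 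Each such nested rule has an exact counterpart among the rules \textsc{(S:$\cdots$)}, hence $\calD \ostep \calD'$; then \textsc{(SP:Det)} gives $\calD \pstep \delta(\calD')$, and the first $\dstep$ rule with a singleton index set yields $\delta(\calD) \dstep \delta(\calD')$. The rule \textsc{(E:Flip)} is the only atomic rule that uses randomness: $\proc{c}{w, \eflip{p}{P_H}{P_T}}$ flattens to $\delta(\proc{c}{w,\eflip{p}{P_H}{P_T}})$, while its successor flattens by \textsc{(FL:Dist)}/\textsc{(FL:Proc)} to $p \cdot \delta(\proc{c}{w,P_H}) + (1-p)\cdot\delta(\proc{c}{w,P_T})$, which is precisely the distribution produced by \textsc{(SP:Flip)}; one $\dstep$ step closes the case.

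For the \textbf{distributed rules} -- \textsc{(E:Dist)}, \textsc{(C:Dist)}, \textsc{(C:SDist:R)}, \textsc{(C:SDist:L)}, \textsc{(C:BDist:D)}, \textsc{(C:BDist:R)}, \textsc{(C:BDist:L)} -- I appeal to the induction hypothesis on the sub-step in the premise. Two principles do the work. First, $\approx$ \emph{commutes with the reassociations} built into these rules: unfolding \textsc{(FL:Compose)} and \textsc{(FL:Dist)} shows that, e.g., $\proc{c}{w,Q}\parallel\proc{d}{\{\calC_i:p_i\}_{i\in\calI}}$ and $\proc{c}{\{(\proc{c}{w,Q}\parallel\calC_i):p_i\}_{i\in\calI}}$ flatten to the same formal distribution $\textstyle\sum_{i\in\calI} p_i\cdot(\delta(\proc{c}{w,Q})\otimes\nu_i)$ (with $\calC_i \approx \nu_i$), and similarly for the binary-distribution restructurings via bilinearity of the distribution product $\otimes$ -- the same algebraic identities already exercised in case \textsc{(C:BDist:D)} of \cref{The:Tree:PreservationComm}. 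Hence for \textsc{(C:SDist:R)}, \textsc{(C:SDist:L)}, \textsc{(C:BDist:D)}, \textsc{(C:BDist:R)}, \textsc{(C:BDist:L)} the flattenings of $\calC$ and of the reassociated configuration in the premise coincide, so the IH applied to the premise step gives $\mu \dstep \mu'$ directly. Second, for \textsc{(E:Dist)} and \textsc{(C:Dist)} the nested step acts on one summand $\calC_{i_0}$; the IH yields a single $\dstep$ step on $\nu_{i_0}$ choosing some index $j_0$ and performing $\calE_{j_0} \pstep \rho$, and re-indexing the whole flattening $\textstyle\sum_i p_i\cdot\nu_i$ by pairs $(i,j)$ and picking $(i_0,j_0)$ produces exactly a $\dstep$ step to $\textstyle\sum_{i\neq i_0} p_i\cdot\nu_i + p_{i_0}\cdot\nu'_{i_0}$, the flattening of the successor; the multiset-rewriting nature of $\pstep$ (stability under parallel context) is what makes this step legitimate inside the larger configuration.

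The \textbf{main obstacle} is the distribution bookkeeping that underlies both principles: making ``$\approx$ commutes with reassociation'' and ``$\dstep$ is a congruence for convex combinations'' precise requires committing to indexed/formal distributions throughout, so that $0.5\cdot\delta(\calD)+0.5\cdot\delta(\calD)$ is kept distinct from $\delta(\calD)$, and verifying that an index chosen by the IH in a sub-distribution lifts to a genuine index of the flattened outer distribution. Once that formalism is pinned down, every case is a routine unfolding structurally parallel to the corresponding case of \cref{The:Tree:PreservationComm}; I would isolate a short auxiliary lemma recording the ``flattening distributes over $\parallel$'' identity -- the distribution-level analogue of \cref{Lem:ConfExt} -- and invoke it uniformly across the distributed rules.
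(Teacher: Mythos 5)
Your proposal is correct and follows essentially the same route as the paper's proof: induction on the step derivation, discharging the atomic rules via \textsc{(SP:Det)}/\textsc{(SP:Flip)} on Dirac flattenings, and handling \textsc{(E:Dist)}/\textsc{(C:Dist)} by lifting the index chosen by the induction hypothesis while handling the \textsc{SDist}/\textsc{BDist} reassociations by showing the flattenings of the conclusion and premise configurations coincide (bilinearity of $\otimes$ over convex sums). Your explicit attention to keeping distributions formal/indexed is a point the paper leaves implicit but does not change the argument.
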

\begin{proof}
  By induction on the derivation of $\calC \sstep \calC'$ or $\calC \cstep{d,\kappa} \calC'$.
  \begin{description}[labelindent=\parindent]
    \item[Case:]
    \[\small\Rule{E:Flip}
    {
    }
    { \proc{c}{w,\eflip{p}{P_H}{P_T}} \sstep \proc{c}{\{ \proc{c}{w,P_H} : p, \proc{c}{w,P_T} : 1-p \}} }
    \]
    
    In this case, $\mu = \delta(\proc{c}{w,\eflip{p}{P_H}{P_T}})$ and $\mu' = p \cdot \delta(\proc{c}{w,P_H}) + (1-p) \cdot \delta(\proc{c}{w,P_T})$.
    
    Then we conclude by (\textsc{SP:Flip}).
    
    \item[Case:]
    \[\small
    \Rule{E:Dist}
    { \calC_{i_0} \sstep \calC_{i_0}'
    }
    { \proc{c}{\{ \calC_i : p_i \}_{i \in \calI}} \sstep \proc{c}{\{ \calC_i : p_i \}_{i \in \calI \setminus \{i_0\}} \dplus \{ \calC'_{i_0} : p_{i_0} \}} }
    \]
    
    By inversion on the simulation relations, we know that $\mu = \tsum_{i \in \calI} p_i \cdot \mu_i$ where $\\calC_i \approx \mu_i$ for each $i \in \calI$, and $\mu' = \tsum_{i \in \calI \setminus \{i_0\}} p_i \cdot \mu_i + p_{i_0} \cdot \mu_{i_0}'$ where $\calC_{i_0}' \approx \mu_{i_0}'$.
    
    By induction hypothesis, we know that $\mu_{i_0} \dstep \mu_{i_0}'$, i.e., there exists some $\calD$ in the domain of $\mu_{i_0}$ that can make a step under the $\pstep$ relation.
    
    Thus, we can add to both sides of $\mu_{i_0} \dstep \mu_{i_0}'$ and conclude that $\tsum_{i \in \calI \setminus \{ i_0\}} p_i \cdot \mu_i + p_{i_0} \cdot \mu_{i_0} \dstep \tsum_{i \in \calI \setminus \{ i_0\}} p_i \cdot \mu_i + p_{i_0} \cdot \mu_{i_0}'$.
    
    \item[Case:]
    \[\small\Rule{C:BDist:D}
    { \proc{c}{\{ (\calC_i \parallel \calC_j') : p_i \cdot p_j' \}_{i \in \calI,j\in\calJ}} \cstep{d,\m{det}} \calC''
    }
    { \proc{c}{\{ \calC_i : p_i\}_{i \in \calI}} \parallel \proc{d}{\{ \calC_j' : p_j'\}_{j \in \calJ}} \cstep{d,\m{det}} \calC''  }
    \]
    
    In this case, let $\calC_i \approx \mu_i$ for each $i \in \calI$ and $\calC_j' \approx \mu_j'$ for each $j \in \calJ$,
    then
    \begin{small}
    \begin{align*}
      \proc{c}{\{ \calC_i : p_i\}_{i \in \calI}} \parallel \proc{d}{\{ \calC_j' : p_j'\}_{j \in \calJ}} & \approx (\tsum_{i \in \calI} p_i \cdot \mu_i) \otimes (\tsum_{j \in \calJ} p_j' \cdot \mu_j') \bind \lambda(\calD_1,\calD_2). \delta(\calD_1 \parallel \calD_2) \\
      & = \tsum_{i \in \calI,j \in \calJ} (p_i \cdot p_j') \cdot (\mu_i \otimes \mu_j' \bind \lambda(\calD_1,\calD_2).\delta(\calD_1 \parallel \calD_2)), \\
      \proc{c}{\{ (\calC_i \parallel \calC_j') : p_i \cdot p_j' \}_{i \in \calI,j\in\calJ}} & \approx \tsum_{i \in \calI, j \in \calJ} (p_i \cdot p_j') \cdot (\mu_i \otimes \mu_j' \bind \lambda(\calD_1,\calD_2).\delta(\calD_1 \parallel \calD_2)),
    \end{align*}
    \end{small}
    thus, by induction hypothesis, we conclude that $\mu \dstep \mu'$.
    
    \item[Case:]
    \[\small
    \Rule{C:BDist:R}
    {  \proc{c}{\{ (\proc{c}{\{ \calC_i : p_i\}_{i \in \calI}} \parallel \calC_j') : p'_j  \}_{j \in \calJ}} \cstep{d,\pechoiceop} \calC''
    }
    { \proc{c}{\{ \calC_i : p_i \}_{i \in \calI}} \parallel \proc{d}{\{ \calC'_j : p'_j \}_{j \in \calJ}} \cstep{d,\pechoiceop} \calC'' }
    \]
    
     In this case, let $\calC_i \approx \mu_i$ for each $i \in \calI$ and $\calC_j' \approx \mu_j'$ for each $j \in \calJ$,
    then
    \begin{small}
    \begin{align*}
      \proc{c}{\{ \calC_i : p_i\}_{i \in \calI}} \parallel \proc{d}{\{ \calC_j' : p_j'\}_{j \in \calJ}} & \approx (\tsum_{i \in \calI} p_i \cdot \mu_i) \otimes (\tsum_{j \in \calJ} p_j' \cdot \mu_j') \bind \lambda(\calD_1,\calD_2). \delta(\calD_1 \parallel \calD_2) \\
      & = \tsum_{i \in \calI,j \in \calJ} (p_i \cdot p_j') \cdot (\mu_i \otimes \mu_j' \bind \lambda(\calD_1,\calD_2).\delta(\calD_1 \parallel \calD_2)), \\
      \proc{c}{\{ (\proc{c}{\{ \calC_i : p_i\}_{i \in \calI}} \parallel \calC_j') : p'_j  \}_{j \in \calJ}} & \approx \tsum_{j \in \calJ} p_j' \cdot (  (\tsum_{i \in \calI} p_i \cdot \mu_i) \otimes \mu_j' \bind \lambda(\calD_1,\calD_2). \delta(\calD_1 \parallel \calD_2) ) \\
      & = \tsum_{j \in \calJ'} p_j' \cdot (\tsum_{i \in \calI} p_i \cdot (\mu_i \otimes \mu_j' \bind \lambda(\calD_1,\calD_2). \delta(\calD_1 \parallel \calD_2) ) ) \\
      & = \tsum_{i \in \calI,j \in \calJ} (p_i \cdot p_j') \cdot (\mu_i \otimes \mu_j' \bind \lambda(\calD_1,\calD_2). \delta(\calD_1 \parallel \calD_2)),
    \end{align*}
    \end{small}
    thus, by induction hypothesis, we conclude that $\mu \dstep \mu'$.
  \end{description}
\end{proof}

Let us fix some initial configuration $\calC$ such that $(\cdot) \potconf{q} \calC :: \Gm$.
By fixing a scheduler to resolve nondeterminism and thinking the configuration-to-distribution relation $\pstep$ as a kernel, we can construct a Markov chain $\{\calD_n\}_{n \in \bbN}$ on configurations.
Note that by \cref{Lem:SimulationStep}, there exists a sequence of \emph{hidden} states $\{\calC_n\}_{n \in \bbN}$ such that $\calC_n \approx \mu_n$ where $\mu_n$ is the distribution of configurations after $n$ steps in the Markov chain.
Also, by \cref{The:Tree:PreservationSingle,The:Tree:PreservationComm,The:Progress}, we know that $(\cdot) \potconf{q} \calC_n :: \Gm$ for all $n \in \bbN$.
Therefore, by \cref{Lem:SimulationInvariant}, we know that $\bbE_{\calD \sim \mu_n}[\m{work}(\calD)] \le q$ for all $n \in \bbN$.
Define $Q_n \defeq \m{work}(\calD_n)$ for each $n  \in \bbN$.
Because work is nonnegative,
$\{Q_n\}_{n \in \bbN}$ forms a nonnegative, monotone, integer-valued stochastic process such that $\bbE[Q_n] \le q$ for all $n \in \bbN$, with respect to the Markov chain $\{\calD_n\}_{n \in \bbN}$.

Let $T$ be the \emph{termination time}, i.e., $T(\omega) \defeq \inf \{ \omega_n \poised \mid n \in \bbN \}$ a random variable on the Markov chain $\{\calD_n\}_{n \in \bbN}$.
Then the random variable $Q_T \defeq \lambda\omega. Q_{T(\omega)}$ represents the total work conductions by the execution trace $\omega$.
Thus, the \emph{expected total work} can be defined by $\bbE[Q_T]$.
Define $Q^{T}_n \defeq \lambda \omega. Q_{\min(T(\omega),n)}$ for $n \in \bbN$.
Then we have $Q_T = \lim_{n \to \infty} Q^{T}_n$ and $Q^T_n \le Q_n$ by the monotonicity of $\{Q_n\}_{n \in \bbN}$.
Therefore, by \emph{Monotone Convergence Theorem}~\cite{book:Williams91}, we conclude that $\bbE[Q_T] = \lim_{n \to \bbN} \bbE[Q^T_n] \le q$, i.e., the expected total work is upper-bounded by $q$.

\subsection{A Partially Successful Attempt}
\label{appendix:failed-attempt}

In this section, we present a \emph{partially successful} approach to developing the meta-theory of \lang{}.
This approach is able to handle probabilistic internal and external choices, termination and forwarding, spawning, and potential passing.
However, this approach would \emph{fail}, if we add either standard internal and external choices, or channel passing, to the feature set.
Note that this approach is \emph{not} a conservative extension of resource-aware session types~\cite{Das18RAST}.

The weighted sums are inductively (and differently from the version in the nested-multiverse development) defined on the structure of the types:
\begin{small}
\begin{align*}
  \tsum^{\m{R}}_{i \in \calI} p_i \cdot \pichoice{\ell^{p_\ell'}: A_{\ell,i}}_{\ell \in L} & \defeq \pichoice{\ell^{p_\ell'} : \tsum^{\m{R}}_{i \in \calI} p_i \cdot A_{\ell,i}}_{\ell \in L} \\
  \tsum^{\m{R}}_{i \in \calI} p_i \cdot \pichoice{\ell^{p_{\ell,i}} : A_\ell}_{\ell \in L} & \defeq \pichoice{\ell^{\tsum_{i \in \calI} p_i \cdot p_{\ell,i}} : A_\ell}_{\ell \in L} \\
  \tsum^{\m{R}}_{i \in \calI} p_i \cdot \pechoice{\ell^{p_\ell'} : A_{\ell,i}}_{\ell \in L} & \defeq \pechoice{\ell^{p_\ell'} : \tsum^{\m{R}}_{i \in \calI} p_i \cdot A_{\ell,i}}_{\ell \in L} \\
  \tsum^{\m{R}}_{i \in \calI} p_i \cdot \one & \defeq \one \\
  \tsum^{\m{R}}_{i \in \calI} p_i \cdot (\tpaypot{A_i}{r}) & \defeq \tpaypot{(\tsum^{\m{R}}_{i \in \calI} p_i \cdot A_i)}{r} \\
  \tsum^{\m{R}}_{i \in \calI} p_i \cdot (\tgetpot{A_i}{r}) & \defeq \tgetpot{(\tsum^{\m{R}}_{i \in \calI} p_i \cdot A_i)}{r} \\
  \\
  \tsum^{\m{L}}_{i \in \calI} p_i \cdot \pichoice{\ell^{p_\ell'} : A_{\ell,i}}_{\ell \in L} & \defeq \pichoice{\ell^{p_\ell'} : \tsum^{\m{L}}_{i \in \calI} p_i \cdot A_{\ell,i}}_{\ell \in L} \\
  \tsum^{\m{L}}_{i \in \calI} p_i \cdot \pechoice{\ell^{p_\ell'} : A_{\ell,i}}_{\ell \in L} & \defeq \pechoice{\ell^{p_\ell'} : \tsum^{\m{L}}_{i \in \calI} p_i \cdot A_{\ell,i}}_{\ell \in L} \\
  \tsum^{\m{L}}_{i \in \calI} p_i \cdot \pechoice{\ell^{p_{\ell,i}} : A_\ell}_{\ell \in L} & \defeq \pechoice{\ell^{\tsum_{i \in \calI} p_i \cdot p_{\ell,i}} : A_\ell}_{\ell \in L} \\
  \tsum^{\m{L}}_{i \in \calI} p_i \cdot \one & \defeq \one \\
  \tsum^{\m{L}}_{i \in \calI} p_i \cdot (\tpaypot{A_i}{r}) & \defeq \tpaypot{(\tsum^{\m{L}}_{i \in \calI} p_i \cdot A_i)}{r} \\
  \tsum^{\m{L}}_{i \in \calI} p_i \cdot (\tgetpot{A_i}{r}) & \defeq \tgetpot{(\tsum^{\m{L}}_{i \in \calI} p_i \cdot A_i)}{r}
\end{align*}
\end{small}

We first prove a key property for weighted sums.

\begin{proposition}\label{Prop:LRRearrange}
  If $A = \tsum^{\m{R}}_{i \in \calI} p_i \cdot A_i$ and $A = \tsum^{\m{L}}_{\ell \in L} p_\ell' \cdot A_\ell'$, then there exists $\{A_{\ell,i}\}_{\ell \in L, i \in \calI}$ such that $A_\ell' = \tsum^{\m{R}}_{i \in \calI} p_i \cdot A_{\ell,i}$ for $\ell \in L$ and $A_i = \tsum^{\m{L}}_{\ell \in L} p_\ell' \cdot A_{\ell,i}$ for $i \in \calI$.
\end{proposition}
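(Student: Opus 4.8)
The plan is to prove \cref{Prop:LRRearrange} by induction on the structure of the type $A$ (equivalently, on the derivation of one of the two weighted-sum hypotheses), exploiting the crucial observation that $+^{\m{R}}$ and $+^{\m{L}}$ act on disjoint syntactic positions: $+^{\m{R}}$ only ever reweights the outermost label probabilities of a $\pichoice{\cdots}$, $+^{\m{L}}$ only ever reweights the outermost label probabilities of a $\pechoice{\cdots}$, and on every other constructor ($\one$, $\tpaypot{\cdot}{r}$, $\tgetpot{\cdot}{r}$, and the ``wrong-polarity'' choice at the top) both operators merely recurse componentwise into subterms. Consequently, in both $A = \tsum^{\m{R}}_{i} p_i \cdot A_i$ and $A = \tsum^{\m{L}}_{\ell} p_\ell' \cdot A_\ell'$, every $A_i$ and every $A_\ell'$ carries the same top-level constructor as $A$, so the argument reduces to a case analysis on that constructor, and well-definedness worries (the potential overlap of the two $\pichoiceop$ clauses, and equirecursive unfolding of type names $V$) are harmless since the equalities are given as hypotheses.

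For the ``transparent'' constructors the proof is mechanical: if $A = \one$ put $A_{\ell,i} = \one$; if $A = \tpaypot{B}{r}$ (or $\tgetpot{B}{r}$), invert both hypotheses to expose $B = \tsum^{\m{R}}_i p_i \cdot B_i$ and $B = \tsum^{\m{L}}_\ell p_\ell' \cdot B_\ell'$, apply the induction hypothesis to $B$ to obtain $\{B_{\ell,i}\}$, and set $A_{\ell,i} = \tpaypot{B_{\ell,i}}{r}$ (resp. $\tgetpot{B_{\ell,i}}{r}$). The interesting case is $A = \pechoice{m^{q_m} : B_m}_{m \in M}$: the $+^{\m{R}}$ hypothesis can only be the componentwise rule, so $A_i = \pechoice{m^{q_m} : B_{m,i}}$ with $B_m = \tsum^{\m{R}}_i p_i \cdot B_{m,i}$, while the $+^{\m{L}}$ hypothesis splits in two. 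In the componentwise subcase $A_\ell' = \pechoice{m^{q_m} : B_{m,\ell}'}$ with $B_m = \tsum^{\m{L}}_\ell p_\ell' \cdot B_{m,\ell}'$, I apply the induction hypothesis to each $B_m$ to get $\{B_{m,\ell,i}\}$ and set $A_{\ell,i} = \pechoice{m^{q_m} : B_{m,\ell,i}}$; in the reweighting subcase $A_\ell' = \pechoice{m^{q_{m,\ell}} : B_m}$ with $q_m = \tsum_\ell p_\ell' q_{m,\ell}$, I set $A_{\ell,i} = \pechoice{m^{q_{m,\ell}} : B_{m,i}}$ and verify the two required identities by pushing the scalar sums through the constructor, e.g. $\tsum^{\m{L}}_\ell p_\ell' \cdot \pechoice{m^{q_{m,\ell}} : B_{m,i}} = \pechoice{m^{\tsum_\ell p_\ell' q_{m,\ell}} : B_{m,i}} = \pechoice{m^{q_m} : B_{m,i}} = A_i$ and dually $\tsum^{\m{R}}_i p_i \cdot A_{\ell,i} = \pechoice{m^{q_{m,\ell}} : B_m} = A_\ell'$. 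The case $A = \pichoice{m^{q_m} : B_m}_{m \in M}$ is fully symmetric with the roles of $+^{\m{R}}$ and $+^{\m{L}}$ interchanged.

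The main obstacle is precisely these choice cases. Because either decomposition of a $\pichoiceop$- or $\pechoiceop$-headed type may be realized by two different defining clauses (recurse into continuations vs. rescale top labels), a naïve ``apply the induction hypothesis componentwise'' does not suffice: when one decomposition reweights the outermost labels one must thread the branch-dependent probabilities $q_{m,\ell}$ (respectively the branch-dependent continuations) into the constructed witnesses $A_{\ell,i}$ and re-check both weighted-sum equations by hand, using only associativity and commutativity of the scalar weighting. Everything else — the transparent constructors, the composition of the two subcases, and exhaustiveness of the case split — is routine bookkeeping.
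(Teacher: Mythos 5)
Your proof is correct and follows essentially the same route as the paper's: structural induction on $A$, a case split on which defining clause of the weighted sum is used at a choice constructor, applying the induction hypothesis to the continuations in the componentwise subcase and redistributing the outermost label probabilities in the reweighting subcase. The only cosmetic difference is that you work out the $\pechoiceop$ case in detail where the paper works out the dual $\pichoiceop$ case.
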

\begin{proof}  
  By induction on the structure of $A$. We show the proof for two representative cases.
  
  \begin{itemize}
    \item $A = \pichoice{j^{q_j} : B_j}_{j \in \calJ}$:
    
    By $A = \tsum^{\m{L}}_{\ell \in L} p_\ell' \cdot A_{\ell}'$, we know that $A_\ell' = \pichoice{j^{q_j} : B_{j,\ell}}_{j \in \calJ}$ for each $\ell \in L$ such that $B_j = \tsum^{\m{L}}_{\ell \in L} p_\ell' \cdot B_{j,\ell}$ for each $j \in \calJ$.
    
    We proceed by case analysis on $A = \tsum^{\m{R}}_{i \in \calI} p_i \cdot A_i$.
    \begin{itemize}
      \item If $A_i = \pichoice{j^{q_{j,i}} : B_j}_{j \in \calJ}$ for each $i \in \calI$ such that $q_j = \tsum_{i \in \calI} p_i \cdot q_{j,i}$ for each $j \in \calJ$:
      
      Define $A_{\ell,i} \defeq \pichoice{j^{q_{j,i}} : B_{j,\ell}}_{j \in \calJ}$ for each $\ell \in L$ and $i \in \calI$.
      
      Then for each $\ell \in L$,
      \[
      \tsum^{\m{R}}_{i \in \calI} p_i \cdot A_{\ell,i} = \pichoice{j^{\tsum_{i \in \calI} p_i \cdot q_{j,i}}: B_{j,\ell}}_{j \in \calJ} = \pichoice{j^{q_j} : B_{j,\ell}}_{j \in \calJ} = A_\ell',
      \]
      and for each $i \in \calI$,
      \[
      \tsum^{\m{L}}_{\ell \in L} p_\ell' \cdot A_{\ell,i} = \pichoice{\ell^{q_{j,i}} : \tsum^{\m{L}}_{\ell \in L} p_\ell' \cdot B_{j,\ell}}_{j \in \calJ} = \pichoice{\ell^{q_{j,i}}: B_j}_{j \in \calJ} = A_i.
      \]
      
      \item If $A_i = \pichoice{j^{q_j} : B_{j,i}}_{j \in \calJ}$ for each $i \in \calI$ such that $B_j = \tsum^{\m{R}}_{i \in \calI} p_i \cdot B_{j,i}$ for each $j \in \calJ$:
      
      For each $j \in \calJ$, by induction hypothesis, there exists $\{B_{j,\ell,i}\}_{\ell \in L, i\in \calI}$ such that $B_{j,\ell} = \tsum^{\m{R}}_{i \in \calI} p_i \cdot B_{j,\ell,i}$ for $\ell \in L$ and $B_{j,i}  = \tsum^{\m{L}}_{\ell \in L} p_\ell' \cdot B_{j,\ell,i}$ for $i \in \calI$.
      
      Define $A_{\ell,i} \defeq \pichoice{j^{q_j} : B_{j,\ell,i}}_{j \in \calJ}$ for each $\ell \in L$ and $i \in \calI$.
      
      Then for each $\ell \in L$,
      \[
      \tsum^{\m{R}}_{i \in \calI} p_i \cdot A_{\ell,i} = \pichoice{j^{q_j}: \tsum^{\m{R}}_{i \in \calI} p_i \cdot B_{j,\ell,i}}_{j \in \calJ} = \pichoice{j^{q_j} : B_{j,\ell}}_{j \in \calJ}  = A_\ell',
      \]
      and for each $i \in \calI$,
      \[
      \tsum^{\m{L}}_{\ell \in L} p_\ell' \cdot A_{\ell,i} = \pichoice{j^{q_j}: \tsum^{\m{L}}_{\ell \in L} p_\ell' \cdot B_{j,\ell,i}}_{j \in \calJ} = \pichoice{j^{q_j} : B_{j,i}}_{j \in \calJ} = A_i.
      \]
    \end{itemize}
    
    \item $A = \tpaypot{B}{r}$:
    
    By $A = \tsum^{\m{L}}_{\ell \in L} p_\ell' \cdot A_\ell'$, we know that $A_\ell' = \tpaypot{B_\ell'}{r}$ for each $\ell \in L$ such that $B = \tsum_{\ell \in L}^{\m{L}} p_\ell' \cdot B_\ell'$.
    
    By $A = \tsum^{\m{R}}_{\ell \in L} p_i \cdot A_i$, we know that $A_i = \tpaypot{B_i}{r}$ for each $i \in \calI$ such that $B = \tsum_{i \in \calI}^{\m{R}} p_i \cdot B_i$.
    
    By induction hypothesis, there exists $\{B_{\ell,i}\}_{\ell \in L,i \in \calI}$ such that $B_\ell' = \tsum^{\m{R}}_{i \in \calI} p_i \cdot B_{\ell,i}$ for $\ell \in L$ and $B_i = \tsum^{\m{L}}_{\ell \in L} p_\ell' \cdot B_{\ell,i}$ for $i \in \calI$.
    
    We conclude by defining $A_{\ell,i} \defeq \tpaypot{B_{\ell,i}}{r}$ for each $\ell \in L$ and $i \in \calI$.
  \end{itemize}
\end{proof}

Then, we show that in the type derivation, if one type in the context is a weighted sum of a type distribution, then we can re-derive a type judgment for each type in the support of the distribution.

\begin{lemma}\label{Lem:ProcPropagateLeft}
  If $\Dl,(x:A) \entailpot{q} P :: (z : C)$ and $A = \tsum_{i \in \calI}^{\m{R}} p_i \cdot A_i$,
  then there exist $\{\Dl_i\}_{i \in \calI}$, $\{C_i\}_{i \in \calI}$, and $\{q_i\}_{i \in \calI}$ such that
  \begin{itemize}
    \item for all $i \in \calI$, $\Dl_i,(x:A_i) \entailpot{q_i} P :: (z : C_i)$, and
    \item $\tsum_{i \in \calI}^{\m{L}} p_i \cdot \Dl_i = \Dl$, $\tsum_{i \in \calI}^{\m{R}} p_i \cdot C_i = C$, $\tsum_{i \in \calI} p_i \cdot q_i = q$.
  \end{itemize}
\end{lemma}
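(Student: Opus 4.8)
\textbf{Proof plan for \cref{Lem:ProcPropagateLeft}.}
The plan is to proceed by induction on the typing derivation $\Dl,(x:A) \entailpot{q} P :: (z : C)$. The key observation is that the channel $x$ with type $A$ appears in the context, so in each typing rule, $x$ is either the \emph{principal} channel of the rule (i.e., $P$ communicates along $x$ at the top level), or $x$ is a spectator that is simply passed along to one or more premises. I would organize the cases accordingly.

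First I would handle the cases where $x$ is principal, i.e., the rules ${\pechoiceop}L$, ${\oplus}L$ (standard internal choice left), ${\lolli}L$, ${\one}L$, ${\paypot}L$, ${\getpot}L$, and $\m{id}$ when $z$ is forwarded to $x$. In the probabilistic case ${\pechoiceop}L$, the type $A = \pechoice{\ell^{p_\ell} : A_\ell}_{\ell \in L}$ and the weighted-sum decomposition $A = \tsum^{\m{R}}_{i} p_i \cdot A_i$ can fall into two subcases paralleling \cref{Prop:LRRearrange}: either the outer probabilities split ($A_i = \pechoice{\ell^{p_{\ell,i}} : A_\ell}$ with $p_\ell = \tsum_i p_i p_{\ell,i}$), or the continuations split ($A_i = \pechoice{\ell^{p_\ell} : A_{\ell,i}}$). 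In the first subcase the process term $P = \esendlp{x}{k}\semi P'$ forces $p_k=1,\ p_j=0$, and the split must itself be degenerate (each $A_i$ with nonzero $p_i$ must have $p_{k,i}=1$), so we can re-derive ${\pechoiceop}L$ uniformly; in the second subcase the continuation $A_k$ inherits a weighted sum and we recurse via the induction hypothesis on $P'$. For the non-probabilistic principal rules, the shallow definition of $+^{\m{R}}$ forces $A = A_i$ for all $i$ (since these type constructors are not $\pichoiceop$- or $\pechoiceop$-headed at the top), so the decomposition is trivial and we simply copy the derivation $|\calI|$ times.

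Next I would handle the cases where $x$ is a spectator. The rules with a single premise receiving the full context ($\oplus R$, $\with R$ selecting one branch, $\tensor R$, $\lolli R$, $\one L$ on another channel, $\m{work}$, $\paypot R$, $\getpot R$, $\weak$, the pay/get rules on other channels) are routine: apply the induction hypothesis to the single premise and reassemble, propagating the weighted sums on $\Dl$ and the potential through unchanged. The rules that branch ($\with R$ with all branches, $\oplus L$ on another channel, ${\pechoiceop}R$, ${\pichoiceop}L$ on another channel, and especially $\m{flip}$) require applying the induction hypothesis to \emph{each} premise and then combining the resulting families; here one must check that the weighted-sum and weighted-potential equations compose correctly (an associativity/Fubini-style rearrangement of double sums, exactly like the calculations already appearing in the ${\pichoiceop}L$ subcase of the \textsc{C:SDist:R} proof). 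The $\m{spawn}$ rule is handled by noting $x$ must live in the continuation's context (not the spawned process's), so the induction hypothesis applies to $Q$.

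I expect the main obstacle to be the interaction between ${\pechoiceop}R$ (or ${\pichoiceop}L$, or $\m{flip}$) appearing \emph{below} $x$ in the derivation together with a weighted-sum decomposition of $A$ that is itself nontrivial: there the type of $x$ is already being written as a $+^{\m{L}}$-weighted sum in the rule's premises, and we are handed an additional $+^{\m{R}}$-weighted sum on the same $A$, so we must commute the two. This is precisely where \cref{Prop:LRRearrange} is invoked to produce the doubly-indexed family $\{A_{\ell,i}\}$, and the bookkeeping of reindexing $\{\Dl_i\}$, $\{C_i\}$, $\{q_i\}$ through both decompositions simultaneously is the delicate part. Everything else is routine structural induction. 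I should also note in passing that this lemma is stated only for the restricted feature set (no standard internal/external choice, no channel passing) as emphasized in the section header, so the spectator cases for $\tensor, \lolli, \oplus, \with$ do not actually need to be treated in full generality here --- only the probabilistic choices, $\one$, forwarding, spawning, and potential passing.
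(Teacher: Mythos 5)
Your skeleton --- induction on the typing derivation, \cref{Prop:LRRearrange} to commute the given $+^{\m{R}}$ decomposition of $A$ with the $+^{\m{L}}$ decompositions introduced by $\m{flip}$, ${\pechoiceop}R$, and spectator probabilistic cases, followed by Fubini-style rearrangement of double sums --- does match the paper's proof. But two things go wrong. First, this lemma belongs to the appendix's alternative development, which replaces the shallow weighted sums of \cref{fig:prob-split} by sums defined \emph{inductively on the structure of types} (e.g.\ $\tsum^{\m{R}}_{i} p_i \cdot (\tpaypot{A_i}{r}) \defeq \tpaypot{(\tsum^{\m{R}}_{i} p_i \cdot A_i)}{r}$). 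So your claim that for the non-probabilistic principal rules ``$A = A_i$ for all $i$, so we copy the derivation'' fails here: in the ${\paypot}L$, ${\getpot}L$, and $\m{id}$ cases with $x$ principal the $A_i$ genuinely differ, and the first two require the induction hypothesis on the continuation type.

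Second, and more seriously, you attach the two-subcase analysis to the wrong rule and thereby omit the crux of the proof. Under $\tsum^{\m{R}}$ the outer label probabilities can split only for $\pichoiceop$; for $\pechoiceop$ the $\tsum^{\m{R}}$ sum recurses into the continuations only. Hence ${\pechoiceop}L$ with $x$ principal (your $\esendlp{x}{k} \semi P'$ case) has exactly one subcase and your degeneracy argument never arises, whereas the rule with two subcases is ${\pichoiceop}L$ with $x$ \emph{itself} the scrutinee of $\epcase{x}{\ell}{Q_\ell}_{\ell \in L}$ --- a case missing from your list, since you treat ${\pichoiceop}L$ only ``on another channel.'' In its outer-probability subcase, $A_i = \pichoice{\ell^{p'_{\ell,i}} : A'_\ell}_{\ell \in L}$ with $p'_\ell = \tsum_{i} p_i \cdot p'_{\ell,i}$, and the resolution is neither degenerate nor an appeal to the induction hypothesis: one must \emph{re-weight} the already-typed branches, defining $\Dl_i \defeq \tsum^{\m{L}}_{\ell} p'_{\ell,i} \cdot \Dl_\ell$, $C_i \defeq \tsum^{\m{R}}_{\ell} p'_{\ell,i} \cdot C_\ell$, and $q_i \defeq \tsum_{\ell} p'_{\ell,i} \cdot q_\ell$, re-apply ${\pichoiceop}L$ with the new label probabilities, and verify the three equations by exchanging the double sums. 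Your proposal has no mechanism that produces this construction. A smaller omission: in $\m{spawn}$ the channel $x$ may be among the channels handed to the spawned process, which forces the induction hypothesis onto the definition body and makes the lemma mutually recursive with its $+^{\m{L}}$ companion \cref{Lem:ProcPropagateRight}, contrary to your claim that ``$x$ must live in the continuation's context.''
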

\begin{proof}
  By induction on the derivation of $\Dl,(x:A) \entailpot{q} P :: (z : C)$.
  
  \begin{itemize}
    \item
    $\small
    \Rule{$\m{flip}$}
    { p'_H = p' \\
      p'_T = 1 - p' \\
      L = \{H,T\} \\
      \Dl = \tsum_{\ell \in L}^{\m{L}} p'_\ell \cdot \Dl_\ell \\
      A = \tsum_{\ell \in L}^{\m{L}} p'_\ell \cdot A_\ell' \\
      C = \tsum_{\ell \in L}^{\m{R}} p'_\ell \cdot C_\ell \\
      q = \tsum_{\ell \in L} p'_\ell \cdot q_\ell \\
      \Forall{ \ell \in L } \Dl_\ell, (x:A_\ell') \entailpot{q_\ell} Q_\ell :: (z : C_\ell)
    }
    { \Dl, (x:A) \entailpot{q} \eflip{p'}{Q_H}{Q_T} :: (z : C) }
    $
    
    Because $A = \tsum^{\m{R}}_{i \in \calI} p_i \cdot A_i$ and $A = \tsum^{\m{L}}_{\ell \in L} p_\ell' \cdot A_\ell'$,
    by \cref{Prop:LRRearrange},
    there exists $\{A_{\ell,i}\}_{\ell\in L,i \in \calI}$ such that $A_\ell' = \tsum^{\m{R}}_{i \in \calI} p_i \cdot A_{\ell,i}$ for $\ell \in L$ and $A_i = \tsum^{\m{L}}_{\ell \in L} p_\ell' \cdot A_{\ell,i}$ for $i \in \calI$.
    
    By induction hypothesis, for each $\ell \in L$, there exist $\{\Dl_{\ell,i}\}_{ i \in \calI}$, $\{C_{\ell,i}\}_{i \in \calI}$, and $\{q_{\ell,i}\}_{i\in \calI}$ such that
    \begin{itemize}
      \item for all $i \in \calI$, $\Dl_{\ell,i}, (x:A_{\ell,i}) \entailpot{q_{\ell,i}} Q_\ell :: (z : C_{\ell,i})$, and
      \item $\tsum^{\m{L}}_{i \in \calI} p_i \cdot \Dl_{\ell,i} = \Dl_\ell$,
      $\tsum^{\m{R}}_{i \in \calI} p_i \cdot C_{\ell,i} = C_\ell$,
      $\tsum_{i \in \calI} p_i \cdot q_{\ell,i} = q_\ell$.
    \end{itemize}
    
    For each $i \in \calI$, define $\Dl_i \defeq \tsum_{\ell \in L}^{\m{L}} p_\ell' \cdot \Dl_{\ell,i}$,
    $C_i \defeq \tsum_{\ell \in L}^{\m{R}} p_\ell' \cdot C_{\ell,i}$,
    $q_i \defeq \tsum_{\ell \in L} p_\ell' \cdot q_{\ell,i}$.
    
    Then $\Dl_i,(x : A_i) \entailpot{q_i} \eflip{p'}{Q_H}{Q_T} :: (z : C_i)$.
    
    We conclude by the following:
    \begin{small}
    \begin{align*}
      \tsum_{i \in \calI}^{\m{L}} p_i \cdot \Dl_i & = \tsum_{i \in \calI}^{\m{L}} p_i \cdot \tsum_{\ell \in L}^{\m{L}} p'_\ell \cdot \Dl_{\ell,i} \\
      & = \tsum_{\ell \in L}^{\m{L}} p_\ell' \cdot \tsum_{i \in \calI}^{\m{L}} p_i \cdot \Dl_{\ell,i} \\
      & = \tsum_{\ell \in L}^{\m{L}} p_\ell' \cdot \Dl_\ell \\
      & = \Dl, \\
      \tsum_{i \in \calI}^{\m{R}} p_i \cdot C_i & = \tsum_{i \in \calI}^{\m{R}} p_i \cdot \tsum_{\ell \in L}^{\m{R}} p_\ell' \cdot C_{\ell,i} \\
      & = \tsum_{\ell \in L}^{\m{R}} p_\ell' \cdot \tsum_{i \in \calI}^{\m{R}} p_i \cdot C_{\ell,i} \\
      & = \tsum_{\ell \in L}^{\m{R}} p_\ell' \cdot C_\ell \\
      & = C, \\
      \tsum_{i \in \calI} p_i \cdot q_i & = \tsum_{i \in \calI} p_i \cdot \tsum_{\ell \in L} p_\ell' \cdot q_{\ell,i} \\
      & = \tsum_{\ell \in L} p_\ell' \cdot \tsum_{i \in \calI} p_i \cdot q_{\ell,i} \\
      & = \tsum_{\ell \in L} p_\ell' \cdot q_\ell \\
      & = q.
    \end{align*}
    \end{small}

    \item
    $\small
    \Rule{${\pichoiceop}L{=}$}
    { \Dl = \tsum^{\m{L}}_{\ell \in L} p'_\ell \cdot \Dl_\ell \\
      C = \tsum^{\m{R}}_{\ell \in L} p'_\ell \cdot C_\ell \\
      q = \tsum_{\ell \in L} p'_\ell \cdot q_\ell \\
      \Forall{ \ell \in L } \Dl_\ell, (x : A_\ell') \entailpot{q_\ell} Q_\ell :: (z : C_\ell)
    }
    { \Dl,(x: \pichoice{\ell^{p'_\ell} : A_\ell'}_{\ell \in L}) \entailpot{q} \epcase{x}{\ell}{Q_\ell}_{\ell \in L} :: (z : C) }
    $
    
    By the definition of sharing, we know that
    \begin{itemize}
    \item If $A_i = \pichoice{ \ell^{p'_{\ell,i}} : A_\ell' }_{\ell \in L}$ for each $i \in \calI$ such that
    \begin{small}
    \begin{align*}
      & \pichoice{\ell^{p_\ell'} : A_\ell'}_{\ell \in L} = \tsum_{i \in \calI}^{\m{R}} p_i \cdot (\pichoice{\ell^{p_{\ell,i}'} : A_{\ell}'}_{\ell \in L}) = \pichoice{\ell^{ \tsum_{i \in \calI}p_i \cdot p'_{\ell,i}} :  A_{\ell}' }_{\ell \in L} \\
      \implies{} & \forall \ell \in L: p'_\ell = \tsum_{i \in \calI} p_i \cdot p'_{\ell,i}.
    \end{align*}
    \end{small}
    
    For each $i \in \calI$, define $\Dl_i \defeq \tsum^{\m{L}}_{\ell \in L} p'_{\ell,i} \cdot \Dl_\ell$,
    $C_i \defeq \tsum^{\m{R}}_{\ell \in L} p'_{\ell,i} \cdot C_\ell$,
    $q_i \defeq \tsum_{\ell \in L} p'_{\ell,i} \cdot q_\ell$.
    
    Then $\Dl_i,(x:A_i) \entailpot{q_i} \epcase{x}{\ell}{Q_\ell}_{\ell \in L} :: (z : C_i)$.
    
    We conclude by the following:
    \begin{small}
    \begin{align*}
      \tsum_{i \in \calI}^{\m{L}} p_i \cdot \Dl_i & = \tsum_{i \in \calI}^{\m{L}} p_i \cdot \tsum_{\ell \in L}^{\m{L}} p'_{\ell,i} \cdot \Dl_\ell \\
      & = \tsum^{\m{L}}_{\ell \in L} (\tsum_{i \in \calI} p_i \cdot p'_{\ell,i}) \cdot \Dl_\ell \\
      & = \tsum^{\m{L}}_{\ell \in L} p_\ell' \cdot \Dl_\ell \\
      & = \Dl, \\
      \tsum_{i \in \calI}^{\m{R}} p_i \cdot C_i & = \tsum_{i \in \calI}^{\m{R}} p_i \cdot \tsum_{\ell \in L}^{\m{R}} p_{\ell,i}' \cdot C_{\ell} \\
      & = \tsum_{\ell \in L}^{\m{R}} (\tsum_{i \in \calI} p_i \cdot p'_{\ell,i}) \cdot C_{\ell} \\
      & = \tsum_{\ell \in L}^{\m{R}} p_\ell' \cdot C_\ell \\
      & = C, \\
      \tsum_{i \in \calI} p_i \cdot q_i & = \tsum_{i \in \calI} p_i \cdot \tsum_{\ell \in L} p_{\ell,i}' \cdot q_{\ell} \\
      & = \tsum_{\ell \in L} (\tsum_{i \in \calI} p_i \cdot p_{\ell,i}') \cdot q_{\ell} \\
      & = \tsum_{\ell \in L} p_\ell' \cdot q_\ell \\
      & = q.
    \end{align*}
    \end{small}
    
    \item If $A_i = \pichoice{\ell^{p_{\ell}'}: A_{\ell,i}}_{\ell \in L}$ for each $i \in \calI$ such that
    \begin{small}
    \begin{align*}
      & \pichoice{\ell^{p_\ell'} : A_\ell'}_{\ell \in L} = \tsum_{i \in \calI}^{\m{R}} p_i \cdot (\pichoice{\ell^{p_{\ell}'} : A_{\ell,i}}_{\ell \in L}) = \pichoice{\ell^{ p_\ell'} :  \tsum^{\m{R}}_{i \in \calI} p_i \cdot A_{\ell,i} }_{\ell \in L} \\
      \implies{} & \forall \ell \in L: A'_\ell = \tsum_{i \in \calI}^{\m{R}} p_i \cdot A_{\ell,i}.
    \end{align*}
    \end{small}
    
    By induction hypothesis, for each $\ell \in L$, there exist $\{\Dl_{\ell,i}\}_{i \in \calI}$, $\{C_{\ell,i}\}_{i \in \calI}$, and $\{q_{\ell,i}\}_{i \in \calI}$ such that
    \begin{itemize}
      \item for all $i \in \calI$, $\Dl_{\ell,i},(x:A_{\ell,i}) \entailpot{q_{\ell,i}} Q_\ell :: (z : C_{\ell,i})$, and
      \item $\tsum^{\m{L}}_{i \in \calI} p_i \cdot \Dl_{\ell,i} = \Dl_\ell$,
      $\tsum^{\m{R}}_{i \in \calI} p_i \cdot C_{\ell,i} = C_\ell$,
      $\tsum_{i \in \calI} p_i \cdot q_{\ell,i} = q_\ell$.
    \end{itemize}
    
    For each $i \in \calI$, define $\Dl_i \defeq \tsum_{\ell \in L}^{\m{L}} p_\ell' \cdot \Dl_{\ell,i}$,
    $C_i \defeq \tsum_{\ell \in L}^{\m{R}} p_\ell' \cdot C_{\ell,i}$,
    $q_i \defeq \tsum_{\ell \in L} p_\ell' \cdot q_{\ell,i}$.\
    
    Then $\Dl_i,(x:A_i) \entailpot{q_i} \epcase{x}{\ell}{Q_\ell}_{\ell \in L} :: (z : C_i)$.
    
    We conclude by the following:
    \begin{small}
    \begin{align*}
      \tsum_{i \in \calI}^{\m{L}} p_i \cdot \Dl_i & = \tsum_{i \in \calI}^{\m{L}} p_i \cdot \tsum_{\ell \in L}^{\m{L}} p'_\ell \cdot \Dl_{\ell,i} \\
      & = \tsum_{\ell \in L}^{\m{L}} p_\ell' \cdot \tsum_{i \in \calI}^{\m{L}} p_i \cdot \Dl_{\ell,i} \\
      & = \tsum_{\ell \in L}^{\m{L}} p_\ell' \cdot \Dl_\ell \\
      & = \Dl, \\
      \tsum_{i \in \calI}^{\m{R}} p_i \cdot C_i & = \tsum_{i \in \calI}^{\m{R}} p_i \cdot \tsum_{\ell \in L}^{\m{R}} p_\ell' \cdot C_{\ell,i} \\
      & = \tsum_{\ell \in L}^{\m{R}} p_\ell' \cdot \tsum_{i \in \calI}^{\m{R}} p_i \cdot C_{\ell,i} \\
      & = \tsum_{\ell \in L}^{\m{R}} p_\ell' \cdot C_\ell \\
      & = C, \\
      \tsum_{i \in \calI} p_i \cdot q_i & = \tsum_{i \in \calI} p_i \cdot \tsum_{\ell \in L} p_\ell' \cdot q_{\ell,i} \\
      & = \tsum_{\ell \in L} p_\ell' \cdot \tsum_{i \in \calI} p_i \cdot q_{\ell,i} \\
      & = \tsum_{\ell \in L} p_\ell' \cdot q_\ell \\
      & = q.
    \end{align*}
    \end{small}
    
    \end{itemize}
        
    \item
    $\small
    \Rule{${\pichoiceop}L{\neq}$}
    { \Dl = \tsum^{\m{L}}_{\ell \in L} p'_\ell \cdot \Dl_\ell \\
      A = \tsum^{\m{L}}_{\ell \in L} p'_\ell \cdot A_\ell' \\
      C = \tsum^{\m{R}}_{\ell \in L} p'_\ell \cdot C_\ell \\
      q = \tsum_{\ell \in L} p'_\ell \cdot q_\ell \\
      \Forall{ \ell \in L } \Dl_\ell, (x : A_\ell'), (y:B_\ell) \entailpot{q_\ell} Q_\ell :: (z : C_\ell)
    }
    { \Dl,(x:A),(y: \pichoice{\ell^{p'_\ell} : B_\ell}_{\ell \in L}) \entailpot{q} \epcase{y}{\ell}{Q_\ell}_{\ell \in L} :: (z : C) }
    $
  
    Because $A = \tsum^{\m{R}}_{i \in \calI} p_i \cdot A_i$ and $A = \tsum^{\m{L}}_{\ell \in L} p_\ell' \cdot A_\ell'$,
    by \cref{Prop:LRRearrange},
    there exists $\{A_{\ell,i}\}_{\ell\in L,i \in \calI}$ such that $A_\ell' = \tsum^{\m{R}}_{i \in \calI} p_i \cdot A_{\ell,i}$ for $\ell \in L$ and $A_i = \tsum^{\m{L}}_{\ell \in L} p_\ell' \cdot A_{\ell,i}$ for $i \in \calI$.
    
    By induction hypothesis, for each $\ell \in L$, there exist $\{\Dl_{\ell,i}\}_{i \in \calI}$, $\{B_{\ell,i}\}_{i \in \calI}$, $\{C_{\ell,i}\}_{i \in \calI}$, and $\{q_{\ell,i}\}_{i \in \calI}$ such that
    \begin{itemize}
      \item for all $i \in \calI$, $\Dl_{\ell,i},(x:A_{\ell,i}),(y:B_{\ell,i}) \entailpot{q_{\ell,i}} Q_{\ell} :: (z : C_{\ell,i})$, and
      \item $\tsum^{\m{L}}_{i \in \calI} p_i \cdot \Dl_{\ell,i} = \Dl_\ell$,
      $\tsum^{\m{L}}_{i \in \calI} p_i \cdot B_{\ell,i} = B_\ell$,
      $\tsum^{\m{R}}_{i \in \calI} p_i \cdot C_{\ell,i} = C_\ell$,
      $\tsum_{i \in \calI} p_i \cdot q_{\ell,i} = q_\ell$.
    \end{itemize}
    
    For each $i \in \calI$, define $\Dl_i \defeq \tsum^{\m{L}}_{\ell \in L} p_\ell' \cdot \Dl_{\ell,i}$, $C_i \defeq \tsum^{\m{R}}_{\ell \in L} p_\ell' \cdot C_{\ell,i}$, $q_i \defeq \tsum_{\ell \in L} p_\ell' \cdot q_{\ell,i}$.
    
    Then $\Dl_i,(x:A_i),(y:\pichoice{\ell^{p_\ell'} : B_{\ell,i}}_{\ell \in L}) \entailpot{q_i} \epcase{y}{\ell}{Q_\ell}_{\ell \in L} :: (z : C_i)$.
    
    We conclude by the following:
    \begin{small}
    \begin{align*}
      \tsum_{i \in \calI}^{\m{L}} p_i \cdot \Dl_i & = \tsum_{i \in \calI}^{\m{L}} p_i \cdot \tsum_{\ell \in L}^{\m{L}} p'_\ell \cdot \Dl_{\ell,i} \\
      & = \tsum_{\ell \in L}^{\m{L}} p_\ell' \cdot \tsum_{i \in \calI}^{\m{L}} p_i \cdot \Dl_{\ell,i} \\
      & = \tsum_{\ell \in L}^{\m{L}} p_\ell' \cdot \Dl_\ell \\
      & = \Dl, \\
      \tsum_{i \in \calI}^{\m{L}} p_i \cdot \pichoice{\ell^{p_\ell'} : B_{\ell,i}}_{\ell \in L} & = \pichoice{\ell^{p_\ell'} : \tsum^{\m{L}}_{i \in \calI} p_i \cdot B_{\ell,i}}_{\ell \in L} \\
      & = \pichoice{\ell^{p_\ell'} : B_\ell}_{\ell \in L}, \\
      \tsum_{i \in \calI}^{\m{R}} p_i \cdot C_i & = \tsum_{i \in \calI}^{\m{R}} p_i \cdot \tsum_{\ell \in L}^{\m{R}} p_\ell' \cdot C_{\ell,i} \\
      & = \tsum_{\ell \in L}^{\m{R}} p_\ell' \cdot \tsum_{i \in \calI}^{\m{R}} p_i \cdot C_{\ell,i} \\
      & = \tsum_{\ell \in L}^{\m{R}} p_\ell' \cdot C_\ell \\
      & = C, \\
      \tsum_{i \in \calI} p_i \cdot q_i & = \tsum_{i \in \calI} p_i \cdot \tsum_{\ell \in L} p_\ell' \cdot q_{\ell,i} \\
      & = \tsum_{\ell \in L} p_\ell' \cdot \tsum_{i \in \calI} p_i \cdot q_{\ell,i} \\
      & = \tsum_{\ell \in L} p_\ell' \cdot q_\ell \\
      & = q.
    \end{align*}
    \end{small}
    
    \item
    $\small
    \Rule{${\pichoiceop}R$}
    { p_k' = 1 \\
      p_j' = 0\; (j \neq k) \\
      \Dl,(x:A) \entailpot{q} P :: (z : C_k)
    }
    { \Dl,(x:A) \entailpot{q} (\esendlp{z}{k} \semi P) :: (z : \pichoice{\ell^{p_\ell'} : C_\ell}_{\ell \in L}) }
    $
    
    By induction hypothesis, there exist $\{\Dl_i\}_{i \in \calI}$, $\{C_{k,i}\}_{i \in \calI}$, and $\{q_i\}_{i \in \calI}$ such that
    \begin{itemize}
      \item for all $i \in \calI$, $\Dl_i,(x:A_i) \entailpot{q_i} P :: (z : C_{k,i})$, and
      \item $\tsum^{\m{L}}_{i \in \calI} p_i \cdot \Dl_i = \Dl$, $\tsum^{\m{R}}_{i \in \calI} p_i \cdot C_{k,i} = C_k$, $\tsum_{i \in \calI} p_i \cdot q_i = q$.
    \end{itemize}
    
    Then for each $i \in \calI$, $\Dl_i, (x:A_i) \entailpot{q_i} (\esendlp{z}{k} \semi P) :: (z : \pichoice{\ell^{p_\ell'} : C_{\ell,i}}_{\ell \in L})$, where $C_{j,i} \defeq C_j$ for $j \neq k$.
    
    We conclude by the following:
    \begin{small}
    \begin{align*}
      \tsum_{i \in \calI}^{\m{R}} p_i \cdot \pichoice{\ell^{p_\ell'} : C_{\ell,i}}_{\ell \in L} & = \pichoice{\ell^{p_\ell'} : \tsum_{i \in \calI}^{\m{R}} p_i \cdot C_{\ell,i}}_{\ell \in L} \\
      & = \pichoice{\ell^{p_\ell'} : C_\ell}_{\ell \in L}.
    \end{align*}
    \end{small}
    
    \item
    $\small
    \Rule{${\pechoiceop}R$}
    { \Dl = \tsum^{\m{L}}_{\ell \in L} p_\ell' \cdot \Dl_\ell \\
      A = \tsum^{\m{L}}_{\ell \in L} p_\ell' \cdot A_\ell' \\
      q = \tsum_{\ell \in L} p_\ell' \cdot q_\ell \\
      \Forall{\ell \in L} \Dl_\ell,(x:A_\ell) \entailpot{q_\ell} P_\ell :: (z : C_\ell)
    }
    { \Dl,(x:A) \entailpot{q} \epcase{z}{\ell}{P_\ell}_{\ell \in L} :: (z : \pechoice{\ell^{p_\ell'} : C_\ell}_{\ell \in L}) }
    $
    
    Because $A = \tsum^{\m{R}}_{i \in \calI} p_i \cdot A_i$ and $A = \tsum^{\m{L}}_{\ell \in L} p_\ell' \cdot A_\ell'$,
    by \cref{Prop:LRRearrange},
    there exists $\{A_{\ell,i}\}_{\ell\in L,i \in \calI}$ such that $A_\ell' = \tsum^{\m{R}}_{i \in \calI} p_i \cdot A_{\ell,i}$ for $\ell \in L$ and $A_i = \tsum^{\m{L}}_{\ell \in L} p_\ell' \cdot A_{\ell,i}$ for $i \in \calI$.
    
    By induction hypothesis, for each $\ell \in L$, there exist $\{\Dl_{\ell,i}\}_{i \in \calI}$, $\{C_{\ell,i}\}_{i \in \calI}$, and $\{q_{\ell,i}\}_{i \in \calI}$ such that
    \begin{itemize}
      \item for all $i \in \calI$, $\Dl_i,(x:A_{\ell,i}) \entailpot{q_{\ell,i}} P_{\ell} :: (z : C_{\ell,i})$, and
      \item $\tsum^{\m{L}}_{i \in \calI} p_i \cdot \Dl_{\ell,i} = \Dl_\ell$,
      $\tsum^{\m{R}}_{i \in \calI} p_i \cdot C_{\ell,i} = C_\ell$,
      $\tsum_{i \in \calI} p_i \cdot q_{\ell,i} = q_\ell$.
    \end{itemize}
    
    For each $i \in \calI$, define $\Dl_i \defeq \tsum^{\m{L}}_{\ell \in L} p_\ell' \cdot \Dl_{\ell,i}$, $q_i \defeq \tsum_{\ell \in L} p_\ell' \cdot q_{\ell,i}$.
    
    Then $\Dl_i,(x:A_i) \entailpot{q_i} \epcase{z}{\ell}{P_\ell}_{\ell \in L} :: (z : \pechoice{\ell^{p_\ell'} : C_{\ell,i}}_{\ell \in L}  )$.
    
    We conclude by the following:
    \begin{small}
    \begin{align*}
      \tsum_{i \in \calI}^{\m{L}} p_i \cdot \Dl_i & = \tsum_{i \in \calI}^{\m{L}} p_i \cdot \tsum_{\ell \in L}^{\m{L}} p'_\ell \cdot \Dl_{\ell,i} \\
      & = \tsum_{\ell \in L}^{\m{L}} p_\ell' \cdot \tsum_{i \in \calI}^{\m{L}} p_i \cdot \Dl_{\ell,i} \\
      & = \tsum_{\ell \in L}^{\m{L}} p_\ell' \cdot \Dl_\ell \\
      & = \Dl, \\
      \tsum_{i \in \calI}^{\m{R}} p_i \cdot \pechoice{\ell^{p_\ell'} : C_{\ell,i}}_{\ell \in L} & = \pechoice{\ell^{p_\ell'} : \tsum^{\m{R}}_{i \in \calI} p_i \cdot C_{\ell,i}}_{\ell \in L} \\
      & = \pechoice{\ell^{p_\ell'} : C_\ell}_{\ell \in L}, \\
      \tsum_{i \in \calI} p_i \cdot q_i & = \tsum_{i \in \calI} p_i \cdot \tsum_{\ell \in L} p_\ell' \cdot q_{\ell,i} \\
      & = \tsum_{\ell \in L} p_\ell' \cdot \tsum_{i \in \calI} p_i \cdot q_{\ell,i} \\
      & = \tsum_{\ell \in L} p_\ell' \cdot q_\ell \\
      & = q.
    \end{align*}
    \end{small}
    
    \item
    $\small
    \Rule{${\pechoiceop}L{=}$}
    { p_k' = 1 \\
      p_j' = 0\; (j \neq k) \\
      \Dl,(x:A_k') \entailpot{q} Q :: (z : C)
    }
    { \Dl,(x:\pechoice{\ell^{p_\ell'}:A_{\ell}'}_{\ell \in L}) \entailpot{q} (\esendlp{x}{k} \semi Q) :: (z : C) }
    $
    
    By the definition of sharing, we know that
    $A_i = \pechoice{\ell^{p_\ell'} : A_{\ell,i}}_{\ell \in L}$ for each $i \in \calI$ such that
    \begin{small}
    \begin{align*}
      & \pechoice{\ell^{p_\ell'} : A_\ell'}_{\ell \in L} = \tsum_{i \in \calI}^{\m{R}} p_i \cdot (\pechoice{\ell^{p_{\ell}'} : A_{\ell,i}}_{\ell \in L}) = \pechoice{\ell^{ p_\ell'} :  \tsum^{\m{R}}_{i \in \calI} p_i \cdot A_{\ell,i} }_{\ell \in L} \\
      \implies{} & \forall \ell \in L: A'_\ell = \tsum_{i \in \calI}^{\m{R}} p_i \cdot A_{\ell,i}.
    \end{align*}
    \end{small}
    
    By induction hypothesis, there exist $\{\Dl_i\}_{i \in \calI}$, $\{C_i\}_{i \in \calI}$, and $\{q_i\}_{i \in \calI}$ such that
    \begin{itemize}
      \item for all $i \in \calI$, $\Dl_i,(x:A_{k,i}) \entailpot{q_i} Q :: (z : C_i)$, and
      \item $\tsum^{\m{L}}_{i \in \calI} p_i \cdot \Dl_i = \Dl$,
      $\tsum^{\m{R}}_{i \in \calI} p_i \cdot C_i = C$,
      $\tsum_{i \in \calI} p_i \cdot q_i = q$.
    \end{itemize}
    
    Then for each $i \in \calI$, $\Dl_i,(x: \pechoice{\ell^{p_\ell'}: A_{\ell,i}}_{\ell \in L}) \entailpot{q_i} (\esendlp{x}{k} \semi Q) :: (z : C_i)$.
    
    \item
    $\small
    \Rule{${\pechoiceop}L{\neq}$}
    { p_k' = 1 \\
      p_j' = 0\; (j \neq k) \\
      \Dl,(x:A),(y:B_k) \entailpot{q} Q :: (z : C)
    }
    { \Dl,(x:A),(y:\pechoice{\ell^{p_\ell'}:B_{\ell}}_{\ell \in L}) \entailpot{q} (\esendlp{y}{k} \semi Q) :: (z : C) }
    $
    
    By induction hypothesis, there exist $\{\Dl_i\}_{i \in \calI}$, $\{B_{k,i}\}_{i \in \calI}$, $\{C_i\}_{i \in \calI}$, and $\{q_i\}_{i \in \calI}$ such that
    \begin{itemize}
      \item for all $i \in \calI$, $\Dl_i,(x:A_i),(y:B_{k,i}) \entailpot{q_i} Q :: (z : C_i)$, and
      \item $\tsum^{\m{L}}_{i \in \calI} p_i \cdot \Dl_i = \Dl$,
      $\tsum^{\m{L}}_{i \in \calI} p_i \cdot B_{k,i} = B_k$,
      $\tsum^{\m{R}}_{i \in \calI} p_i \cdot C_i = C$,
      $\tsum_{i \in \calI} p_i \cdot q_i = q$.
    \end{itemize}
    
    Then for each $i \in \calI$, $\Dl_i,(x:A_i),(y:\pechoice{\ell^{p_\ell'}:B_{\ell,i}}_{\ell \in L}) \entailpot{q_i} (\esendlp{y}{k} \semi Q) :: (z : C_i)$, where $B_{j,i} \defeq B_j$ for $j \neq k$.
    
    We conclude by the following:
    \begin{small}
    \begin{align*}
      \tsum_{i \in \calI}^{\m{L}} p_i \cdot \pechoice{\ell^{p_\ell'} : B_{\ell,i}}_{\ell \in L} & = \pechoice{\ell^{p_\ell'} : \tsum_{i \in \calI}^{\m{L}} p_i \cdot B_{\ell,i}}_{\ell \in L} \\
      & = \pichoice{\ell^{p_\ell'} : B_\ell}_{\ell \in L}.
    \end{align*}
    \end{small}
    
    \item
    $\small
    \Rule{${\one}L{=}$}
    { \Dl \entailpot{q} Q :: (z : C)
    }
    { \Dl,(x:\one) \entailpot{q} (\ewait{x} \semi Q) :: (z : C)  }
    $
    
    By the definition of sharing, we know that $A_i = \one$ for each $i \in \calI$.
    
    For each $i \in \calI$, define $\Dl_i \defeq \Dl$, $C_i \defeq C$, $q_i \defeq q$.
    
    \item
    $\small
    \Rule{${\one}L{\neq}$}
    { \Dl,(x:A) \entailpot{q} Q :: (z : C)
    }
    { \Dl,(x:A),(y:\one) \entailpot{q} (\ewait{y} \semi Q) :: (z : C) }
    $
    
    By induction hypothesis, there exist $\{\Dl_i\}_{i \in \calI}$, $\{C_i\}_{i \in \calI}$, and $\{q_i\}_{i \in \calI}$ such that
    \begin{itemize}
      \item for all $i \in \calI$, $\Dl_i,(x:A_i) \entailpot{q_i} Q :: (z : C_i)$, and
      \item $\tsum^{\m{L}}_{i \in \calI} p_i \cdot \Dl_i = \Dl$,
      $\tsum^{\m{R}}_{i \in \calI} p_i \cdot C_i = C$,
      $\tsum_{i \in \calI} p_i \cdot q_i = q$.
    \end{itemize}
    
    Then for each $i \in \calI$, $\Dl_i,(x:A_i),(y:\one) \entailpot{q_i} (\ewait{y} \semi Q) :: (z : C_i)$.
    
    \item
    $\small
    \Rule{$\m{id}$}
    {
    }
    { (x : A) \entailpot{0} (\fwd{z}{x}) :: (z : A) }
    $
    
    For each $i \in \calI$, we have $(x : A_i) \entailpot{0} (\fwd{z}{x}) :: (z : A_i)$.
    
    Then we conclude by the assumption that $A = \tsum^{\m{R}}_{i \in \calI} p_i \cdot A_i$.
    
    \item
    $\small
    \Rule{$\m{spawn}{\in}$}
    { (x':A),\many{y':B} \entailpot{q} f = P_f :: (t' : D) \in \Sg \\
      \Dl,(t:D) \entailpot{r} Q :: (z : C)
    }
    { \Dl,(x:A),\many{y:B} \entailpot{q+r} (\ecut{t}{f}{x\;\many{y}}{Q}) :: (z : C) }
    $
    
    By induction hypothesis, there exist $\many{\{B_i\}_{i \in \calI}}$, $\{D_i\}_{i \in \calI}$, and $\{q_i\}_{i \in \calI}$ such that
    \begin{itemize}
      \item for all $i \in \calI$, $(x':A_i),\many{y':B_i} \entailpot{q_i} P_f :: (t' : D_i)$, and
      \item $\tsum^{\m{L}}_{i \in \calI} p_i \cdot \many{B_i} = \many{B}$,
      $\tsum^{\m{R}}_{i \in \calI} p_i \cdot D_i = D$,
      $\tsum_{i \in \calI} p_i \cdot q_i = q$.
    \end{itemize}
    
    Again, by induction hypothesis, there exist $\{\Dl_i\}_{i \in \calI}$, $\{C_i\}_{i \in \calI}$, and $\{r_i\}_{i \in \calI}$ such that
    \begin{itemize}
      \item for all $i \in \calI$, $\Dl_i,(t:D_i) \entailpot{r_i} Q :: (z : C_i)$, and
      \item $\tsum^{\m{L}}_{i \in \calI} p_i \cdot \Dl_i = \Dl$,
      $\tsum^{\m{R}}_{i \in \calI} p_i \cdot C_i = C$,
      $\tsum_{i \in \calI} p_i \cdot r_i = r$.
    \end{itemize}
    
    Then for each $i \in \calI$, $\Dl_i,(x:A_i),\many{y:B_i} \entailpot{q_i+r_i} (\ecut{t}{f}{x\;\many{y}}{Q}) :: (z : C_i)$.
    
    We conclude by the following:
    \begin{small}
    \begin{align*}
      \tsum_{i \in \calI} p_i \cdot (q_i +r_i) & = \tsum_{i \in \calI} p_i \cdot q_i + \tsum_{i \in \calI} p_i \cdot r_i \\
      & = q + r.
    \end{align*}
    \end{small}
    
    \item
    $\small
    \Rule{$\m{spawn}{\not\in}$}
    { \many{y':B} \entailpot{q} f = P_f :: (t' : D) \in \Sg \\
      \Dl,(x:A),(t:D) \entailpot{r} Q :: (z : C)
    }
    { \Dl,(x:A),\many{y:B} \entailpot{q+r} (\ecut{t}{f}{\many{y}}{Q}) :: (z : C) }
    $
    
    By induction hypothesis, there exist $\{\Dl_i\}_{i \in \calI}$, $\{D_i\}_{i \in \calI}$, $\{C_i\}_{i \in \calI}$, and $\{r_i\}_{i \in \calI}$ such that
    \begin{itemize}
      \item for all $i \in \calI$, $\Dl_i,(x:A_i),(t:D_i) \entailpot{r_i} Q :: (z : C_i)$, and
      \item $\tsum^{\m{L}}_{i \in \calI} p_i \cdot \Dl_i = \Dl$,
      $\tsum^{\m{L}}_{i \in \calI} p_i \cdot D_i = D$,
      $\tsum^{\m{R}}_{i \in \calI} p_i \cdot C_i = C$,
      $\tsum_{i \in \calI} p_i \cdot r_i = r$.
    \end{itemize}
    
    By \cref{Lem:ProcPropagateRight}, there exist $\many{\{B_i\}_{i \in \calI}}$ and $\{q_i\}_{i \in \calI}$ such that
    \begin{itemize}
      \item for all $i \in \calI$, $\many{y' : B_i} \entailpot{q_i} P_f :: (t' : D_i)$, and
      \item $\tsum^{\m{L}}_{i \in \calI} p_i \cdot \many{B_i} = \many{B}$,
      $\tsum_{i \in \calI} p_i \cdot q_i = q$.
    \end{itemize}
    
    Then for each $i \in \calI$, $\Dl_i,(x:A_i),\many{y:B_i} \entailpot{q_i+r_i} (\ecut{t}{f}{\many{y}}{Q}) :: (z : C_i)$.
    
    We conclude by the following:
    \begin{small}
    \begin{align*}
      \tsum_{i \in \calI} p_i \cdot (q_i +r_i) & = \tsum_{i \in \calI} p_i \cdot q_i + \tsum_{i \in \calI} p_i \cdot r_i \\
      & = q + r.
    \end{align*}
    \end{small}
    
    \item
    $\small
    \Rule{${\paypot}R$}
    { \Dl,(x:A) \entailpot{q-r} P :: (z : C)
    }
    { \Dl,(x:A) \entailpot{q} (\epay{z}{r} \semi P) :: (z : \tpaypot{C}{r}) }
    $
    
    By induction hypothesis, there exist $\{\Dl_i\}_{i \in \calI}$, $\{C_i\}_{i \in \calI}$, and $\{q_i\}_{i \in \calI}$ such that
    \begin{itemize}
      \item for all $i \in \calI$, $\Dl_i,(x:A_i) \entailpot{q_i} P :: (z : C_i)$, and
      \item $\tsum^{\m{L}}_{i \in \calI} p_i \cdot \Dl_i = \Dl$,
      $\tsum^{\m{R}}_{i \in \calI} p_i \cdot C_i = C$,
      $\tsum_{i \in \calI} p_i \cdot q_i = q-r$.
    \end{itemize}
    
    Then for each $i \in \calI$, $\Dl_i,(x:A_i) \entailpot{q_i+r} (\epay{z}{r} \semi P) :: (z : \tpaypot{C_i}{r})$.
    
    We conclude by the following:
    \begin{small}
    \begin{align*}
      \tsum^{\m{R}}_{i \in \calI} p_i \cdot (\tpaypot{C_i}{r}) & = \tpaypot{(\tsum^{\m{R}}_{i \in \calI} p_i \cdot C_i)}{r} \\
      & = \tpaypot{C}{r}, \\
      \tsum_{i \in \calI} p_i \cdot (q_i+r) & = \tsum_{i \in \calI} p_i \cdot q_i + r \\
      & = (q-r)+r \\
      & = q.
    \end{align*}
    \end{small}
    
    \item
    $\small
    \Rule{${\paypot}L{=}$}
    { \Dl,(x : A) \entailpot{q+r} Q :: (z : C)
    }
    { \Dl,(x:\tpaypot{A}{r}) \entailpot{q} (\eget{x}{r} \semi Q) :: (z : C) }
    $
    
    By induction hypothesis, there exist $\{\Dl_i\}_{i \in \calI}$, $\{C_i\}_{i \in \calI}$, and $\{q_i\}_{i \in \calI}$ such that
    \begin{itemize} 
      \item for all $i \in \calI$, $\Dl_i,(x:A_i) \entailpot{q_i} Q :: (z : C_i)$, and
      \item $\tsum^{\m{L}}_{i \in \calI} p_i \cdot \Dl_i = \Dl$,
      $\tsum^{\m{R}}_{i \in \calI} p_i \cdot C_i = C$,
      $\tsum_{i \in \calI} p_i \cdot q_i = q+r$.
    \end{itemize}
    
    Then for each $i \in \calI$, $\Dl_i,(x:\tpaypot{A_i}{r}) \entailpot{q_i-r} (\eget{x}{r} \semi Q) :: (z : C_i)$.
    
    We conclude by the following:
    \begin{small}
    \begin{align*}
      \tsum_{i \in \calI} p_i \cdot (q_i-r) & = \tsum_{i \in \calI} p_i \cdot q_i - r \\
      & = (q+r)-r \\
      & = q.
    \end{align*}
    \end{small}
    
    \item
    $\small
    \Rule{${\paypot}L{\neq}$}
    { \Dl,(x:A),(y:B) \entailpot{q+r} Q :: (z : C)
    }
    { \Dl,(x:A),(y:\tpaypot{B}{r}) \entailpot{q} (\eget{y}{r} \semi Q) :: (z : C) }
    $
    
    By induction hypothesis, there exist $\{\Dl_i\}_{i \in \calI}$, $\{B_i\}_{i \in \calI}$, $\{C_i\}_{i \in \calI}$, and $\{q_i\}_{i \in \calI}$ such that
    \begin{itemize} 
      \item for all $i \in \calI$, $\Dl_i,(x:A_i),(y:B_i) \entailpot{q_i} Q :: (z : C_i)$, and
      \item $\tsum^{\m{L}}_{i \in \calI} p_i \cdot \Dl_i = \Dl$,
      $\tsum^{\m{L}}_{i \in \calI} p_i \cdot B_i = B$,
      $\tsum^{\m{R}}_{i \in \calI} p_i \cdot C_i = C$,
      $\tsum_{i \in \calI} p_i \cdot q_i = q+r$.
    \end{itemize}
    
    Then for each $i \in \calI$, $\Dl_i,(x:A_i),(y:\tpaypot{B_i}{r}) \entailpot{q_i-r} (\eget{x}{r} \semi Q) :: (z : C_i)$.
    
    We conclude by the following:
    \begin{small}
    \begin{align*}
      \tsum^{\m{L}}_{i \in \calI} p_i \cdot (\tpaypot{B_i}{r}) & = \tpaypot{(\tsum^{\m{L}}_{i \in \calI} p_i \cdot B_i)}{r} \\
      & = \tpaypot{B}{r}, \\
      \tsum_{i \in \calI} p_i \cdot (q_i-r) & = \tsum_{i \in \calI} p_i \cdot q_i - r \\
      & = (q+r)-r \\
      & = q.
    \end{align*}
    \end{small}
    
    \item
    $\small
    \Rule{${\getpot}R$}
    { \Dl,(x:A) \entailpot{q+r} P :: (z : C)
    }
    { \Dl,(x:A) \entailpot{q} (\eget{z}{r} \semi P) :: (z : \tgetpot{C}{r}) }
    $
    
    By induction hypothesis, there exist $\{\Dl_i\}_{i \in \calI}$, $\{C_i\}_{i \in \calI}$, and $\{q_i\}_{i \in \calI}$ such that
    \begin{itemize}
      \item for all $i \in \calI$, $\Dl_i,(x:A_i) \entailpot{q_i} P :: (z : C_i)$, and
      \item $\tsum^{\m{L}}_{i \in \calI} p_i \cdot \Dl_i = \Dl$,
      $\tsum^{\m{R}}_{i \in \calI} p_i \cdot C_i = C$,
      $\tsum_{i \in \calI} p_i \cdot q_i = q+r$.
    \end{itemize}
    
    Then for each $i \in \calI$, $\Dl_i,(x:A_i) \entailpot{q_i-r} (\eget{z}{r} \semi P) :: (z : \tgetpot{C_i}{r})$.
    
    We conclude by the following:
    \begin{small}
    \begin{align*}
      \tsum^{\m{R}}_{i \in \calI} p_i \cdot (\tgetpot{C_i}{r}) & = \tgetpot{(\tsum^{\m{R}}_{i \in \calI} p_i \cdot C_i)}{r} \\
      & = \tgetpot{C}{r}, \\
      \tsum_{i \in \calI} p_i \cdot (q_i-r) & = \tsum_{i \in \calI} p_i \cdot q_i - r \\
      & = (q+r)-r \\
      & = q.
    \end{align*}
    \end{small}
    
    \item
    $\small
    \Rule{${\getpot}L{=}$}
    { \Dl, (x:A) \entailpot{q-r} Q :: (z : C)
    }
    { \Dl, (x : \tgetpot{A}{r}) \entailpot{q} (\epay{x}{r} \semi Q) :: (z : C) }
    $
    
    By induction hypothesis, there exist $\{\Dl_i\}_{i \in \calI}$, $\{C_i\}_{i \in \calI}$, $\{q_i\}_{i \in \calI}$ such that
    \begin{itemize}
      \item for all $i \in \calI$, $\Dl_i,(x:A_i) \entailpot{q_i} Q :: (z : C_i)$, and
      \item $\tsum^{\m{L}}_{i \in \calI} p_i \cdot \Dl_i = \Dl$,
      $\tsum^{\m{R}}_{i \in \calI} p_i \cdot C_i = C$,
      $\tsum_{i \in \calI} p_i \cdot q_i = q - r$.
    \end{itemize}
    
    Then for each $i \in \calI$, $\Dl_i,(x:\tgetpot{A_i}{r}) \entailpot{q_i+r} (\epay{x}{r} \semi Q) :: (z : C)$.
    
    We conclude by the following:
    \begin{small}
    \begin{align*}
      \tsum_{i \in \calI} p_i \cdot (q_i+r) & = \tsum_{i \in \calI} p_i \cdot q_i + r \\
      & = (q-r)+r \\
      & = q.
    \end{align*}
    \end{small}
    
    \item
    $\small
    \Rule{${\getpot}L{\neq}$}
    { \Dl, (x:A),(y:B) \entailpot{q-r} Q :: (z : C)
    }
    { \Dl, (x:A), (y:\tgetpot{B}{r}) \entailpot{q} (\epay{y}{r} \semi Q) :: (z : C) }
    $
    
    By induction hypothesis, there exist $\{\Dl_i\}_{i \in \calI}$, $\{B_i\}_{i \in \calI}$, $\{C_i\}_{i \in \calI}$, $\{q_i\}_{i \in \calI}$ such that
    \begin{itemize}
      \item for all $i \in \calI$, $\Dl_i,(x:A_i),(y:B_i) \entailpot{q_i} Q :: (z : C_i)$, and
      \item $\tsum^{\m{L}}_{i \in \calI} p_i \cdot \Dl_i = \Dl$,
      $\tsum^{\m{L}}_{i \in \calI} p_i \cdot B_i = B$,
      $\tsum^{\m{R}}_{i \in \calI} p_i \cdot C_i = C$,
      $\tsum_{i \in \calI} p_i \cdot q_i = q - r$.
    \end{itemize}
    
    Then for each $i \in \calI$, $\Dl_i,(x:A_i),(y:\tgetpot{B_i}{r}) \entailpot{q_i+r} (\epay{y}{r} \semi Q) :: (z : C)$.
    
    We conclude by the following:
    \begin{small}
    \begin{align*}
      \tsum^{\m{L}}_{i \in \calI} p_i \cdot (\tgetpot{B_i}{r}) & = \tgetpot{(\tsum^{\m{L}}_{i \in \calI} p_i \cdot B_i)}{r} \\
      & = \tgetpot{B}{r}, \\
      \tsum_{i \in \calI} p_i \cdot (q_i+r) & = \tsum_{i \in \calI} p_i \cdot q_i + r \\
      & = (q-r)+r \\
      & = q.
    \end{align*}
    \end{small}
    
    \item
    $\small
    \Rule{$\m{work}$}
    { \Dl,(x:A) \entailpot{q-r} P :: (z : C)
    }
    { \Dl,(x:A) \entailpot{q} (\ework{r} \semi P) :: (z : C) }
    $
    
    By induction hypothesis, there exist $\{\Dl_i\}_{i \in \calI}$, $\{C_i\}_{i \in \calI}$, and $\{q_i\}_{i \in \calI}$ such that
    \begin{itemize}
      \item for all $i \in \calI$, $\Dl_i,(x:A_i) \entailpot{q_i} P :: (z : C_i)$, and
      \item $\tsum^{\m{L}}_{i \in \calI} p_i \cdot \Dl_i = \Dl$,
      $\tsum^{\m{R}}_{i \in \calI} p_i \cdot C_i = C$,
      $\tsum_{i \in \calI} p_i \cdot q_i = q-r$.
    \end{itemize}
    
    Then for each $i \in \calI$, $\Dl_i ,(x:A_i) \entailpot{q_i+r} (\ework{r} \semi P) :: (z : C_i)$.
    
    We conclude by the following:
    \begin{small}
    \begin{align*}
      \tsum_{i \in \calI} p_i \cdot (q_i+r) & = \tsum_{i \in \calI} p_i \cdot q_i + r \\
      & = (q-r)+r \\
      & = q.
    \end{align*}
    \end{small}
  \end{itemize}
\end{proof}

\begin{lemma}\label{Lem:ProcPropagateRight}
  If $\Dl \entailpot{q} P :: (x : A)$ and $A = \tsum_{i \in \calI}^{\m{L}} p_i \cdot A_i$,
  then there exist $\{\Dl_i\}_{i \in \calI}$ and $\{q_i\}_{i \in \calI}$ such that
  \begin{itemize}
    \item for all $i \in \calI$, $\Dl_i \entailpot{q_i} P :: (x : {A_i})$, and
    \item $\tsum_{i \in \calI}^{\m{L}} p_i \cdot \Dl_i = \Dl$, $\tsum_{i \in \calI} p_i \cdot q_i = q$.
  \end{itemize}
\end{lemma}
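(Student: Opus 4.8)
The plan is to prove \cref{Lem:ProcPropagateRight} by induction on the derivation of $\Dl \entailpot{q} P :: (x : A)$. It is the ``offered-side'' dual of \cref{Lem:ProcPropagateLeft}, and the proof follows the same pattern; in fact \cref{Lem:ProcPropagateRight} can be proved on its own (it only ever recurses into itself), and \cref{Lem:ProcPropagateLeft} then cites it in its $\m{spawn}$ case. The one structural tool used repeatedly is the rearrangement \cref{Prop:LRRearrange}: whenever the last typing rule already presents the offered type as an $\m{R}$-weighted sum while the hypothesis presents the same type as an $\m{L}$-weighted sum, the proposition supplies a doubly-indexed family reconciling the two decompositions, after which the argument reduces to reindexing weighted sums and checking the two required identities $\tsum^{\m{L}}_i p_i \cdot \Dl_i = \Dl$ and $\tsum_i p_i \cdot q_i = q$.

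I would organize the case analysis by how the last rule touches the offered channel $x$. In the first group the rule acts directly on $x$. For $\m{id}$ (so $P = \fwd{x}{y}$ and $q=0$) set $\Dl_i \defeq (y : A_i)$ and $q_i \defeq 0$; both identities are immediate from the hypothesis $\tsum^{\m{L}}_i p_i \cdot A_i = A$. For ${\paypot}R$, ${\getpot}R$ and $\m{work}$ the structural definition of the weighted sum forces the split of $A$ to descend to the premise's offered type (e.g.\ $A = \tpaypot{C}{r}$ forces $A_i = \tpaypot{C_i}{r}$ with $C = \tsum^{\m{L}}_i p_i \cdot C_i$), so one applies the induction hypothesis to the premise and reattaches the rule, shifting each $q_i$ by $\pm r$. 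For ${\pichoiceop}R$ and ${\pechoiceop}R$ the $\m{L}$-split of the offered choice type is either \emph{deep}, i.e.\ $A_i = \pichoice{\ell^{p_\ell} : C_{\ell,i}}_{\ell \in L}$ with $C_\ell = \tsum^{\m{L}}_i p_i \cdot C_{\ell,i}$, handled by the induction hypothesis on the continuation premise(s), or \emph{shallow}, i.e.\ $A_i = \pechoice{\ell^{p_{\ell,i}} : C_\ell}_{\ell \in L}$ with $p_\ell = \tsum_i p_i \cdot p_{\ell,i}$, handled by simply redistributing the branch probabilities $p_{\ell,i}$ with no recursive call --- this mirrors the ``$=$'' subcases of \cref{Lem:ProcPropagateLeft}.

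In the second group the rule threads $x$ through one or more subterms. The routine subcases --- ${\one}L$, ${\paypot}L$, ${\getpot}L$, ${\pechoiceop}L$, and the part of $\m{spawn}$ not involving the signature type --- just invoke the induction hypothesis on the premise(s) that still offer $x$ and reassemble $\Dl_i$ and $q_i$ as weighted combinations, shifting potentials where potential is passed. The interesting subcases are $\m{flip}$ and ${\pichoiceop}L$ (in both of its forms), whose rules already express the offered type as an $\m{R}$-weighted sum, namely $A = p \cdot A_H +^{\m{R}} (1{-}p) \cdot A_T$, respectively $C = \tsum^{\m{R}}_{\ell \in L} p_\ell \cdot C_\ell$. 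Here I apply \cref{Prop:LRRearrange} to the rule's $\m{R}$-decomposition and the hypothesis's $\m{L}$-decomposition of the offered type, apply the induction hypothesis to each branch premise with the induced $\m{L}$-split, define $\Dl_i$ and $q_i$ as the $\m{L}$- and arithmetic-weighted combinations over the branch index $\ell$, and finally verify the two identities by interchanging the $\ell$-sum with the $i$-sum (using commutativity of the weighted-sum operations together with identities of the form $\tsum_i p_i \cdot p_{\ell,i} = p_\ell$) --- these are exactly the calculations already carried out in the proof of \cref{Lem:ProcPropagateLeft}, so no new computation is required.

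I expect the main obstacle to be the $\m{spawn}$ cases rather than any individual computation. When the offered channel of $\ecut{t}{f}{\many{y}}{Q}$ is split, applying the induction hypothesis to $Q$ splits the type $D$ of the channel provided by $f$; to reattach $\m{spawn}$ for each $i$ one needs a signature-style judgment $\many{y' : B_i} \entailpot{q_i} P_f :: (t' : D_i)$ with $\tsum^{\m{L}}_i p_i \cdot \many{B_i} = \many{B}$, i.e.\ the split of $D$ must be pushed through the \emph{fixed} derivation of $f = P_f :: (t' : D) \in \Sg$, which --- because process definitions are mutually recursive --- is not in general a structural sub-derivation of the derivation of $P$. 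The intended resolution is to run the induction over a well-founded measure that also permits one unfolding of a signature definition (equivalently, to argue over the finite unfoldings of the equirecursive definitions), so that each $\m{spawn}$ step recurses on a strictly smaller instance; the same device is what legitimizes the appeal to \cref{Lem:ProcPropagateRight} inside the proof of \cref{Lem:ProcPropagateLeft}.
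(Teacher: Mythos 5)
Your proposal matches the paper's approach: the paper proves \cref{Lem:ProcPropagateRight} simply by noting it is ``similar to the proof of \cref{Lem:ProcPropagateLeft},'' and your case analysis (dual induction on the typing derivation, \cref{Prop:LRRearrange} to reconcile the rule's $\m{R}$-decomposition with the hypothesis's $\m{L}$-decomposition in the $\m{flip}$/${\pichoiceop}L$ cases, shallow-vs-deep splits for the choice rules, and reindexing of weighted sums) is exactly that dualization. Your observation that the $\m{spawn}$ cases recurse into the fixed signature derivation of $P_f$ rather than a structural sub-derivation is a fair point that the paper leaves implicit; your proposed fix via a well-founded measure over signature unfoldings is a reasonable way to make that step rigorous.
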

\begin{proof}
  Similar to the proof of \cref{Lem:ProcPropagateLeft}.
\end{proof}

Now we lift the type judgments from expressions to processes, configurations, and distributions, as well as extend \cref{Lem:ProcPropagateLeft,Lem:ProcPropagateRight} accordingly.
\begin{mathpar}\footnotesize
  \inferrule
  { \Dl \entailpot{q} P :: (c : A)
  }
  { \Dl \potconf{q+w} \proc{c}{w,P} :: (c : A) }
  \and
  \inferrule
  { x_1:A_1,x_2:A_2,\cdots,x_n:A_n \potconf{q} \calC :: (z : C) \\\\
    \Forall{i \in \{1,2\cdots,n\}}
    \Dl_i \potconf{q_i} \calC_i :: (x_i : A_i)
  }
  { \Dl_1,\Dl_2,\cdots,\Dl_n \potconf{q+\tsum_{i=1}^n q_i} (\calC\;\calC_1\;\calC_2\;\cdots\;\calC_n) :: (z : C) }
  \and
  \inferrule
  { \mu = \{ \calC_i : p_i \}_{i \in \calI} \\
    \Forall{i \in \calI} \Dl_i \potconf{q_i} \calC_i :: (x : A_i) \\
    \Dl = \tsum^{\m{L}}_{i \in \calI} p_i \cdot \Dl_i \\
    A = \tsum^{\m{R}}_{i \in \calI} p_i \cdot A_i \\
    q = \tsum_{i \in \calI} p_i \cdot q_i
  }
  { \Dl \potconf{q} \mu :: (x : A) }
\end{mathpar}

\begin{corollary}\label{Cor:ConfPropagateLeft}
  If $\Dl,(x:A) \potconf{q} \calC :: (z : C)$ and $A = \tsum_{i \in \calI}^{\m{R}} p_i \cdot A_i$,
  then there exist $\{\Dl_i\}_{i \in \calI}$, $\{C_i\}_{i \in \calI}$, and $\{q_i\}_{i \in \calI}$ such that
  \begin{itemize}
    \item for all $i \in \calI$, $\Dl_i,(x:{ A_i}) \potconf{q_i} \calC :: (z : C_i)$, and
    \item $\tsum_{i \in \calI}^{\m{L}} p_i \cdot \Dl_i = \Dl$, $\tsum_{i \in \calI}^{\m{R}} p_i \cdot C_i=  C$, $\tsum_{i \in \calI} p_i \cdot q_i = q$.
  \end{itemize}
\end{corollary}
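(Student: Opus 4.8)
The plan is to prove \cref{Cor:ConfPropagateLeft} by structural induction on the configuration $\calC$ (equivalently, on the derivation of $\Dl,(x:A) \potconf{q} \calC :: (z:C)$ using the three configuration typing rules displayed just above the statement), carried out \emph{simultaneously} with the dual corollary that lifts \cref{Lem:ProcPropagateRight} from expressions to configurations, i.e., the statement obtained by replacing the used channel $(x:A)$ with the offered channel and the $\tsum^{\m{R}}$ on $A$ with a $\tsum^{\m{L}}$. The two lifted statements must be established together because, as explained below, the composition case of the $\m{L}$-version appeals to the $\m{R}$-version and vice versa; well-foundedness is immediate since every recursive call is on a strict subconfiguration.

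First I would handle the two easy cases. For a single process $\calC = \proc{z}{w,P}$, inversion gives $\Dl,(x:A) \entailpot{q-w} P :: (z:C)$; apply \cref{Lem:ProcPropagateLeft} to obtain $\Dl_i,(x:A_i) \entailpot{q_i'} P :: (z:C_i)$ with $\tsum^{\m{L}}_i p_i \cdot \Dl_i = \Dl$, $\tsum^{\m{R}}_i p_i \cdot C_i = C$, and $\tsum_i p_i \cdot q_i' = q-w$; then set $q_i \defeq q_i'+w$ and re-apply the configuration rule for a single process, noting $\tsum_i p_i \cdot q_i = (q-w)+w = q$ since $\tsum_i p_i = 1$. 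For a distribution $\calC = \{\calC_j : \rho_j\}_j$, inversion decomposes $\Dl,(x:A)$ as a pointwise $\tsum^{\m{L}}$, so in particular $A = \tsum^{\m{L}}_j \rho_j \cdot B_j$ where $B_j$ is the type $\calC_j$ assigns to $x$. Combining this with the hypothesis $A = \tsum^{\m{R}}_i p_i \cdot A_i$ and invoking \cref{Prop:LRRearrange} yields a family $\{A_{j,i}\}$ with $B_j = \tsum^{\m{R}}_i p_i \cdot A_{j,i}$ and $A_i = \tsum^{\m{L}}_j \rho_j \cdot A_{j,i}$; applying the induction hypothesis to each $\calC_j$ along $x$ with the split $B_j = \tsum^{\m{R}}_i p_i \cdot A_{j,i}$, and then re-forming the distribution index-by-index over $i$, produces the required $\Dl_i, C_i, q_i$, with all three weighted-sum equalities following from a routine exchange of the two iterated sums (using that $+^{\m{L}}$ and $+^{\m{R}}$ commute with $\tsum$).

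The composition case $\calC = (\calC_0\;\calC_1\;\cdots\;\calC_n)$ is the main obstacle. Here $x$ is a free used channel, hence lies in the external context $\Dl_j$ of exactly one component, say $\calC_j$ (with $j \ge 1$), so $\Dl_j = \Dl_j',(x:A)$. Applying the induction hypothesis to $\calC_j$ along $x$ turns the \emph{offered} channel $x_j$ of $\calC_j$ into a type-distribution: $\tsum^{\m{R}}_i p_i \cdot A_{j,i} = A_j$. To thread this through the head $\calC_0$, which \emph{uses} $x_j : A_j$, I apply \cref{Cor:ConfPropagateLeft} itself to $\calC_0$ along $x_j$ with the split $A_j = \tsum^{\m{R}}_i p_i \cdot A_{j,i}$. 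The delicate point is that this also splits the remaining internal channels $x_k$ ($k \neq j$) into type-distributions whose $\tsum^{\m{L}}$ over $i$ is $A_k$; to restore agreement I invoke the \emph{dual} corollary (the $\m{L}$-lifting of \cref{Lem:ProcPropagateRight}) on each such $\calC_k$, splitting its offered channel $x_k$. Finally, for each $i$ I recompose the $n+1$ pieces via the configuration composition rule and verify the three weighted-sum identities by the same Fubini-style reordering of summations. Essentially all the real work is in this case: tracking which channels become distributions after each split, and applying the correct direction ($\tsum^{\m{L}}$ vs.\ $\tsum^{\m{R}}$) of the lifting lemma to the head versus the neighbouring components.
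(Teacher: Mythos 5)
Your proof is correct and follows essentially the same route as the paper's: induction on the configuration typing derivation, with the single-process case discharged by \cref{Lem:ProcPropagateLeft} and the composition case handled exactly as you describe --- apply the corollary to the provider component containing $x$, thread the resulting $\tsum^{\m{R}}$-split of its offered channel through the head configuration by a recursive call to the corollary itself, and repair the remaining internal channels with the dual \cref{Cor:ConfPropagateRight}, the two corollaries forming a mutual induction. The only divergence is your separate ``distribution'' case, which in this appendix is vacuous because distributions $\mu$ are typed by a separate judgment not covered by the statement about $\calC$; everything else, including the Fubini-style reordering of the weighted sums, matches the paper.
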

\begin{proof}
  By induction on the derivation of $\Dl,(x:A) \potconf{q} \calC :: (z : C)$.
  
  \begin{itemize}
    \item
    $\small
    \inferrule
    { \Dl,(x:A) \entailpot{q} Q :: (c : C)
    }
    { \Dl,(x:A) \potconf{q+w} \proc{c}{w,Q} :: (c : C) }
    $
    
    Appeal to \cref{Lem:ProcPropagateLeft}.

    \item
    $\small
    \inferrule
    { x_1:B_1,x_2:B_2,\cdots,x_n:B_n \potconf{q_0} \calC_0 :: (z : C) \\
      \Forall{ j \in \{1,2,\cdots,n\} }
      \Dl_j \potconf{q_j} \calC_j :: (x_j : B_j)
    }
    { \Dl_1,\Dl_2,\cdots,\Dl_n \potconf{\tsum_{j=0}^n q_j} (\calC_0\;\calC_1\;\calC_2\;\cdots\;\calC_n) :: (z : C) }
    $
    
    WLOG let us assume $\Dl_1 = \Gm , (x : A)$.
    
    By induction hypothesis on $\Gm ,(x:A) \potconf{q_1} \calC_1 :: (x_1: B_1)$,
    there exist $\{\Gm_i\}_{i \in \calI}$, $\{B_{1,i}\}_{i \in \calI}$, and $\{q_{1,i}\}_{i \in \calI}$ such that
    \begin{itemize}
      \item for all $i \in \calI$, $\Gm_i,(x: {A_i}) \potconf{q_{1,i}} \calC_1 :: (x_1 : B_{1,i})$, and
      \item $\tsum^{\m{L}}_{i \in \calI} p_i \cdot \Gm_i = \Gm$, $\tsum_{i \in \calI}^{\m{R}} p_i \cdot B_{1,i} = B_1$, $\tsum_{i \in \calI} p_i \cdot q_{1,i} = q_1$.
    \end{itemize}
    
    By induction hypothesis on the typing judgment for $\calC_0$, there exist $\{B_{j,i}\}_{i \in \calI}$ for each $j \neq 1$, $\{C_i\}_{i \in \calI}$, and $\{q_{0,i}\}_{i \in \calI}$ such that
    \begin{itemize}
      \item for all $i \in \calI$, $x_1: { B_{1,i}},x_2:B_{2,i},\cdots,x_n:B_{n,i} \potconf{q_{0,i}} \calC_0 :: (z : C_i)$, and
      \item $\tsum^{\m{L}}_{i \in \calI} p_i \cdot B_{j,i} = B_j$ for each $j \neq 1$, $\tsum^{\m{R}}_{i \in \calI} p_i \cdot C_i = C$, $\tsum_{i \in \calI} p_i \cdot q_{0,i} = q_0$.
    \end{itemize}
    
    For each $j \neq 1$, by \cref{Cor:ConfPropagateRight}, there exist $\{\Dl_{j,i}\}_{i \in \calI}$ and $\{q_{j,i}\}_{i \in \calI}$ such that
    \begin{itemize}
      \item for all $i \in \calI$, $\Dl_{j,i} \potconf{q_{j,i}} \calC_j :: (x_j : { B_{j,i}})$, and
      \item $\tsum_{i \in \calI}^{\m{L}} p_i \cdot \Dl_{j,i} = \Dl_j$, $\tsum_{i \in \calI} p_i \cdot q_{j,i} = q_j$.
    \end{itemize}
    
    For each $i \in \calI$, by the fact that
    \begin{itemize}
      \item $\Gm_i,(x:{A_i}) \potconf{q_{1,i}} \calC_1 :: (x_1 : B_{1,i})$,
      \item $x_1:{B_{1,i}}, x_2:B_{2,i}, \cdots, x_n : B_{n,i} \potconf{q_{0,i}} \calC_0 :: (z : C_i)$, and
      \item for each $j \neq 1$, $\Dl_{j,i} \potconf{q_{j,i}} \calC_j :: (x_j : {B_{j,i}})$,
    \end{itemize}
    we derive
    \[
      \Gm_i,(x:{A_i}), \Dl_{2,i},\cdots,\Dl_{n,i}  \potconf{\tsum_{j=0}^n q_{j,i}} (\calC_0\;\calC_1\;\calC_2\;\cdots\;\calC_n) :: (z : C_i).
    \]
    
    We conclude by the following:
    \begin{small}
    \begin{align*}
      \tsum_{i \in \calI} p_i \cdot \tsum_{j=0}^n q_{j,i} & = \tsum_{j=0}^n \tsum_{i \in \calI} p_i \cdot q_{j,i} \\
      & = \tsum_{j=0}^n q_j.
    \end{align*}
    \end{small}
  \end{itemize}
\end{proof}

\begin{corollary}\label{Cor:ConfPropagateRight}
  If $\Dl \potconf{q} \calC :: (x : A)$ and $A = \tsum_{i \in \calI}^{\m{L}} p_i \cdot A_i$,
  then there exist $\{\Dl_i\}_{i \in \calI}$ and $\{q_i\}_{i \in \calI}$ such that
  \begin{itemize}
    \item for all $i \in \calI$, $\Dl_i \potconf{q_i} \calC :: (x : {A_i} )$, and
    \item $\tsum_{i \in \calI}^{\m{L}} p_i \cdot \Dl_i = \Dl$, $\tsum_{i \in \calI} p_i \cdot q_i = q$.
  \end{itemize}
\end{corollary}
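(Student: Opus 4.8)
The plan is to prove \cref{Cor:ConfPropagateRight} by induction on the derivation of $\Dl \potconf{q} \calC :: (x : A)$, dual to the proof of \cref{Cor:ConfPropagateLeft}: instead of threading a weighted sum through a channel in the context, we thread the hypothesized decomposition $A = \tsum^{\m{L}}_{i \in \calI} p_i \cdot A_i$ through the \emph{offered} type. There are three cases, one per configuration typing rule of this subsection. In the process-object case, $\calC = \proc{c}{w,P}$ with $\Dl \entailpot{q-w} P :: (c : A)$ and $x = c$; I would invoke \cref{Lem:ProcPropagateRight} on the expression judgment to obtain $\{\Dl_i\}_{i \in \calI}$ and $\{q_i'\}_{i \in \calI}$ with $\Dl_i \entailpot{q_i'} P :: (c : A_i)$, $\tsum^{\m{L}}_{i \in \calI} p_i \cdot \Dl_i = \Dl$, and $\tsum_{i \in \calI} p_i \cdot q_i' = q - w$, then set $q_i \defeq q_i' + w$ and re-apply the process-object rule, so that $\tsum_{i \in \calI} p_i \cdot q_i = (q-w) + w = q$.

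For the composition case, $\calC = (\calC_0 \; \calC_1 \cdots \calC_n)$, where $\calC_0$ uses $(x_1 : B_1), \ldots, (x_n : B_n)$ and offers $(z : C)$, each $\calC_j$ offers $(x_j : B_j)$, $x = z$, and $A = C$. First I would apply the induction hypothesis to $\calC_0$ against $C = \tsum^{\m{L}}_{i \in \calI} p_i \cdot C_i$; this both decomposes the offered type and splits $\calC_0$'s context, producing for each $j$ a family $\{B_{j,i}\}_{i \in \calI}$ with $\tsum^{\m{L}}_{i \in \calI} p_i \cdot B_{j,i} = B_j$, along with re-derivations $(x_1 : B_{1,i}), \ldots, (x_n : B_{n,i}) \potconf{q_{0,i}} \calC_0 :: (z : C_i)$. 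Then, for each used subconfiguration $\calC_j$ — which now faces the decomposition $B_j = \tsum^{\m{L}}_{i \in \calI} p_i \cdot B_{j,i}$ on its offered channel $x_j$ — I would apply the induction hypothesis again to obtain $\Dl_{j,i} \potconf{q_{j,i}} \calC_j :: (x_j : B_{j,i})$ with $\tsum^{\m{L}}_{i \in \calI} p_i \cdot \Dl_{j,i} = \Dl_j$. Reassembling with the composition rule for each $i \in \calI$ and checking the summation identities for contexts and potentials, which is routine exactly as in the composition case of \cref{Cor:ConfPropagateLeft}, closes this case.

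For the distribution case, $\calC = \mu = \{ \calC_j : p'_j \}_{j \in \calJ}$ with $\Gm_j \potconf{r_j} \calC_j :: (x : E_j)$, $\Dl = \tsum^{\m{L}}_{j \in \calJ} p'_j \cdot \Gm_j$, $A = \tsum^{\m{R}}_{j \in \calJ} p'_j \cdot E_j$, and $q = \tsum_{j \in \calJ} p'_j \cdot r_j$; the offered type $A$ simultaneously carries the $\tsum^{\m{R}}$ decomposition coming from the derivation and the hypothesized $\tsum^{\m{L}}$ decomposition over $\calI$. Here I would appeal to \cref{Prop:LRRearrange} to obtain a doubly-indexed family $\{E_{j,i}\}_{j \in \calJ, i \in \calI}$ with $E_j = \tsum^{\m{L}}_{i \in \calI} p_i \cdot E_{j,i}$ for each $j$ and $A_i = \tsum^{\m{R}}_{j \in \calJ} p'_j \cdot E_{j,i}$ for each $i$, then apply the induction hypothesis to each $\calC_j$ against $E_j = \tsum^{\m{L}}_{i \in \calI} p_i \cdot E_{j,i}$ to get $\Gm_{j,i} \potconf{r_{j,i}} \calC_j :: (x : E_{j,i})$ with $\tsum^{\m{L}}_{i \in \calI} p_i \cdot \Gm_{j,i} = \Gm_j$ and $\tsum_{i \in \calI} p_i \cdot r_{j,i} = r_j$. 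Finally I would reassemble via the distribution rule, taking $\Dl_i \defeq \tsum^{\m{L}}_{j \in \calJ} p'_j \cdot \Gm_{j,i}$ and $q_i \defeq \tsum_{j \in \calJ} p'_j \cdot r_{j,i}$, and use commutativity of the nested weighted sums to verify $\tsum^{\m{L}}_{i \in \calI} p_i \cdot \Dl_i = \Dl$ and $\tsum_{i \in \calI} p_i \cdot q_i = q$.

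The main obstacle is precisely this last case: as at the corresponding point in the proof of \cref{Cor:ConfPropagateLeft}, one must reconcile the $\tsum^{\m{R}}$ decomposition of the offered type that the existing derivation of $\mu$ supplies with the externally given $\tsum^{\m{L}}$ decomposition, and it is \cref{Prop:LRRearrange} that makes this possible. Everything else is bookkeeping of weighted sums parallel to the already-established \cref{Cor:ConfPropagateLeft,Lem:ProcPropagateLeft,Lem:ProcPropagateRight}.
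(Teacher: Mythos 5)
Your proposal is correct and matches the paper's proof, which is simply ``similar to \cref{Cor:ConfPropagateLeft}, but appeal to \cref{Lem:ProcPropagateRight}'': induction on the configuration typing derivation, with \cref{Lem:ProcPropagateRight} in the process-object case and the induction hypothesis threaded first through the head configuration and then through each provider in the composition case. The only deviation is your third case for the distribution rule, which is superfluous here since that rule types distributions $\mu$ rather than configurations $\calC$ and never occurs as a subderivation of the composition rule, though the argument you give for it via \cref{Prop:LRRearrange} is itself sound.
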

\begin{proof}
  Similar to the proof of \cref{Cor:ConfPropagateLeft}, but appeal to \cref{Lem:ProcPropagateRight}.
\end{proof}

Finally, we can formulate and prove \emph{preservation} of this type system.

\begin{theorem}\label{The:PreservationForDetKernel}
  If $\Dl \potconf{q} \calC :: (x : A)$ and $\calC \ostep \calC'$, then $\Dl \potconf{q} \calC' :: (x : A)$.
\end{theorem}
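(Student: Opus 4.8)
The plan is to mirror the preservation argument for the nested-multiverse semantics (\cref{The:Tree:PreservationSingle,The:Tree:PreservationComm}), adapted to the linear tree-shaped configuration typing used in this subsection. I would proceed by induction on the derivation of $\calC \ostep \calC'$, read as a multiset-rewriting step. First I would record the standard structural properties of the tree typing --- permutation of sibling sub-configurations, truncation of a well-typed sub-configuration, and invariance of the overall typing under replacing a well-typed sub-configuration by another with the same interface (the analogues of \cref{Prop:Perm,Prop:Truncate,Prop:Invariant,Prop:Existence}) --- together with a composition lemma that attaches a freshly spawned process as a child subtree (the analogue of \cref{Lem:ConfExt}). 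These reduce the problem to the case where the left-hand side of the rewriting rule consists exactly of the processes it mentions, so that one may invert the configuration typing directly at those processes.

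For the single-process rules, \textsc{(S:Work)} is immediate: inverting \textsc{(T:Proc)} and the $\m{work}$ rule, the total $q$ (stored potential plus already-performed work $w$) is visibly unchanged. For \textsc{(S:Def)} I would invoke well-formedness of the signature to type $P_f$, rename it via \cref{Prop:Subst}, set the work counter of the new process to $0$, and glue it in with the composition lemma; the potential splits as $q+r$ exactly as in the $\m{spawn}$ rule.

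For the two-process communication rules of this fragment --- \textsc{(S:$\pichoiceop$)}, \textsc{(S:$\pechoiceop$)}, \textsc{(S:$\one$)}, \textsc{(S:Id)}, \textsc{(S:$\paypot$)}, \textsc{(S:$\getpot$)} --- I would invert the configuration typing down to the two expression-level judgments for the communicating processes. The key feature in \textsc{(S:$\pichoiceop$)}/\textsc{(S:$\pechoiceop$)} is that the sender's type is $\pichoice{\ell^{p_\ell} : B_\ell}_{\ell \in L}$ (resp.\ $\pechoice{\ell^{p_\ell} : B_\ell}_{\ell \in L}$) with $p_k = 1$, so inversion on ${\pichoiceop}R$ (resp.\ ${\pechoiceop}L$) pins the sender's continuation to the type $B_k$; inversion on the matching probabilistic case expression gives the receiver's branch $Q_k$ at the same $B_k$, and I reassemble with \textsc{(T:Proc)} and the composition rule. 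For \textsc{(S:$\one$)} and \textsc{(S:Id)} the work of the provider is absorbed into the client, matching the treatment of $\one L$ and $\m{id}$ and leaving the total $q$ fixed; \textsc{(S:Id)} additionally applies \cref{Prop:Subst} to substitute the forwardee channel. For \textsc{(S:$\paypot$)}/\textsc{(S:$\getpot$)}, $r$ units of potential migrate between the two process potentials, so the sum is preserved. Crucially, none of the weighted-sum propagation results (\cref{Lem:ProcPropagateLeft,Lem:ProcPropagateRight,Cor:ConfPropagateLeft,Cor:ConfPropagateRight}) are needed here; those are invoked only downstream, to lift preservation across the probabilistic step $\dstep$.

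I expect the main obstacle to be the bookkeeping in the communication cases. In the tree-shaped configuration typing the provider of the communicated channel sits at the root of a descendant subtree of its client, so inverting at the right place requires descending through the juxtaposition rule and threading the interface of each intervening sub-configuration through the context --- the same delicacy that appears in the nested-multiverse proof. Nothing genuinely new arises, which is exactly why this route is only \emph{partially} successful: \cref{The:PreservationForDetKernel} itself goes through for $\ostep$ on the present feature set, but combining it with the propagation lemmas to obtain preservation for $\dstep$ breaks once standard internal/external choices or channel passing are added, since then \cref{Lem:ProcPropagateLeft} can no longer be established.
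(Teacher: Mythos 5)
Your proposal is correct, but it takes a much longer route than the paper, whose entire proof of this theorem is a one-line appeal to the preservation theorem of resource-aware session types \cite{Das18RAST}: on non-nested configurations, $\ostep$ is just the synchronous deterministic semantics of that system, and the fragment treated in this appendix deliberately omits $\tensor$, $\lolli$, and the standard choices. Your case analysis is essentially the proof that this citation elides, and it has the merit of explicitly covering the two cases that are \emph{not} literally present in the cited work, namely (\textsc{S:$\pichoiceop$}) and (\textsc{S:$\pechoiceop$}): there the side condition $p_k = 1$ in ${\pichoiceop}R$ / ${\pechoiceop}L$ forces the weighted sums in the matching $\m{pcase}$ rule to collapse onto the single branch $k$ (so $\Dl = \Dl_k$, $C = C_k$, $q = q_k$), and the communication behaves exactly like a deterministic label exchange. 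You are also right on the two structural points: the propagation lemmas (\cref{Lem:ProcPropagateLeft,Lem:ProcPropagateRight,Cor:ConfPropagateLeft,Cor:ConfPropagateRight}) play no role in this theorem and enter only in \cref{The:PreservationForKernel} to handle the flip step, and the reason the overall development is only partially successful lies downstream of this result, not in it. In short, your argument buys self-containedness and an explicit treatment of the probabilistic-choice cases at the cost of length; the paper's buys brevity at the cost of leaving those cases to the reader.
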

\begin{proof}
  Appeal to the preservation of resource-aware session types~\cite{Das18RAST}.
\end{proof}

\begin{theorem}\label{The:PreservationForKernel}
  If $\Dl \potconf{q} \calC :: (x : A)$ and $\calC \pstep \mu'$, then $\Dl \potconf{q} \mu' :: (x : A)$.
\end{theorem}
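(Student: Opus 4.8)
The plan is to proceed by case analysis on the last rule used to derive $\calC \pstep \mu'$; there are exactly two, (SP:Det) and (SP:Flip). The case (SP:Det) is immediate: here $\calC \ostep \calC'$ and $\mu' = \delta(\calC')$, so by \cref{The:PreservationForDetKernel} we get $\Dl \potconf{q} \calC' :: (x : A)$, and then $\Dl \potconf{q} \delta(\calC') :: (x : A)$ follows from the distribution-typing rule instantiated with the singleton index set and weight $1$, which makes all three weighted-sum side conditions collapse to equalities. So the real content is the flip case.

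For (SP:Flip), the configuration $\calC$ contains a process object $\proc{c}{w_c, \eflip{p}{Q_H}{Q_T}}$, and $\mu'$ is the two-point distribution that (leaving the rest of $\calC$ unchanged, as the multiset-rewriting convention dictates) replaces that object by $\proc{c}{w_c, Q_H}$ with weight $p$ and by $\proc{c}{w_c, Q_T}$ with weight $1-p$. I would first establish the following claim, by induction on the derivation of $\Dl \potconf{q} \calC :: (x : A)$: if $\calC$ contains $\proc{c}{w_c, \eflip{p}{Q_H}{Q_T}}$, then there exist $\Dl_H, \Dl_T, A_H, A_T, q_H, q_T$ with $\Dl = p \cdot \Dl_H +^{\m L} (1-p) \cdot \Dl_T$, $A = p \cdot A_H +^{\m R} (1-p) \cdot A_T$, and $q = p \cdot q_H + (1-p) \cdot q_T$, such that, writing $\calC[\ell]$ for $\calC$ with the flip replaced by $\proc{c}{w_c, Q_\ell}$, we have $\Dl_\ell \potconf{q_\ell} \calC[\ell] :: (x : A_\ell)$ for $\ell \in \{H,T\}$. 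Given the claim, the theorem in this case follows by one application of the distribution-typing rule with index set $\{H,T\}$ and weights $p, 1-p$.

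To prove the claim I distinguish, at each step of the induction on the configuration-typing derivation, whether the flipping object is the head of the composition or lies in one of the subconfigurations. If it is the head, I invert the process-typing rule and then the $\m{flip}$ rule to obtain $\Dl_c = p \cdot \Dl_H +^{\m L} (1-p) \cdot \Dl_T$ on the used channels, $A_c = p \cdot A_H +^{\m R} (1-p) \cdot A_T$ on the offered type, $q_c - w_c = p \cdot q'_H + (1-p) \cdot q'_T$ on the potential, and $\Dl_\ell \entailpot{q'_\ell} Q_\ell :: (c : A_\ell)$; re-applying the process-typing rule gives $\Dl_\ell \potconf{q'_\ell + w_c} \proc{c}{w_c, Q_\ell} :: (c : A_\ell)$, and I then invoke \cref{Cor:ConfPropagateRight} on each child subconfiguration---whose offered type is an entry of $\Dl_c$, hence an $+^{\m L}$-split---to re-type it in the two branches, recombining with the composition rule. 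If instead the flip lives inside a subconfiguration $\calC_i$ providing $x_i : A_i$, I apply the induction hypothesis to $\calC_i$ to obtain a split $A_i = p \cdot A_{i,H} +^{\m R} (1-p) \cdot A_{i,T}$ of its offered type together with re-typed $\calC_i[\ell]$, and then invoke \cref{Cor:ConfPropagateLeft} on the head object (with $x_i : A_i$ among its context entries, an $+^{\m R}$-split) to re-type the head and the remaining subconfigurations in the two branches, again recombining. In both sub-cases the weighted-sum identities for $\Dl$, $A$, and $q$ are assembled from those delivered by inversion and by the corollaries, after distributing the branch weights $p$ and $1-p$.

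The main obstacle is the simultaneous threading of the two dual weighted-sum relations through the (recursively composed) configuration: because the flip may sit at an arbitrary node, the consumer part of the derivation above it must be re-typed along $+^{\m R}$-splits of interfaces (via \cref{Cor:ConfPropagateLeft}, ultimately \cref{Lem:ProcPropagateLeft}), while every provider subtree below it must be re-typed along $+^{\m L}$-splits (via \cref{Cor:ConfPropagateRight}, ultimately \cref{Lem:ProcPropagateRight}), and the two must remain consistent on the shared interface---a reconciliation that rests on \cref{Prop:LRRearrange}. The corollaries are engineered exactly for this, so the argument goes through; the remaining care is purely arithmetic, namely checking that the branch-weighted sum $p \cdot q_H + (1-p) \cdot q_T$ of the reconstructed potentials (and the analogous sums for the contexts and offered types) recovers the original data through the nested summations introduced by the recombinations.
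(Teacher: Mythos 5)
Your proposal is correct and follows essentially the same route as the paper: the deterministic case is discharged by \cref{The:PreservationForDetKernel} plus the singleton instance of the distribution-typing rule, and the flip case is an induction on the configuration-typing derivation in which \cref{Cor:ConfPropagateRight} re-types the provider subtrees along the $+^{\m{L}}$-splits and \cref{Cor:ConfPropagateLeft} re-types the consumer side along the $+^{\m{R}}$-splits, with \cref{Prop:LRRearrange} reconciling the two at shared interfaces. The paper's own proof is only a two-line sketch of exactly this argument, so your write-up is in fact a faithful (and more detailed) elaboration of it.
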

\begin{proof}
  The deterministic case appeals to \cref{The:PreservationForDetKernel}.
  For the probabilistic case, i.e., where one process in $\calC$ evaluates a flip expression, we proceed by induction on the derivation of $\Dl \potconf{q} \calC :: (x : A)$.
  The intuition is to use \cref{Cor:ConfPropagateLeft,Cor:ConfPropagateRight} to propagate the type adjustment on the tree.
\end{proof}

\begin{theorem}[Preservation]\label{The:Preservation}
  If $\Dl \potconf{q} \mu :: (x : A)$ and $\mu \dstep \mu'$, then $\Dl \potconf{q} \mu' :: (x : A)$.
\end{theorem}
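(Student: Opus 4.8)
The plan is to proceed by case analysis on the derivation of $\mu \dstep \mu'$, of which there are exactly two rules. The reflexive rule---where every configuration in the support of $\mu$ is poised and $\mu' = \mu$---is immediate: the hypothesis $\Dl \potconf{q} \mu :: (x:A)$ is already the conclusion. All the work is in the step rule, which selects a configuration $\calD_{i_0}$ from the support of $\mu = \{ \calD_i : p_i \}_{i \in \calI}$, takes a kernel step $\calD_{i_0} \pstep \mu_{i_0}'$, and replaces $\mu$ by $\mu' = \{ \calD_i : p_i \}_{i \in \calI \setminus \{ i_0 \}} + p_{i_0} \cdot \mu_{i_0}'$.

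First I would invert the distribution typing of $\mu$, obtaining for each $i \in \calI$ a judgment $\Dl_i \potconf{q_i} \calD_i :: (x : A_i)$ together with $\Dl = \tsum^{\m{L}}_{i \in \calI} p_i \cdot \Dl_i$, $A = \tsum^{\m{R}}_{i \in \calI} p_i \cdot A_i$, and $q = \tsum_{i \in \calI} p_i \cdot q_i$. Then I would apply \cref{The:PreservationForKernel} to the step $\calD_{i_0} \pstep \mu_{i_0}'$, which yields $\Dl_{i_0} \potconf{q_{i_0}} \mu_{i_0}' :: (x : A_{i_0})$; inverting \emph{that} distribution typing gives $\mu_{i_0}' = \{ \calC'_j : p'_j \}_{j \in \calJ}$ with $\Dl_{i_0,j} \potconf{q_{i_0,j}} \calC'_j :: (x : A_{i_0,j})$ for each $j \in \calJ$, and $\Dl_{i_0} = \tsum^{\m{L}}_{j \in \calJ} p'_j \cdot \Dl_{i_0,j}$, $A_{i_0} = \tsum^{\m{R}}_{j \in \calJ} p'_j \cdot A_{i_0,j}$, $q_{i_0} = \tsum_{j \in \calJ} p'_j \cdot q_{i_0,j}$. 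It then remains to assemble these ingredients into a single distribution typing for $\mu'$: I would take as index set the disjoint union $(\calI \setminus \{ i_0 \}) \sqcup \calJ$, assigning typing $\Dl_i \potconf{q_i} \calD_i :: (x : A_i)$ with weight $p_i$ to each $i \neq i_0$ and typing $\Dl_{i_0,j} \potconf{q_{i_0,j}} \calC'_j :: (x : A_{i_0,j})$ with weight $p_{i_0} \cdot p'_j$ to each $j \in \calJ$, which is precisely the distribution $\mu'$. The three side conditions of the distribution typing rule then follow by pulling the scalar $p_{i_0}$ through the weighted sum: $\tsum^{\m{L}}_{j \in \calJ} (p_{i_0} \cdot p'_j) \cdot \Dl_{i_0,j} = p_{i_0} \cdot \tsum^{\m{L}}_{j \in \calJ} p'_j \cdot \Dl_{i_0,j} = p_{i_0} \cdot \Dl_{i_0}$, so the weighted sum over the combined index set equals $\tsum^{\m{L}}_{i \in \calI} p_i \cdot \Dl_i = \Dl$, and analogously for $A$ (via $\tsum^{\m{R}}$) and for $q$; normalisation $\sum_{i \neq i_0} p_i + p_{i_0} \sum_j p'_j = \sum_i p_i = 1$ is immediate from well-typedness of $\mu$ and $\mu_{i_0}'$.

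The main obstacle is conceptual rather than technical: the key realisation is that \cref{The:PreservationForKernel} already delivers a \emph{full distribution typing} for the one-step successor $\mu_{i_0}'$, so the lift from $\pstep$ to $\dstep$ reduces to flattening a two-level convex combination of distribution typings into a single one. Once that is seen, the residual work is purely the associativity and commutativity of the (type-structural) weighted sums $\tsum^{\m{L}}$ and $\tsum^{\m{R}}$ under scalar multiplication---routine structural inductions of exactly the same flavour as the algebraic manipulations already carried out in the proof of \cref{Lem:ProcPropagateLeft}---together with minor bookkeeping when support sets overlap, which is handled by working with the induced probability distribution rather than a literal index set.
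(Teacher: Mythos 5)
Your proposal is correct and follows the same route as the paper, whose entire proof is a one-line appeal to \cref{The:PreservationForKernel}: you simply spell out the routine lifting that this appeal presupposes, namely inverting the distribution typing of $\mu$, applying the kernel preservation theorem to the stepping configuration, and flattening the resulting two-level convex combination back into a single distribution typing. The only residual obligation you correctly identify is the refinement/associativity property of the structural weighted sums $\tsum^{\m{L}}$ and $\tsum^{\m{R}}$ under scalar multiplication, which is a structural induction of the same kind as \cref{Prop:LRRearrange}.
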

\begin{proof}
  Appeal to \cref{The:PreservationForKernel}.
\end{proof}

Note that in this approach, it is \emph{unnecessary} to prove \emph{global progress} of the type system.
Instead, in the distribution-to-distribution semantics, we cannot make a step on a distribution if and only if all the configurations in the support of the distribution are poised.
Then we can directly apply the global progress of resource-aware session types~\cite{Das18RAST}.

\end{document}